\documentclass[12pt]{article}

\makeatletter
\let\@fnsymbol\@arabic
\makeatother

\usepackage{amsmath,amsthm,amsfonts,amscd}
\usepackage{fullpage}
\usepackage{float}
\usepackage{url}
\usepackage{graphicx}
\def\mone{{\overline{1}}}
\DeclareMathOperator{\pow}{pow}
\DeclareMathOperator{\conj}{conj}
\DeclareMathOperator{\Th}{Th}
\DeclareMathOperator{\dr}{dr}
\DeclareMathOperator{\cmp}{cmp}
\DeclareMathOperator{\mr}{mr}
\DeclareMathOperator{\fab}{FAB}
\DeclareMathOperator{\zc}{ZC}
\def\Enn{{\mathbb{N}}}
\def\Zee{{\mathbb{Z}}}

\usepackage{pgf}
\usepackage{tikz}
\usetikzlibrary{arrows,automata,positioning}

\def \nodiv{{\, |\kern-4.5pt/}\, }
\def\divides{\ | \ }
\def\modd#1 #2{#1\ ({\rm mod}\ #2)}
\newcommand{\delete}{\setminus}
\newcommand{\coloneq}{\mathrel{\mathop:}=}
\providecommand{\floor}[1]{\left\lfloor#1\right\rfloor}
\newcommand{\dotdot}{\ldotp\ldotp}
\newcommand{\AND}{\wedge}
\newcommand{\OR}{\vee}
\newcommand{\IMPLY}{\rightarrow}
\providecommand{\ip}[1]{\left\langle#1\right\rangle}
\newcommand{\st}{\;:\;}

\makeatletter
\def\Ddots{\mathinner{\mkern1mu\raise\p@
\vbox{\kern7\p@\hbox{.}}\mkern2mu
\raise4\p@\hbox{.}\mkern2mu\raise7\p@\hbox{.}\mkern1mu}}
\makeatother

\author{Chen Fei Du\thanks{School of Computer Science,
University of Waterloo,
Waterloo,  ON  N2L 3G1,
Canada; \newline
{\tt cfdu@uwaterloo.ca}, {\tt sh2mousa@uwaterloo.ca},
        {\tt shallit@uwaterloo.ca} .}\;,
Hamoon Mousavi$^1$, Luke Schaeffer\thanks{Computer Science and Artificial Intelligence Laboratory,
The Stata Center, MIT Building 32, 32 Vassar Street, Cambridge, MA 02139 USA;
{\tt lrschaeffer@gmail.com} .}\;, and 
Jeffrey Shallit$^1$}

\title{Decision Algorithms for Fibonacci-Automatic Words, 
with Applications to Pattern Avoidance}

\begin{document}

\maketitle

\theoremstyle{plain}
\newtheorem{theorem}{Theorem}
\newtheorem{corollary}[theorem]{Corollary}
\newtheorem{almosttheorem}[theorem]{(Almost) Theorem}
\newtheorem{lemma}[theorem]{Lemma}
\newtheorem{proposition}[theorem]{Proposition}

\theoremstyle{definition}
\newtheorem{definition}[theorem]{Definition}
\newtheorem{example}[theorem]{Example}
\newtheorem{conjecture}[theorem]{Conjecture}
\newtheorem{openproblem}[theorem]{Open Problem}
\newtheorem{proc}[theorem]{Procedure}

\theoremstyle{remark}
\newtheorem{remark}[theorem]{Remark}

\begin{abstract}
We implement a decision procedure for answering questions about
a class of infinite words that might be called (for lack of a better
name) ``Fibonacci-automatic''.  This class includes, for example,
the famous Fibonacci word ${\bf f} = 01001010\cdots$,
the fixed point of the 
morphism $0 \rightarrow 01$ and $1 \rightarrow 0$.  
We then recover many results about the Fibonacci
word from the literature (and improve some of them),
such as assertions about the
occurrences in $\bf f$ of squares, cubes, palindromes, and so forth.
As an application of our method we prove a new result: 
there exists an aperiodic infinite binary word avoiding the pattern
$x x x^R$.  This is the first avoidability result concerning a
nonuniform morphism proven purely mechanically.
\end{abstract}

\section{Decidability}
\label{decide}

As is well-known, the logical theory $\Th(\Enn,+)$, sometimes called
Presburger arithmetic, is decidable \cite{Presburger:1929,Presburger:1991}.
B\"uchi \cite{Buchi:1960} showed that if we add
the function $V_k(n) = k^e$, for some fixed integer $k \geq 2$,
where $e = \max \{ i \ : \ k^i \, | \, n \}$,
then the resulting theory is still decidable. 
This theory is powerful enough to define finite automata;
for a survey, see \cite{Bruyere&Hansel&Michaux&Villemaire:1994}.  

As a consequence, we have the following theorem 
(see, e.g., \cite{Shallit:2013}):
\begin{theorem}
There is an algorithm that, given a proposition phrased using only the
universal and existential quantifiers, indexing into one or more $k$-automatic
sequences, addition, subtraction, logical operations, and
comparisons, will decide the truth of that proposition.
\label{one}
\end{theorem}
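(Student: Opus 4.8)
The plan is to reduce the statement to B\"uchi's theorem on the decidability of $\Th(\Enn,+,V_k)$, recalled above. The link is the classical theorem of B\"uchi, Bruy\`ere, Hansel, Michaux, and Villemaire that a relation $R\subseteq\Enn^m$ is recognized by a finite automaton reading the base-$k$ expansions of its arguments in parallel (padded with leading zeros to a common length) if and only if $R$ is first-order definable in $(\Enn,+,V_k)$; see \cite{Bruyere&Hansel&Michaux&Villemaire:1994}. In particular, for a $k$-automatic sequence $\mathbf{a}=a_0a_1a_2\cdots$ over a finite alphabet $\Sigma$ and each $c\in\Sigma$, the fiber $\{\,n\in\Enn \st a_n=c\,\}$ is $k$-recognizable, hence definable by some formula $\psi^{\mathbf{a}}_c(x)$ of $\Th(\Enn,+,V_k)$; if several automatic sequences occur, each contributes such a family of formulas.

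First I would normalize the given proposition $\varphi$ so that every occurrence of an automatic sequence sits inside an atomic subformula $\mathbf{a}[t]=c$ with $t$ a term over $\{+,-\}$ and $c$ a fixed letter: rewrite $\mathbf{a}[s]=\mathbf{b}[t]$ as $\bigvee_{c\in\Sigma}\bigl(\mathbf{a}[s]=c \wedge \mathbf{b}[t]=c\bigr)$, and handle the corresponding inequalities by negation. Next I would eliminate subtraction, replacing $u=v-w$ by $u+w=v$ (which over $\Enn$ simultaneously forces $v\ge w$, as intended), so that all terms live in the $(+)$-language. Now replacing each atom $\mathbf{a}[t]=c$ by $\psi^{\mathbf{a}}_c(t)$ converts $\varphi$ into a sentence $\varphi^{*}$ of $\Th(\Enn,+,V_k)$ that is provably equivalent to $\varphi$; B\"uchi's theorem decides the truth of $\varphi^{*}$, which is the sought algorithm. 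Equivalently -- and this is the form one would actually implement -- the same procedure can be run directly on automata by structural induction: small explicit automata recognize the arithmetic atoms $x+y=z$ and $x<y$; the atom $\mathbf{a}[t]=c$ comes from the automaton computing $\mathbf{a}$ by retaining exactly the computations whose output is $c$; $\wedge$ and $\vee$ are products; $\neg$ is complementation; $\exists$ is projection onto the remaining tracks followed by a determinization; $\forall$ is $\neg\exists\neg$; and a closed formula yields an automaton with no input tracks, whose (non)emptiness gives the answer.

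The one point needing genuine care rather than routine bookkeeping is the base-$k$ input convention: expansions of different variables must be padded to a common length, the empty word and words with spurious leading zeros must be treated uniformly, and the sequence-atom automata must be built to agree with the arithmetic automata on this convention, so that the product constructions are meaningful. Once these conventions are fixed, every step above is effective and terminates, so the composite procedure correctly decides $\varphi$. (This is essentially the argument sketched in \cite{Shallit:2013}.)
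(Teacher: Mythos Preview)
Your proposal is correct and follows the standard route to this result. Note, however, that the paper does not actually prove Theorem~\ref{one}: it is stated as a known consequence of B\"uchi's decidability theorem, with a pointer to \cite{Shallit:2013} and to the survey \cite{Bruyere&Hansel&Michaux&Villemaire:1994}. Your argument---translate each atom $\mathbf{a}[t]=c$ into a $(\Enn,+,V_k)$-formula via the B\"uchi--Bruy\`ere--Hansel--Michaux--Villemaire equivalence between $k$-recognizability and $(\Enn,+,V_k)$-definability, then invoke decidability of $\Th(\Enn,+,V_k)$---is exactly the proof one finds in those references, and your alternative automaton-theoretic description (products for $\wedge,\vee$, complementation for $\neg$, projection for $\exists$) is precisely the implementation the paper goes on to use throughout. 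There is nothing to correct.
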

Here, by a $k$-automatic sequence, we mean a sequence $\bf a$
computed by deterministic finite automaton with
output (DFAO) $M = (Q, \Sigma_k, \Delta, \delta, q_0, \kappa) $.
Here $\Sigma_k := \lbrace 0,1,\ldots, k-1 \rbrace$ is the input 
alphabet,
$\Delta$ is the output alphabet,
and outputs are associated with the states given by the map
$\kappa:Q \rightarrow \Delta$ in the following manner:  if $(n)_k$ denotes
the canonical expansion of $n$ in base $k$, then
${\bf a}[n] = \kappa(\delta(q_0, (n)_k))$.  The prototypical example of
an automatic sequence is the Thue-Morse sequence
${\bf t} = t_0 t_1 t_2 \cdots$, the fixed point (starting with $0$) of
the morphism $0 \rightarrow 01$, $1 \rightarrow 10$.

It turns out that many results in the literature about properties of automatic
sequences, for which some had only
long and involved proofs, can be proved purely mechanically using a decision
procedure.
It suffices to express the property as an appropriate logical
predicate, convert the predicate into an automaton accepting
representations of integers for which the predicate is true, and
examine the automaton.
See, for example, the recent papers
\cite{Allouche&Rampersad&Shallit:2009,Goc&Henshall&Shallit:2012,Goc&Saari&Shallit:2013,Goc&Mousavi&Shallit:2013,Goc&Schaeffer&Shallit:2013}. 
Furthermore, in many cases we can explicitly enumerate various aspects
of such sequences, such as subword complexity
\cite{Charlier&Rampersad&Shallit:2012}.

Beyond base $k$, more exotic numeration systems are known, and one
can define automata taking representations in these systems as input.
It turns out that in the so-called Pisot numeration systems, addition
is computable \cite{Frougny:1992a,Frougny&Solomyak:1996},
and hence a theorem analogous to Theorem~\ref{one} holds
for these systems.  See, for example, \cite{Bruyere&Hansel:1997}.  
It is our contention that the power of this approach has not been
widely appreciated, and that
many results, previously proved using long and involved ad hoc techniques,
can be proved with much less effort by phrasing them as logical predicates
and employing a decision procedure.  Furthermore, many enumeration questions
can be solved with a similar approach.

We have implemented a decision algorithm for one such
system; namely, Fibonacci representation.   
In this paper we report on
our results obtained using this implementation.  We have 
reproved many results in the literature purely mechanically, as well as
obtained new results, using this implementation.

The paper is organized as follows.  In Section~\ref{fibrep}, we briefly
recall the details of Fibonacci representation.  In
Section~\ref{proofsf} we report on our mechanical proofs of properties
of the infinite Fibonacci word; we reprove many old results and we
prove some new ones.  In Section~\ref{finitefib} we apply our ideas to
prove results about the finite Fibonacci words.  In
Section~\ref{rotefib} we study a special infinite word, the
Rote-Fibonacci word, and prove many properties of it, including a new
avoidability result.  In Section~\ref{other} we look briefly at another
sequence, the Fibonacci analogue of the Thue-Morse sequence.
In Section~\ref{additive} we apply our methods
to another avoidability problem involving additive squares.
In Section~\ref{enumer} we report on mechanical
proofs of some enumeration results.  Some details about our
implementation are given in the last section.

\section{Fibonacci representation}
\label{fibrep}

Let the Fibonacci numbers be defined, as usual, by
$F_0 = 0$, $F_1 = 1$, and $F_n = F_{n-1} + F_{n-2}$
for $n \geq 2$.  (We caution the reader that some authors use a
different indexing for these numbers.)  

It is well-known, and goes back to Ostrowski \cite{Ostrowski:1922},
Lekkerkerker \cite{Lekkerkerker:1952}, 
and Zeckendorf \cite{Zeckendorf:1972}, that every 
non-negative integer can be represented, in an essentially unique
way, as a sum of Fibonacci numbers $(F_i)_{i\geq 2}$,
subject to the constraint that no two consecutive Fibonacci numbers
are used.  For example, $43 = F_9 + F_6 + F_2$.    Also see
\cite{Carlitz:1968,Fraenkel:1985}.

Such a representation can be written as a binary string 
$a_1 a_2 \cdots a_n$ representing
the integer
$\sum_{1 \leq i \leq n} a_i F_{n+2-i}$.  For example,
the binary string $10010001$ is the Fibonacci representation of $43$.

For $w = a_1 a_2 \cdots a_n \in \Sigma_2^*$, we
define $[a_1 a_2 \cdots a_n]_F := \sum_{1 \leq i \leq n} a_i F_{n+2-i}$,
even if $a_1 a_2 \cdots a_n$ has leading zeroes or consecutive $1$'s.
By $(n)_F$ we mean the {\it canonical} Fibonacci representation for
the integer $n$, having no leading zeroes or consecutive $1$'s.  Note
that $(0)_F = \epsilon$, the empty string.  The language of all
canonical representations of elements of $\Enn$ is 
$\epsilon + 1(0+01)^*$.

Just as Fibonacci representation is the analogue of base-$k$ representation,
we can define the notion of {\it Fibonacci-automatic sequence} as the
analogue of the more familiar notation of $k$-automatic sequence
\cite{Cobham:1972,Allouche&Shallit:2003}.  We say that an infinite word
${\bf a} = (a_n)_{n \geq 0}$ is Fibonacci-automatic if there exists an
automaton with output $M = (Q, \Sigma_2, q_0, \delta, \kappa, \Delta)$
that $a_n = \kappa(\delta(q_0, (n)_F))$ for all $n \geq 0$.  An example
of a Fibonacci-automatic sequence is the infinite Fibonacci word,
$${\bf f} = f_0 f_1 f_2 \cdots =  01001010\cdots$$
which is generated by the following 2-state automaton:

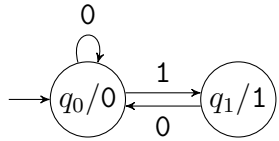
\begin{figure}[H]
\begin{center}
\begin{tikzpicture}[node distance=2cm,on grid,>=stealth',initial text=,auto,
					every state/.style={inner sep=1pt,minimum size=1cm},
					every loop/.style={shorten >=0,looseness=0}]

\node[state,initial]	(q_0)					{$q_0/{\tt 0}$};
\node[state] 			(q_1) [right=of q_0]	{$q_1/{\tt 1}$};

\path[->]	(q_0)		edge [loop above]	node {\tt 0} ()
			(q_0.10)	edge 				node {\tt 1} (q_1.170)
			(q_1.190)	edge 				node {\tt 0} (q_0.350);
\end{tikzpicture}
\end{center}
\caption{Canonical Fibonacci representation DFAO generating the Fibonacci word} \label{fig:f-dfao}
\end{figure}


To compute $f_i$, we express $i$ in canonical Fibonacci representation,
and feed it into the automaton.  Then $f_i$ is the output associated with
the last state reached (denoted by the symbol after the slash).
Another characterization of Fibonacci-automatic sequences can be
found in \cite{Shallit:1988a}.

A basic fact about Fibonacci representation is that addition can
be performed by a finite automaton.  To make this precise, we need to
generalize our notion of Fibonacci representation to $r$-tuples of
integers for $r \geq 1$.  A representation for $(x_1, x_2,\ldots, x_r)$
consists of a string
of symbols $z$ over the alphabet $\Sigma_2^r$, such that the projection
$\pi_i(z)$ over the $i$'th coordinate gives a Fibonacci representation
of $x_i$.  Notice that since the canonical
Fibonacci representations of the individual $x_i$
may have different lengths, padding with leading zeroes will often
be necessary.  A representation for $(x_1, x_2, \ldots, x_r)$ is called
canonical if it has no leading $[0,0,\ldots 0]$ symbols and the projections
into individual coordinates have no occurrences of $11$.  We write the
canonical representation 
as $(x_1, x_2, \ldots, x_r)_F$.  Thus,
for example, the canonical representation for $(9,16)$ is
$[0,1][1,0][0,0][0,1][0,0][1,0]$.

Thus, our claim about addition in Fibonacci representation is that there
exists a 
deterministic finite automaton (DFA) $M_{\rm add}$
that takes input words of the form $[0,0,0]^* (x,y,z)_F$,
and accepts if and only if $x +y =z$.
Thus, for example, $M_{\rm add}$ accepts $[0,0,1][1,0,0][0,1,0][1,0,1]$,
since the three strings obtained by projection are $0101, 0010, 1001$,
which represent, respectively, $4$, $2$, and $6$ in Fibonacci representation.
This result is apparently originally due to Berstel \cite{Berstel:1982};
also see 
\cite{Berstel:1986b,Frougny:1988,Frougny:1991b,Ahlbach&Usatine&Frougny&Pippenger:2013}.

Since this automaton does not appear to have been given explicitly in
the literature and it is essential to our implementation,
we give it here.
The states of $M_{\rm add}$
are $Q = \lbrace 0,1,2,\ldots, 16 \rbrace$, the input alphabet
is $\Sigma_2 \times \Sigma_2 \times \Sigma_2$, the final states
are $F = \lbrace 1,7,11 \rbrace$, the initial state is $q_0 = 1$, and
the transition function $\delta$ is given below.  The automaton is
incomplete, with any unspecified transitions going to a non-accepting
dead state that transitions to itself on all inputs.  This automaton
actually works even for non-canonical expansions having
consecutive $1$'s; an automaton working
only for canonical expansions can easily be obtained by intersection
with the appropriate regular languages.  The state $0$ is a ``dead
state'' that can safely be ignored.

\begin{table}[H]
\begin{center}
\begin{tabular}{c|cccccccc}
  & [0,0,0] & [0,0,1] & [0,1,0] & [0,1,1] & [1,0,0] & [1,0,1] & [1,1,0] & [1,1,1] \\
\hline
 0 &    0 &  0 &  0 &  0 &  0 &  0 &  0 &  0 \\
 1 &    1 &  2 &  3 &  1 &  3 &  1 &  0 &  3 \\
 2 &    4 &  5 &  6 &  4 &  6 &  4 &  7 &  6 \\
 3 &    0 &  8 &  0 &  0 &  0 &  0 &  0 &  0 \\
 4 &    5 &  0 &  4 &  5 &  4 &  5 &  6 &  4 \\
 5 &    0 &  0 &  0 &  0 &  0 &  0 &  9 &  0 \\
 6 &    2 & 10 &  1 &  2 &  1 &  2 &  3 &  1 \\
 7 &    8 & 11 &  0 &  8 &  0 &  8 &  0 &  0 \\
 8 &    3 &  1 &  0 &  3 &  0 &  3 &  0 &  0 \\
 9 &    0 &  0 &  5 &  0 &  5 &  0 &  4 &  5 \\
10 &    0 &  0 &  9 &  0 &  9 &  0 & 12 &  9 \\
11 &    6 &  4 &  7 &  6 &  7 &  6 & 13 &  7 \\
12 &   10 & 14 &  2 & 10 &  2 & 10 &  1 &  2 \\
13 &    0 & 15 &  0 &  0 &  0 &  0 &  0 &  0 \\
14 &    0 &  0 &  0 &  0 &  0 &  0 & 16 &  0 \\
15 &    0 &  3 &  0 &  0 &  0 &  0 &  0 &  0 \\
16 &    0 &  0 &  0 &  0 &  0 &  0 &  5 &  0 \\
\end{tabular}
\end{center}
\caption{Transition table for $M_{\rm add}$ for Fibonacci addition}
\end{table}

We briefly sketch a proof of the correctness of this automaton.  
States can be identified with certain sequences, as follows:
if $x,y,z$ are the identical-length strings arising from projection
of a word that takes $M_{\rm add}$ from the initial state $1$ to the
state $t$, then $t$ is identified with the integer sequence
$([x0^n]_F + [y0^n]_F - [z0^n]_F)_{n \geq 0}$.  With this correspondence,
we can verify the following table by a tedious induction.  In the table
$L_n$ denotes the familiar Lucas numbers, defined by
$L_n = F_{n-1} + F_{n+1}$ for $n \geq 0$ (assuming $F_{-1} = 1$).  
If a sequence $(a_n)_{n \geq 0}$ is the sequence identified 
with a state $t$, then
$t$ is accepting iff $a_0 = 0$.

\begin{table}[H]
\begin{center}
\begin{tabular}{c|c}
state & sequence \\
\hline
1 & {\bf 0} \\
2 & $(-F_{n+2})_{n \geq 0}$ \\
3 & $(F_{n+2})_{n \geq 0}$ \\
4 & $(-F_{n+3})_{n \geq 0}$ \\
5 & $(-F_{n+4})_{n \geq 0}$ \\
6 & $(-F_{n+1})_{n \geq 0}$ \\
7 & $(F_n)_{n \geq 0}$ \\
8 & $(F_{n+1})_{n \geq 0}$ \\
9 & $(-L_{n+2})_{n \geq 0}$ \\
10 & $(-2F_{n+2})_{n \geq 0}$ \\
11 & $(-F_n)_{n \geq 0}$ \\
12 & $(-2F_{n+1})_{n \geq 0}$ \\
13 & $(L_{n+1})_{n \geq 0}$ \\
14 & $(-3F_{n+2})_{n \geq 0}$ \\
15 & $(2F_{n+1})_{n \geq 0}$ \\
16 & $(-2F_n-3L_n)_{n \geq 0}$
\end{tabular}
\end{center}
\caption{Identification of states with sequences}
\end{table}

Note that the state $0$ actually represents a set of sequences, not just
a single sequence.  The set corresponds to those representations
that are so far ``out of synch'' that they can never ``catch up''
to have $x +y = z$, no matter how many digits are appended.

\begin{remark}
We note that, in the spirit of the paper, this adder itself can,
in principle, be
checked mechanically (in $\Th(\Enn, 0)$, of course!), as follows:

First we show the adder $\cal A$ is specifying a function of $x$ and $y$.  To do
so, it suffices to check that
$$ \forall x \ \forall y \ \exists z \ {\cal A}(x,y,z)$$
and
$$ \forall x \ \forall y \ \forall z \ \forall z' \ {\cal A}(x,y,z) \wedge {\cal A}(x,y,z') \implies z = z' .$$
The first predicate says that there is at least one sum of $x$ and $y$
and the second says that there is at most one.

If both of these are verified, we know that $\cal A$ computes a function
$A= A(x,y)$.

Next, we verify associativity, which amounts to checking that
$$ \forall x \ \forall y \ \forall z \ A(A(x,y),z) = A(x,A(y,z)) .$$
We can do this by checking that
$$ \forall x\ \forall y \ \forall z \ \forall w
\ \forall r \ \forall s \ \forall t \ ({\cal A}(x,y,r) \ \wedge \ 
{\cal A}(r,z,t) \ \wedge \ {\cal A}(y,z,s) ) \ \implies \ {\cal A}(x,s,t) . $$

Finally, we ensure that $\cal A$ is an adder by induction.  First, we
check that
$\forall x \ A(x,0) = x$, which amounts to
$$ \forall x\ \forall y \ {\cal A}(x,0,y) \iff x = y .$$

Second, we check that
if $A(x,1) = y$ then $x < y$ and there does not exist $z$ such that $x < z < y$.
This amounts to
$$ \forall x, y, {\cal A}(x,1,y) \implies ((x < y) \ \wedge \ 
	\neg \exists z \ (x<z) \ \wedge \ (z<y) ) .$$

This last condition shows that
$A(x,1) = x+1$. By associativity $A(x,y+1) = A(x,A(y,1)) = A(A(x,y),1) = A(x,y) + 1$. By induction, $A(x,y) = A(x,0)+y = x+y$, so we are done.
\end{remark}

Another basic fact about Fibonacci representation is that, for
canonical representations containing no two consecutive $1$'s or
leading zeroes, the radix order on representations is the same 
as the ordinary ordering on $\Enn$.    It follows that a very
simple automaton can, on input $(x,y)_F$, decide whether $x < y$.

Putting this all together, we get the analogue of
Theorem~\ref{one}:

\begin{proc}[Decision procedure for Fibonacci-automatic words] \label{proc:Fib-auto-decide} \ \\
{\bf Input:} $m,n \in \Enn$, $m$ DFAOs witnessing Fibonacci-automatic words ${\bf w}_1,{\bf w}_2,\dots,{\bf w}_m$, a first-order proposition with $n$ free variables $\varphi(v_1,v_2,\dots,v_n)$ using constants and relations definable in $\Th(\Enn,0,1,+)$ and indexing into ${\bf w}_1,{\bf w}_2,\dots,{\bf w}_m$. \\
{\bf Output:} DFA with input alphabet $\Sigma_2^n$ accepting $\{ (k_1,k_2,\dots,k_n)_F \st \varphi(k_1,k_2,\dots,k_n) \text{ holds} \}$.
\end{proc}

We remark that there was substantial skepticism that any implementation
of a decision procedure for Fibonacci-automatic words would be practical,
for two reasons:
\begin{itemize}

\item first, because the running time 
is bounded above by an expression of the form
$$2^{2^{\Ddots^{ 2^{p(N)}}}}$$
where $p$ is a polynomial, $N$ is the number of states in the original
automaton specifying the word in question,
and the number of exponents in the tower
is one less than the number of quantifiers in the logical formula
characterizing the property being checked.

\item second, because of
the complexity of checking addition (15 states) compared to the 
analogous automaton for base-$k$ representation (2 states).
\end{itemize}
Nevertheless, we were able to carry out nearly all
the computations described in this paper 
in a matter of a few seconds on an ordinary laptop.

\section{Mechanical proofs of properties of the infinite Fibonacci word}
\label{proofsf}

Recall that a word $x$, whether finite or infinite, is said to have
period $p$ if $x[i] = x[i+p]$ for all $i$ for which this equality is meaningful.
Thus, for example, the English word ${\tt alfalfa}$ has period $3$.
The {\it exponent} of a finite word $x$, written $\exp(x)$, is
$|x|/P$, where $P$ is the smallest period of $x$.
Thus $\exp({\tt alfalfa}) = 7/3$.

If $\bf x$ is an infinite word with a finite period, we say it is
{\it ultimately periodic}.  An infinite
word $\bf x$ is ultimately periodic if and only if there are finite
words $u, v$ such that $x = uv^\omega$, where $v^\omega= vvv \cdots$.

A nonempty word of the form $xx$ is called a {\it square}, and a 
nonempty word of
the form $xxx$ is called a {\it cube}.  More generally, a nonempty word
of the form $x^n$ is called an $n$'th power.
By the {\it order} of a square $xx$,
cube $xxx$, or $n$'th power $x^n$, we mean the length $|x|$.

The infinite Fibonacci word ${\bf f} = 01001010 \cdots = f_0 f_1 f_2 \cdots$
can be described in many
different ways.  In addition to our definition in terms of automata,
it is also the fixed point of 
the morphism $\varphi(0) = 01$ and $\varphi(1) = 0$.  This word
has been studied extensively in the literature; see, for example,
\cite{Berstel:1980b,Berstel:1986b}.  

In the next subsection, we use our implementation to prove a variety of
results about repetitions in $\bf f$.

\subsection{Repetitions}
\label{repe-subsec}

\begin{theorem}
The word $\bf f$ is not ultimately periodic.
\end{theorem}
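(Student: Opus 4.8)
The statement to prove is that the infinite Fibonacci word $\mathbf{f}$ is not ultimately periodic. Since $\mathbf{f}$ is Fibonacci-automatic (via the 2-state DFAO in Figure~\ref{fig:f-dfao}), the strategy is simply to phrase "$\mathbf{f}$ is ultimately periodic" as a first-order predicate in $\Th(\Enn, 0, 1, +)$ with indexing into $\mathbf{f}$, feed it to the decision procedure (Procedure~\ref{proc:Fib-auto-decide}), and observe that the resulting automaton accepts nothing — i.e., the predicate is false. The main work is therefore the (routine) translation into logic, not a clever argument.

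**Key steps.** First I would recall the characterization: $\mathbf{f}$ is ultimately periodic iff there exist a period $p \geq 1$ and a threshold $N$ such that $f_i = f_{i+p}$ for all $i \geq N$. This is directly expressible as
\begin{equation*}
\exists p \ \exists N \ \bigl( p \geq 1 \ \wedge \ \forall i \ (i \geq N \IMPLY \mathbf{f}[i] = \mathbf{f}[i+p]) \bigr).
\end{equation*}
Every ingredient here — the quantifiers, the constant $1$, the comparison $\geq$, the addition $i+p$, and the two references $\mathbf{f}[i]$ and $\mathbf{f}[i+p]$ — is permitted by Theorem~\ref{one}'s Fibonacci analogue. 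Second, I would run the decision procedure on this predicate; it produces a DFA over the appropriate input alphabet accepting exactly those parameter values (here there are no free variables, so it is a yes/no answer) for which the statement holds. Third, I would report that the automaton is empty, so the proposition is false, which is exactly the claim: $\mathbf{f}$ is not ultimately periodic.

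**Main obstacle.** There is essentially no mathematical obstacle — the content is entirely mechanical once the predicate is written down correctly, and the paper's whole point is that such statements are now push-button. The only subtlety worth flagging is that the quantifier "$\forall i$" over an unbounded range must be handled by the automaton-theoretic machinery rather than by a naive finite check, but this is precisely what Büchi's theorem and the Pisot-system extension guarantee. One should also be slightly careful that the predicate correctly captures ultimate (not just pure) periodicity by including the threshold $N$; omitting it would still give a false statement here, but the version with $N$ is the honest formalization. I would also note, as a sanity check, that the negation — "for all $p \geq 1$ and all $N$ there exists $i \geq N$ with $\mathbf{f}[i] \neq \mathbf{f}[i+p]$" — is the predicate the procedure actually verifies as true, and either formulation suffices.
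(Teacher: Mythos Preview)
Your proposal is correct and follows essentially the same approach as the paper: the paper also writes the ultimate-periodicity condition as the predicate $(p \geq 1)\ \wedge\ \exists n\ \forall i \geq n\ {\bf f}[i]={\bf f}[i+p]$, builds the corresponding automaton via the decision procedure, and observes that it accepts nothing. The only cosmetic difference is that the paper leaves $p$ free (obtaining an empty language of eventual periods) whereas you close it off with an outer $\exists p$, which is equivalent.
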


\begin{proof}
We construct a predicate asserting that the integer
$p \geq 1$ is a period of some suffix of $\bf f$:
$$ (p \geq 1) \ \wedge \ \exists n \ \forall i \geq n\  {\bf f}[i] =
{\bf f}[i+p] . $$
(Note:  unless otherwise indicated, whenever we refer to a variable in
a predicate, the range of the variable is assumed to be
$\Enn = \lbrace 0, 1, 2, \ldots \rbrace$.)
From this predicate, using our program, we constructed
an automaton accepting the language
$$ L = 0^*\ \lbrace (p)_F \ : \ (p \geq 1) \ \wedge \ \exists n
\ \forall i \geq n \ {\bf f}[i] = {\bf f}[i+p] \rbrace .$$
This automaton accepts the empty language, and so it follows
that ${\bf f}$ is not ultimately periodic.


Here is the log of our program:
\begin{verbatim}
p >= 1 with 4 states, in 60ms
 i >= n with 7 states, in 5ms
  F[i] = F[i + p] with 12 states, in 34ms
   i >= n => F[i] = F[i + p] with 51 states, in 15ms
    Ai i >= n => F[i] = F[i + p] with 3 states, in 30ms
     p >= 1 & Ai i >= n => F[i] = F[i + p] with 2 states, in 0ms
      En p >= 1 & Ai i >= n => F[i] = F[i + p] with 2 states, in 0ms
overall time: 144ms
\end{verbatim}
The largest intermediate automaton during the computation had 63
states.

A few words of explanation are in order:  here ``{\tt F}'' 
refers to the sequence
$\bf f$, and ``{\tt E}'' is our abbreviation for $\exists$ and
``{\tt A}'' is our abbreviation for $\forall$.  The symbol ``{\tt =>}'' 
is logical
implication, and ``{\tt \&}'' is logical and.
\end{proof}

From now on, whenever we discuss the language accepted by an automaton,
we will omit the $0^*$ at the beginning.  

We recall an old result of Karhum\"aki \cite[Thm.~2]{Karhumaki:1983}:

\begin{theorem}
$\bf f$ contains no fourth powers.
\end{theorem}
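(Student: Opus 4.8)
The plan is to follow exactly the mechanical strategy illustrated in the preceding proofs: express ``$\bf f$ contains a fourth power'' as a first-order predicate over $\Th(\Enn,0,1,+)$ augmented with indexing into $\bf f$, feed it to the decision procedure of Procedure~\ref{proc:Fib-auto-decide}, and observe that the resulting DFA accepts the empty language. Since a fourth power occurring in $\bf f$ is a factor of the form $x^4$ with $|x| = p \geq 1$, and such a factor starting at position $i$ exists exactly when the block of length $4p$ beginning at $i$ has period $p$, the predicate is
\begin{equation*}
\exists i \ \exists p \ (p \geq 1) \ \wedge \ \forall j \ (j < 3p) \IMPLY {\bf f}[i+j] = {\bf f}[i+j+p] .
\end{equation*}
Here the inner universally quantified clause asserts that positions $i, i+1, \dots, i+4p-1$ all agree with their $p$-shifts, which is precisely the statement that $f_i f_{i+1} \cdots f_{i+4p-1}$ is a fourth power of order $p$.

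First I would build the automaton for the atomic subformula ${\bf f}[i+j] = {\bf f}[i+j+p]$, using the adder $M_{\rm add}$ and the DFAO for $\bf f$ from Section~\ref{fibrep}; this is the same ``$F[i]=F[i+p]$''-style gadget already seen in the non-periodicity proof, just with the extra variable $j$. Next I would conjoin the bound $j < 3p$ (a trivial comparison automaton), apply the implication, and then universally quantify over $j$ to obtain an automaton in the variables $i$ and $p$ recognizing $\{(i,p)_F : f_i \cdots f_{i+4p-1}$ is a $4$th power$\}$. Finally I would conjoin $p \geq 1$ and existentially quantify over both $i$ and $p$, yielding a one-state-or-empty automaton; the claim is that it accepts nothing.

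The only real obstacle is the one flagged in Section~\ref{decide}: the determinization steps following each quantifier alternation can blow up the intermediate automata, and the adder having $15$ states (rather than $2$, as in base $k$) makes the constants worse than in the analogous base-$k$ arguments. In practice, as the log for the non-periodicity theorem shows, the intermediate automata stay small (tens of states), so the computation terminates in well under a second; the substantive content of the proof is entirely in trusting the decision procedure and in checking that the predicate faithfully encodes ``fourth power.'' One should double-check the index arithmetic --- that $j$ ranging over $[0,3p)$ together with the shift by $p$ really covers all $4p$ constrained positions --- but this is routine. I expect the emitted automaton to be the empty automaton, which immediately gives the theorem.
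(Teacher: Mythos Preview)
Your proposal is correct and is essentially identical to the paper's own proof: the paper uses the predicate $(n>0) \wedge \exists i\ \forall t<3n\ {\bf f}[i+t]={\bf f}[i+n+t]$, keeping the order $n$ free, and observes that the resulting automaton accepts nothing. The only cosmetic difference is that you also existentially quantify over the order $p$, which of course yields the same (empty) result.
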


\begin{proof}
We create a predicate for the orders of all fourth powers occurring
in $\bf f$:
$$(n > 0) \ \wedge \ \exists i \ \forall t<3n \  {\bf f}[i+t] = {\bf f}[i+n+t] .
$$

The resulting automaton accepts nothing, so there are no fourth powers.


\begin{verbatim}
n > 0 with 4 states, in 46ms
 t < 3 * n with 30 states, in 178ms
  F[i + t] = F[i + t + n] with 62 states, in 493ms
   t < 3 * n => F[i + t] = F[i + t + n] with 352 states, in 39ms
    At t < 3 * n => F[i + t] = F[i + t + n] with 3 states, in 132ms
     Ei At t < 3 * n => F[i + t] = F[i + t + n] with 2 states, in 0ms
      n > 0 & Ei At t < 3 * n => F[i + t] = F[i + t + n] with 2 states, in 0ms
overall time: 888ms
\end{verbatim}
\end{proof}

The largest intermediate automaton in the computation had 952 states.

Next, we move on to a description of the orders of squares occurring
in $\bf f$.  An old
result of S\'e\'ebold \cite{Seebold:1985b} (also see
\cite{Iliopoulos&Moore&Smyth:1997,Fraenkel&Simpson:1999}) states

\begin{theorem}
All squares in $\bf f$ are of order $F_n$ for some $n \geq 2$.
Furthermore, for all $n \geq 2$, there exists a square of order
$F_n$ in $\bf f$.
\label{squares}
\end{theorem}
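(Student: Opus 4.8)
The plan is to express both halves of the theorem as first-order predicates over $\Th(\Enn,0,1,+)$ with indexing into $\bf f$, and then invoke Procedure~\ref{proc:Fib-auto-decide} to obtain automata whose accepted languages we can inspect. For the first claim — that every square in $\bf f$ has order equal to a Fibonacci number — I would first build the automaton $A_{\mathrm{sq}}$ accepting $\{(n)_F \st n > 0 \ \wedge\ \exists i\ \forall t < n\ {\bf f}[i+t] = {\bf f}[i+n+t]\}$, i.e.\ the set of orders of squares occurring in $\bf f$. Separately, the set of Fibonacci numbers $\{F_n \st n \geq 2\}$ is trivially Fibonacci-automatic: in canonical Fibonacci representation these are exactly the strings $10^k$ for $k \geq 0$, so a tiny DFA recognizes them. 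The first half of the theorem is then the assertion that $L(A_{\mathrm{sq}}) \subseteq \{10^k \st k \geq 0\}$, which can be checked mechanically (e.g.\ by intersecting $L(A_{\mathrm{sq}})$ with the complement of the Fibonacci-number language and verifying emptiness), or equivalently by directly inspecting $A_{\mathrm{sq}}$.

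For the second claim — that for every $n \geq 2$ a square of order $F_n$ actually occurs — the cleanest route is to notice that the predicate defining $A_{\mathrm{sq}}$ already captures ``$n$ is the order of some square in $\bf f$,'' so it suffices to verify that $A_{\mathrm{sq}}$ accepts \emph{every} string of the form $10^k$. Concretely, one forms the predicate
$$ \forall n\ \Big(\big(\exists m\ (n = F_m \ \wedge\ m \geq 2)\big) \ \IMPLY\ \big(n>0 \ \wedge\ \exists i\ \forall t<n\ {\bf f}[i+t] = {\bf f}[i+n+t]\big)\Big), $$
where the inner ``$n = F_m$'' is handled by recognizing $10^k$, and checks that the resulting (zero-variable) automaton accepts. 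Combining the two checks gives the full equivalence: the orders of squares in $\bf f$ are \emph{precisely} $\{F_n \st n \geq 2\}$.

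The main obstacle is not conceptual but practical: the predicate ``$\forall t < n$'' sits under an $\exists i$, and the subtraction/comparison machinery together with the $15$-state adder can blow up the intermediate automata considerably (the fourth-power computation already hit $952$ states). I expect the $\forall t < n$ quantifier elimination step to produce the largest intermediate automaton, and some care may be needed in the order of operations (for instance, pushing the bounded-$t$ universal quantifier inward before handling $\exists i$) to keep the computation tractable. Once the automata are in hand, reading off the accepted languages and confirming they are $\{10^k \st k\ge 0\}$ is immediate, and I would include the program's log (states and timings at each step, plus the size of the largest intermediate automaton) exactly as in the preceding theorems.
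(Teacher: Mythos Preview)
Your proposal is correct and matches the paper's approach almost exactly: the paper builds precisely your automaton $A_{\mathrm{sq}}$ from the predicate $(n>0)\ \wedge\ \exists i\ \forall t<n\ {\bf f}[i+t]={\bf f}[i+n+t]$ and simply observes that the accepted language is $10^*$, which settles both inclusions at once without the separate containment checks you outline. Your worry about intermediate blowup is unfounded here --- the largest intermediate automaton has only 236 states and the whole computation runs in under a second.
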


\begin{proof}
We create a predicate for the lengths of squares:
$$(n > 0) \ \wedge \ \exists i \ \forall t<n \  {\bf f}[i+t] = {\bf f}[i+n+t] .$$


When we run this predicate, we obtain an automaton that accepts
exactly the language $10^*$.    Here is the log file:

\begin{verbatim}
n > 0 with 4 states, in 38ms
 t < n with 7 states, in 5ms
  F[i + t] = F[i + t + n] with 62 states, in 582ms
   t < n => F[i + t] = F[i + t + n] with 92 states, in 12ms
    At t < n => F[i + t] = F[i + t + n] with 7 states, in 49ms
     Ei At t < n => F[i + t] = F[i + t + n] with 3 states, in 1ms
      n > 0 & Ei At t < n => F[i + t] = F[i + t + n] with 3 states, in 0ms
overall time: 687ms
\end{verbatim}
\end{proof}

The largest intermediate automaton had 236 states.

We can easily get much, much more information about the square
occurrences in $\bf f$.  The positions of all squares
in $\bf f$ were computed by Iliopoulos, Moore, and Smyth 
\cite[\S~2]{Iliopoulos&Moore&Smyth:1997}, but their description is
rather complicated and takes 5 pages to prove.  Using our approach, we created
an automaton accepting the language
$$ 
\{ (n,i)_F \ : \ (n > 0) \ \wedge \ \forall t<n \  {\bf f}[i+t] = {\bf f}[i+n+t]
\} . $$


This automaton has only 6 states and efficiently encodes the
orders and starting positions of each square in $\bf f$.  During
the computation, the largest intermediate automaton had 236 states.
Thus we have proved

\begin{theorem}
The language
$$ \{ (n,i)_F \ : \ \text{there is a square of order $n$ beginning at
	position $i$ in {\bf f}} \}$$
is accepted by the automaton in Figure~\ref{squareorders}.

\begin{figure}[H]
\begin{center}
\includegraphics[width=6.5in]{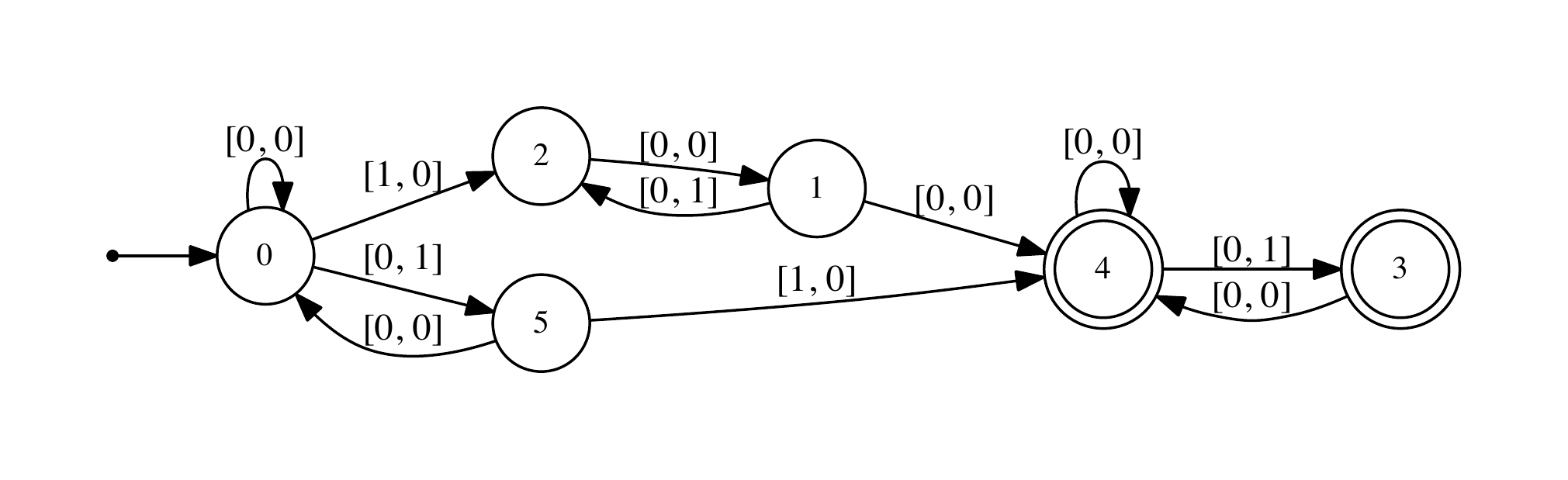}
\caption{Automaton accepting orders and positions of all squares in $\bf f$}
\label{squareorders}
\end{center}
\end{figure}
\end{theorem}

Next, we examine the cubes in $\bf f$.
Evidently Theorem~\ref{squares} implies that any cube in $\bf f$
must be of order $F_n$ for some $n$.   However, not every order
occurs.

\begin{theorem}
The cubes in $\bf f$ are of order $F_n$ for $n \geq 4$, and a cube of
each such order occurs.
\end{theorem}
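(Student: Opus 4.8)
The plan is to mimic, almost verbatim, the mechanical argument used for Theorem~\ref{squares}. The key observation is that $\bf f$ contains a cube of order $n$ starting at position $i$ exactly when the factor of length $3n$ beginning at position $i$ has period $n$, which is expressed by requiring ${\bf f}[i+t] = {\bf f}[i+n+t]$ for all $t$ with $0 \le t < 2n$. So I would form the first-order predicate
$$(n > 0) \ \wedge \ \exists i \ \forall t<2n \ {\bf f}[i+t] = {\bf f}[i+n+t],$$
feed it to the implementation as in Procedure~\ref{proc:Fib-auto-decide}, and obtain a DFA accepting $\{ (n)_F \st \bf f \text{ has a cube of order } n \}$ (with the customary leading $0^*$ suppressed).

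Next I would simply read off the language of this automaton. Recall that the canonical Fibonacci representation of $F_n$ is $10^{n-2}$; hence the statement ``the cubes of $\bf f$ have exactly the orders $F_4, F_5, F_6, \dots$'' is precisely the statement that the output DFA accepts the regular language $\{ 10^j \st j \ge 2 \}$. Since Theorem~\ref{squares} already guarantees that every square in $\bf f$ — and therefore every cube — has order $F_n$ for some $n \ge 2$, the only genuinely new information carried by the automaton is that the orders $F_2 = 1$ and $F_3 = 2$ do \emph{not} support a cube, while every $F_n$ with $n \ge 4$ does; both halves are visible directly from the accepted language.

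There is essentially no deep obstacle — that is rather the point of the method — so the only real care is in choosing the predicate correctly ($t < 2n$ here, as opposed to $t < n$ for squares or $t < 3n$ for fourth powers) and in correctly matching the small automaton produced by the decision procedure against $\{ 10^j \st j \ge 2 \}$. If a cross-check is wanted, one can minimize the cube automaton and compare it to a hand-built DFA for $\{ 10^j \st j \ge 2 \}$, or separately verify the one-line sub-claims that the predicate fails at orders $1$ and $2$; combined with Theorem~\ref{squares}, which already confines all orders to $\{ F_n \st n \ge 2 \}$, the automaton's output then yields the theorem.
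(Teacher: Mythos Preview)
Your proposal is correct and follows essentially the same approach as the paper: the paper uses exactly the predicate $(n>0)\wedge\exists i\,\forall t<2n\ {\bf f}[i+t]={\bf f}[i+n+t]$, runs it through the decision procedure, and observes that the resulting automaton accepts $(100)0^*$, i.e.\ precisely the representations of $F_n$ for $n\ge 4$. Your additional cross-check via Theorem~\ref{squares} is not needed but is harmless.
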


\begin{proof}
We use the predicate
$$(n > 0) \ \wedge \ \exists i \ \forall t<2n \  {\bf f}[i+t] = {\bf f}[i+n+t] .$$

When we run our program, we obtain an automaton accepting exactly the
language $(100)0^*$, which corresponds to $F_n$ for $n \geq 4$.

\begin{verbatim}
n > 0 with 4 states, in 34ms
 t < 2 * n with 16 states, in 82ms
  F[i + t] = F[i + t + n] with 62 states, in 397ms
   t < 2 * n => F[i + t] = F[i + t + n] with 198 states, in 17ms
    At t < 2 * n => F[i + t] = F[i + t + n] with 7 states, in 87ms
     Ei At t < 2 * n => F[i + t] = F[i + t + n] with 5 states, in 1ms
      n > 0 & Ei At t < 2 * n => F[i + t] = F[i + t + n] with 5 states, in 0ms
overall time: 618ms
\end{verbatim}
\end{proof}

The largest intermediate automaton had 674 states.

Next, we encode the orders and positions of all cubes.  We build a 
DFA accepting the language
$$ 
\{ (n,i)_F \ : \ (n > 0) \ \wedge \ \forall t<2n \  {\bf f}[i+t] = {\bf f}[i+n+t]
\} . $$


\begin{theorem}
The language
$$ \{ (n,i)_F \ : \ \text{there is a cube of order $n$ beginning at
	position $i$ in {\bf f}} \}$$
is accepted by the automaton in Figure~\ref{cubeorders}.

\begin{figure}[H]
\begin{center}
\includegraphics[width=6.5in]{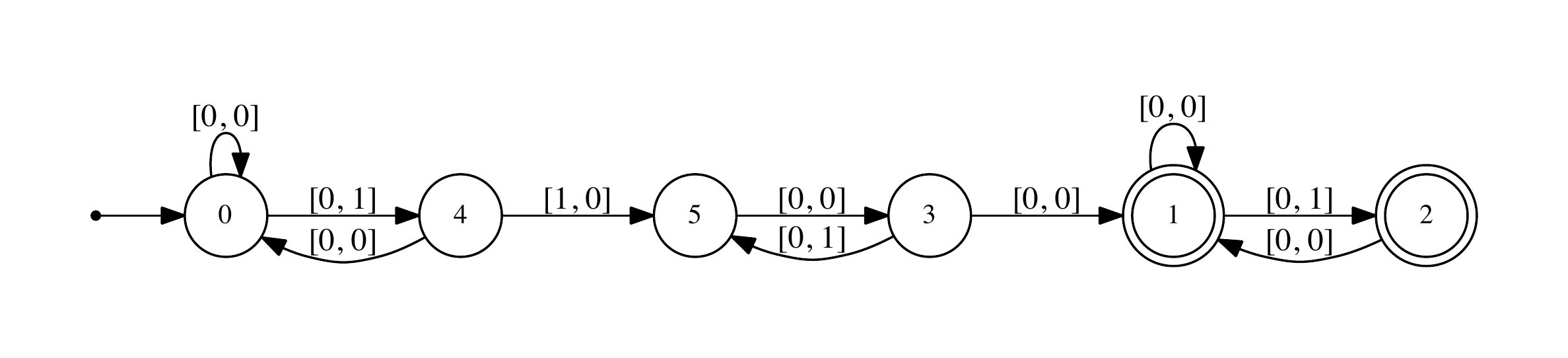}
\caption{Automaton accepting orders and positions of all cubes in $\bf f$}
\label{cubeorders}
\end{center}
\end{figure}
\end{theorem}

Finally, we consider all the maximal repetitions in $\bf f$.  
Let $p(x)$ denote the length of the least period of $x$.  
If ${\bf x} = a_0 a_1 \cdots$, by ${\bf x}[i..j]$
we mean $a_i a_{i+1} \cdots a_j$.
Following Kolpakov and Kucherov \cite{Kolpakov&Kucherov:1999a}, we
say that ${\bf f}[i..i+n-1]$ is a {\it maximal repetition} if
\begin{itemize}
\item[(a)] $p({\bf f}[i..i+n-1]) \leq n/2$;
\item[(b)] $p({\bf f}[i..i+n-1]) < p({\bf f}[i..i+n]) $;
\item[(c)] If $i > 0$ then $p({\bf f}[i..i+n-1]) < p({\bf f}[i-1..i+n-1])$.
\end{itemize}

\begin{theorem}
The factor ${\bf f}[i..i+n-1]$ is a maximal repetition of $\bf f$
iff $(n,i)_F$ is accepted by the automaton depicted in 
Figure~\ref{maxreps2}.


\begin{figure}[H]
\begin{center}
\includegraphics[width=3.5in]{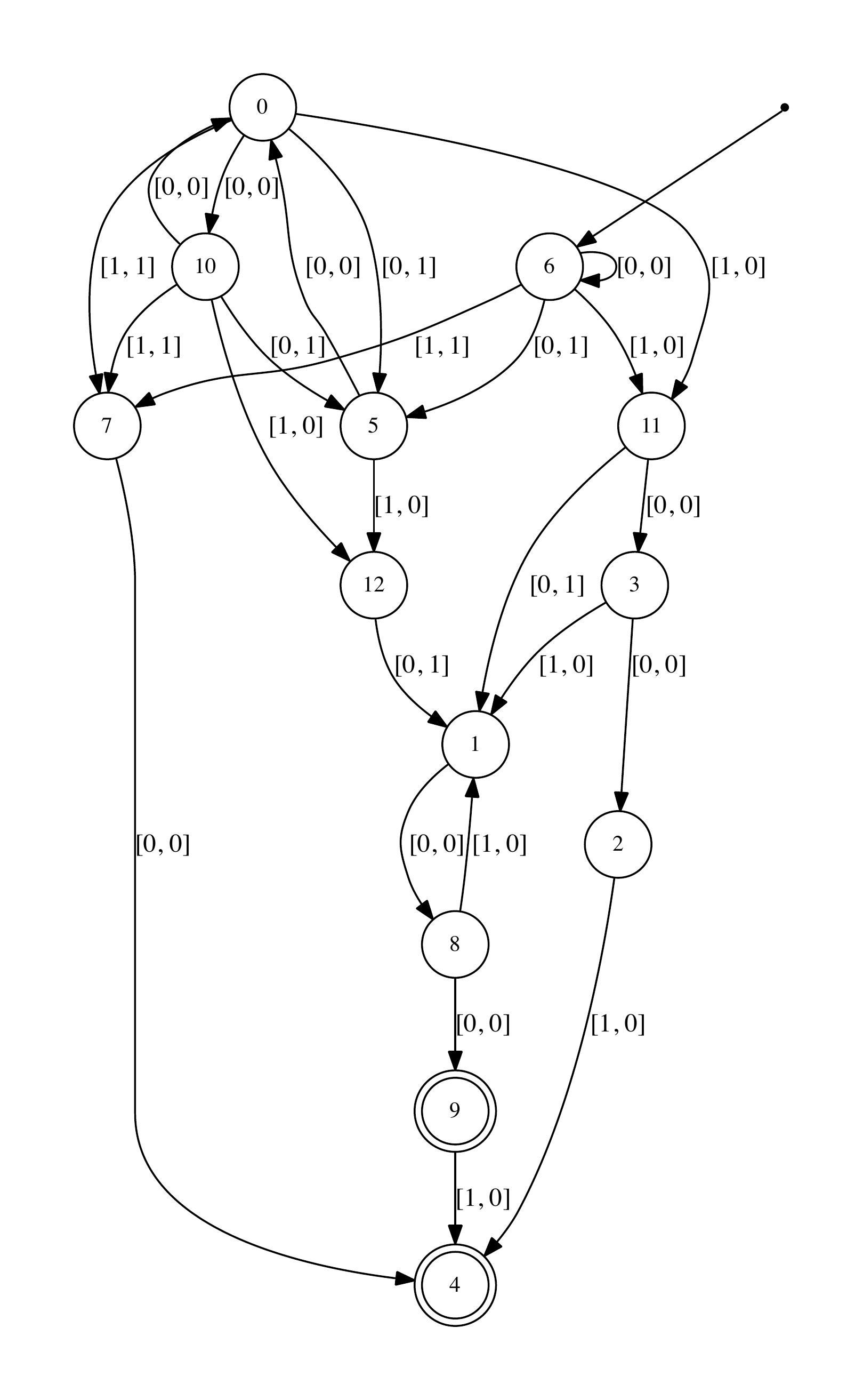}
\caption{Automaton accepting occurrences of maximal repetitions in $\bf f$}
\label{maxreps2}
\end{center}
\end{figure}
\end{theorem}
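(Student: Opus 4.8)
The plan is to reduce the three conditions (a)--(c) defining a maximal repetition to a single first-order predicate over $\Enn$ involving indexing into $\bf f$, and then run Procedure~\ref{proc:Fib-auto-decide} on it. The starting point is that the relation ``$p$ is a period of ${\bf f}[i..j]$'' is directly expressible: writing $\mathrm{per}(i,j,p)$ for $(p \geq 1) \wedge (\forall t)\, (i \leq i+t) \wedge (i+t+p \leq j) \IMPLY {\bf f}[i+t] = {\bf f}[i+t+p]$, this is a first-order formula in the allowed language. From this, ``$p$ is the \emph{least} period of ${\bf f}[i..j]$'' is $\mathrm{per}(i,j,p) \wedge (\forall q)\, (1 \leq q < p) \IMPLY \neg\,\mathrm{per}(i,j,q)$, which I will call $\mathrm{leastper}(i,j,p)$; then $p(x) < p(y)$ style comparisons become $\exists p\, \exists q\ \mathrm{leastper}(\cdot)\wedge\mathrm{leastper}(\cdot)\wedge p<q$.

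Concretely, I would introduce existentially quantified witnesses $p_1$ for the least period of ${\bf f}[i..i+n-1]$, $p_2$ for that of ${\bf f}[i..i+n]$, and (when $i>0$) $p_3$ for that of ${\bf f}[i-1..i+n-1]$, each pinned down by a $\mathrm{leastper}$ conjunct. Condition (a) is then $2p_1 \leq n$; condition (b) is $p_1 < p_2$; condition (c) is $(i = 0) \OR p_1 < p_3$. One must be a little careful that ${\bf f}[i..i+n-1]$ is nonempty and that the intended endpoints are expressed as $i+n-1$ etc.\ using only $+$ (so e.g.\ introduce $j$ with $j+1 = i+n$), but these are routine encodings in $\Th(\Enn,0,1,+)$. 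The full predicate $\psi(n,i)$ is the conjunction of the three conditions together with the witness-defining clauses, with $n$ and $i$ free; feeding $\psi$ to the decision procedure yields a DFA over $\Sigma_2^2$ accepting exactly $\{(n,i)_F \st {\bf f}[i..i+n-1]$ is a maximal repetition$\}$, which upon minimization is the automaton of Figure~\ref{maxreps2}.

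The main obstacle I anticipate is not correctness in principle but the size of the intermediate automata: the predicate nests several universal quantifiers (one inside each $\mathrm{per}$, one inside each $\mathrm{leastper}$) and three separate $\mathrm{leastper}$ blocks, so the tower-of-exponentials bound from Section~\ref{decide} is genuinely a concern here, and the $15$-state adder inflates every arithmetic subformula. In practice this is handled by building the automaton incrementally --- constructing $\mathrm{per}$, then $\mathrm{leastper}$, then each condition, minimizing after every step --- and by reusing the already-computed square and cube automata where possible; the real content of the proof is simply exhibiting the resulting $6$-ish state minimized DFA and verifying (by inspection, or by a separate mechanical equivalence check) that it matches Figure~\ref{maxreps2}. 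A secondary subtlety worth a remark is the edge case $i = 0$ in condition (c), which the disjunction $(i=0)\OR p_1<p_3$ handles, and the requirement in (a) that the factor actually contain a repetition (i.e.\ $n \geq 2$ and $p_1 \leq n/2$), automatically enforced once $2p_1 \leq n$ and $p_1 \geq 1$.
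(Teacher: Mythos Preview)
Your proposal is correct and follows exactly the paper's methodology: encode the maximal-repetition conditions (a)--(c) as a first-order predicate in $\Th(\Enn,+)$ with indexing into $\bf f$, then run Procedure~\ref{proc:Fib-auto-decide} and minimize. The paper gives no explicit proof for this theorem, but its later treatment of the Kolpakov--Kucherov count in Section~\ref{enumer} uses a slightly leaner formulation---a single existential period witness $p$ for ${\bf f}[i..j]$ with $2p \leq j-i+1$, together with the assertions that neither one-symbol extension has any period $\leq p$---in place of your three separate $\mathrm{leastper}$ blocks, which directly addresses the intermediate-blowup worry you raise.
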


An {\it antisquare} is a nonempty word of the form $x \overline{x}$, where
$\overline{x}$ denotes the complement of $x$ ($1$'s changed to $0$'s and
vice versa).  Its order is $|x|$.  For a new (but small) result we prove

\begin{theorem}
The Fibonacci word $\bf f$ contains exactly four antisquare factors:
$01, 10, 1001, $ and $10100101$.
\end{theorem}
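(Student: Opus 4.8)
The plan is to express ``there is an antisquare of order $n$ beginning at position $i$'' as a first-order predicate and feed it to Procedure~\ref{proc:Fib-auto-decide}. An antisquare $x\overline{x}$ of order $n$ starting at position $i$ means $|x| = n > 0$ and ${\bf f}[i+t] \neq {\bf f}[i+n+t]$ for all $t < n$, which over the binary alphabet is the same as ${\bf f}[i+t] + {\bf f}[i+n+t] = 1$, or simply $\neg({\bf f}[i+t] = {\bf f}[i+n+t])$. So first I would build the automaton for the language
$$ A = \{ (n,i)_F \st (n > 0) \ \wedge \ \forall t < n \ \neg({\bf f}[i+t] = {\bf f}[i+n+t]) \}, $$
and then project onto the first coordinate to get the set of orders $N = \{ (n)_F \st \exists i \ (n,i)_F \in A \}$.

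Second, I would inspect the resulting automaton for $N$. The theorem claims the antisquare factors are exactly $01$, $10$, $1001$, and $10100101$, which have orders $1$, $1$, $2$, and $4$ respectively; so I expect the order-automaton to accept precisely the (finite) language $\{(1)_F, (2)_F, (4)_F\} = \{1, 10, 101\}$ (noting $3$ is not a valid Zeckendorf string and $(4)_F = 101$). Since this language is finite, the automaton has no reachable cycle through an accepting path, confirming that only finitely many orders occur. Then, for each such order $n \in \{1,2,4\}$, I would read off from the automaton $A$ the finite set of starting positions $i$, and for each pair $(n,i)$ extract the actual factor ${\bf f}[i..i+2n-1]$ by evaluating ${\bf f}$ at the relevant indices (again mechanically, via the DFAO in Figure~\ref{fig:f-dfao}), checking that in every case the factor is one of the four words listed. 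Alternatively, and more slickly, I would directly check with the decision procedure that for each of the four claimed words $w$, the predicate ``$w$ occurs in ${\bf f}$'' holds, and that the predicate ``there exists an antisquare factor not equal to any of these four'' — phrased by additionally asserting that the factor ${\bf f}[i..i+2n-1]$ differs, letter by letter, from each fixed pattern whenever lengths match — yields the empty language.

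The one genuine obstacle is bounding the number of occurrences, not just the number of orders: a priori an antisquare of a fixed order could occur at infinitely many positions, which would still be consistent with finitely many \emph{orders} but would contradict ``exactly four antisquare factors'' only after we argue that occurrences at different positions give the same factor. But this is automatically handled: the automaton $A$ itself, being finite-state on input $(n,i)_F$, accepts either finitely many pairs or infinitely many; if infinitely many, they must share one of the finitely many orders, and for a fixed order $n$ the factor ${\bf f}[i..i+2n-1]$ is determined by $i$, so I would verify mechanically that all accepted $(n,i)$ with a given $n$ yield the same length-$2n$ factor (equivalently, check the predicate $\forall i\, \forall i' \, [(n,i)\in A \wedge (n,i')\in A] \Rightarrow \forall t<2n\ {\bf f}[i+t]={\bf f}[i'+t]$ for the three values $n=1,2,4$, or just note that $A$ turns out to accept only finitely many pairs and enumerate them). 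I expect $A$ to be a very small automaton accepting only a handful of pairs, so in practice the verification reduces to a short finite computation and a brief inspection of the output automaton, with no induction or case analysis needed beyond what the decision procedure performs internally.
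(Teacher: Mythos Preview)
Your proposal is correct and follows essentially the same approach as the paper: both build the predicate $\exists i\ \forall k<n\ {\bf f}[i+k]\neq {\bf f}[i+k+n]$, run the decision procedure to find that the only orders are $1$, $2$, and $4$, and then finish by inspection. The paper is terser about the last step, simply saying ``Inspection of the factors of these lengths proves the result'' (there are only $3$, $5$, and $9$ factors of lengths $2$, $4$, $8$ respectively, so one just lists them), whereas you spell out more elaborate mechanical alternatives; your extra discussion of infinitely many occurrences versus finitely many distinct factors is correct but unnecessary once the orders are known to be bounded.
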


\begin{proof}
The predicate for having an antisquare of length $n$ is
$$ \exists i \ \forall k < n \ {\bf f}[i+k] \not= {\bf f}[i+k+n] .$$
When we run this we get the automaton depicted in Figure~\ref{antisquare},
specifying that the only possible orders are $1$, $2$, and $4$, which
correspond to words of length $2$, $4$, and $8$.

\begin{figure}[H]
\begin{center}
\includegraphics[width=5.5in]{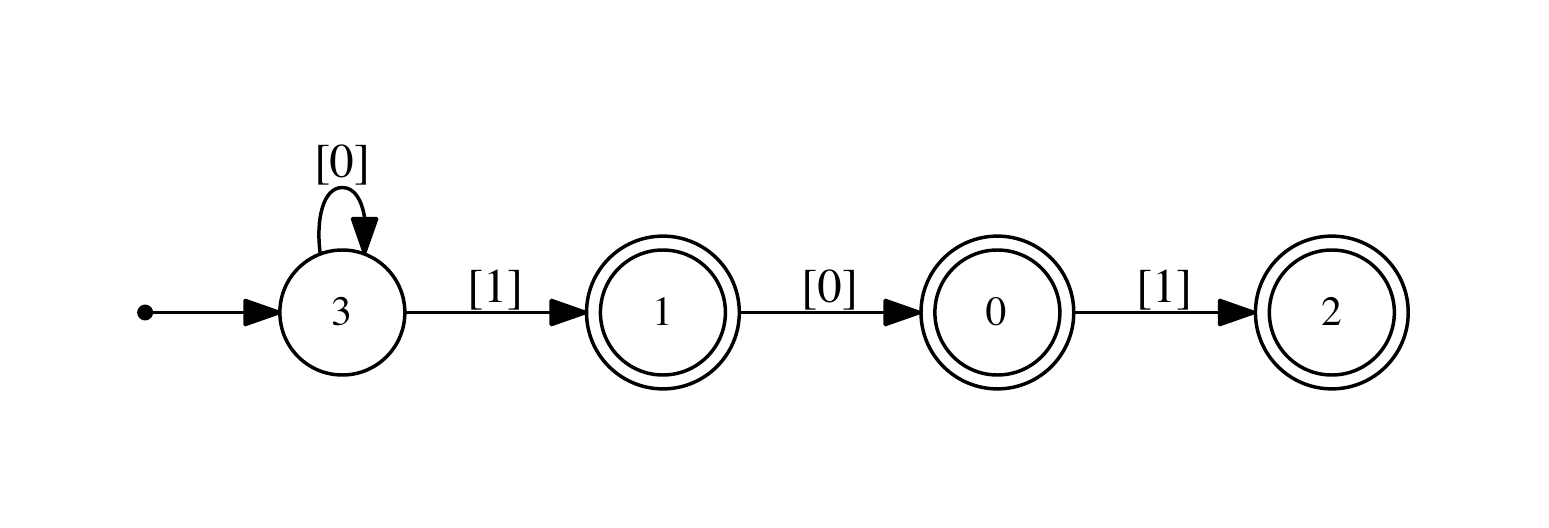}
\caption{Automaton accepting orders of antisquares in $\bf f$}
\label{antisquare}
\end{center}
\end{figure}

Inspection of the factors of these lengths proves the result.
\end{proof}

\subsection{Palindromes and antipalindromes}

We now turn to a characterization of the palindromes in $\bf f$.
Using the predicate
$$ \exists i \ \forall j<n \ {\bf f}[i+j] = {\bf f}[i+n-1-j], $$
we specify those lengths $n$ for which there is a palindrome of 
length $n$.      Our program then
recovers the following
result of Chuan \cite{Chuan:1993b}:

\begin{theorem}
There exist palindromes of every length $\geq 0$ in $\bf f$.
\end{theorem}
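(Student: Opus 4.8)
The plan is to follow the same mechanical recipe that powered every preceding theorem in this subsection: express ``$\bf f$ contains a palindrome of length $n$'' as a first-order predicate over $\Th(\Enn,0,1,+)$ indexing into $\bf f$, feed it to the decision procedure of Procedure~\ref{proc:Fib-auto-decide}, and inspect the resulting DFA. The excerpt has already written the predicate for us:
\[
\psi(n) \ \coloneq \ \exists i \ \forall j < n \ {\bf f}[i+j] = {\bf f}[i+n-1-j].
\]
This is a legitimate input to the decision procedure: it uses one existential and one bounded universal quantifier, addition and subtraction, a comparison, and two indexings into the Fibonacci-automatic word $\bf f$. So the procedure outputs a DFA $A$ over $\Sigma_2$ accepting exactly $0^*\{(n)_F : \psi(n)\text{ holds}\}$.

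The first step is to run the procedure and obtain $A$. The claim to be verified is then purely a statement about $A$: namely that $A$ accepts $(n)_F$ for every $n \ge 0$, equivalently that $L(A) \supseteq 0^*(\epsilon + 1(0+01)^*)$, the language of all canonical Fibonacci representations (as identified in Section~\ref{fibrep}). Since $A$ is a finite automaton and the set of canonical representations is a regular language with a small known automaton, this containment is decidable by a routine product/emptiness construction: check that $L(A)^c \cap (\text{canonical representations}) = \emptyset$. In practice one simply observes that the minimized $A$ accepts all canonical strings — the log would report a tiny automaton, and one exhibits it. This completes the proof: for every $n$, $\psi(n)$ holds, i.e., $\bf f$ has a palindromic factor of length $n$.

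The only real ``obstacle'' here is not mathematical but a sanity check on the predicate. One must make sure the arithmetic subexpression $i+n-1-j$ is handled correctly when $n = 0$ (the predicate is then vacuously true since there is no $j < 0$, matching the empty palindrome) and that subtraction over $\Enn$ does not cause spurious behavior: because $j < n$ we always have $i + n - 1 - j \ge i \ge 0$, so the term is well-defined in $\Enn$ and no truncation artifacts arise. Given that, the decision procedure is guaranteed to return the correct acceptance set, and the theorem follows by reading off that the accepted language is all of $0^*(\epsilon+1(0+01)^*)$. (As a consistency check against the known structure of $\bf f$, one may note that $\bf f$ is Sturmian, hence has exactly one palindrome of each even length and one of each odd length among its factors, which is more than enough to force $\psi(n)$ for all $n$; but the mechanical verification is self-contained and does not need this.)
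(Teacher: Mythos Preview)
Your proposal is correct and follows exactly the paper's own approach: run the palindrome predicate $\exists i\ \forall j<n\ {\bf f}[i+j]={\bf f}[i+n-1-j]$ through Procedure~\ref{proc:Fib-auto-decide} and observe that the resulting automaton accepts every canonical Fibonacci representation. One small slip in your closing parenthetical aside: a Sturmian word has \emph{two} palindromic factors of each odd length, not one (this is precisely the Droubay theorem proved immediately afterward), but as you say, the mechanical verification does not rely on this remark.
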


We could also characterize the positions of all 
nonempty palindromes.  The resulting 21-state automaton is
not particularly enlightening, but is included here to show
the kind of complexity that can arise.

\begin{figure}[H]
\begin{center}
\includegraphics[width=4.8in]{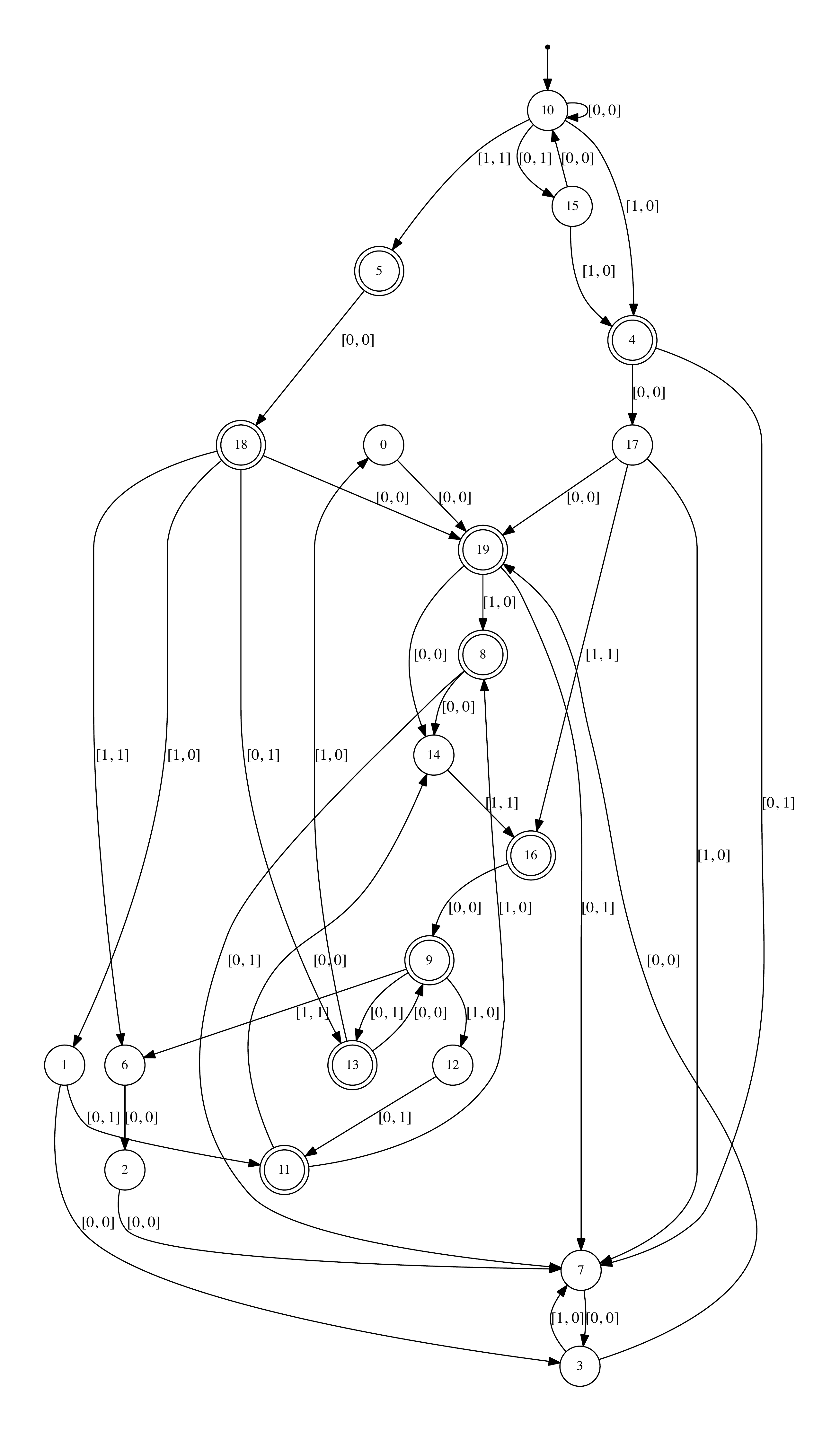}
\caption{Automaton accepting orders and positions of all 
nonempty palindromes in $\bf f$}
\label{palindrome-orders}
\end{center}
\end{figure}

Although the automaton in Figure~\ref{palindrome-orders}
encodes all palindromes, more specific information is
a little hard to deduce from it.  For example, let's prove
a result of Droubay \cite{Droubay:1995}:

\begin{theorem}
The Fibonacci word $\bf f$ has exactly one palindromic factor of
length $n$ if $n$ is even, and exactly
two palindromes of
length $n$ if $n$ odd.
\end{theorem}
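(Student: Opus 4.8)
The plan is to reduce the statement to a finite-automaton computation of exactly the kind used repeatedly in this section. The key observation is that the set of distinct palindromic factors of length $n$ in $\mathbf{f}$ is in bijection with the set of \emph{leftmost} starting positions of palindromes of length $n$; so it suffices to build an automaton that, on input $(n,i)_F$, accepts iff there is a palindrome of length $n$ occurring at position $i$ \emph{and} no palindrome of length $n$ occurs at any position $j < i$. Concretely, I would form the predicate
$$
\mathrm{PAL}(n,i) \;\coloneq\; \bigl(\forall j<n\ \ \mathbf{f}[i+j]=\mathbf{f}[i+n-1-j]\bigr)
\ \wedge\ \neg\,\exists i'\ \bigl( (i'<i)\ \wedge\ \forall j<n\ \ \mathbf{f}[i'+j]=\mathbf{f}[i'+n-1-j]\bigr).
$$
By Theorem~\ref{one}'s Fibonacci analogue (Procedure~\ref{proc:Fib-auto-decide}), this predicate is decidable and yields a DFA $M$ with input alphabet $\Sigma_2^2$ accepting $\{(n,i)_F : \mathrm{PAL}(n,i)\}$. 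Since each distinct palindromic factor of length $n$ has a unique leftmost occurrence, the number of palindromic factors of length $n$ equals the number of $i$ with $\mathrm{PAL}(n,i)$ true.

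Next I would quantify away $i$ in two different ways to extract the count. Define $\mathrm{PALcount}_{\geq k}(n)$ to say ``there are at least $k$ leftmost-positions'', e.g. for $k=2$: $\exists i_1\ \exists i_2\ (i_1 \neq i_2)\wedge \mathrm{PAL}(n,i_1)\wedge\mathrm{PAL}(n,i_2)$, and similarly for $k=3$. Running the decision procedure on the predicate ``$n$ is even $\IMPLY$ ($\mathrm{PAL}$ has exactly one solution in $i$)'' and on ``$n$ is odd $\IMPLY$ ($\mathrm{PAL}$ has exactly two solutions in $i$)'' — each of which is a closed first-order sentence, hence the resulting automaton is either the one accepting everything or the one accepting nothing — settles the theorem, provided both sentences evaluate to true. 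Parity of $n$ is expressible since $n$ even means $\exists m\ (n = m+m)$, and ``exactly one'' / ``exactly two'' are the usual combinations of $\exists$ and uniqueness clauses. So the whole argument is: write the sentence, run the program, observe it returns the trivial ``true'' automaton.

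The main obstacle is not conceptual but computational and presentational: the nested quantifier alternations (an inner $\neg\exists$ over positions, then an outer counting quantifier, then the universal-over-$n$ implication) push the intermediate automata through several determinizations, so the tower-of-exponentials bound in the remarks after Procedure~\ref{proc:Fib-auto-decide} is in play and one must check the blow-up is tolerable — experience in this section (hundreds of states, sub-second runtimes) suggests it is, but this is the step that could in principle fail to terminate in practice. A secondary point to be careful about is boundary behaviour: the predicate $\mathbf{f}[i+j]=\mathbf{f}[i+n-1-j]$ only makes sense when $i+n-1$ does not overflow, but since $\mathbf{f}$ is infinite there is no overflow, and the empty palindrome ($n=0$) is handled trivially (it has exactly one occurrence class, matching the ``$n$ even'' case). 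As an alternative route that sidesteps the counting quantifier entirely, one could instead feed the automaton of Figure~\ref{palindrome-orders} (orders and positions of all nonempty palindromes) into a subroutine that, for the fixed automaton $M$, builds the linear representation counting accepted pairs with first coordinate $n$ and verifies the closed-form count $1$ or $2$ by the methods of Section~\ref{enumer}; but the direct logical sentence above is the cleanest and is what I would present.
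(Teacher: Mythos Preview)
Your predicate $\mathrm{PAL}(n,i)$ does not do what you claim. The second conjunct
\[
\neg\,\exists i'\ \bigl( (i'<i)\ \wedge\ \forall j<n\ \ \mathbf{f}[i'+j]=\mathbf{f}[i'+n-1-j]\bigr)
\]
asserts that \emph{no palindrome whatsoever} of length $n$ occurs at any earlier position $i'$. Hence $\mathrm{PAL}(n,i)$ is true for at most one value of $i$ for each $n$ --- namely the single leftmost position at which any length-$n$ palindrome appears. Your counting predicates $\mathrm{PALcount}_{\geq 2}(n)$ and $\mathrm{PALcount}_{\geq 3}(n)$ built from this $\mathrm{PAL}$ are therefore identically false, and the implication ``$n$ odd $\Rightarrow$ exactly two solutions'' comes out false for every odd $n\geq 1$. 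The bijection you announce in the first paragraph is between distinct palindromic factors and their respective \emph{first occurrences}, but the formula you wrote captures something else entirely.

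The fix is to replace the second conjunct by ``no earlier occurrence of \emph{this same factor}'':
\[
\neg\,\exists i'<i\ \ \forall j<n\ \ \mathbf{f}[i'+j]=\mathbf{f}[i+j],
\]
so that $\mathrm{PAL}(n,i)$ reads ``$\mathbf{f}[i..i+n-1]$ is a palindrome and $i$ is its first occurrence''. With that correction your outline goes through. The paper takes a slightly different route that avoids the first-occurrence filter altogether: it writes directly a predicate ``there exists a palindrome of length $n$ and every palindrome of length $n$ is equal to it as a word'' for the exactly-one case, and the analogous predicate with two witnesses for the exactly-two case, then checks that the resulting automata coincide with the automata for $\exists m\ (n=2m)$ and its complement. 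Both approaches are equivalent in strength and cost; the paper's is marginally cleaner because it compares factors for equality rather than tracking positions, but once your $\mathrm{PAL}$ is repaired there is no substantive difference.
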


\begin{proof}
First, we obtain an expression for the lengths $n$ for which there is
exactly one palindromic factor of length $n$.
\begin{multline*}
\exists i \ (\forall t<n \ {\bf f}[i+t] = {\bf f}[i+n-1-t]) 
\ \wedge \  \\
\forall j \ (\forall s<n\ {\bf f}[j+s] = {\bf f}[j+n-1-s]) \implies
( \forall u<n\ {\bf f}[i+u] = {\bf f}[j+u])
\end{multline*}
The first part of the predicate asserts that ${\bf f}[i..i+n-1]$
is a palindrome, and the second part asserts that any
palindrome ${\bf f}[j..j+n-1]$ of the same length must in fact be equal to
${\bf f}[i..i+n-1]$.

When we run this predicate through our program we get the automaton
depicted below in Figure~\ref{onepal}.

\begin{figure}[H]
\begin{center}
\includegraphics[width=6.5in]{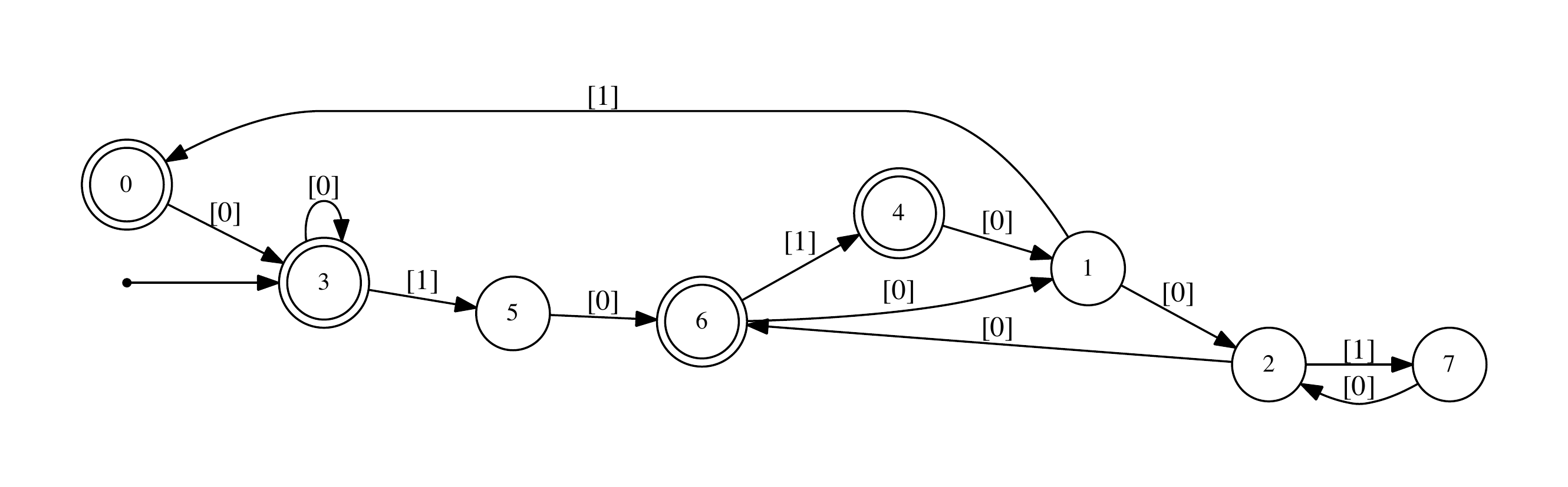}
\caption{Automaton accepting lengths with exactly one palindrome}
\label{onepal}
\end{center}
\end{figure}

It may not be obvious, but this automaton accepts exactly the
Fibonacci representations of the even numbers.  The easiest way to
check this is to use our program on the predicate
$\exists i \ n = 2i$ and verify that the resulting automaton
is isomorphic to that in Figure~\ref{onepal}.

Next, we write down a predicate for the existence of exactly two
distinct palindromes of length $n$.  The predicate asserts the
existence of two palindromes ${\bf x}[i..i+n-1]$ and
${\bf x}[j..j+n-1]$ that are distinct and for which any palindrome
of the same length must be equal to one of them.

\begin{multline*}
\exists i\ \exists j\  (\forall t<n\ {\bf f}[i+t] = {\bf f}[i+n-1-t])
\ \wedge \ 
(\forall s<n\ {\bf f}[j+s] = {\bf f}[j+n-1-s]) 
\ \wedge \  \\
(\exists m<n\ {\bf f}[i+m] \not= {\bf f}[j+m]) 
\ \wedge \  \\
( \forall u  (\forall k < n\ {\bf f}[u+k] = {\bf f}[u+n-1-k]) \implies
(( \forall l<n\  {\bf f}[u+l] = {\bf f}[i+l]) \ \vee \ (\forall p<n \ {\bf f}[u+p] = {\bf f}[j+p]))) 
\end{multline*}

Again, running this through our program gives us an automaton accepting
the Fibonacci representations of the odd numbers.  We omit the automaton.
\end{proof}

The prefixes are factors of particular interest.  Let us determine
which prefixes are palindromes:

\begin{theorem}
The prefix ${\bf f}[0..n-1]$ of length $n$ is a palindrome if and
only if $n = F_i - 2$ for some $i \geq 3$.
\end{theorem}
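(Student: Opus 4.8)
The plan is to follow the same mechanical strategy used throughout the paper: express ``the prefix of length $n$ is a palindrome'' as a first-order predicate over $\Th(\Enn,0,1,+)$ with indexing into $\bf f$, feed it to the decision procedure (Procedure~\ref{proc:Fib-auto-decide}), and then identify the language accepted by the resulting DFA. The predicate is simply
$$ \forall j < n \ \ {\bf f}[j] = {\bf f}[n-1-j], $$
which asserts that ${\bf f}[0..n-1]$ equals its own reversal. Running this through the program produces a DFA accepting $\{ (n)_F \st {\bf f}[0..n-1] \text{ is a palindrome}\}$, and the claim is that this language is exactly $\{ (F_i - 2)_F \st i \geq 3\}$.

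The remaining work is to verify that the automaton the program outputs accepts precisely the Fibonacci representations of the numbers $F_i - 2$ for $i \geq 3$. The cleanest way to do this mechanically — consistent with the paper's approach for Figure~\ref{onepal}, where the evens were recognized by comparing with the automaton for $\exists i\ n = 2i$ — is to build a second automaton directly for the set $\{F_i - 2 : i \geq 3\}$ and check that the two automata are isomorphic (equivalently, that their symmetric difference is empty). To describe $\{F_i-2 : i\geq 3\}$ by a predicate, I would use the fact, already available from Theorem~\ref{squares}'s style of reasoning, that the set of Fibonacci numbers $\{F_i : i \geq 2\}$ is Fibonacci-automatic: indeed $(F_i)_F = 10^{i-2}$, so ``$m$ is a Fibonacci number'' is recognized by the trivial automaton for $10^*$. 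Then $n = F_i - 2$ for some $i \geq 3$ is expressed as $\exists m\ (m \in 10^* \text{ with } m \geq F_3 = 2) \wedge n + 2 = m$, i.e. $\exists m\ (m+2 \text{ has Fibonacci representation } 10^*) \wedge (m \geq 2) \wedge (n = m)$; more simply, one checks $n+2 \in 1000^*$ (since $F_3 - 2 = 0$, $F_4 - 2 = 1$, $F_5 - 2 = 3 = (100)_F$, $F_6 - 2 = 6 = (1000)_F$, and in general $(F_i)_F = 10^{i-2}$), so the target language is $\{n : (n+2)_F \in 10^* \text{ and } n+2 \geq F_5\} \cup \{0,1\}$. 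Alternatively, and perhaps most transparently, one observes that a separate short computation (e.g. the automaton for palindromic prefixes of $\bf f$) directly outputs a small DFA whose language is manifestly $\{(F_i-2)_F : i \geq 3\}$ by inspection of its structure.

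For a human-checkable supplement, it helps to recall the combinatorial reason behind the statement: the palindromic prefixes of a Sturmian word are well understood, and for $\bf f$ the prefix ${\bf f}[0..n-1]$ is a palindrome exactly when it is a ``central word,'' and the central prefixes of $\bf f$ have lengths $F_i - 2$. This is precisely Droubay's theorem cited just above (Droubay \cite{Droubay:1995}), and it matches the standard fact that ${\bf f}[0..F_i-1]$ differs from a palindrome only in its last two letters. But in the spirit of the paper, the proof is complete once the program's output automaton is exhibited and checked, by automaton isomorphism, to coincide with the automaton for $\{(F_i-2)_F : i \geq 3\}$.

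The main obstacle is the identification step: the raw DFA returned by the decision procedure need not be presented in a form where ``accepts $\{(F_i-2)_F\}$'' is visually obvious, so one must either (i) recognize the language $\epsilon + 1000^*$ pattern after shifting by $2$ — and shifting by a constant is itself a first-order operation, so this can be automated — or (ii) construct the comparison automaton for $\{F_i - 2 : i \geq 3\}$ independently and invoke a DFA equivalence check. Either route is routine given the machinery already set up; there is no genuine mathematical difficulty beyond correctly writing the predicate for the target set and trusting the (already-justified) adder and comparison automata.
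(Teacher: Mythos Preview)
Your proposal is correct and takes essentially the same approach as the paper: the paper uses the identical predicate $\forall i<n\ {\bf f}[i]={\bf f}[n-1-i]$, runs the decision procedure, and then simply reports that the resulting automaton accepts $\epsilon + 1 + 10(10)^*(0+01)$, which it identifies by inspection as the Fibonacci representations of $F_i-2$. Your more elaborate identification step (build a second automaton for $\{F_i-2\}$ via ``$n+2$ is a Fibonacci number'' and test equivalence) is a valid alternative, though the paper bypasses it by reading off the regex directly.
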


\begin{proof}
We use the predicate
$$ \forall i<n\ {\bf f}[i] = {\bf f}[n-1-i]$$
obtaining an automaton accepting
$\epsilon + 1 + 10(10)^*(0+01)$, which
are precisely the representations of $F_i - 2$.
\end{proof}

Next, we turn to the property of ``mirror invariance''.  We say
an infinite word $\bf w$ is mirror-invariant 
if whenever $x$ is a factor of $\bf w$, then so is $x^R$.
We can check this for $\bf f$ by creating a predicate for the assertion that
for each factor $x$ of length $n$, the factor $x^R$ appears somewhere
else:
$$\forall i \geq 0 \ \exists j \text{ such that }
	{\bf f}[i..i+n-1] = {\bf f}[j..j+n-1]^R .$$
When we run this through our program we discover that it accepts
the representations of all $n \geq 0$.  Here is the log:

\begin{verbatim}
t < n with 7 states, in 99ms
 F[i + t] = F[j + n - 1 - t] with 264 states, in 7944ms
  t < n => F[i + t] = F[j + n - 1 - t] with 185 states, in 89ms
   At t < n => F[i + t] = F[j + n - 1 - t] with 35 states, in 182ms
    Ej At t < n => F[i + t] = F[j + n - 1 - t] with 5 states, in 2ms
     Ai Ej At t < n => F[i + t] = F[j + n - 1 - t] with 3 states, in 6ms
overall time: 8322ms
\end{verbatim}

Thus we have proved:
\begin{theorem}
The word ${\bf f}$ is mirror invariant.
\end{theorem}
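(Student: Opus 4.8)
The plan is to express mirror invariance of $\bf f$ as a single first-order sentence in the theory underlying Procedure~\ref{proc:Fib-auto-decide} and then read off the answer from the automaton it produces. The word $\bf f$ is mirror invariant precisely when, for every length $n$ and every position $i$, the reversal of the factor ${\bf f}[i..i+n-1]$ occurs somewhere in $\bf f$; and the reversal of ${\bf f}[i..i+n-1]$ equals ${\bf f}[j..j+n-1]$ exactly when ${\bf f}[i+t] = {\bf f}[j+n-1-t]$ holds for all $t<n$. Hence mirror invariance is the truth of
$$ \forall n \ \forall i \ \exists j \ \forall t<n \ {\bf f}[i+t] = {\bf f}[j+n-1-t] . $$
Keeping $n$ free, it suffices to construct the automaton for the one-variable predicate $\forall i\ \exists j\ \forall t<n\ {\bf f}[i+t]={\bf f}[j+n-1-t]$ and to check that its language is the full set of canonical Fibonacci representations $\epsilon + 1(0+01)^*$.

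I would build this in the usual stages. First, a small DFA for the guard $t<n$. Next, a DFA for the atomic relation ${\bf f}[i+t]={\bf f}[j+n-1-t]$, obtained by threading the Fibonacci adder $M_{\rm add}$ to form the index $i+t$, using subtraction (available since $1$ and $+$, hence $-$, lie in $\Th(\Enn,0,1,+)$) to form the index $j+n-1-t$, running two copies of the DFAO for $\bf f$ on those two indices, testing equality of the two outputs, and intersecting with the regular language of canonical tuple-representations. Then form the implication $(t<n)\Rightarrow({\bf f}[i+t]={\bf f}[j+n-1-t])$, apply $\forall t$, then $\exists j$, then $\forall i$, each step being a projection together with the attendant complementations and determinization --- all primitives supplied by Procedure~\ref{proc:Fib-auto-decide}. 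Finally, minimize the result and compare it against the trivial automaton for $\epsilon + 1(0+01)^*$; if they agree, $\bf f$ is mirror invariant.

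I expect the atomic automaton for ${\bf f}[i+t]={\bf f}[j+n-1-t]$ to be the main obstacle: it is a relation in four variables, it uses the $15$-state adder in two places --- one of them a subtraction, which demands care with leading-zero padding and with intersecting against the canonical-representation language --- and it is where the state count peaks during the construction, consistent with the $264$-state intermediate automaton in the log above. The one delicate point is that the term $j+n-1-t$ must be well-defined: the guard $t<n$ forces $n-1-t\ge 0$, so the subtraction never goes negative, and the constant $1$ is handled directly. Once that automaton is correct, the three quantifier eliminations collapse the state count rapidly, and inspecting the final automaton --- verifying that it accepts exactly the representations of every $n\ge 0$ --- completes the proof.
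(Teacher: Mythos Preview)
Your proposal is correct and follows essentially the same route as the paper: you use the identical predicate $\forall i\ \exists j\ \forall t<n\ {\bf f}[i+t]={\bf f}[j+n-1-t]$, feed it through Procedure~\ref{proc:Fib-auto-decide}, and verify that the resulting automaton accepts every $n\ge 0$. Your write-up is more explicit about how the atomic automaton for ${\bf f}[i+t]={\bf f}[j+n-1-t]$ is assembled and about why the subtraction $j+n-1-t$ is well-defined under the guard $t<n$, but the underlying argument is the same as the paper's.
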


An {\it antipalindrome} is a word $x$ satisfying $x = \overline{x^R}$.
For a new
(but small) result, we determine all possible antipalindromes in
$\bf f$:

\begin{theorem}
The only nonempty antipalindromes in $\bf f$ are 
$01$, $10$, $(01)^2$, and $(10)^2$.
\end{theorem}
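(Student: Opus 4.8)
\emph{Proof proposal.} The plan is to follow the same mechanical strategy used above for antisquares (Figure~\ref{antisquare}) and for palindromes. First I would write down a first-order predicate, with free variable $n$, asserting that $\bf f$ contains an antipalindrome of length $n$:
$$ \exists i \ \forall j < n \ {\bf f}[i+j] \not= {\bf f}[i+n-1-j] . $$
This says precisely that some length-$n$ factor ${\bf f}[i..i+n-1]$ equals the complement of its own reversal. Feeding this predicate into Procedure~\ref{proc:Fib-auto-decide}, I expect to obtain a small automaton whose language is exactly the set of Fibonacci representations of $n = 2$ and $n = 4$. Note that odd $n$ is ruled out automatically: taking $j = (n-1)/2$ would force ${\bf f}[i+j] \not= {\bf f}[i+j]$, which is absurd, so the decision procedure should confirm that the only candidate lengths are even.

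Once we know from the automaton that the only possible orders are $2$ and $4$, it remains to enumerate the factors of $\bf f$ of those lengths and pick out the antipalindromes. The length-$2$ factors of $\bf f$ are $00$, $01$, and $10$, of which only $01$ and $10$ are antipalindromes. A word $x$ of length $4$ with $x = \overline{x^R}$ has the shape $x[0]\,x[1]\,\overline{x[1]}\,\overline{x[0]}$, giving the four possibilities $0011$, $0101$, $1010$, and $1100$; since $\bf f$ has no factor $11$ (which is well known, and in any case easily checked mechanically), the words $0011$ and $1100$ cannot occur, leaving $0101 = (01)^2$ and $1010 = (10)^2$, both of which do occur in $\bf f$ (at positions $3$ and $4$, respectively). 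This gives exactly the claimed list.

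I do not anticipate a genuine obstacle here: this computation has the same flavour as, and is no harder than, the antisquare computation, so the program should dispatch it in well under a second. The only things requiring care are reading off the accepted language of the output automaton correctly and making sure the finite check at lengths $2$ and $4$ is exhaustive. As a sanity check one can, exactly as was done above for the palindrome-length automata, separately run the program on a predicate directly defining the set $\{2,4\}$ and verify that the resulting automaton is isomorphic to the one produced by the antipalindrome predicate.
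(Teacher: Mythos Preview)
Your proposal is correct and follows the same mechanical template the paper uses elsewhere (indeed, it mirrors the paper's antisquare argument almost verbatim). The paper's own proof takes a slightly more streamlined route: instead of computing only the set of admissible lengths and then inspecting factors by hand, it runs the two-variable predicate
\[
(n > 0)\ \wedge\ (\forall j<n\ {\bf f}[i+j] \not= {\bf f}[i+n-1-j])\ \wedge\ (\forall i' < i\ \exists j<n\ {\bf f}[i'+j] \not= {\bf f}[i+j]),
\]
which additionally pins down the \emph{first occurrence} of each antipalindrome. The resulting automaton accepts exactly the four pairs $(n,i)\in\{(2,0),(2,1),(4,3),(4,4)\}$, so the list $01,10,(01)^2,(10)^2$ drops out with no manual case analysis. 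Your version trades that extra predicate clause for a short hand check at lengths $2$ and $4$; both are perfectly sound, and the difference is purely cosmetic.
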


\begin{proof}
Let us write a predicate specifying that ${\bf f}[i..i+n-1]$ is a
nonempty antipalindrome, and further that it is a 
first occurrence
of such a factor:
$$
(n > 0) \ \wedge\ (\forall j<n \ {\bf f}[i+j] \not= {\bf f}[i+n-1-j]) \ \wedge \ 
(\forall i' < i \ \exists j<n\  {\bf f}[i'+j] \not= {\bf f}[i+j]) .
$$

When we run this through our program, the language of $(n,i)_F$ satisfying
this predicate is accepted by the following automaton:

\begin{figure}[H]
\begin{center}
\includegraphics[width=5in]{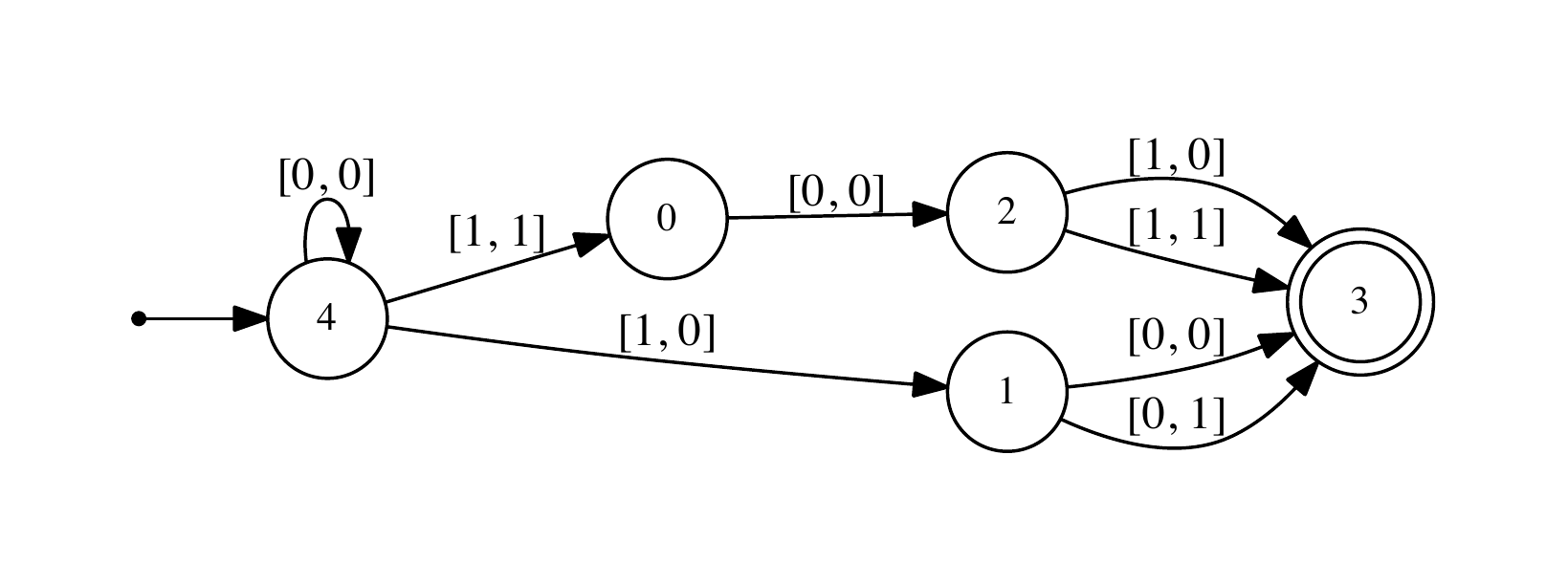}
\caption{Automaton accepting orders and positions of first occurrences of
nonempty antipalindromes in $\bf f$}
\label{antipal}
\end{center}
\end{figure}

It follows that the only $(n,i)$ pairs accepted are
$(2,0), (2,1), (4,3), (4,4)$, corresponding, respectively, to
the strings $01$, $10$, $(01)^2$, and $(10)^2$.
\end{proof}

\subsection{Special factors}

Next we turn to special factors. It is well-known (and we will prove it
in Theorem~\ref{sturmcomp} below), that ${\bf f}$ has exactly $n+1$ distinct
factors of length $n$ for each $n \geq 0$.  This implies that there
is exactly one factor $x$ of each length $n$ with the property that
both $x0$ and $x1$ are factors.  Such a factor is called {\it right-special}
or sometimes just {\it special}.
We can write a predicate that expresses the assertion that the
factor ${\bf f}[i..i+n-1]$ is the unique special factor of length $n$,
and furthermore, that it is the first occurrence of that factor, as follows:
\begin{multline*}
(\forall i' < i \ \exists s < n \  {\bf f}[i'+s] \not= {\bf f}[i+s])
\ \wedge \ 
\exists j \ \exists k \ ((\forall t < n\ {\bf f}[j+t] = {\bf f}[i+t]) \\
\wedge \ 
(\forall u < n\ {\bf f}[k+u] = {\bf f}[i+u]) \ \wedge \ 
({\bf f}[j+n] \not= {\bf f}[k+n])) .
\end{multline*}

\begin{theorem}
The automaton depicted below in Figure~\ref{special} accepts the language
$$\{ (i,n)_F \ : \ \text{the factor } {\bf f}[i..i+n-1] 
\text{ is the first occurrence of the unique special factor of length $n$} \} .$$

\begin{figure}[H]
\begin{center}
\includegraphics[width=3.5in]{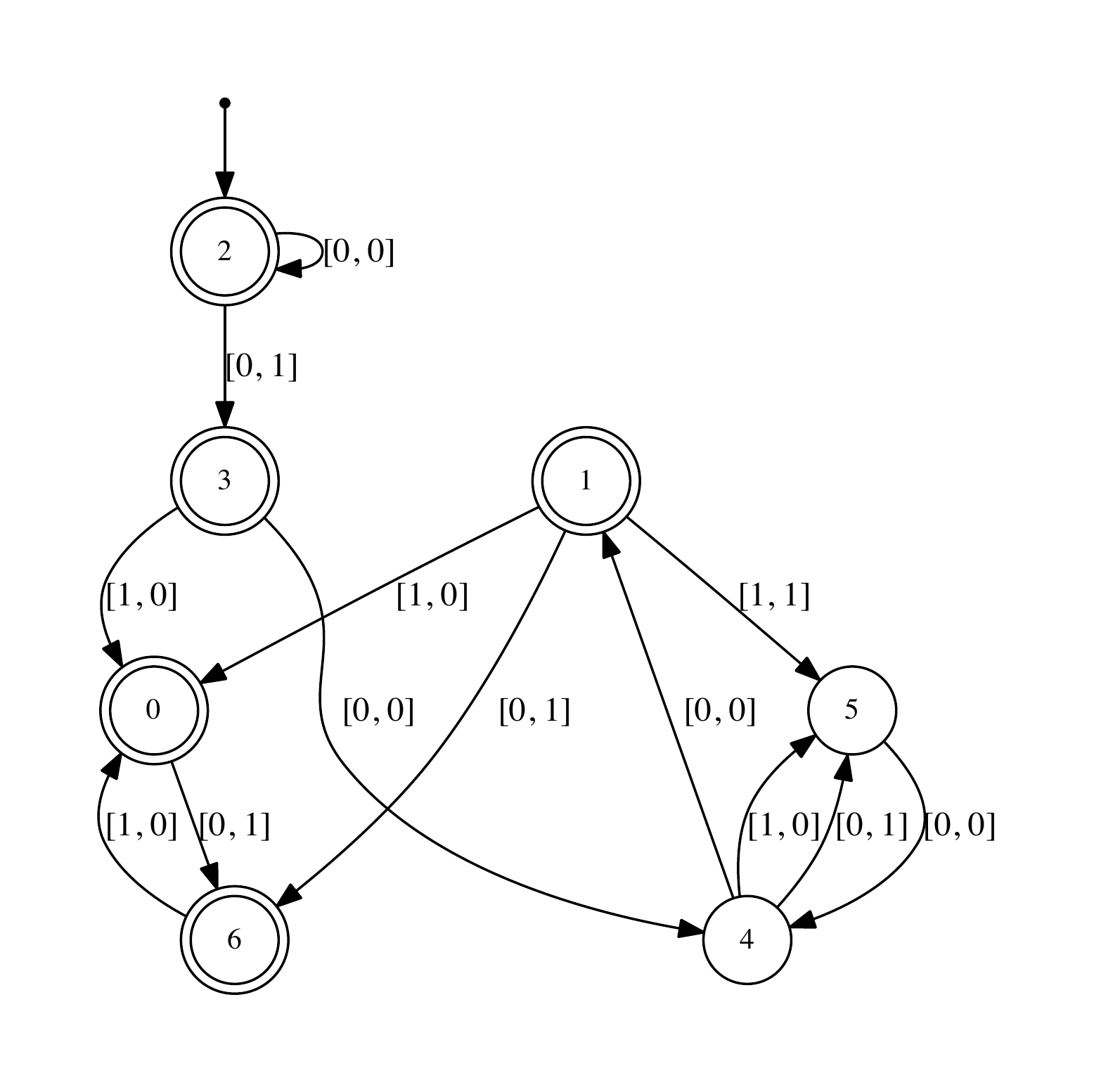}
\caption{Automaton accepting first positions and lengths of special factors
in $\bf f$}
\label{special}
\end{center}
\end{figure}
\end{theorem}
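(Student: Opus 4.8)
The plan is to follow the mechanical methodology used throughout this section: phrase the defining property as a first-order predicate in $\Th(\Enn,0,1,+)$ extended by indexing into ${\bf f}$, then apply Procedure~\ref{proc:Fib-auto-decide} to obtain the automaton. The predicate is the one displayed immediately before the theorem. The first thing I would check is that this predicate really does capture the intended language. By Theorem~\ref{sturmcomp} (which we may assume), ${\bf f}$ has exactly $n+1$ distinct factors of each length $n$, and a standard extension-counting argument then shows that for every $n$ there is exactly one factor $x$ of length $n$ with both $x0$ and $x1$ factors of ${\bf f}$; thus ``$x$ is \emph{a} right-special factor of length $n$'' and ``$x$ is \emph{the} special factor of length $n$'' are interchangeable. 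In the predicate, the clause $\exists j\ \exists k\ ((\forall t<n\ {\bf f}[j+t]={\bf f}[i+t]) \wedge (\forall u<n\ {\bf f}[k+u]={\bf f}[i+u]) \wedge {\bf f}[j+n]\neq{\bf f}[k+n])$ asserts precisely that the factor occurring at position $i$ also occurs at positions $j$ and $k$, once continued by $0$ and once by $1$ — i.e.\ it is right-special — while the clause $\forall i'<i\ \exists s<n\ {\bf f}[i'+s]\neq{\bf f}[i+s]$ asserts that $i$ is the earliest occurrence of that factor.

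Next I would feed this predicate, with free variables $i$ and $n$, to the implementation of Procedure~\ref{proc:Fib-auto-decide}. This is legitimate because every ingredient lies in the admissible fragment: the index terms $i+t,\ i+s,\ j+t,\ k+u,\ j+n,\ k+n$ use only addition; the bounds $t<n$, $i'<i$, etc.\ use comparison, which is definable in $\Th(\Enn,0,1,+)$ and, by Section~\ref{fibrep}, recognized by a small DFA on Fibonacci representations; the atoms ${\bf f}[\cdot]={\bf f}[\cdot]$ and ${\bf f}[\cdot]\neq{\bf f}[\cdot]$ are indexing into a Fibonacci-automatic word followed by a test on its two-letter output alphabet; and the connectives and the quantifiers $\exists,\forall$ are all allowed. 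By the Fibonacci analogue of Theorem~\ref{one}, the procedure returns a DFA over $\Sigma_2^2$ accepting $\{(i,n)_F : \varphi(i,n)\}$; minimizing it yields the automaton of Figure~\ref{special}, which is exactly the assertion of the theorem.

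The work here is not mathematically deep; the two points requiring care are the same as for the other results in this section. The first is getting the predicate exactly right — in particular noting that $j$ and $k$ are automatically distinct (since ${\bf f}[j+n]\neq{\bf f}[k+n]$) but that either may coincide with $i$, which does no harm, and that ``first occurrence'' is taken with respect to the usual order on positions, which on canonical Fibonacci representations agrees with radix order. The second is relying on the soundness of the implementation of the decision procedure; this can be independently sanity-checked, for instance by enumerating the accepted pairs $(i,n)$ for small $n$ and comparing against the special factors of ${\bf f}$ of those lengths (e.g.\ the length-$1$ special factor is $0$, first occurring at position $0$), or by re-running a logically equivalent but syntactically different predicate and verifying that the two minimized automata are isomorphic — the same cross-check technique applied earlier in the paper to the one-palindrome automaton of Figure~\ref{onepal}.
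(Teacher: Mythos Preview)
Your proposal is correct and follows essentially the same approach as the paper: the paper simply states the predicate (the one you quote and analyze) and reports the automaton returned by Procedure~\ref{proc:Fib-auto-decide}, without a separate proof environment. Your write-up is, if anything, more careful than the paper's own treatment, since you explicitly justify why the predicate captures right-specialness and first occurrence, and you note the sanity checks one could perform.
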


Furthermore it is known (e.g., \cite[Lemma 5]{Pirillo:1997}) that

\begin{theorem}
The unique special factor of length $n$ is ${\bf f}[0..n-1]^R$.
\end{theorem}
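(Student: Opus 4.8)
The plan is to phrase the assertion as a first-order predicate over $\bf f$ and feed it to Procedure~\ref{proc:Fib-auto-decide}, exactly as in the preceding theorems. It has already been noted (and is established by the subword-complexity count, Theorem~\ref{sturmcomp}) that $\bf f$ has a \emph{unique} right-special factor of each length $n$, so it suffices to verify that the reversed prefix $w_n := {\bf f}[0..n-1]^R$ is itself right-special; being the unique such factor, it must then be \emph{the} special factor. Reversal causes no difficulty in our formalism: the condition ``${\bf f}[j..j+n-1] = w_n$'' is simply $\forall t<n\ {\bf f}[j+t] = {\bf f}[n-1-t]$, which uses only addition, subtraction, comparison, and indexing into $\bf f$.

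Concretely, I would run the predicate
$$\forall n\ \exists j\ \exists k\ \bigl( (\forall t<n\ {\bf f}[j+t] = {\bf f}[n-1-t]) \ \wedge\ (\forall u<n\ {\bf f}[k+u] = {\bf f}[n-1-u]) \ \wedge\ {\bf f}[j+n] \neq {\bf f}[k+n] \bigr)$$
through the implementation. If the resulting automaton accepts every $n \geq 0$, i.e., the language $\epsilon + 1(0+01)^*$ of all canonical Fibonacci representations, the theorem follows; the case $n = 0$ is the empty word, which is trivially special since both $0$ and $1$ occur in $\bf f$. As a self-contained alternative that does not invoke the complexity count, I would instead check
$$\forall n\ \forall i\ \Bigl[\ \bigl(\exists j\,\exists k\ (\forall t<n\ {\bf f}[j+t]={\bf f}[i+t]) \wedge (\forall u<n\ {\bf f}[k+u]={\bf f}[i+u]) \wedge {\bf f}[j+n]\neq {\bf f}[k+n]\bigr) \implies \forall s<n\ {\bf f}[i+s] = {\bf f}[n-1-s]\ \Bigr],$$
which directly says ``every right-special factor of length $n$ equals ${\bf f}[0..n-1]^R$''; again, success means the output automaton accepts all of $\epsilon + 1(0+01)^*$. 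One could equivalently reuse the automaton of Figure~\ref{special} and intersect it with an automaton for the relation ``${\bf f}[i..i+n-1] = {\bf f}[0..n-1]^R$''.

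Since the argument is entirely mechanical, there is no genuine mathematical obstacle; the delicate points are confined to the formalization. First, one must encode the reversal correctly, remembering that $w_n[t] = {\bf f}[n-1-t]$ and that the offsets $n-1-t$ remain nonnegative (automatically guaranteed by $t < n$ and the convention that free variables range over $\Enn$). Second, one should confirm that the output automaton really is (isomorphic to) the canonical-representation automaton for all of $\Enn$ — best verified, as elsewhere in the paper, by comparing it against the automaton the program produces for the trivial predicate ``$n = n$''. The only practical worry is the size of the intermediate automata generated while eliminating the nested quantifiers, but the logs for the analogous special-factor computation indicate this stays well within reach.
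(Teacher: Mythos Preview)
Your proposal is correct and essentially matches the paper's approach: the paper's one-line proof runs the implication ``if a factor is special then it matches ${\bf f}[0..n-1]^R$'' through the decision procedure and observes that all lengths are accepted, which is exactly your second alternative. Your first alternative (show $w_n$ is special and invoke the uniqueness from the subword-complexity count) is a minor, equally valid variant; the only cosmetic slip is writing ``$\forall n$'' in front of the displayed predicate while also speaking of the output automaton accepting all $n$---drop the leading $\forall n$ so that $n$ is the free variable of the automaton you build.
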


\begin{proof}
We create a predicate that says that if a factor is special then it
matches ${\bf f}[0..n-1]^R$.  When we run this we discover that all 
lengths are accepted.
\end{proof}

\subsection{Least periods}

We now turn to least periods of factors of ${\bf f}$; see
\cite{Saari:2007} and \cite{Epple&Siefken:2014} and 
\cite[Corollary 4]{Currie&Saari:2009}.

Let $P$ denote the assertion that $n$ is a period of
the factor ${\bf f}[i..j]$, as follows:
\begin{eqnarray*}
P(n,i,j) &=& {\bf f}[i..j-n] = {\bf f}[i+n..j]  \\
&=& \forall \ t\ \text{ with $i \leq t \leq j-n$ we have } 
	{\bf f}[t] = {\bf f}[t+n] .
\end{eqnarray*}
Using this, we can express the predicate $LP$ that $n$ is the least
period of ${\bf f}[i..j]$:
$$ LP(n,i,j) = P(n,i,j) \text{ and } \forall n' 
\text{ with } 1 \leq n' < n \  \neg P(n',i,j).$$
Finally, we can express the predicate that $n$ is a least period
as follows
$$L(n) = \exists i, j \geq 0 \text{ with $0 \leq i+n \leq j-1$ }
	LP(n, i, j) .$$

Using an implementation of this, we can
reprove the following theorem of Saari 
\cite[Thm.~2]{Saari:2007}:

\begin{theorem}
If a word $w$ is a nonempty factor of the Fibonacci word, then
the least period of $w$ is a Fibonacci number $F_n$ for $n \geq 2$.
Furthermore, each such period occurs.
\end{theorem}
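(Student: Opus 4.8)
The plan is to prove this purely mechanically, as with the preceding results, by running the predicate $L(n)$ already set up above through the decision procedure of Procedure~\ref{proc:Fib-auto-decide} and examining the resulting automaton. Recall that we have $P(n,i,j)$ expressing ``$n$ is a period of ${\bf f}[i..j]$''; from it we build $LP(n,i,j)$ (``$n$ is the least period of ${\bf f}[i..j]$'') by conjoining $P(n,i,j)$ with $\forall n'$, $1 \le n' < n$, $\neg P(n',i,j)$; and then $L(n) = \exists i,j$, with the range constraints ensuring ${\bf f}[i..j]$ is a genuine nonempty factor, such that $LP(n,i,j)$. Feeding $L(n)$ to the program yields a (small) DFA $M$ accepting $\{ (n)_F \st n \text{ is the least period of some nonempty factor of } {\bf f} \}$.

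The theorem is then equivalent to the single claim that $M$ accepts exactly the language $10^*$, i.e.\ the set of canonical Fibonacci representations of $F_2, F_3, F_4, \ldots$ (note $(F_n)_F = 10^{\,n-2}$). The inclusion ``$\subseteq$'' yields the first assertion of the theorem — every least period of a nonempty factor is some $F_n$, $n \ge 2$ — and the inclusion ``$\supseteq$'' yields the second — each such $F_n$ actually occurs as a least period. Since $M$ has only a few states, the equality $L(M) = 10^*$ can simply be read off the picture. To render even this last step mechanical, one may instead take the square-order predicate $S(n) := (n>0) \ \wedge \ \exists i \ \forall t<n \ {\bf f}[i+t] = {\bf f}[i+n+t]$, which by Theorem~\ref{squares} accepts precisely $\{F_n \st n \ge 2\}$, and have the program check $\forall n \ (L(n) \iff S(n))$; the automaton for this sentence should accept all of $\Enn$.

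I do not foresee a real mathematical obstacle: the entire content of the statement is captured by the predicate, and the paper has already established that computations of this form terminate in seconds. The one delicate point is the formalization itself. The predicate $LP$ carries an inner universally quantified variable $n'$ ranging over $1 \le n' < n$, so $L$ exhibits a quantifier alternation, and one must choose the range constraints on $i$ and $j$ in $L(n)$ so that ${\bf f}[i..j]$ is truly a nonempty factor of length at least $n$ — in particular so that the smallest Fibonacci least period, $F_2 = 1$, is not inadvertently excluded. Getting this encoding right, and then relying on the (already verified) adder and the rest of the toolchain, is all that the proof requires; in practice the automaton growth caused by the alternation is modest.
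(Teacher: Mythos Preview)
Your proposal is correct and follows essentially the same approach as the paper: feed the predicate $L(n)$ to the decision procedure and read off that the accepted language is $\{(F_n)_F : n\ge 2\}$. Your extra suggestion of mechanically checking $\forall n\,(L(n)\iff S(n))$ against the square-order predicate is a pleasant refinement the paper does not bother with, and your caution about the range constraints on $i,j$ is apt---the paper's stated constraint $i+n\le j-1$ actually forces factor length $\ge n+2$, which would exclude $F_2=1$ (since $\bf f$ contains no $000$), so the paper's reported output $10^+$ and its claim ``$n\ge 2$'' are mildly inconsistent; your choice of constraints yielding $10^*$ is the right one for the theorem as stated.
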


\begin{proof}
We ran our program on the appropriate predicate and found the resulting
automaton accepts $10^+$, corresponding to $F_n$ for $n \geq 2$.
\end{proof}

Furthermore, we can actually encode information about all least periods.
The automaton depicted in Figure~\ref{leastp} accepts triples $(n,p,i)$
such that $p$ is a least period of ${\bf f}[i..i+n-1]$.

\begin{figure}[H]
\begin{center}
\includegraphics[width=5.5in]{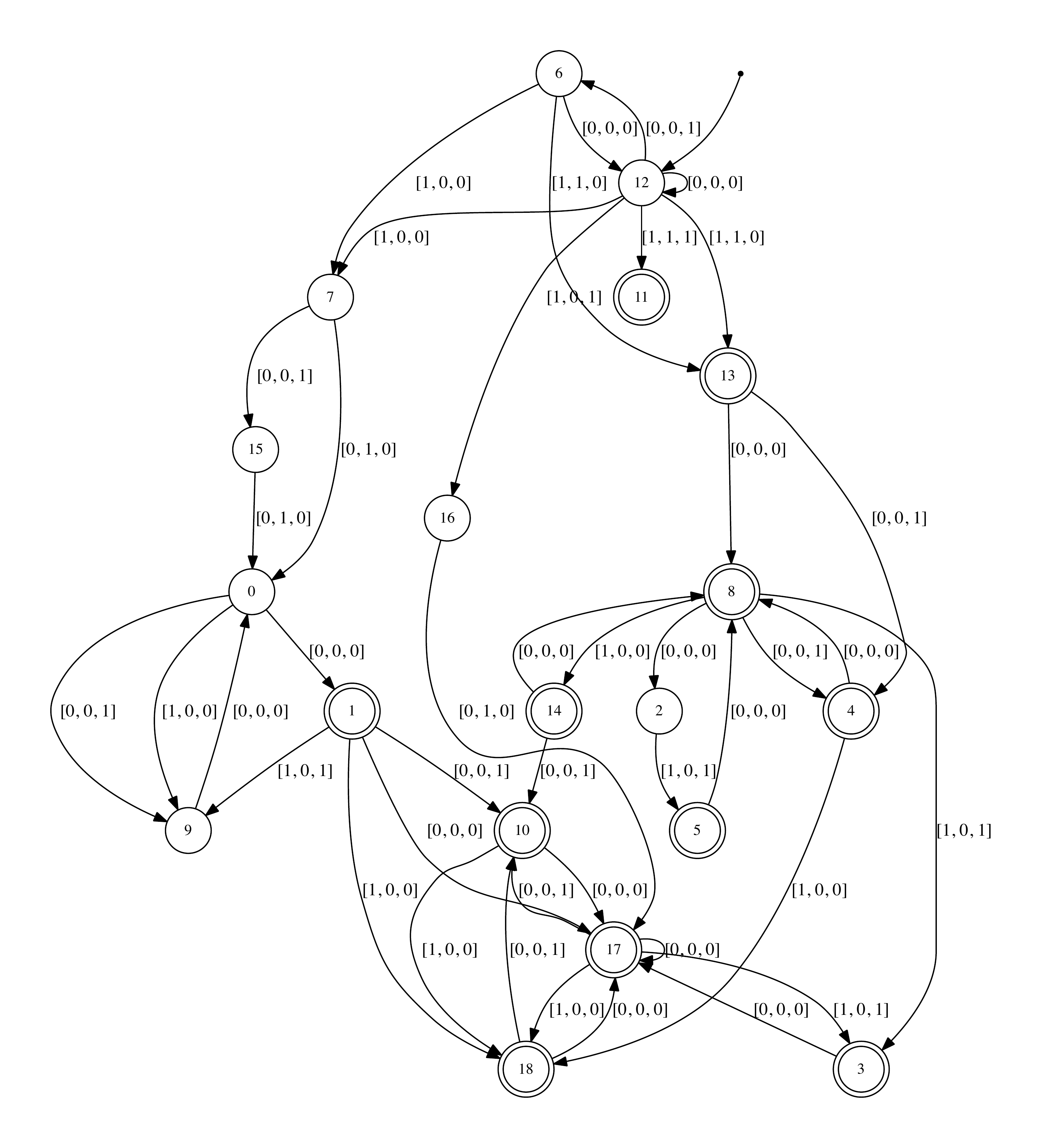}
\caption{Automaton encoding least periods of all factors
in $\bf f$}
\label{leastp}
\end{center}
\end{figure}

We also have the following result, which seems to be new.

\begin{theorem}
Let $n \geq 1$, and define
$\ell(n)$ to be the smallest integer that is the 
least period of some length-$n$ factor of $\bf f$.
Then $\ell(n) = F_j$ for $j \geq 1$ if
$L_j-1 \leq n \leq L_{j+1}-2$, where $L_j$ is the $j$'th Lucas number
defined in Section~\ref{fibrep}.
\label{allpers}
\end{theorem}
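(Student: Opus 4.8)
The plan is to follow the same mechanical paradigm already established in the paper: express the statement ``$\ell(n)$ equals the least period of some length-$n$ factor of $\bf f$, and this minimum value is $p$'' as a first-order predicate over $\Th(\Enn,0,1,+)$ with indexing into $\bf f$, feed it through Procedure~\ref{proc:Fib-auto-decide}, and then read off the resulting automaton. Concretely, recall from the preceding subsection the predicate $LP(n,i,j)$ asserting that $p$ is the least period of ${\bf f}[i..j]$; from it one builds the predicate
$$ \Lambda(n,p) = (\exists i\ LP(p,i,i+n-1)) \ \wedge\ (\forall p'\ (\exists i\ LP(p',i,i+n-1)) \implies p \leq p'), $$
which says exactly that $p = \ell(n)$. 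Running this through the implementation produces a DFA $A$ accepting $\{(n,p)_F : p = \ell(n)\}$; the theorem will follow once we check that $A$ accepts precisely the pairs $(n,F_j)$ with $L_j - 1 \leq n \leq L_{j+1} - 2$.

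To make that last check itself mechanical (rather than relying on a visual inspection of $A$, which the paper tends to avoid for nontrivial claims), I would separately construct the automaton for the target set. The intervals $[L_j - 1, L_{j+1}-2]$ for $j \geq 1$ partition $\Enn_{\geq 1}$, since $L_{j+1} - 2 + 1 = L_{j+1} - 1 = L_{(j+1)} - 1$, the left endpoint of the next interval. Thus for each $n \geq 1$ there is a unique $j$ with $L_j - 1 \leq n \leq L_{j+1} - 2$, and we want to assert $p = F_j$ for that $j$. Using the adder and the ability to index Lucas-type numbers — note $L_m$ is easily handled since $L_m = F_{m-1} + F_{m+1}$, and the set $\{(L_m)_F\}$ and the synchronized relation ``$q = F_m$ and $r = L_m$'' are both automatic (they are regular languages in Fibonacci representation, as $(F_m)_F = 10^{m-2}$ and $(L_m)_F$ has an equally simple form) — one writes
$$ T(n,p) = \exists q\ \exists r\ \exists s\ (r = L_m \ \wedge\ s = L_{m+1} \ \wedge\ p = F_m \text{ for the same } m) \ \wedge\ r - 1 \leq n \ \wedge\ n \leq s - 2, $$
where the quantifier over $m$ is hidden inside a single automatic relation linking $F_m, L_m, L_{m+1}$. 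Then verify mechanically that the automaton for $\Lambda$ is isomorphic to (or equivalent to) the automaton for $T$, exactly as the paper does elsewhere (e.g., comparing the ``one palindrome'' automaton to that of $\exists i\ n = 2i$).

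The main obstacle I anticipate is not the logical side — the predicates above are routine given $LP$ and the adder — but rather arranging a clean \emph{automatic} description of the family of Lucas numbers and their relation to Fibonacci numbers, so that the ``for the same $m$'' coupling in $T$ is genuinely expressible. The cleanest route is to observe that $(F_m)_F$ and $(L_m)_F$ have explicit regular forms ($10^{m-2}$ and, one checks, $10^{m-3}01$ for $m \geq 3$ with small cases handled separately), so the ternary relation $\{(F_{m-1}, F_m, F_{m+1})_F\}$ and hence everything needed is regular; alternatively one can avoid Lucas numbers entirely by noting $L_j - 1 \leq n \leq L_{j+1}-2$ is equivalent to a condition directly on the Fibonacci representation of $n$ (the position of the leading $1$ and the following digits), which may yield an even simpler automaton to match against $A$. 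Once that bookkeeping is in place, the equivalence check is a finite computation and the theorem is proved; the remaining work — confirming the boundary cases $j = 1, 2$ and that $\ell(n)$ is well-defined for all $n \geq 1$ (every length-$n$ factor has \emph{some} least period, trivially $\leq n$) — is immediate.
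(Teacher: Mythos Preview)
Your approach is essentially the paper's: build the automaton for the pairs $(n,\ell(n))$ via the decision procedure and then identify the pattern. Two minor points of divergence are worth noting. First, the paper uses a simpler predicate---it asks that $p$ be a \emph{period} (not least period) of some length-$n$ factor and that every period of every length-$n$ factor be $\geq p$; this is equivalent to your $\Lambda$ since the minimum over all periods of all factors equals the minimum over all least periods, but it avoids nesting $LP$ and keeps intermediate automata smaller. Second, the paper does \emph{not} build your target automaton $T$; it simply inspects the resulting DFA and quotes the explicit Fibonacci representations $(L_j-1)_F = 10(01)^{(j-2)/2}$ for even $j$ and $100(10)^{(j-3)/2}$ for odd $j$. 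Your mechanical equivalence check is a legitimate (and arguably more rigorous) alternative, but note that your stated form $(L_m)_F = 10^{m-3}01$ is incorrect: since $L_m = F_{m+1}+F_{m-1}$, one has $(L_m)_F = 101\,0^{m-3}$ for $m\geq 3$.
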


\begin{proof}
We create an automaton accepting $(n,p)_F$ such that (a) there exists
at least one length-$n$ factor of period $p$ and (b) for all
length-$n$ factors $x$, if $q$ is a period of $x$, then $q \geq p$.
This automaton is depicted in Figure~\ref{least-period-over}
below.


\begin{figure}[H]
\begin{center}
\includegraphics[width=6.5in]{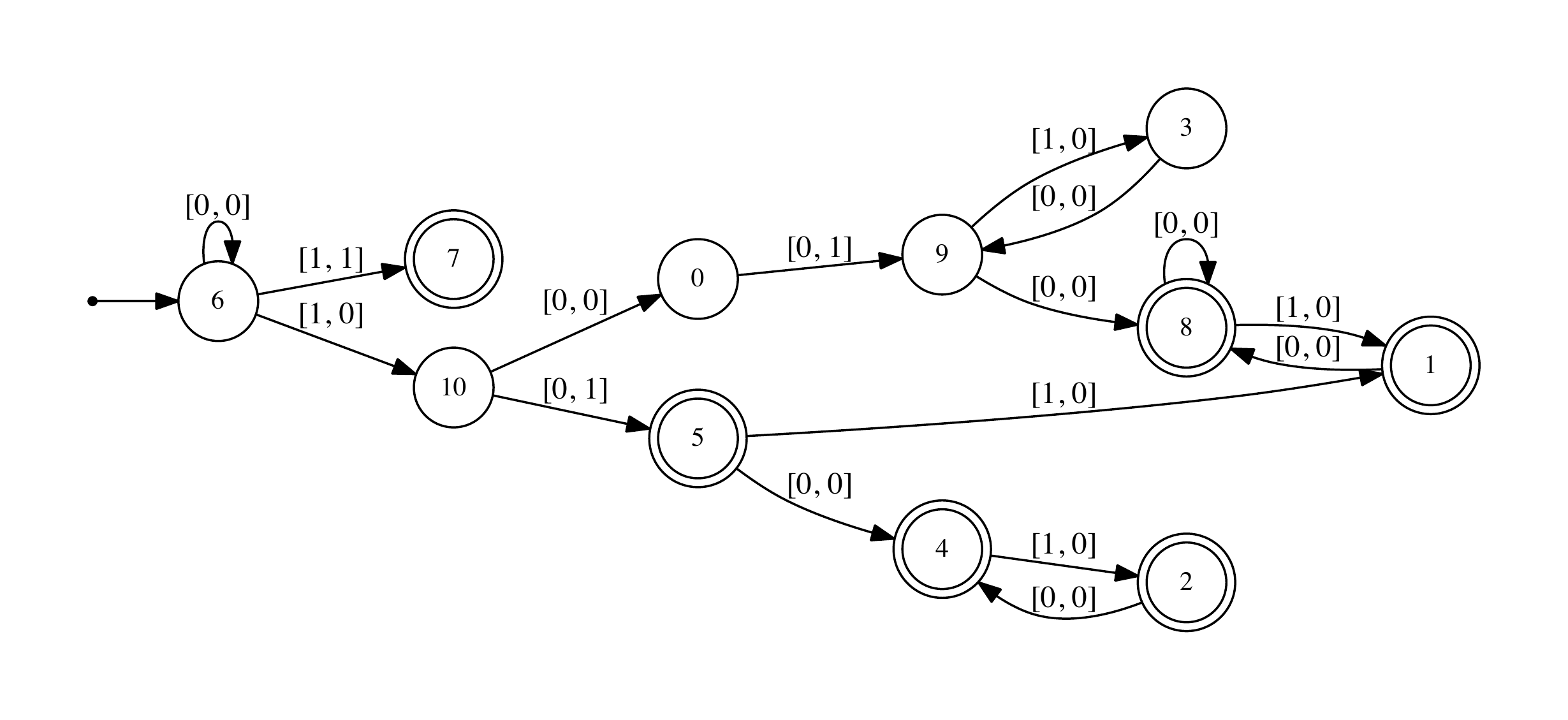}
\caption{Automaton encoding smallest period over all length-$n$ factors
in $\bf f$}
\label{least-period-over}
\end{center}
\end{figure}
The result now follows by inspection and the fact that
$(L_j-1)_F = 10 (01)^{(j-2)/2}$ if $j \geq 2$ is even,
and $100 (10)^{(j-3)/2}$ if $j \geq 3$ is odd.
\end{proof}

\subsection{Quasiperiods}

We now turn to quasiperiods.  An infinite word $\bf a$ is said to be
{\it quasiperiodic} if there is some finite nonempty word $x$ such that
${\bf a}$ can be completely ``covered'' with translates of $x$.  Here
we study the stronger version of quasiperiodicity where the first copy
of $x$ used must be aligned with the left edge of $\bf w$ and is not
allowed to ``hang over''; these are called {\it aligned covers} in
\cite{Christou&Crochemore&Iliopoulos:2012}.  More precisely, for us
${\bf a} = a_0 a_1 a_2 \cdots$ is quasiperiodic if there exists
$x$ such that for all $i \geq 0$ there exists $j\geq 0$ with $i-n < j \leq i$
such that $a_j a_{j+1} \cdots a_{j+n-1} = x$, where $n = |x|$.
Such an $x$ is called a {\it quasiperiod}.
Note that the condition $j \geq 0$ implies that, in this interpretation,
any quasiperiod must actually be a prefix of $\bf a$.

The quasiperiodicity of the Fibonacci word $\bf f$ was studied by
Christou, Crochemore, and Iliopoulos \cite{Christou&Crochemore&Iliopoulos:2012},
where we can (more or less) find the following theorem:

\begin{theorem}
A nonempty length-$n$ prefix of $\bf f$ is a quasiperiod of $\bf f$ if
and only if $n$ is not of the form $F_n - 1$ for $n \geq 3$.
\end{theorem}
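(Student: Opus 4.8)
The plan is to translate the quasiperiodicity condition directly into a first-order predicate over $\Th(\Enn,0,1,+)$ augmented with indexing into $\bf f$, run it through Procedure~\ref{proc:Fib-auto-decide}, and then identify the resulting automaton. Following the definition given just above the statement, the prefix ${\bf f}[0..n-1]$ is a quasiperiod if and only if
$$
(n > 0) \ \wedge \ \forall i \geq 0 \ \exists j \ \bigl( (i < j + n) \ \wedge \ (j \leq i) \ \wedge \ (\forall t < n \ {\bf f}[j+t] = {\bf f}[t]) \bigr).
$$
Here the clause $\forall t < n \ {\bf f}[j+t] = {\bf f}[t]$ asserts that the length-$n$ factor starting at position $j$ equals the length-$n$ prefix of $\bf f$ (which automatically forces that factor to be $x = {\bf f}[0..n-1]$, so there is no need to quantify over $x$ separately). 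The constraints $i - n < j \leq i$ are rendered as $i < j+n$ and $j \leq i$, both expressible with addition and comparison. Feeding this predicate to the implementation produces a DFA over $\Sigma_2$ accepting $\{(n)_F : {\bf f}[0..n-1] \text{ is a quasiperiod of } \bf f\}$.

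The second step is to recognize the language this automaton accepts. I expect it to be the set of canonical Fibonacci representations of all positive integers \emph{except} those of the form $F_m - 1$ for $m \geq 3$. To confirm this cleanly, I would separately construct — again mechanically — the automaton for the predicate $\exists m \ (m \geq 3 \text{ encoded appropriately}) \wedge n = F_m - 1$; since the Fibonacci numbers $F_m$ are trivially Fibonacci-automatic (indeed $(F_m)_F = 10^{m-2}$) and subtraction by $1$ is expressible, this is routine. Then I would check that the complement (within $1(0+01)^*$, the canonical representations of positive integers) of the quasiperiod automaton is isomorphic to the $F_m-1$ automaton. Alternatively, one simply reads off the accepted language by inspection: it should be $1(0+01)^* \setminus \{1, 10, 100, 1001, 10010, \dots\}$, i.e. everything except $(F_m-1)_F$ for $m \geq 3$, and one verifies $(F_m - 1)_F = 1(01)^{(m-2)/2}$ for even $m$ and $1(01)^{(m-3)/2}0$ for odd $m$, matching the shape of the states that fail to be accepting.

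The main obstacle is not conceptual but one of faithful modeling and verification. The delicate point is getting the quantifier bounds exactly right so that the predicate captures \emph{aligned} covers (first copy flush left, no overhang past the right end is irrelevant since $\bf f$ is infinite, but the left-alignment via $j \geq 0$ is what forces $x$ to be a prefix) — an off-by-one in the inequality $i < j+n$ versus $i \leq j+n$ would silently change the answer. After that, the only remaining work is the arithmetic identity for $(F_m-1)_F$, which is a short induction on the Zeckendorf expansion and can itself be checked mechanically if desired. So the proof reduces to: (1) state the predicate, (2) report that the program outputs a specific small automaton, (3) observe by inspection (or by a second mechanical isomorphism check against the $F_m - 1$ automaton) that the accepted lengths are exactly $\Enn^{+} \setminus \{F_m - 1 : m \geq 3\}$.
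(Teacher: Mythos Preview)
Your proposal is correct and follows exactly the paper's approach: formulate the aligned-cover predicate, run the decision procedure, and identify the accepted language by inspection (the paper simply reads off the rejected strings as $1(01)^*(\epsilon+0)$, the representations of $F_m-1$). One tiny slip: your closed form for $(F_m-1)_F$ has the parities swapped --- it is $1(01)^{(m-3)/2}$ for odd $m\geq 3$ and $1(01)^{(m-4)/2}0$ for even $m\geq 4$ --- but this does not affect the argument, especially since you already propose a mechanical isomorphism check as an alternative.
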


In particular, the following prefix lengths are not quasiperiods:
$1$, $2$, $4$, $7$, $12$, and so forth.

\begin{proof}
We write a predicate for the assertion that the length-$n$ prefix is 
a quasiperiod:
$$\forall i \geq 0 \ \exists j \text{ with }  i-n < j \leq i
\text{ such that } \forall t<n \ {\bf f}[t] = {\bf f}[j+t] .$$
When we do this, we get the automaton in Figure~\ref{quasi} below.
Inspection shows that this DFA accepts all canonical representations,
except those of the form $1(01)^*(\epsilon + 0)$, which are precisely
the representations of $F_n - 1$.

\begin{figure}[H]
\begin{center}
\includegraphics[width=4in]{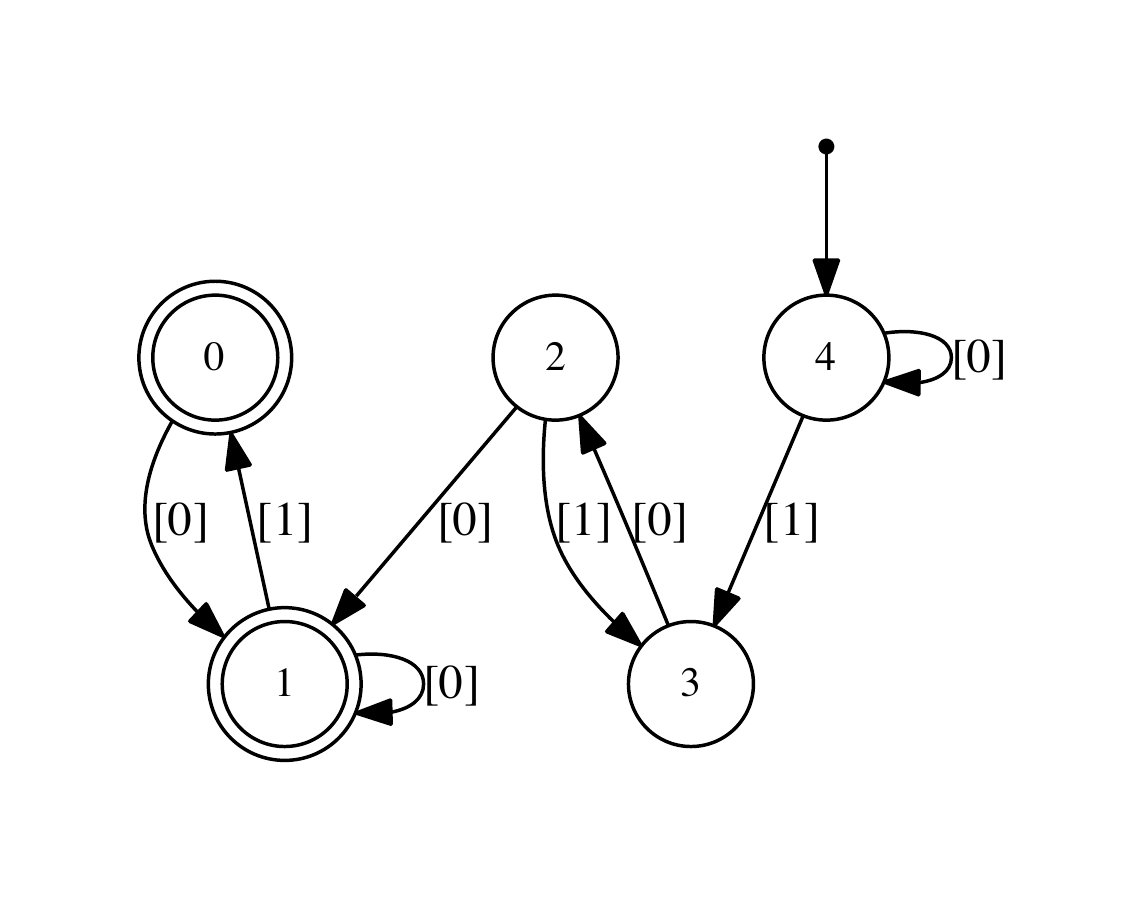}
\caption{Automaton accepting lengths of prefixes of
$\bf f$ that are quasiperiods}
\label{quasi}
\end{center}
\end{figure}
\end{proof}

\subsection{Unbordered factors}

Next we look at unbordered factors.  A word $y$ is said to be a {\it
border} of $x$ if $y$ is both a nonempty proper prefix and suffix of
$x$.  A word $x$ is {\it bordered} if it has at least one border.  It
is easy to see that if a word $y$ is bordered iff it has a border of
length $\ell$ with $0 < \ell \leq |y|/2$.

\begin{theorem}
The only unbordered nonempty factors of $\bf f$ are of length $F_n$ for
$n \geq 2$, and there are two for each such length.
For $n \geq 3$ these two unbordered factors have the property that
one is a reverse of the other.
\end{theorem}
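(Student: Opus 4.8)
Since this paper's entire methodology is to translate combinatorial statements into first-order predicates over $\Th(\Enn,0,1,+)$ augmented with indexing into the Fibonacci word, I would follow exactly that template here. The plan is to write a predicate $\mathrm{UB}(n,i)$ asserting that the factor ${\bf f}[i..i+n-1]$ is nonempty and unbordered, run it through the decision procedure (Procedure~\ref{proc:Fib-auto-decide}), and then read off the three claims — (i) unbordered factors have length $F_n$, $n\ge 2$; (ii) there are exactly two of each such length; (iii) for $n\ge 3$ each is the reverse of the other — from the resulting automata. Concretely, ``${\bf f}[i..i+n-1]$ has a border of length $\ell$'' is
$$ (1 \le \ell) \ \wedge\ (\ell \le n/2) \ \wedge\ \forall t<\ell \ {\bf f}[i+t] = {\bf f}[i+n-\ell+t], $$
so ``unbordered and nonempty'' is $(n\ge 1) \wedge \neg\exists \ell\, [\ (1\le\ell)\wedge(2\ell\le n)\wedge \forall t<\ell\ {\bf f}[i+t]={\bf f}[i+n-\ell+t]\ ]$. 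Feeding $\mathrm{UB}(n,i)$ to the program yields a DFA for $\{(n,i)_F : {\bf f}[i..i+n-1] \text{ is unbordered}\}$; projecting onto the first coordinate (an $\exists i$ in front) gives a DFA for the set of lengths, which I expect to accept exactly $10^+$, i.e.\ the canonical representations of $F_n$ for $n\ge 2$.

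\textbf{Counting two of each length.} For the ``exactly two'' claim I would proceed as in the palindrome theorems above: write a predicate $\mathrm{Two}(n)$ asserting the existence of $i\ne' j$ (distinct as factors, i.e.\ $\exists m<n\ {\bf f}[i+m]\ne{\bf f}[j+m]$) with ${\bf f}[i..i+n-1]$ and ${\bf f}[j..j+n-1]$ both unbordered, and such that every unbordered factor of length $n$ equals one of the two:
$$\forall k\ \big(\mathrm{UB}(n,k) \implies (\forall l<n\ {\bf f}[k+l]={\bf f}[i+l]) \OR (\forall l<n\ {\bf f}[k+l]={\bf f}[j+l])\big).$$
Running this should produce an automaton accepting exactly $10^+$ again, establishing that there are precisely two unbordered factors of each length $F_n$, $n\ge 2$, and none of any other length. (One can double-check by also running the ``exactly one'' analogue and confirming it accepts nothing.)

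\textbf{The reversal claim.} For part (iii), the cleanest route is a predicate that is true of $(n)_F$ iff for every $n\ge 3$ of the given form, the two unbordered factors of length $n$ are reverses of each other: namely
$$ \forall i\ \forall j\ \big( (\mathrm{UB}(n,i) \wedge \mathrm{UB}(n,j) \wedge \exists m<n\ {\bf f}[i+m]\ne{\bf f}[j+m]) \implies \forall t<n\ {\bf f}[i+t] = {\bf f}[j+n-1-t] \big), $$
restricted by $n\ge 3$. I would run this through the program and expect the resulting automaton to accept all $(n)_F$ with $n\ge 3$ (equivalently, to accept everything once intersected with the length-set $10^+$ minus $\{F_2\}=\{1\}$), which together with parts (i)--(ii) completes the proof.

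\textbf{Main obstacle.} The mathematical content is entirely outsourced to the decision procedure, so the only real ``obstacle'' is the practical one flagged in Section~\ref{decide}: the predicate for $\mathrm{Two}(n)$ has four or five nested quantifiers and two copies of $\mathrm{UB}$ (each itself containing a universally-quantified negated existential), so the intermediate automata could blow up; I would expect this to still run in seconds, as with the other theorems, but if it does not, the fallback is to compute the $(n,i)_F$-automaton from $\mathrm{UB}$ first, inspect its (small) structure directly, and derive (ii) and (iii) by hand from that automaton's transition diagram rather than by a further logical query.
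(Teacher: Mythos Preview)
Your proposal is correct and follows essentially the same approach as the paper: write the unbordered predicate with the border length bounded by $n/2$, project to get the length set, then run the ``exactly two'' and ``equal-or-reverse'' predicates exactly as you describe. One tiny slip: the length set should be $10^*$, not $10^+$, since $F_2=1$ has canonical representation $1$; the paper's reversal predicate (phrased as ``equal or reverse'' rather than your equivalent ``distinct $\Rightarrow$ reverse'') then accepts every length except $1$, matching the $n\ge 3$ restriction in the statement.
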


\begin{proof}
We can express the property of having an unbordered factor of length $n$
as follows
$$ \exists i\ \forall j, 1 \leq j \leq n/2, \
	\exists t<j\ {\bf f}[i+t] \not= {\bf f}[i+n-j+t] .$$

Here is the log:
{\footnotesize
\begin{verbatim}
j >= 1 with 4 states, in 155ms
 2 * j <= n with 16 states, in 91ms
  j >= 1 & 2 * j <= n with 21 states, in 74ms
   t < j with 7 states, in 17ms
    F[i + t] != F[i + n - j + t] with 321 states, in 10590ms
     t < j & F[i + t] != F[i + n - j + t] with 411 states, in 116ms
      Et t < j & F[i + t] != F[i + n - j + t] with 85 states, in 232ms
       j >= 1 & 2 * j <= n => Et t < j & F[i + t] != F[i + n - j + t] with 137 states, in 19ms
        Aj j >= 1 & 2 * j <= n => Et t < j & F[i + t] != F[i + n - j + t] with 7 states, in 27ms
         Ei Aj j >= 1 & 2 * j <= n => Et t < j & F[i + t] != F[i + n - j + t] with 3 states, in 0ms
overall time: 11321ms
\end{verbatim}
}

The automaton produced accepts the Fibonacci representation of $0$ and
$F_n$ for $n \geq 2$.

Next, we make the assertion that there are exactly two such factors for
each appropriate length.
We can do this by saying there is an unbordered factor of length $n$
beginning at position $i$, another one beginning at position $k$, and
these factors are distinct, and for every unbordered factor of length $n$,
it is equal to one of these two.  When we do this we discover that
the representations of all $F_n$ for $n \geq 2$ are accepted.

Finally, we make the assertion that for any two unbordered factors of
length $n$, either they are equal or one is the reverse of the other.
When we do this we discover all lengths except length $1$ are accepted.
(That is, for all lengths other than $F_n$, $n \geq 2$, the assertion is
trivially true since there are no unbordered factors; for $F_2 = 1$ it
is false since $0$ and $1$ are the unbordered factors and one is not the
reverse of the other; and for all larger $F_i$ the property holds.)
\end{proof}

\subsection{Recurrence, uniform recurrence, and linear recurrence}

We now turn to various questions about recurrence.  A factor $x$ of an
infinite word $\bf w$ is said to be {\it recurrent} if it occurs  infinitely
often.  The word $\bf w$ is recurrent if every factor that occurs at
least once is recurrent.  A factor $x$ is {\it uniformly recurrent}
if there exists a constant $c = c(x)$ such that any factor ${\bf w}[i..i+c]$
is guaranteed to contain an occurrence of $x$.  If all factors are uniformly
recurrent then $\bf w$ is said to be uniformly recurrent.  Finally,
${\bf w}$ is {\it linearly recurrent} if the constant $c(x)$ is $O(|x|)$.

\begin{theorem}
The word {\bf f} is recurrent, uniformly recurrent, and linearly recurrent.
\end{theorem}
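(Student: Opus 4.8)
The plan is to translate each of the three recurrence properties into a first-order predicate of the kind handled by Procedure~\ref{proc:Fib-auto-decide}, feed it to the implementation, and read off the resulting automaton. The key observation is that all three notions --- recurrence, uniform recurrence, and linear recurrence --- can be phrased purely in terms of equalities between letters of $\mathbf{f}$ at positions obtained by addition, subtraction, and comparison, so each falls within the decidable theory.

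First I would handle recurrence. A factor of length $n$ occurring at position $i$ is recurrent iff it occurs at some later position; since $\mathbf{f}$ is infinite, it then occurs infinitely often. So I would run the predicate
$$\forall i\ \forall n\ \exists j>i\ \forall t<n\ {\bf f}[i+t] = {\bf f}[j+t],$$
and check that the resulting automaton accepts everything. Next, for uniform recurrence, the natural predicate is that for every factor there is a bound $c$ so that every window of length $c$ contains a copy:
$$\forall i\ \forall n\ \exists c\ \forall k\ \exists j\ \bigl(k\le j \le k+c\bigr)\ \wedge\ \bigl(\forall t<n\ {\bf f}[j+t]={\bf f}[i+t]\bigr).$$
Again, I expect the output automaton to accept all representations. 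A more efficient route, which I would actually use, is to exploit that $\mathbf{f}$ is Sturmian so it suffices to bound the gaps between consecutive occurrences of the length-$n$ prefix; but the blunt predicate above already suffices for the decision procedure.

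For linear recurrence one must be slightly more careful, since the statement ``$c(x) = O(|x|)$'' involves an existential over the hidden constant in the $O$. The standard trick is: $\mathbf{f}$ is linearly recurrent iff there is a fixed integer constant $C$ such that every window of length $Cn$ contains every length-$n$ factor --- and such a $C$ exists iff the function $R(n)$ (the appearance function: least $m$ such that every length-$m$ factor contains all length-$n$ factors) satisfies $R(n) \le Cn$ for all $n$. I would instead verify directly that the predicate
$$\exists C\ \forall n\ge 1\ \forall i\ \forall k\ \exists j\ \bigl(k\le j\le k+Cn\bigr)\ \wedge\ \bigl(\forall t<n\ {\bf f}[j+t]={\bf f}[i+t]\bigr)$$
holds; since $C$ is a single quantified integer this is still a sentence in the theory. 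In practice it is easier still to fix a candidate constant (for the Fibonacci word the recurrence function is known to be roughly $\varphi^2 n$, so $C = 4$ works comfortably) and verify the $\forall$-sentence with $C$ plugged in, then note the existential is witnessed. The main obstacle I anticipate is not conceptual but computational: the uniform- and linear-recurrence predicates have several alternating quantifiers, so the intermediate automata (and the tower-of-exponentials worst-case bound flagged after Procedure~\ref{proc:Fib-auto-decide}) could in principle blow up. The remedy, as elsewhere in the paper, is to choose the predicate carefully --- e.g. fixing the constant $C$ rather than existentially quantifying it, and using the Sturmian structure to reduce ``all factors of length $n$'' to ``the prefix of length $n$'' --- so that the quantifier depth stays small enough for the implementation to finish in seconds.
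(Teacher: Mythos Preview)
Your approach is essentially the same as the paper's: express each property as a first-order predicate and run the decision procedure, fixing the constant $C=4$ for linear recurrence. The paper does exactly this (and also notes that $C=3$ fails).

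There is, however, one genuine slip. You write the linear-recurrence predicate as
\[
\exists C\ \forall n\ge 1\ \forall i\ \forall k\ \exists j\ \bigl(k\le j\le k+Cn\bigr)\ \wedge\ \bigl(\forall t<n\ {\bf f}[j+t]={\bf f}[i+t]\bigr)
\]
and assert that ``since $C$ is a single quantified integer this is still a sentence in the theory.'' It is not: the term $Cn$ multiplies two quantified variables, and multiplication of variables is not available in $\Th(\Enn,+)$ or its Fibonacci extension. So this predicate cannot be fed to Procedure~\ref{proc:Fib-auto-decide} as written. Your fallback---plug in a concrete $C$ such as $4$ and check the resulting $\forall$-sentence---is exactly right, and is what the paper does; but you should present it as necessary rather than merely convenient. (The paper's Remark after the proof explains how one \emph{can} decide linear recurrence without guessing $C$, via a $\limsup$ argument on the automaton accepting pairs $(n,t)$ of gap lengths; this is a genuinely different mechanism from quantifying over $C$.)

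A minor point: your recurrence predicate asserts only that every factor reoccurs at some later position, whereas the paper's asserts reoccurrence beyond every position $j$. Yours is formally weaker but still equivalent to recurrence by the obvious induction, so this is fine.
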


\begin{proof}
A predicate for all length-$n$ factors being recurrent:
$$ \forall i \geq 0\ \forall j \geq 0\ \exists k > j\ \forall t<n \ 
	{\bf f}[i+t] = {\bf f}[k+t] .$$
This predicate says that for every factor $z = {\bf f}[i..i+n-1]$ and
every position $j$ we can find another occurrence of 
$z$ beginning at a position $k > j$.  When we run this we discover that
the representations of all $n \geq 0$ are accepted.  So $\bf f$ is recurrent.


A predicate for uniform recurrence:
$$ \forall i\ \exists \ell\ \forall j \ \exists s, \ j \leq s \leq j+l-n \
\forall p<n \ {\bf f}[s+p] = {\bf f}[i+p] .$$
Once again, when we run this we discover that
the representations of all $n \geq 0$ are accepted.   So $\bf f$ is
uniformly recurrent.

A predicate for linear recurrence with constant $C$:
$$ \forall i\ \forall j \ \exists s, \ j \leq s \leq j+Cn \
\forall p<n \ {\bf f}[s+p] = {\bf f}[i+p] .$$
When we run this with $C = 4$, we discover that
the representations of all $n \geq 0$ are accepted (but, incidentally,
not for $C = 3$).
So $\bf f$ is linearly recurrent.
\end{proof}

\begin{remark}
We can decide the property of
linear recurrence for Fibonacci-automatic words
even without knowing an explicit value for the constant $C$.
The idea is to accept those pairs $(n,t)$ such that there exists a
factor of length $n$ with two consecutive occurrences separated by 
distance $t$.  Letting $S$ denote the set of such pairs, then
a sequence is linearly recurrent iff $\limsup_{(n,t)\in S} t/n < \infty$,
which can be decided using an argument like that in
\cite[Thm.~8]{Schaeffer&Shallit:2012}.  However, we do not
know how to compute, in general,
the exact value of the $\limsup$ for Fibonacci representation (which we do indeed
know for base-$k$ representation), although we can approximate it arbitrarily
closely.
\end{remark}

\subsection{Lyndon words}

Next, we turn to some results about Lyndon words.  Recall that a
nonempty word $x$ is a {\it Lyndon word\/} if it is lexicographically
less than all of its nonempty proper prefixes.\footnote{There is also a version
where ``prefixes'' is replaced by ``suffixes''.}  We reprove some recent
results of Currie and Saari \cite{Currie&Saari:2009} and Saari
\cite{Saari:2014}.

\begin{theorem} Every Lyndon factor of $\bf f$ is of length $F_n$ for
some $n \geq 2$, and each of these lengths has a Lyndon factor.
\end{theorem}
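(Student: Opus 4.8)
The plan is to use the decision procedure (Procedure~\ref{proc:Fib-auto-decide}) exactly as in the preceding theorems: express ``$\mathbf{f}$ has a Lyndon factor of length $n$'' as a first-order predicate over $\Th(\Enn,0,1,+)$ with indexing into $\mathbf{f}$, run it through the implementation, and inspect the resulting automaton, expecting it to accept precisely $10^+$ (the Fibonacci representations of $F_n$ for $n\ge 2$). The only real content is encoding the Lyndon property correctly as such a predicate.

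First I would recall that a nonempty word $x=x[0..n-1]$ is Lyndon iff it is strictly lexicographically smaller than each of its nonempty proper \emph{suffixes}; equivalently (and more conveniently for a sliding-window formulation inside $\mathbf f$), for every $k$ with $1\le k\le n-1$, the suffix $x[k..n-1]$ is lexicographically greater than $x[0..n-1-k]$. To say ``word $u$ is lexicographically less than word $v$'' (both of length $m$) as a predicate over positions in $\mathbf f$, one writes: there exists $\ell<m$ such that $\mathbf f[i+\ell]<\mathbf f[j+\ell]$ (i.e.\ $\mathbf f[i+\ell]=0$ and $\mathbf f[j+\ell]=1$, expressible since $\mathbf f$ is binary) and for all $t<\ell$, $\mathbf f[i+t]=\mathbf f[j+t]$. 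Assembling this, the predicate ``${\bf f}[i..i+n-1]$ is Lyndon'' becomes
$$(n>0)\ \wedge\ \forall k\ \big( (1\le k \le n-1) \implies \mathrm{LESS}(i,\, i+k,\, n-k)\big),$$
where $\mathrm{LESS}(a,b,m)$ abbreviates ``$\mathbf f[a..a+m-1] <_{\mathrm{lex}} \mathbf f[b..b+m-1]$'' unwound as above. Then ``there is a Lyndon factor of length $n$'' is obtained by prefixing $\exists i$. I would feed this to the program and report the automaton, which should be $10^+$; this simultaneously proves both that every Lyndon factor has length $F_n$, $n\ge2$, and that every such length is realized.

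The step I expect to be the main obstacle is getting the lexicographic comparison encoded faithfully — in particular handling the fact that a Lyndon word must be smaller than \emph{every} nonempty proper suffix (a universally quantified inner comparison), and being careful about the boundary cases $k=0$ and $k=n$ and the empty-suffix degeneracy, so that the predicate is neither vacuously true nor accidentally excluding genuine Lyndon factors. Once the predicate is correct, the remaining work — running it and reading off $10^+$ — is mechanical, and the identification of $10^+$ with $\{(F_n)_F : n\ge 2\}$ is immediate from the description of canonical Fibonacci representations in Section~\ref{fibrep}. As a sanity check I would also verify, as in the palindrome section, that the output automaton is isomorphic to the one directly computed for the predicate $\exists i\ (n = F_i)$, i.e.\ for $10^+$.
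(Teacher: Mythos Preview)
Your approach is essentially identical to the paper's: both encode ``$\mathbf f[i..i+n-1]$ is Lyndon'' as ``for every shift $1\le j<n$ there is a first position $t<n-j$ where $\mathbf f[i+t]<\mathbf f[i+j+t]$ with equality before it,'' wrap with $\exists i$, run the procedure, and read off the Fibonacci-number lengths. One tiny slip: the accepted language is $10^*$, not $10^+$ --- single letters are Lyndon (your own universal over $1\le k\le n-1$ is vacuous at $n=1$), so $n=1=F_2$ is included, and indeed $10^*$ is what represents $\{F_n:n\ge 2\}$.
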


\begin{proof}
Here is the predicate 
specifying that there is a factor of length $n$ that is Lyndon:
$$
\exists i\ \forall j, 1 \leq j < n, \ 
\exists t < n-j \ (\forall u<t \ {\bf f}[i+u]={\bf f}[i+j+u]) \ \wedge \ 
{\bf f}[i+t] < {\bf f}[i+j+t] .$$
When we run this we get the representations $10^*$, which proves the result.
\end{proof}

\begin{theorem}
For $n \geq 2$,
every length-$n$ Lyndon factor of $\bf f$ is a conjugate of ${\bf f}[0..n-1]$.
\end{theorem}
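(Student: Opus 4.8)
The plan is to stay entirely within the paradigm of this section: express the claim as a sentence of $\Th(\Enn,0,1,+)$ with indexing into ${\bf f}$, feed it to Procedure~\ref{proc:Fib-auto-decide}, and read off the answer. From the preceding theorem we already have at hand the predicate $\mathrm{Lyn}(i,n)$ asserting that the length-$n$ factor ${\bf f}[i..i+n-1]$ is Lyndon, namely
$$\forall j\ \big(1 \le j < n\big) \implies \Big( \exists t\ (t < n-j) \wedge (\forall u < t\ {\bf f}[i+u] = {\bf f}[i+j+u]) \wedge {\bf f}[i+t] < {\bf f}[i+j+t] \Big),$$
which says precisely that ${\bf f}[i..i+n-1]$ is strictly lexicographically smaller than every one of its proper suffixes, equivalently than every nontrivial rotation of itself. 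What remains is to express that ${\bf f}[i..i+n-1]$ is a conjugate of the prefix ${\bf f}[0..n-1]$.

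I would encode ``conjugate of the prefix'' by introducing a rotation amount $s$ and reading off position $t$ of the factor from the prefix with a wrap-around: let $\mathrm{Conj}(i,n)$ be
$$\exists s\ (s < n) \wedge \forall t\ (t < n) \implies \big( (s+t < n \wedge {\bf f}[i+t] = {\bf f}[s+t]) \vee (s+t \ge n \wedge {\bf f}[i+t] = {\bf f}[s+t-n]) \big),$$
where, since $s,t < n$, we always have $s+t-n < n$, so the second branch still indexes into the prefix; the addition, subtraction and comparisons are all definable in $\Th(\Enn,0,1,+)$. (Equivalently one may split the factor as $pq$ with $|q|=s$ and demand that the prefix be $qp$, i.e.\ ${\bf f}[i+n-s..i+n-1] = {\bf f}[0..s-1]$ and ${\bf f}[i..i+n-s-1] = {\bf f}[s..n-1]$.) The theorem is then the sentence
$$\forall n\ \forall i\ \big( (n \ge 2) \wedge \mathrm{Lyn}(i,n) \big) \implies \mathrm{Conj}(i,n),$$
and I would simply run it and confirm the procedure reports it true; equivalently, build the automaton accepting $\{(n)_F : \forall i\ (\mathrm{Lyn}(i,n) \implies \mathrm{Conj}(i,n))\}$ and check that it accepts every $n \ge 2$. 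The hypothesis $n \ge 2$ is genuinely needed: for $n=1$ the letter $1$ is a Lyndon factor but is not a conjugate of $f_0 = 0$, while for $n$ not a Fibonacci number there are no Lyndon factors by the previous theorem and the implication is vacuous.

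The computations for $\mathrm{Lyn}$ and $\mathrm{Conj}$ are routine; the only thing that needs care is the bookkeeping in $\mathrm{Conj}$ — the modular wrap-around, and making sure (as already done in the previous theorem) that the Lyndon predicate captures minimality over \emph{all} rotations rather than just adjacent comparisons. The one genuine worry is the generic one flagged in Section~\ref{decide}: the sentence carries roughly seven quantifiers (the two outer $\forall$, the $\exists\,\forall\,\exists\,\forall$ inside $\mathrm{Lyn}$, and the $\exists\,\forall$ inside $\mathrm{Conj}$), so in the worst case the intermediate automata form a tall tower of exponentials. In practice, as with every other example in this section, I expect the computation to finish in seconds and the final automaton to collapse to a trivial one accepting all (sufficiently large) $n$, which establishes the theorem.
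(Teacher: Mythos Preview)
Your proposal is correct and follows essentially the same approach as the paper: build a predicate asserting that every length-$n$ Lyndon factor is a conjugate of the length-$n$ prefix, feed it to the decision procedure, and observe that all $n \ge 2$ are accepted (the paper reports that all lengths except $1$ are accepted, matching your remark about the necessity of the hypothesis $n\ge 2$). Your explicit ``wrap-around'' encoding of $\mathrm{Conj}(i,n)$ is equivalent to the paper's $\conj$ predicate introduced later in Section~\ref{finitefib}, so the only difference is cosmetic.
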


\begin{proof}
Using the predicate from the previous theorem as a base, we can create a predicate
specifying that every length-$n$ Lyndon factor is a conjugate of ${\bf f}[0..n-1]$.
When we do this we discover that all lengths except $1$ are accepted.  (The only
lengths having a Lyndon factor are $F_n$ for $n \geq 2$, so all but $F_2$ have
the desired property.)
\end{proof}

\subsection{Critical exponents}

Recall from Section~\ref{proofsf} that $\exp(w) = |w|/P$, where $P$ is the
smallest period of $w$.  The {\it critical exponent} of an infinite
word $\bf x$ is the supremum, over all factors $w$ of $\bf x$, of
$\exp(w)$.  

A classic result of \cite{Mignosi&Pirillo:1992} is 

\begin{theorem}
The critical exponent of $\bf f$ is $2+ \alpha$, where
$\alpha = (1+\sqrt{5})/2$.
\end{theorem}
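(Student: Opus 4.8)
The approach is the standard one of this paper: write the property as a first-order predicate over $\Th(\Enn,0,1,+)$ with indexing into ${\bf f}$, run Procedure~\ref{proc:Fib-auto-decide}, and inspect the resulting automaton. The only wrinkle is that a critical exponent is typically irrational, so the procedure cannot output it directly; instead I would have it produce an automaton encoding, for each admissible least period $p$, the longest factor of ${\bf f}$ with that least period, and finish by an elementary limit. Concretely, I would first invoke the theorem of Saari on least periods proved mechanically above: every nonempty factor $w$ of ${\bf f}$ has least period $F_k$ for some $k\ge 2$, so $\exp(w)=|w|/F_k$, and the critical exponent equals $\sup_{k\ge 2} g(k)/F_k$, where $g(k)$ is the largest length of a factor of ${\bf f}$ whose least period is exactly $F_k$.

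To compute $g$, I would use the predicate $LP(n,i,j)$ from Section~\ref{proofsf} (``$n$ is the least period of ${\bf f}[i..j]$'') and build an automaton for $\{(p,n)_F \st \exists i\ \exists j\ (j=i+n-1)\wedge LP(p,i,j)\}$, then conjoin (as in the palindrome-counting proofs) with the requirement that $n$ be maximal for the given $p$, and with the regular condition $(p)_F\in 10^*$ saying that $p$ is a Fibonacci number (redundant given Saari, but convenient). This yields a small automaton $E$ accepting exactly the pairs $(F_k,g(k))_F$. Decoding $E$ by matching its transitions against the Fibonacci representations $(F_k)_F=10^{k-2}$ and $(F_{k+2}+F_k-2)_F=(L_{k+1}-2)_F$ --- the same kind of inspection used for Theorem~\ref{allpers} --- I expect $g(k)=F_{k+2}+F_k-2$ for all $k\ge 2$ (with small cases, e.g.\ $g(2)=2$, $g(3)=5$, $g(4)=9$, checked by hand). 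Granting this, the theorem follows from Binet's formula: with $\psi=(1-\sqrt5)/2$, so $\sqrt5\,F_k=\alpha^k-\psi^k$, and using $\alpha^2=\alpha+1$, one gets $F_{k+2}+F_k=(2+\alpha)F_k+\psi^k$, hence
$$\frac{g(k)}{F_k}=\frac{F_{k+2}+F_k-2}{F_k}=(2+\alpha)+\frac{\psi^k-2}{F_k}<2+\alpha$$
for every $k\ge 2$ (as $\psi^k<1$), while the left side tends to $2+\alpha$ as $k\to\infty$. Thus $\exp(w)<2+\alpha$ for all factors $w$, yet factors of least period $F_k$ and length $g(k)$ witness exponents arbitrarily close to $2+\alpha$, so the critical exponent is exactly $2+\alpha$.

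The hard part is the decoding step: unlike the transparent answers ($10^*$, $(100)0^*$, \dots) seen earlier, here one must correctly read a less familiar regular pattern in the second coordinate of $E$ and handle the small cases with care. To be safe one can avoid committing to a closed form for $g$ by instead verifying two sentences directly with Procedure~\ref{proc:Fib-auto-decide}: ``there is no factor with least period $p=F_k$ and length $\ge F_{k+2}+F_k-1$'' and ``for every $k\ge 2$ there is a factor of least period $F_k$ and length $F_{k+2}+F_k-2$''. Both are expressible, because if $q=F_{k+2}$ then $(q)_F$ is $(p)_F$ followed by $00$ (a regular relation on $(p,q)_F$) and $q+p-1$ is computed by $M_{\rm add}$; if both come back true, the limit computation above closes the argument. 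Finally, note that the procedure manufactures only the integer function $g$; the passage to the irrational value $2+\alpha$ is the one step of the proof that is genuinely not mechanical.
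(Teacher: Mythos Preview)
Your approach is correct and is essentially the dual of the paper's: the paper cites Theorem~\ref{allpers}, which for each length $n$ gives the minimum least period $\ell(n)$ over all length-$n$ factors, and then computes the critical exponent as $\lim_{j\to\infty}(L_{j+1}-2)/F_j$; you instead fix the least period $F_k$ and seek the maximum length $g(k)$. These are two ways of parametrizing the same set of extremal $(\text{length},\text{period})$ pairs, and indeed your conjectured value $g(k)=F_{k+2}+F_k-2$ is exactly $L_{k+1}-2$ (since $L_{k+1}=F_k+F_{k+2}$), so you end up with the identical limit. The paper's route is shorter only because Theorem~\ref{allpers} has already been established; your suggestion to verify the two quantified sentences directly is a sound alternative that avoids decoding an automaton by inspection. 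Your closing remark that the passage from the integer data to the irrational $2+\alpha$ is the one genuinely non-mechanical step matches the paper's caveat that computability of the critical exponent is not yet known for Fibonacci-automatic sequences.
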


Although it is known that the critical exponent is computable
for $k$-automatic sequences \cite{Schaeffer&Shallit:2012}, we do not yet
know this for Fibonacci-automatic sequences (and more generally
Pisot-automatic sequences).  However, with a little inspired guessing
about the maximal repetitions, we can complete the proof.

\begin{proof}
For each length $n$, the smallest possible period $p$
of a factor is given by Theorem~\ref{allpers}.  Hence
the critical exponent is given by $\lim_{j \rightarrow \infty}
(L_{j+1}-2)/F_j$, which is $2+\alpha$.
\end{proof}

   We can also ask the same sort of questions about the
{\it initial critical exponent} of a word $\bf w$,
which is the supremum over the exponents of all  prefixes of $\bf w$.

\begin{theorem}
The initial critical exponent of $\bf f$ is $1+\alpha$.
\end{theorem}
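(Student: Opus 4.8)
The plan is to mimic the structure of the proof of the preceding theorem on the (global) critical exponent of $\bf f$, but now restricted to prefixes. First I would write down a predicate that, given $(n,p)_F$, asserts that $p$ is a period of the length-$n$ prefix ${\bf f}[0..n-1]$, namely $\forall t<n-p\ {\bf f}[t]={\bf f}[t+p]$. From this I would build, using Procedure~\ref{proc:Fib-auto-decide}, an automaton accepting $(n,p)_F$ such that $p$ is the \emph{least} period of the prefix of length $n$ — this is done exactly as in the ``Least periods'' subsection, by conjoining with $\forall p'\ (1\le p'<p)\Rightarrow \neg(\text{$p'$ is a period of }{\bf f}[0..n-1])$. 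The initial critical exponent is then $\sup_{n\ge 1} n/\ell_0(n)$, where $\ell_0(n)$ is the least period of the length-$n$ prefix.

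The key step after that is the same ``inspired guessing'' move used for the critical exponent: from the computed automaton one reads off a clean closed form for $\ell_0(n)$ in terms of Fibonacci numbers, presumably something like $\ell_0(F_j-1)=F_{j-1}$ (or an analogous relation between a nice family of prefix lengths and their least periods), so that the supremum is attained in the limit along that family. Then I would compute $\lim_{j\to\infty} (F_j-1)/F_{j-1}$, which equals $\alpha$; but one must be careful about the exact indexing and the additive constants, and I expect the correct family gives $\lim (F_{j+1}-1)/F_{j-1}$ or similar, evaluating to $1+\alpha$ (using $F_{j+1}=F_j+F_{j-1}$ and $F_j/F_{j-1}\to\alpha$). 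I would double-check the limit is indeed a supremum and not merely a limsup by verifying from the automaton that no prefix exceeds $1+\alpha$, i.e.\ that $n/\ell_0(n)<1+\alpha$ for all $n$; this inequality should fall out of the explicit form of $\ell_0(n)$.

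The main obstacle I anticipate is the same gap the authors flag for the critical exponent: we do not have a general algorithm to compute a supremum of $n/\ell_0(n)$ directly from the automaton for Fibonacci-automatic words, so the argument is not fully mechanical — it relies on extracting an explicit description of the least-period function for prefixes from the automaton and then doing the limit by hand. The real work is therefore in reading the automaton correctly and identifying the extremal family of prefixes; once that description is in hand, the computation of $1+\alpha$ is routine. A secondary subtlety is bookkeeping with the off-by-one and off-by-two constants in the Fibonacci representations (the excerpt has already noted identities like $(L_j-1)_F$ and $(F_i-2)_F$), so I would state the closed form for $\ell_0(n)$ precisely, perhaps in the form $\ell_0(n)=F_j$ for $n$ in an explicit interval bounded by Fibonacci numbers, before taking the limit.
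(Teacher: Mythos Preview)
Your proposal is correct and follows essentially the same approach as the paper: build the automaton for the least period of the length-$n$ prefix, read off the closed form, and take the limit. The paper's automaton yields precisely that $\ell_0(n)=F_j$ for $F_{j+1}-1\le n\le F_{j+2}-2$, so the extremal family is $n=F_{j+2}-2$ and the initial critical exponent is $\limsup_{j\to\infty}(F_{j+2}-2)/F_j=1+\alpha$, confirming your anticipated shape and your caution about the off-by-one constants.
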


\begin{proof}
We create an automaton $M_{\rm ice}$ accepting the language
$$L = \{ (n,p)_F \ : \ {\bf f}[0..n-1] \text{ has least period } p \} .$$
It is depicted in Figure~\ref{ice} below.
From the automaton, it is easy to see that the least period of 
the prefix of length $n \geq 1$ is $F_j$ for
$j \geq 2$ and $F_{j+1}-1 \leq n \leq F_{j+2} - 2$.  Hence the
initial critical exponent is given by
$\limsup_{j \rightarrow \infty} (F_{j+2} - 2)/F_j$, which is $1+\alpha$.

\begin{figure}[H]
\begin{center}
\includegraphics[width=6.5in]{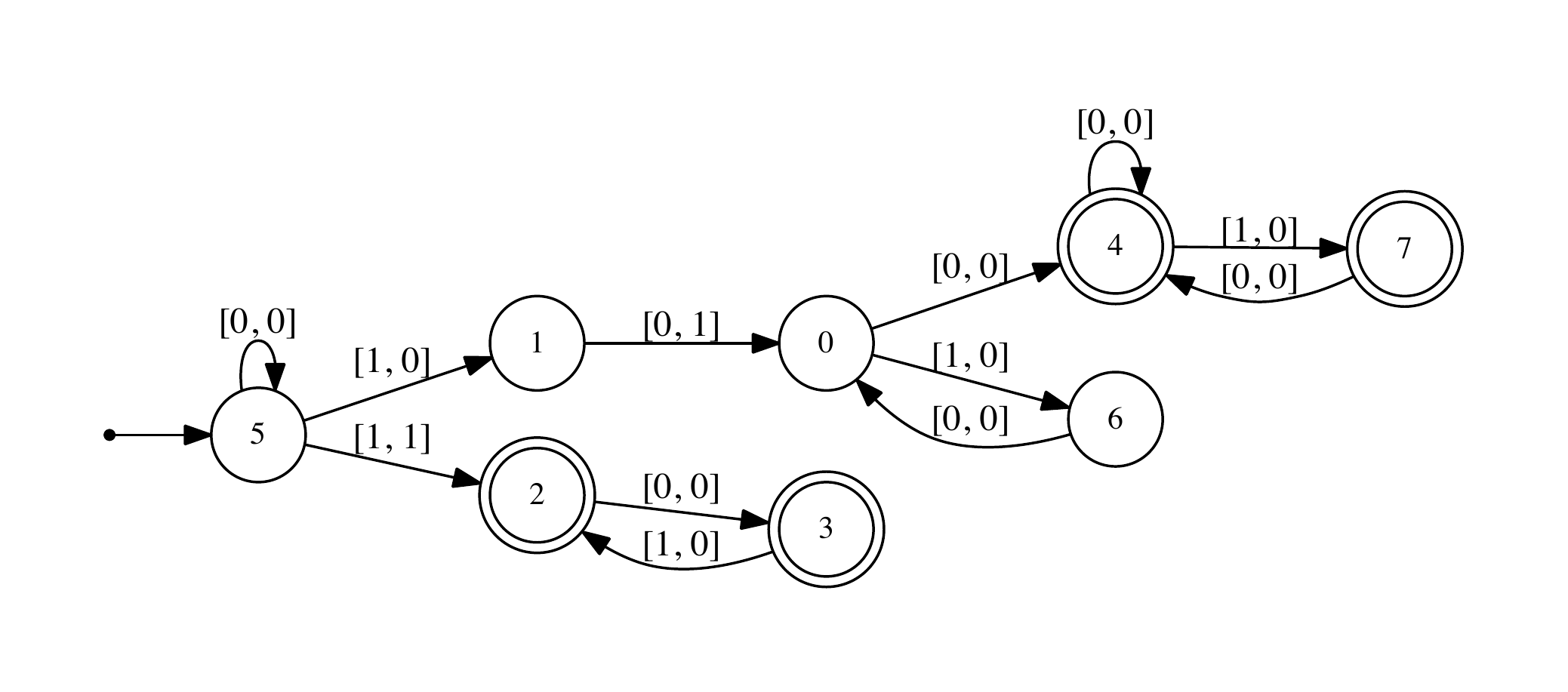}
\caption{Automaton accepting least periods of prefixes of length $n$}
\label{ice}
\end{center}
\end{figure}

\end{proof}

\subsection{The shift orbit closure}

The {\it shift orbit closure} of a sequence $\bf x$ is the set of all
sequences $\bf t$ with the property that each prefix of $\bf t$
appears as a factor of $\bf x$.  Note that this set can be much larger
than the set of all suffixes of $\bf x$.

The following theorem
is well known \cite[Prop.~3, p.~34]{Borel&Laubie:1993}:

\begin{theorem}
The lexicographically least sequence in the 
shift orbit closure of $\bf f$ is $0{\bf f}$, and the lexicographically
greatest is $1 {\bf f}$.
\end{theorem}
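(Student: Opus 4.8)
The plan is to reduce both claims to questions about Fibonacci-automatic predicates, exactly in the style already used throughout Section~\ref{proofsf}. Recall that a sequence $\mathbf{t}$ lies in the shift orbit closure of $\mathbf{f}$ iff every prefix of $\mathbf{t}$ occurs as a factor of $\mathbf{f}$; equivalently, for every $n$ there is a position $i$ with $\mathbf{f}[i..i+n-1] = \mathbf{t}[0..n-1]$. The key observation is that a lexicographically extremal element of the shift orbit closure is itself forced to be Fibonacci-automatic, because "being a prefix of the lex-least sequence in the orbit closure" is expressible as a first-order predicate over $\mathbf{f}$. So I would first argue existence and automaticity of the lex-least (resp.\ lex-greatest) sequence, and then verify mechanically that the candidate $0\mathbf{f}$ (resp.\ $1\mathbf{f}$) is exactly that sequence.

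Concretely, for the lex-least sequence $\mathbf{z}$, I claim $\mathbf{z}[n] = b$ iff there is a factor of $\mathbf{f}$ of length $n+1$ whose first $n$ symbols agree with $\mathbf{z}[0..n-1]$ and whose last symbol is $b$, and $b$ is the smallest such last symbol. To turn this into something checkable, I would instead directly guess that $\mathbf{z} = 0\mathbf{f}$ and verify the defining property: (a) $0\mathbf{f}$ is in the orbit closure, i.e.\ every prefix of $0\mathbf{f}$ is a factor of $\mathbf{f}$; (b) $0\mathbf{f}$ is lex-least, i.e.\ for every factor $x = \mathbf{f}[i..i+n-1]$ of $\mathbf{f}$, we have $(0\mathbf{f})[0..n-1] \le x$ lexicographically. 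Writing $\mathbf{g} = 0\mathbf{f}$ so that $\mathbf{g}[0]=0$ and $\mathbf{g}[k+1] = \mathbf{f}[k]$, the indexing into $\mathbf{g}$ is itself Fibonacci-automatic (prepending a letter is a trivial automaton modification, or one simply uses a case split on whether the index is $0$). Then (a) becomes
$$\forall n\ \exists i\ \forall t<n\ \ \mathbf{g}[t] = \mathbf{f}[i+t],$$
and (b) becomes the statement that for all $i$ and all $n$, either $\mathbf{g}[0..n-1]=\mathbf{f}[i..i+n-1]$, or there is a first position $t<n$ of disagreement at which $\mathbf{g}[t] < \mathbf{f}[i+t]$:
$$\forall i\ \forall n\ \ (\forall t<n\ \mathbf{g}[t]=\mathbf{f}[i+t]) \ \vee\ \exists t<n\ \big((\forall u<t\ \mathbf{g}[u]=\mathbf{f}[i+u]) \wedge \mathbf{g}[t]<\mathbf{f}[i+t]\big).$$
Running these through Procedure~\ref{proc:Fib-auto-decide} should produce automata accepting all of $\Enn$, which proves $0\mathbf{f}$ is the lex-least element. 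The argument for $1\mathbf{f}$ being lex-greatest is identical with the inequalities reversed.

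The one genuine subtlety — and the step I expect to be the main obstacle — is justifying that a lexicographically extremal element of the orbit closure \emph{exists and is unique}, so that verifying the two properties above actually pins down a single sequence rather than merely exhibiting one candidate among possibly many or none. Existence is not hard: the orbit closure is closed in the product topology and nonempty, hence compact, and the lex order is lower semicontinuous in a suitable sense, so a lex-least element exists; uniqueness is immediate since lex order is total. (Alternatively, and more in the mechanical spirit, one checks that the predicate defining "$\mathbf{g}[n]$ is the smallest letter extending $\mathbf{g}[0..n-1]$ to a factor of $\mathbf{f}$" has a unique solution by verifying that the relevant automaton is indeed a DFAO computing a total function.) Once existence and uniqueness are in hand, properties (a) and (b) above — each verified to hold for all $n$ by inspection of a one-state or two-state output automaton — identify that unique element as $0\mathbf{f}$, and we are done.
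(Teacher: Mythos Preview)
Your proposal is correct and in the same mechanical spirit as the paper, but the organization differs in a way worth noting. You adopt a \emph{guess-and-verify} strategy: posit $\mathbf{g}=0\mathbf{f}$ up front, then check two predicates (membership in the orbit closure, and lex-minimality against every factor). The paper instead uses a \emph{compute-and-recognize} strategy: it writes a single predicate $P(n)$ that is true iff the $n$th symbol of the (a priori unknown) lex-least sequence is $1$ --- namely, $P(n)$ asserts the existence of a position $j$ with $\mathbf{f}[j+n]=1$ such that $\mathbf{f}[j..j+n]$ is lex $\le$ every length-$(n{+}1)$ factor of $\mathbf{f}$ --- and then inspects the resulting DFAO to see that it computes $0\mathbf{f}$. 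The paper's route avoids the need to build a separate automaton for $0\mathbf{f}$ and sidesteps your existence/uniqueness discussion (since the predicate directly encodes ``lex-least length-$(n{+}1)$ factor,'' which always exists and is unique among factors of a fixed length); your route, on the other hand, makes the two logical obligations (membership and minimality) explicit and is arguably cleaner to read. Either way the mechanical content is the same.
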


\begin{proof}
We handle only the lexicographically least, leaving the lexicographically
greatest to the reader.

The idea is to create a predicate $P(n)$ for the lexicographically least
sequence ${\bf b} = b_0 b_1 b_2 \cdots$ which is true iff 
$b_n = 1$.  The following predicate encodes, first, that $b_n = 1$, and
second, that if one chooses any length-($n+1$) factor $t$ of $\bf f$,
then $b_0 \cdots b_n$ is equal or lexicographically smaller than $t$.  

\begin{multline*}
\exists j \ {\bf f}[j+n]=1 \ \wedge \ 
\forall k \ (( \forall s \leq n \ {\bf f}[j+s] = {\bf f}[k+s] )  \ \vee \  \\
(\exists i\leq n\ {\text s. t. }\ {\bf f}[j+i] < {\bf f}[k+i] 
	\ \wedge \ ( \forall t<i \  {\bf f}[j+t]={\bf f}[k+t] )))
\end{multline*}
When we do this we get the following automaton, which is easily
seen to generate the sequence $0 {\bf f}$.

\begin{figure}[H]
\begin{center}
\includegraphics[width=6.5in]{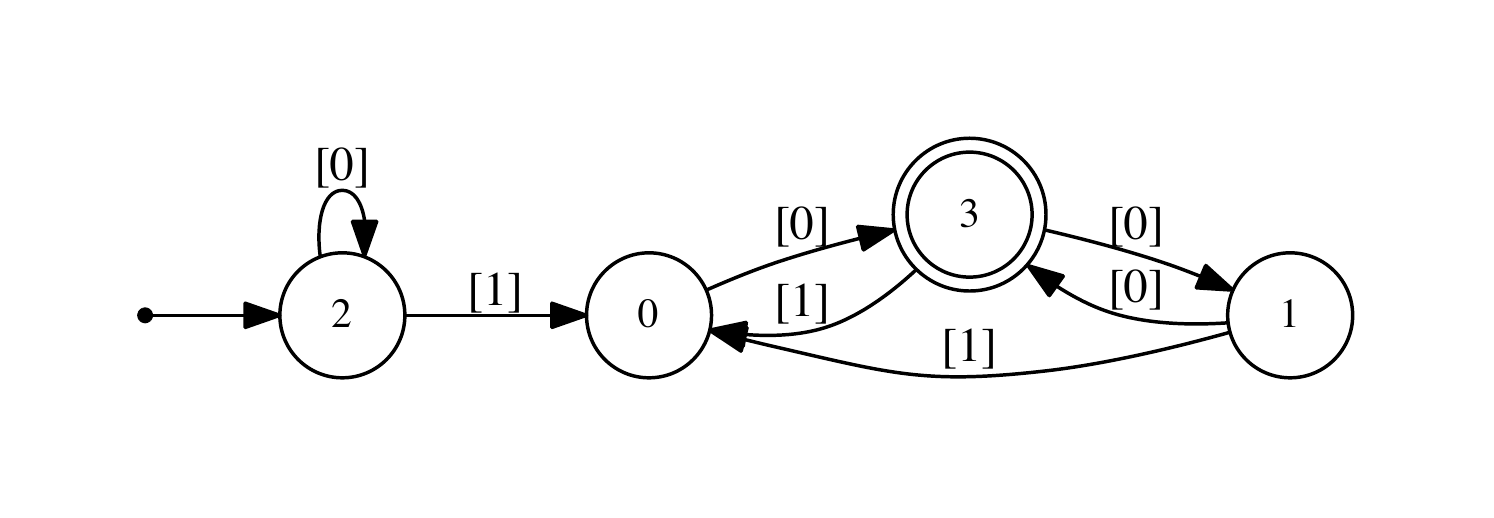}
\caption{Automaton accepting lexicographically least sequence in shift orbit
closure of ${\bf f}$}
\label{lexleastorbit}
\end{center}
\end{figure}

\end{proof}

\subsection{Minimal forbidden words}

Let ${\bf x}$ be an infinite word.
A finite word $z = a_0 \cdots a_n$ is said to be
{\it minimal forbidden} if $z$ is not a factor of $\bf x$, but
both $a_1 \cdots a_n$ and $a_0 \cdots a_{n-1}$ are 
\cite{Currie&Rampersad&Saari:2013}.

We can characterize all minimal forbidden words as follows:
we create an automaton accepting the language
\begin{multline*}
 \{ (i,n)_F \ : \  {\bf f}[i..i+n-1] \, \overline{{\bf f}[n]} 
\text{ is not a factor of $\bf f$ and } \\
{\bf f}[i+1..i+n-1] \, \overline{{\bf f}[n]} \text{ is a factor } 
\text{and } i \text{ is as small as possible } \}.
\end{multline*}

When we do so we find the words accepted are
$$ [1,1] ([0,0][1,1])^* (\epsilon + [0,0]) .$$
This corresponds to the words
$$ {\bf f}[F_n - 1..2F_n -3] \, \overline{{\bf f}[2F_n -2]} $$
for $n \geq 3$.
The first few are
$$ 11, 000, 10101, 00100100, 1010010100101, \ldots .$$

\subsection{Grouped factors}

Cassaigne \cite{Cassaigne:1998} introduced the notion of \textit{grouped factors}.  A
sequence ${\bf a} = (a_i)_{i \geq 0}$ has grouped factors if,
for all $n \geq 1$, there exists some position $m = m(n)$ such that
${\bf a}[m..m+\rho(n)+n-2]$ contains all the $\rho(n)$ length-$n$
blocks of $\bf a$, each block occurring exactly once.  One consequence of
his result is 
that the Fibonacci word has grouped factors.

We can write a predicate for the property of having grouped
factors, as follows:
\begin{multline*}
\forall n \geq 1 \quad \exists m, s \geq 0 \quad 
	\forall i \geq 0  \\
\exists j \text{ s.t. }  m \leq j \leq m+s \text{ and } {\bf a}[i..i+n-1] = {\bf a}[j..j+n-1]  \text{ and } \\
\forall j', \ m \leq j' \leq m+s, \quad j \not= j' 
\text{ we have } {\bf a}[i..i+n-1] \not= {\bf a}[j'..j'+n-1]  .
\end{multline*}


The first part of the predicate says that every length-$n$ block
appears somewhere in the desired window, and the second says
that it appears exactly once.   

(This five-quantifier definition
can be viewed as a response to the question of Homer and
Selman \cite{Homer&Selman:2011}, 
``...in what sense would a problem that required at least three 
alternating quantifiers to describe be natural?")


Using this predicate and our decision method,
we verified that the Fibonacci word does indeed
have grouped factors.

\section{Mechanical proofs of properties of the finite Fibonacci words}
\label{finitefib}

Although our program is designed to answer questions about the properties
of the infinite Fibonacci word $\bf f$, it can also be used to solve 
problems concerning the finite Fibonacci words $(X_n)$, defined as follows:
$$
X_n = \begin{cases}
	\epsilon, & \text{if $n = 0$}; \\
	1, & \text{if $n = 1$}; \\
	0, & \text{if $n = 2$}; \\
	X_{n-1} X_{n-2}, & \text{if $n > 2$}.
	\end{cases}
$$
Note that $|X_n| = F_n$ for $n \geq 1$.  (We caution the reader that
there exist many variations on this definition in the literature,
particularly with regard to indexing and initial values.)
Furthermore, we have $\varphi(X_n) = X_{n+1}$ for $n \geq 1$.  

Our strategy for the
the finite Fibonacci words has two parts:

\begin{itemize}

\item[(i)]  Instead of phrasing statements in terms of factors, we
phrase them in terms of occurrences of factors (and hence in terms of the
indices defining a factor).

\item[(ii)] Instead of phrasing statements about finite Fibonacci words, we
phrase them instead about {\it all\/} length-$n$ prefixes of $\bf f$.  Then,
since $X_i = {\bf f}[0..F_i - 1]$, we can
deduce results about the finite Fibonacci words by considering the case
where $n$ is a Fibonacci number $F_i$.

\end{itemize}

To illustrate this idea, consider
one of the most famous properties of the Fibonacci words, the
{\it almost-commutative} property:  letting $\eta(a_1 a_2 \cdots a_n) =
a_1 a_2 \cdots a_{n-2} a_n a_{n-1}$ be the map that interchanges
the last two letters of a string of length at least $2$,
we have

\begin{theorem}
$X_{n-1} X_n = \eta(X_n X_{n-1})$ for $n \geq 2$.
\end{theorem}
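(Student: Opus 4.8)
The plan is to apply strategy~(ii) and turn the identity into a statement about prefixes of ${\bf f}$. Recall that $X_i = {\bf f}[0..F_i-1]$ for $i \ge 2$ (the sole exceptional finite Fibonacci word is $X_1 = 1$, which is not a prefix of ${\bf f}$), and that $|X_i| = F_i$. Hence, for $n \ge 3$, both $X_{n-1}X_n$ and $X_nX_{n-1}$ have length $F_{n-1} + F_n = F_{n+1}$, and $\eta$ simply transposes their last two letters. Writing $a = F_{n-1}$ and $b = F_n$, the letter of $X_{n-1}X_n$ in position $t$ (for $0 \le t < a+b$) equals ${\bf f}[t]$ if $t < a$ and ${\bf f}[t-a]$ otherwise, while the letter of $X_nX_{n-1}$ in position $t$ equals ${\bf f}[t]$ if $t < b$ and ${\bf f}[t-b]$ otherwise. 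The identity $X_{n-1}X_n = \eta(X_nX_{n-1})$ is therefore equivalent to the conjunction of (a) agreement of these two letters for all $t$ with $0 \le t \le a+b-3$, and (b) the two cross-equalities at the final positions: the letter of $X_{n-1}X_n$ at $a+b-2$ equals the letter of $X_nX_{n-1}$ at $a+b-1$, and vice versa.

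Next I would write this out as a first-order predicate with two free variables $a,b$, expressing the case splits $t < a$ versus $t \ge a$ (and likewise for $b$) as disjunctions, and guarding the whole thing with a clause asserting that $a$ and $b$ are consecutive Fibonacci numbers. The latter is first-order expressible: by Theorem~\ref{squares}, a positive integer $m$ equals $F_j$ for some $j \ge 2$ precisely when ${\bf f}$ contains a square of order $m$, and ``consecutive'' is then captured by additionally requiring that no such order lies strictly between $a$ and $b$. Running Procedure~\ref{proc:Fib-auto-decide} on the predicate ``($a,b$ consecutive Fibonacci numbers) $\Rightarrow$ (the identity holds)'' would yield a DFA over $\Sigma_2^2$, and the theorem for $n \ge 3$ amounts to checking that this DFA accepts every input --- equivalently, running the negated predicate and confirming that the resulting automaton accepts the empty language. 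The base case $n = 2$ is handled by hand: $X_1X_2 = 10 = \eta(01) = \eta(X_2X_1)$.

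The step I expect to be the main obstacle is none of the mathematics --- the needed periodicity of ${\bf f}$ is already implicit in the square automaton of Figure~\ref{squareorders} and will be discharged automatically --- but rather the bookkeeping in setting up the predicate: getting the shift arithmetic right, and in particular encoding $\eta$ so that exactly the two positions $a+b-2$ and $a+b-1$ are singled out and matched crosswise rather than straight. A secondary point to watch is that the reduction $X_i = {\bf f}[0..F_i-1]$ fails for $i = 1$, so the argument above covers only $n \ge 3$ while $n = 2$ must be, and is trivially, checked separately.
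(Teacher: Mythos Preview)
Your proposal is correct, and the overall shape---reduce to prefixes of ${\bf f}$, express the identity as a first-order predicate, handle the exceptional case $n=2$ by hand---matches the paper's strategy. The difference is in how you package the automaton computation. You guard the predicate with a clause asserting ``$a,b$ are consecutive Fibonacci numbers'' (encoded via Theorem~\ref{squares}) and then verify that the implication holds universally. The paper instead runs the \emph{unguarded} predicate ``${\bf f}[0..i-1]{\bf f}[0..j-1] = \eta({\bf f}[0..j-1]{\bf f}[0..i-1])$'' over all pairs $(i,j)$ with $i>j\ge 2$, and simply reads off from the resulting automaton that the accepted language is $[1,0][0,1][0,0]^+$, i.e.\ exactly the consecutive Fibonacci pairs. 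This is a bit cleaner---no need to build a separate ``is a Fibonacci number'' predicate at all---and it buys the stronger \emph{if and only if} statement (prefixes satisfy the identity precisely when their lengths are consecutive Fibonacci numbers), whereas your route only certifies the forward direction needed for the theorem. Your detour through Theorem~\ref{squares} to recognize Fibonacci numbers is correct but unnecessary: in Fibonacci representation the set $\{F_j : j\ge 2\}$ is just the regular language $10^*$, so one can hard-code it directly.
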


We can verify this, and prove even more, using our method.  

\begin{theorem}
Let $x = {\bf f}[0..i-1]$ and $y = {\bf f}[0..j-1]$ for $i > j > 1$.
Then $xy = \eta(yx)$ if and only if $i = F_n$, $j = F_{n-1}$ for $n \geq 3$.
\end{theorem}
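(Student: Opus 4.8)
The plan is to translate the statement into a first-order predicate over $\Th(\Enn,0,1,+)$ that indexes into $\bf f$, and then let Procedure~\ref{proc:Fib-auto-decide} produce an automaton which we inspect. The key observation is that $x = {\bf f}[0..i-1]$ and $y = {\bf f}[0..j-1]$ are both prefixes of $\bf f$, so the words $xy$ and $\eta(yx)$ are completely determined by the two lengths $i$ and $j$. Writing $N = i+j$ for the common length of $xy$ and $yx$, we have $(xy)[t] = {\bf f}[t]$ for $0 \le t < i$ and $(xy)[t] = {\bf f}[t-i]$ for $i \le t < N$; similarly $(yx)[t] = {\bf f}[t]$ for $0 \le t < j$ and $(yx)[t] = {\bf f}[t-j]$ for $j \le t < N$. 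The map $\eta$ only swaps positions $N-2$ and $N-1$, so the condition $xy = \eta(yx)$ says: $(xy)[t] = (yx)[t]$ for all $t < N-2$, together with $(xy)[N-2] = (yx)[N-1]$ and $(xy)[N-1] = (yx)[N-2]$.

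First I would encode the ``$t$-th letter of $xy$'' and ``$t$-th letter of $yx$'' as relations. Since the branch taken depends on whether $t < i$ (resp.\ $t<j$), I can avoid introducing a genuine conditional by writing, for each of the three range-pieces of the equality, a separate conjunct guarded by the appropriate inequalities. Concretely, the predicate for $xy = \eta(yx)$ becomes a conjunction of clauses of the shape: for all $t$ with $t < j$, ${\bf f}[t] = {\bf f}[t]$ (vacuous); for all $t$ with $j \le t < i$ (using $i>j$), ${\bf f}[t] = {\bf f}[t-j]$; for all $t$ with $i \le t < N-2$, ${\bf f}[t-i] = {\bf f}[t-j]$; plus the two boundary equalities at $t = N-2$ and $t = N-1$ comparing ${\bf f}[N-2-i]$ with ${\bf f}[N-1-j]$ and ${\bf f}[N-1-i]$ with ${\bf f}[N-2-j]$. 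All of $N-2-i = j-2$, etc., are expressible via addition and subtraction in the numeration system, and all quantifiers range over $\Enn$, so this is a legal input to the decision procedure. I would then feed the full predicate $\bigl((i > j) \wedge (j > 1) \wedge (xy = \eta(yx))\bigr)$ and obtain a DFA accepting $\{(i,j)_F\}$ making it true.

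The final step is to inspect the resulting automaton and check that the language it accepts is exactly $\{(F_n, F_{n-1})_F : n \ge 3\}$. Using the description of canonical Fibonacci representations from Section~\ref{fibrep} (the language $\epsilon + 1(0+01)^*$), the representation of $F_n$ is $10^{n-2}$ and of $F_{n-1}$ is $10^{n-3}$, so the pairs $(F_n,F_{n-1})_F$ form a simple regular language $[1,1][0,0]^+([0,0] \text{ coordinate-aligned})$ — more precisely $[1,1]\,[0,0]\,([0,0])^*$ read in the two-track alphabet with the second track shifted — which should be straightforward to match against the output automaton by eye or by an isomorphism check, exactly as done for Figure~\ref{onepal} elsewhere in the paper. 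As a sanity check and to recover the classical statement, one notes that when $i = F_n$ and $j = F_{n-1}$ we have $xy = X_n X_{n-1}$... wait, rather $x = {\bf f}[0..F_n-1] = X_n$ is only correct for the indexing where $|X_n| = F_n$; with the paper's convention $X_i = {\bf f}[0..F_i-1]$, so $xy = X_n X_{n-1}$ does not quite match $X_{n-1}X_n$ — but the equality $xy=\eta(yx)$ is symmetric enough that the case $n \ge 3$ corresponds precisely to the known almost-commutative identity, giving the ``only if'' direction a familiar interpretation.

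I expect the main obstacle to be purely a bookkeeping one: getting the piecewise definition of the letters of $xy$ and $yx$ right at the two swapped boundary positions, and making sure the off-by-one index arithmetic ($N-2-i = j-2$ and $N-1-j = i-1$, etc.) is faithfully rendered in terms of the three free parameters without accidentally quantifying over $N$ in a way that loses information. Once the predicate is correct, the decision procedure does the rest, and reading off the accepted language is routine given the explicit form of canonical Fibonacci representations of consecutive Fibonacci numbers.
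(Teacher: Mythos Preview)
Your proposal is correct and takes essentially the same approach as the paper: the paper's predicate is exactly your decomposition (the range $j \le t < i$ giving ${\bf f}[t]={\bf f}[t-j]$; the range $i \le t < N-2$, after substituting $s=t-i$, becoming $\forall s \le j-3\ {\bf f}[s]={\bf f}[s+i-j]$; and the two boundary equalities ${\bf f}[j-2]={\bf f}[i-1]$ and ${\bf f}[j-1]={\bf f}[i-2]$). One small correction to your expected output: since $(F_n)_F = 10^{n-2}$ and $(F_{n-1})_F = 10^{n-3}$ have different lengths, padding the second with a leading $0$ gives the accepted language $[1,0][0,1][0,0]^+$ (which is what the paper's automaton produces), not $[1,1][0,0]^+$.
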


\begin{proof}
The idea is to check, for each $i > j > 1$, whether
$${\bf f}[0..i-1] {\bf f}[0..j-1] = \eta({\bf f}[0..j-1] {\bf f}[0..i-1]).$$
We can do this with the following predicate:
\begin{multline*}
(i>j) \ \wedge \ (j\geq 2) \ \wedge \ (\forall t,\ j\leq t<i,\  {\bf f}[t]=
{\bf f}[t-j]) 
\ \wedge \\
(\forall s \leq j-3\ {\bf f}[s]={\bf f}[s+i-j]) \ \wedge \ ({\bf f}[j-2]={\bf f}[i-1]) 
\ \wedge \ ({\bf f}[j-1]={\bf f}[i-2]) .
\end{multline*}
The log of our program is as follows:
{\tiny
\begin{verbatim}
i > j with 7 states, in 49ms
 j >= 2 with 5 states, in 87ms
  i > j & j >= 2 with 12 states, in 3ms
   j <= t with 7 states, in 3ms
    t < i with 7 states, in 17ms
     j <= t & t < i with 19 states, in 6ms
      F[t] = F[t - j] with 16 states, in 31ms
       j <= t & t < i => F[t] = F[t - j] with 62 states, in 31ms
        At j <= t & t < i => F[t] = F[t - j] with 14 states, in 43ms
         i > j & j >= 2 & At j <= t & t < i => F[t] = F[t - j] with 12 states, in 9ms
          s <= j - 3 with 14 states, in 72ms
           F[s] = F[s + i - j] with 60 states, in 448ms
            s <= j - 3 => F[s] = F[s + i - j] with 119 states, in 14ms
             As s <= j - 3 => F[s] = F[s + i - j] with 17 states, in 58ms
              i > j & j >= 2 & At j <= t & t < i => F[t] = F[t - j] & As s <= j - 3 => F[s] = F[s + i - j] with 6 states, in 4ms
               F[j - 2] = F[i - 1] with 20 states, in 34ms
                i > j & j >= 2 & At j <= t & t < i => F[t] = F[t - j] & As s <= j - 3 => F[s] = F[s + i - j] & F[j - 2] = F[i - 1] with 5 states, in 1ms
                 F[j - 1] = F[i - 2] with 20 states, in 29ms
                  i > j & j >= 2 & At j <= t & t < i => F[t] = F[t - j] & As s <= j - 3 => F[s] = F[s + i - j] & F[j - 2] = F[i - 1] & F[j - 1] = F[i - 2] with 5 states, in 1ms
overall time: 940ms
\end{verbatim}
}

The resulting automaton accepts $[1,0][0,1][0,0]^+$, which corresponds
to $i = F_n$, $j = F_{n-1}$ for $n \geq 4$.
\end{proof}

An old result of S\'e\'ebold \cite{Seebold:1985b} is
\begin{theorem}
If $uu$ is a square occurring in $\bf f$, then $u$ is conjugate
to some finite Fibonacci word.
\end{theorem}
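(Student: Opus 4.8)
The plan is to reduce S\'e\'ebold's theorem to a sentence in the first-order theory of Fibonacci representation and verify it with Procedure~\ref{proc:Fib-auto-decide}. The key preliminary observation is that by Theorem~\ref{squares} every square occurring in $\bf f$ has order $F_n$ for some $n \geq 2$; hence if $uu$ is a square in $\bf f$, then $|u| = F_n$ for some $n$, and the finite Fibonacci word of that exact length is $X_n = {\bf f}[0..F_n-1]$ (as recorded in Section~\ref{finitefib}). It therefore suffices to prove the formally stronger statement that whenever ${\bf f}[i..i+2n-1]$ is a square of order $n$, the root $u = {\bf f}[i..i+n-1]$ is a conjugate (cyclic rotation) of the length-$n$ prefix ${\bf f}[0..n-1]$ of $\bf f$; specializing then to $n = F_m$ gives that $u$ is a conjugate of $X_m$.

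Next I would encode the conjugacy condition without modular arithmetic by guessing the split point $s \le n$ of the rotation that carries $u$ onto ${\bf f}[0..n-1]$: the length-$(n-s)$ suffix of $u$ must coincide with the length-$(n-s)$ prefix of ${\bf f}[0..n-1]$, and the length-$s$ prefix of $u$ must coincide with the length-$s$ suffix of ${\bf f}[0..n-1]$. Combining this with the square predicate already used in Theorem~\ref{squares}, the sentence to be decided is
\begin{multline*}
\forall n\ \forall i\ \bigl[(n>0)\wedge(\forall t<n\ {\bf f}[i+t]={\bf f}[i+n+t])\bigr]\implies {} \\
\exists s\ (s\le n)\wedge(\forall t<n-s\ {\bf f}[i+s+t]={\bf f}[t])\wedge(\forall t<s\ {\bf f}[n-s+t]={\bf f}[i+t]).
\end{multline*}
This sentence has no free variables, so a correct run of the program should report it true, which establishes the theorem. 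An essentially equivalent route would be to start from the six-state automaton of Figure~\ref{squareorders}, which already lists all pairs $(n,i)_F$ that index squares, and intersect it with the negation of the conjugacy predicate, expecting the empty language.

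The computation requires no idea beyond those already deployed for squares in Section~\ref{repe-subsec} and for the finite Fibonacci words in Section~\ref{finitefib}, so the main obstacle is bookkeeping rather than mathematics: one must make sure that the two pieces of the conjugacy predicate — the indices $i+s+t$ on one side and $n-s+t$ on the other, with the guarded subtractions interpreted as in the rest of the paper — genuinely reassemble the rotation of $u$ by $s$ positions, and that the degenerate endpoints $s = 0$ and $s = n$ behave correctly. The secondary concern, as always, is the size of the intermediate automata, since even the bare square predicate of Theorem~\ref{squares} already produced automata with a couple hundred states; but the final automaton for a true sentence collapses to a trivial one, so this is a performance issue only.
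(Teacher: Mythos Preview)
Your proposal is correct and follows essentially the same approach as the paper: both encode the implication ``square of order $n$ at position $i$ $\implies$ ${\bf f}[i..i+n-1]$ is a conjugate of the length-$n$ prefix ${\bf f}[0..n-1]$'' and feed it to the decision procedure, then invoke the fact that square orders are Fibonacci numbers so that the relevant prefix is a finite Fibonacci word. The only cosmetic differences are that the paper parametrizes conjugacy by a split point $m$ in the target word rather than your rotation amount $s$, and it leaves $(n,i)$ free (checking that all pairs are accepted) rather than closing the sentence with $\forall n\,\forall i$.
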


\begin{proof}
Assertion $\conj(i,j,k,\ell)$ means ${\bf f}[i..j]$ is a 
conjugate of ${\bf f}[k..\ell]$ (assuming $j-i = \ell-k$)
$$\conj(i,j,k,\ell) :=
\exists m \ {\bf f}[i..i+\ell-m] = {\bf f}[m..\ell] \text{ and }
	{\bf f}[i+\ell-m+1..j] = {\bf f}[k..m-1].$$

Predicate:  $$ ({\bf f}[i..i+n-1] = {\bf f}[i+n..i+2n-1])
	\implies \conj(i,i+n-1,0,n-1) $$

This asserts that any square $uu$ of order $n$ appearing in $\bf f$
is conjugate to ${\bf f}[0..n-1]$.  
When we implement this, we discover that all lengths are accepted.
This makes sense since the only lengths corresponding to squares
are $F_n$, and for all other lengths the base of the implication is
false.
\end{proof}

We now reprove an old result of de Luca \cite{deLuca:1981}.  Recall
that a primitive word is a non-power; that is, a word that cannot
be written in the form $x^n$ where
$n$ is an integer $\geq 2$.

\begin{theorem}
All finite Fibonacci words are primitive.
\end{theorem}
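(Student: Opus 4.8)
The plan is to phrase primitivity in the same style as all the preceding mechanical proofs: instead of arguing about the finite Fibonacci words $X_n$ directly, I would argue about \emph{all} length-$n$ prefixes ${\bf f}[0..n-1]$ of $\bf f$, and then specialize to $n = F_i$. Recall a word $w$ of length $n$ is imprimitive iff it has a period $p$ with $p \mid n$ and $1 \le p < n$. So I would build an automaton accepting the Fibonacci representations of those $n \ge 1$ for which ${\bf f}[0..n-1]$ \emph{is} primitive, using a predicate that says: for every candidate period $p$ in the range $1 \le p < n$, either $p \nmid n$, or $p$ fails to be a period of the length-$n$ prefix. Divisibility $p \mid n$ is expressible in $\Th(\Enn,0,1,+)$ as $\exists q\ (q \ge 1 \ \wedge\ n = pq)$ — but since the quotient $q$ could itself require a universally quantified witness inside a larger formula, I would instead use the equivalent bounded form $\exists q\ (1 \le q \le n \ \wedge\ n = pq)$, keeping everything in the decidable theory. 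The ``$p$ is a period of the prefix'' part is just $\forall t\ (t + p < n \IMPLY {\bf f}[t] = {\bf f}[t+p])$, exactly the kind of predicate used in the least-period section.

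So the concrete predicate for ``${\bf f}[0..n-1]$ is primitive'' is
$$ (n \ge 1) \ \wedge\ \forall p\ \Big( (1 \le p < n) \ \wedge\ \big(\exists q\ (1 \le q \le n \ \wedge\ n = pq)\big) \ \wedge\ \big(\forall t\ (t+p < n \IMPLY {\bf f}[t]={\bf f}[t+p])\big) \Big) \IMPLY \bot, $$
or equivalently, negating, ``${\bf f}[0..n-1]$ is imprimitive'' $\equiv \exists p\ (1 \le p < n)\,\wedge\,(p \mid n)\,\wedge\,(p \text{ is a period})$. I would run the decision procedure (Procedure~\ref{proc:Fib-auto-decide}) on the primitivity predicate, obtaining a DFA $M$ over $\Sigma_2$ accepting $\{(n)_F : {\bf f}[0..n-1] \text{ is primitive}\}$. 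Then, in a second step, I would intersect $M$ with the automaton for $\{(F_i)_F : i \ge 1\}$ — the language $10^*$ of canonical representations of Fibonacci numbers — and check that this intersection equals $10^*$ (equivalently, check that $M$ accepts every string in $10^*$, i.e.\ that the predicate holds for all $n = F_i$). Since by hypothesis $X_i = {\bf f}[0..F_i-1]$ for $i \ge 1$, this establishes that every $X_i$ is primitive.

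The main obstacle I anticipate is not conceptual but the interaction of the divisibility clause with the state blow-up: multiplication $n = pq$ is not directly an operation in $\Th(\Enn,0,1,+)$, so the inner existential over $q$ really does need to be handled as a separate quantifier, and the intermediate automaton for ``$\exists q: n = pq$'' (i.e.\ the set $\{(p,n)_F : p \mid n\}$) could be large even before it is combined with the period condition. An alternative that sidesteps multiplication entirely: observe that by the earlier result on least periods of prefixes (Figure~\ref{ice}), the least period $P$ of ${\bf f}[0..n-1]$ satisfies $n < 2P$ for infinitely many $n$ and more relevantly, for $n = F_i$ one has $P = F_{i-1}$, and $F_{i-1} \nmid F_i$ for $i \ge 4$ since $1 < F_i/F_{i-1} < 2$ is not an integer; the small cases $i \le 3$ are checked by hand. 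So a cleaner route is: (a) use the already-constructed automaton $M_{\rm ice}$ for $\{(n,p)_F : {\bf f}[0..n-1] \text{ has least period } p\}$; (b) note a word is imprimitive iff its least period divides its length; (c) a length-$n$ word with least period $p$ and $n/2 < p < n$ is automatically primitive; (d) verify mechanically (or by the explicit description of $M_{\rm ice}$) that for $n = F_i$, $i \ge 2$, the least period is $F_{i-1}$, which does not divide $F_i$. I expect the author takes a direct route with a short predicate and a one-line log, so my write-up would present the direct primitivity predicate above and report that the resulting automaton, intersected with $10^*$, accepts all of $10^*$, hence all finite Fibonacci words are primitive.
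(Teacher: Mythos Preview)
Your primary predicate has a genuine gap: the divisibility clause $p \mid n$ (equivalently $\exists q\ n = pq$) is \emph{not} expressible in $\Th(\Enn,0,1,+)$ when both $p$ and $n$ are free variables. Presburger arithmetic cannot define the binary divisibility relation --- if it could, one could define multiplication and recover full arithmetic, contradicting decidability. So the set $\{(p,n)_F : p \mid n\}$ is not accepted by any finite automaton, and Procedure~\ref{proc:Fib-auto-decide} cannot be run on your predicate as written. Your own remark that ``the inner existential over $q$ really does need to be handled as a separate quantifier'' does not rescue this: the product $pq$ of two unbounded variables is simply outside the language.

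The paper sidesteps this by using a different, purely additive characterization of imprimitivity. It defines $\pow(i,j)$ to hold iff there exists $d$ with $0 < d < j-i+1$ such that $d$ is a period of ${\bf f}[i..j]$ \emph{and} ${\bf f}[j-d+1..j] = {\bf f}[i..i+d-1]$ (the last $d$ letters equal the first $d$ letters). One checks that a word $w$ is a proper power iff some $d$ with $0 < d < |w|$ satisfies both conditions --- the ``wrap-around'' equality replaces the divisibility test, with no multiplication needed. Running $\neg\pow(0,n-1)$ through the prover, the paper finds that the length-$n$ prefix is primitive for every $n$ except $n = 2F_k$, $k \ge 4$; in particular all $n = F_i$ are primitive. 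Your alternative route via least periods can also be salvaged mechanically, but only if you replace ``$p \nmid n$'' by the Presburger-expressible ``$2p > n$'' (which suffices since for $n = F_i$, $i \ge 4$, the least period is $F_{i-1} > F_i/2$); as you wrote it, step (d) still leans on a divisibility fact proved by hand.
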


\begin{proof}
The factor ${\bf f}[i..j]$ is a power if and only if there exists $d$,
$0 < d < j-i+1 $, such that ${\bf f}[i..j-d] = {\bf f}[i+d..j]$ and
${\bf f}[j-d+1..j] = {\bf f}[i..i+d-1]$.   Letting $\pow(i,j)$ denote
this predicate, the predicate 
$$  \neg \pow(0,n-1) $$
expresses the claim that the length-$n$ prefix ${\bf f}[0..n-1]$ is
primitive.  When we implement this, we discover that the prefix of
every length is primitive, except those prefixes of length
$2 F_n$ for $n \geq 4$.
\end{proof}

A theorem of Chuan \cite[Thm.~3]{Chuan:1993b} states that the finite Fibonacci word
$X_n$, for $n \geq 5$,
is the product of two palindromes in exactly one way:  where the
first factor of length $F_{n-1} -2$ and the second of length
$F_{n-2} + 2$.  (Actually, Chuan claimed this was true for all Fibonacci words,
but, for example, for $010$ there are evidently two different
factorizations of the form $(\epsilon)(010)$ and $(010)\epsilon$.)
We can prove something more
general using our method, by generalizing:

\begin{theorem}
If the length-$n$ prefix ${\bf f}[0..n-1]$ of $\bf f$ is the product
of two (possibly empty) palindromes, then $(n)_F$ is accepted by the
automaton in Figure~\ref{pal2} below.
\begin{figure}[H]
\begin{center}
\includegraphics[width=6.5in]{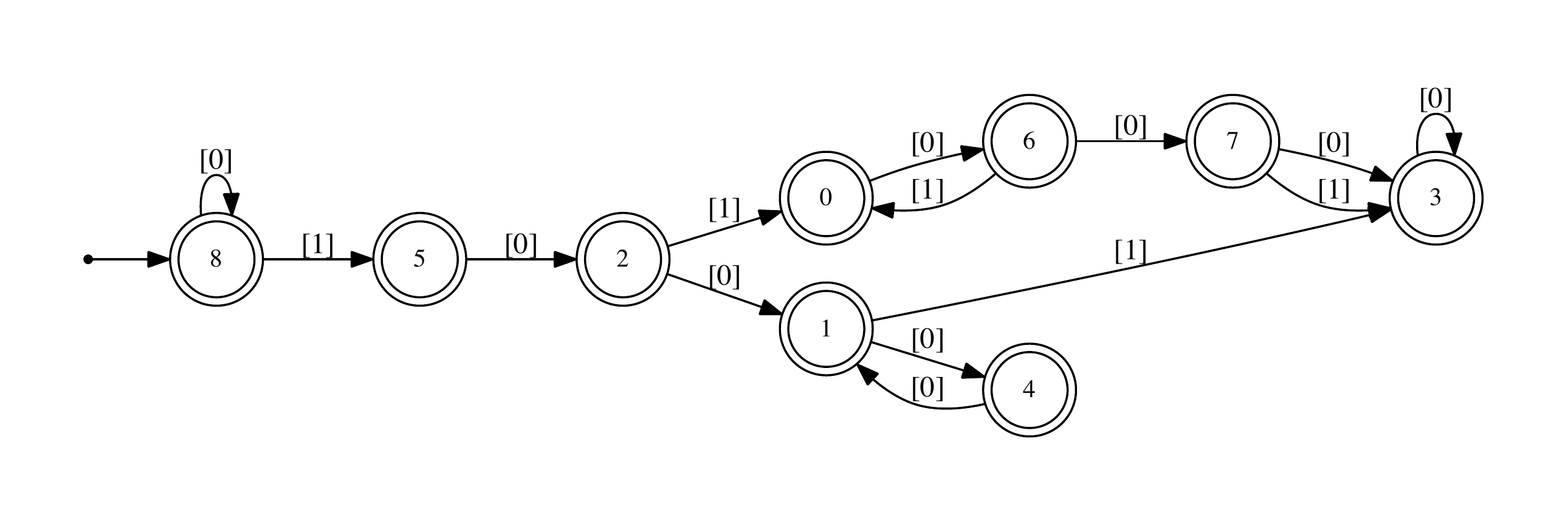}
\caption{Automaton accepting lengths of prefixes that are the product of
two palindromes}
\label{pal2}
\end{center}
\end{figure}
Furthermore, if the length-$n$ prefix ${\bf f}[0..n-1]$ of $\bf f$ is
the product of two (possibly empty) palindromes in exactly one way,
then $(n)_F$ is accepted by the automaton in Figure~\ref{pal2u} below.
\begin{figure}[H]
\begin{center}
\includegraphics[width=6.5in]{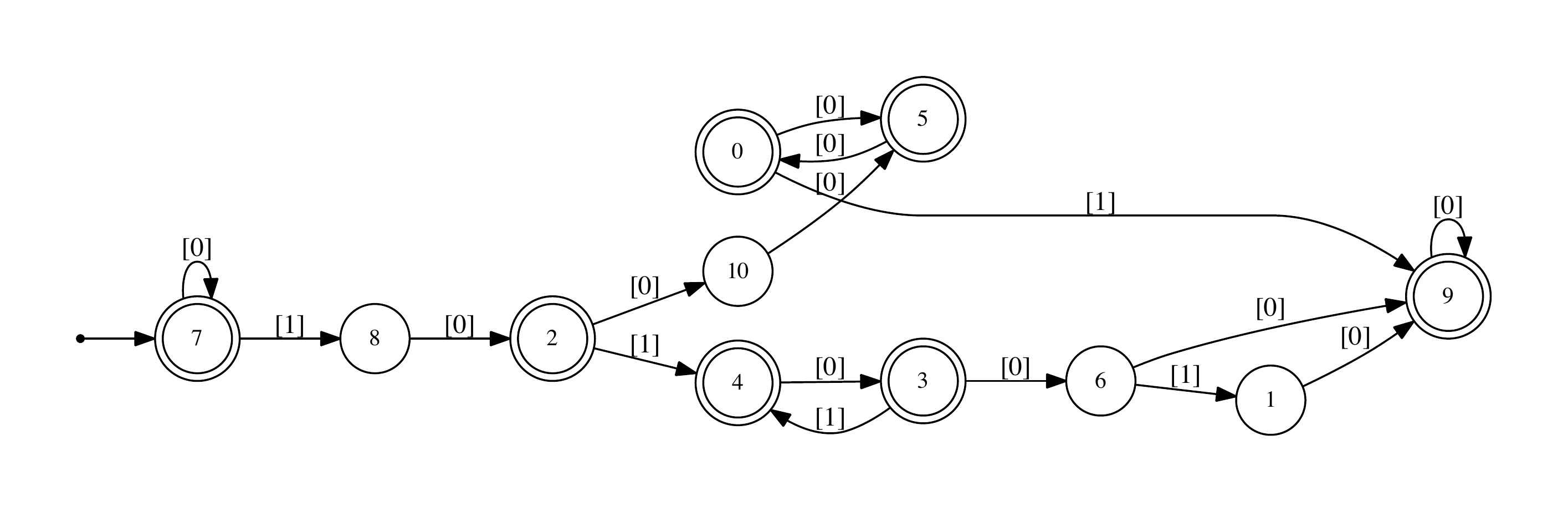}
\caption{Automaton accepting lengths of prefixes that are the product of
two palindromes in exactly one way}
\label{pal2u}
\end{center}
\end{figure}
Evidently, this includes all $n$ of the form $F_j$ for $j \geq 5$.
\end{theorem}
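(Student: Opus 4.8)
The plan is to follow the same mechanical strategy used throughout Section~\ref{finitefib}: phrase the property as a first-order predicate about prefixes of $\bf f$, run it through the decision procedure (Procedure~\ref{proc:Fib-auto-decide}), and inspect the resulting automaton. For the first assertion, I would write a predicate $Q_1(n)$ expressing that there exist $i$ with $0 \le i \le n$ such that both ${\bf f}[0..i-1]$ and ${\bf f}[i..n-1]$ are palindromes. Using the palindrome test ${\bf f}[t] = {\bf f}[i-1-t]$ for the first block and ${\bf f}[i+t] = {\bf f}[n-1-t]$ for the second, this becomes
$$
Q_1(n) \;:=\; \exists i\ (i \le n) \ \wedge\ (\forall t < i\ {\bf f}[t] = {\bf f}[i-1-t]) \ \wedge\ (\forall s < n-i\ {\bf f}[i+s] = {\bf f}[n-1-s]).
$$
Feeding this to the program produces the automaton of Figure~\ref{pal2}, which accepts the claimed set of Fibonacci representations of prefix lengths.

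For the second assertion, I would strengthen the predicate to say that the factorization point $i$ is unique. Let $\mathrm{Fact}(n,i)$ abbreviate the body of $Q_1$ without the outer quantifier (i.e. $i \le n$ together with the two palindrome conditions). Then the predicate
$$
Q_2(n) \;:=\; \exists i\ \big(\mathrm{Fact}(n,i) \ \wedge\ \forall i'\ (\mathrm{Fact}(n,i') \implies i' = i)\big)
$$
expresses that the length-$n$ prefix decomposes as a product of two palindromes in exactly one way. Running this yields the automaton of Figure~\ref{pal2u}. To complete the theorem it then suffices to verify that every $(F_j)_F$ for $j \ge 5$ is accepted by this automaton; since $(F_j)_F = 10^{j-2}$, this is a finite pattern-matching check against the (small) automaton, which can itself be confirmed mechanically by intersecting with the regular language $10^*$ and checking the result is $10^{\ge 3}$, or simply verified by inspection of the figure.

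The main obstacle is purely bookkeeping rather than conceptual: one must be careful with the index arithmetic in the palindrome conditions when $i = 0$ or $i = n$ (the empty palindrome cases), since the quantifiers $\forall t < i$ and $\forall s < n-i$ are then vacuously true and the predicate correctly allows the trivial factorizations — this is exactly why the parenthetical remark about $010$ in Chuan's original statement is needed, and our automaton in Figure~\ref{pal2u} automatically accounts for it. A secondary point is that the correctness of the whole argument rests on the fact (noted in Section~\ref{finitefib}) that $X_j = {\bf f}[0..F_j-1]$, so that specializing $n = F_j$ recovers Chuan's statement about the finite Fibonacci word $X_j$; combined with the acceptance check $10^{j-2} \in L$ for $j \ge 5$, this gives the final sentence of the theorem.
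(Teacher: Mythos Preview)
Your proposal is correct and essentially identical to the paper's own proof: the paper uses the same two predicates (with the split point called $p$ rather than $i$), runs them through the decision procedure, and reports the resulting automata. Your additional remarks about the vacuous quantifiers for the empty-palindrome edge cases and the explicit check that $10^{j-2}$ is accepted for $j\ge 5$ are helpful elaborations that the paper omits.
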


\begin{proof}
For the first, we use the predicate
$$ \exists p\leq n\ \left( (\forall t<p\ {\bf f}[t] = {\bf f}[p-1-t]) \
\wedge\ (\forall u< n-p\  {\bf f}[p+u] = {\bf f}[n-1-u]) \right) .$$

For the second, we use the predicate
\begin{multline*}
\exists p\leq n\ ( (\forall t<p\ {\bf f}[t] = {\bf f}[p-1-t]) \
\wedge\  
(\forall u< n-p\  {\bf f}[p+u] = {\bf f}[n-1-u])  )) \ \wedge \  
\\
(\forall q\leq n \ ( (\forall m<q\ {\bf f}[m] = {\bf f}[q-1-m]) \ \wedge\ 
(\forall v < n-q\ {\bf f}[q+v] = {\bf f}[n-1-v]) )  \implies p=q )   . 
\end{multline*}
\end{proof}

A result of  Cummings, Moore, and Karhum\"aki
\cite{Cummings&Moore&Karhumaki:1996} states that
the borders of the finite Fibonacci word
${\bf f}[0..F_n - 1]$ are precisely the words
${\bf f}[0..F_{n-2k} - 1]$ for $2k < n$.  We can prove this, and more:

\begin{proof}
Consider the pairs $(n,m)$ such that $1 \leq m < n$ and
${\bf f}[0..m-1]$ is a border of ${\bf f}[0..n-1]$.  Their
Fibonacci representations are accepted by the automaton below in
Figure~\ref{borders}.

\begin{figure}[H]
\begin{center}
\includegraphics[width=3in]{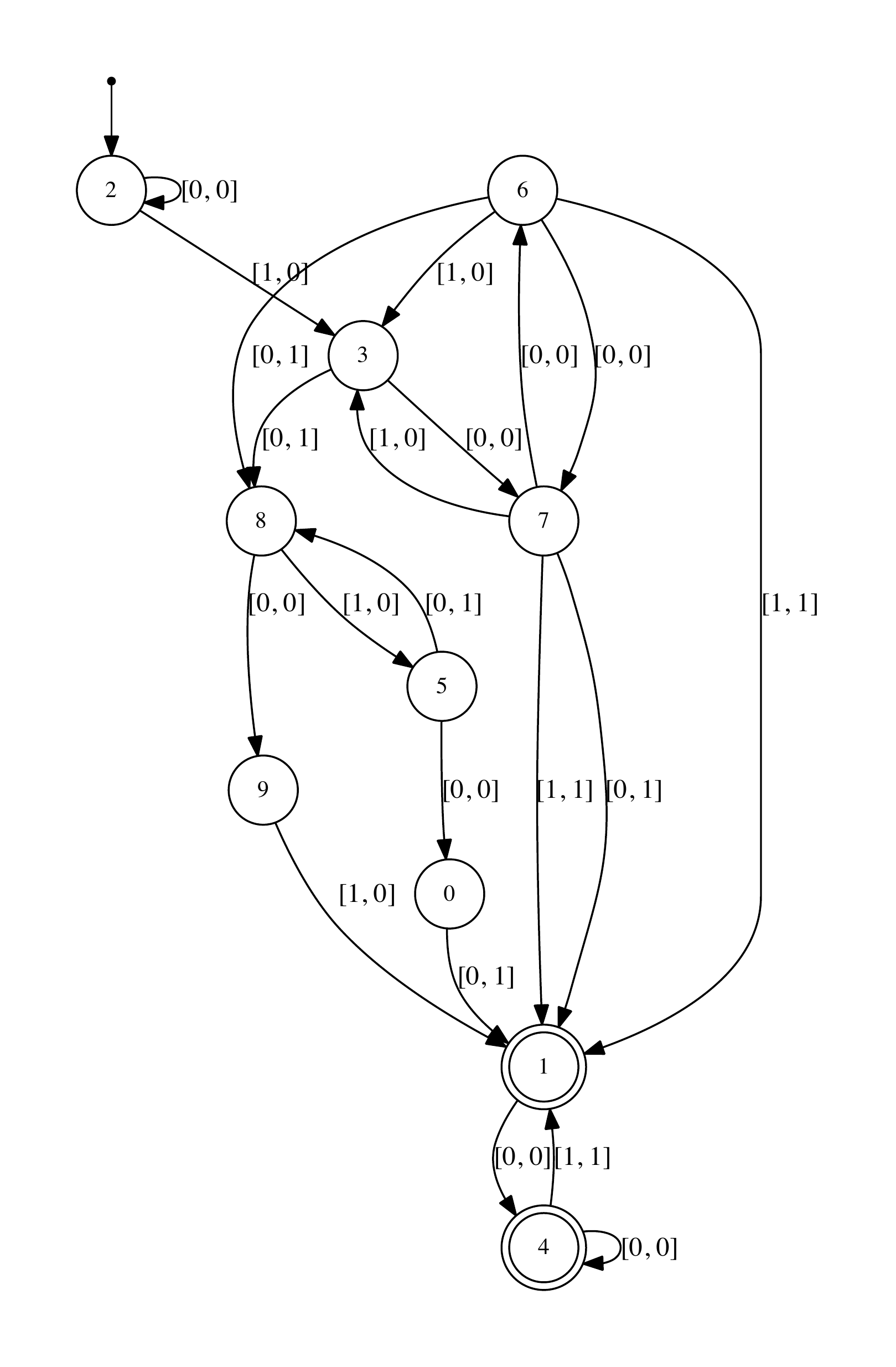}
\caption{Automaton encoding borders of prefixes of $\bf f$}
\label{borders}
\end{center}
\end{figure}

We use the predicate
$$ (n > m) \ \wedge \ (m \geq 1) \ \wedge \  \forall i<m \ {\bf f}[i] = {\bf f}[n-m+i] .$$
By following the paths with first coordinate of the form $10^+$ we recover
the result of Cummings, Moore, and Karhum\"aki as a special case.
\end{proof}

\section{Avoiding the pattern $x x x^R$ and the Rote-Fibonacci word}
\label{rotefib}

In this section we show how to apply our decision method to an
interesting and novel avoidance property:  avoiding the pattern
$x x x^R$ .   An example matching this pattern
in English is a factor of the word
{\tt bepepper}, with $x = {\tt ep}$.
Here, however, we are concerned only with
the binary alphabet $\Sigma_2 = \lbrace 0, 1 \rbrace$.

Although avoiding patterns with reversal has been considered
before (e.g.,
\cite{Rampersad&Shallit:2005,Bischoff&Nowotka:2011,Currie:2011,Bischoff&Currie&Nowotka:2012}),
it seems our particular problem has not been studied.

If our goal is just to produce some infinite word avoiding $x x x^R$, then
a solution seems easy:  namely, the infinite word
$(01)^\omega$ clearly avoids $x x x^R$, since if $|x| = n$ is odd,
then the second factor of length $n$ cannot equal the first
(since the first symbol differs), while if $|x| = n$ is even,
the first symbol of the third factor of length $n$ cannot be the
last symbol of $x$.  In a moment we will see that even this question
seems more subtle than it first appears, but for the moment, we'll
change our question to

\medskip

\centerline{\it Are there infinite aperiodic binary words 
avoiding $x x x^R$?}

\medskip

To answer this question, we'll study a special infinite word,
which we call the {\it Rote-Fibonacci word}.  (The name comes
from the fact that it is a special case of a class of words
discussed in 1994 by Rote \cite{Rote:1994}.)
Consider the following transducer $T$:


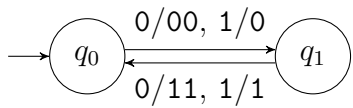
\begin{figure}[H]
\begin{center}
\begin{tikzpicture}[node distance=3cm,on grid,>=stealth',initial text=,auto,
					every state/.style={inner sep=1pt,minimum size=1cm},
					every loop/.style={shorten >=0,looseness=0}]

\node[state,initial]	(q_0)					{$q_0$};
\node[state] 			(q_1) [right=of q_0]	{$q_1$};

\path[->]	(q_0.10)	edge 				node {{\tt 0}/{\tt 00}, {\tt 1}/{\tt 0}} (q_1.170)
			(q_1.190)	edge 				node {{\tt 0}/{\tt 11}, {\tt 1}/{\tt 1}} (q_0.350);
\end{tikzpicture}
\end{center}
\caption{Transducer converting Fibonacci words to Rote-Fibonacci words} \label{fig:trans-f-rf}
\end{figure}

This transducer acts on words by following the transitions and
outputting the concatenation of the outputs associated with
each transition.  Thus, for example, the input
$01001$ gets transduced to the output $00100110$.

\begin{theorem}
The Rote-Fibonacci word 
$${\bf r} = 001001101101100100110110110010010011011001001001101100100100 \cdots = r_0 r_1 r_2 \cdots$$
has the following equivalent descriptions:

\bigskip

0.  As the output of the transducer $T$, starting in state $0$, on input
$\bf f$.

\bigskip

1.  As $\tau(h^\omega(a))$  where 
$h$ and $\tau$ are defined by
\begin{align*}
h(a) &= a b_1  &\quad \tau(a) = 0 \\
h(b) &= a & \quad \tau(b) = 1 \\
h(a_0) &= a_2 b & \quad \tau(a_0) = 0 \\
h(a_1) &= a_0 b_0 & \quad \tau(a_1) = 1 \\
h(a_2) &= a_1 b_2 & \quad \tau(a_2) = 1 \\
h(b_0) &= a_0 & \quad \tau(b_0) = 0  \\
h(b_1) &= a_1 & \quad \tau(b_1) = 0 \\
h(b_2) &= a_2 & \quad \tau(b_2) = 1 \\
\end{align*}

2.  As the binary sequence generated by the following DFAO, with 
outputs given in the states, and inputs in the Fibonacci representation
of $n$.


\begin{figure}[H]
\begin{center}
\begin{tikzpicture}[node distance=2cm,on grid,>=stealth',initial text=,auto,
					every state/.style={inner sep=1pt,minimum size=1cm},
					every loop/.style={shorten >=0,looseness=0}]

\node[state,initial]	(a)						{$a/{\tt 0}$};
\node[state] 			(b_1) [right=of a]		{$b_1/{\tt 0}$};
\node[state] 			(a_1) [right=of b_1]	{$a_1/{\tt 1}$};
\node[state] 			(b_0) [right=of a_1]	{$b_0/{\tt 0}$};
\node[state] 			(b)   [right=of b_0]	{$b/{\tt 1}$};
\node[state] 			(a_0) [right=of b]		{$a_0/{\tt 0}$};
\node[state] 			(a_2) [right=of a_0]	{$a_2/{\tt 1}$};
\node[state] 			(b_2) [right=of a_2]	{$b_2/{\tt 1}$};

\path[->]	(a)			edge [loop above]	node				{\tt 0} ()
						edge				node				{\tt 1} (b_1)
			(b_1)		edge				node				{\tt 0} (a_1)
			(a_1)		edge				node				{\tt 1} (b_0)
			(a_1.315)	edge [bend right=20]node [pos=0.1,swap] {\tt 0} (a_0.225)
			(b_0.45)	edge [bend left=45]	node [pos=0.5,swap] {\tt 0} (a_0.135)
			(b.210)		edge [bend left=24]	node [pos=0.1,swap] {\tt 0} (a_1.330)
			(a_0)		edge				node [swap]			{\tt 1} (b)
			(a_0)		edge				node [swap]			{\tt 0} (a_2)
			(a_2.135)	edge [bend right=24]node [pos=0.1,swap] {\tt 0} (a_1.45)
			(a_2.10)	edge 				node				{\tt 1} (b_2.170)
			(b_2.190)	edge 				node				{\tt 0} (a_2.350);
\end{tikzpicture}
\end{center}
\caption{Canonical Fibonacci representation DFAO generating the Rote-Fibonacci word} \label{fig:rf-dfao}
\end{figure}
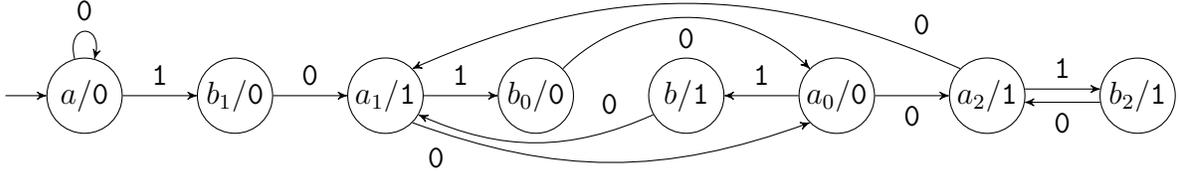

3.  As the limit, as $n \rightarrow \infty$, of the sequence of finite
Rote-Fibonacci words $(R_n)_n$ defined as follows:
$R_0 = 0$, $R_1 = 00$, and for $n \geq 3$
$$R_n = 
\begin{cases}
R_{n-1} R_{n-2}, & \text{ if $n \equiv 0$ (mod 3);} \\
R_{n-1} \overline{R_{n-2}}, & \text{ if $n \equiv 1, 2$ (mod 3).} 
\end{cases}$$

4.  As the sequence obtained from the Fibonacci sequence
${\bf f} = f_0 f_1 f_2 \cdots = 0100101001001 \cdots$
as follows:  first, change every $0$ to $1$ and every one to $0$ in
${\bf f}$, obtaining $\overline{\bf f} = 1011010110110 \cdots$.
Next, in $\overline{\bf f}$ change every second $1$ that appears to
$-1$ (which we write as $\mone$ for clarity):
$1 0 \mone 1 0 \mone 0 1 \mone 0 1 \mone 0 \cdots$.
Now take the running sum of this sequence, obtaining
$1101100100100 \cdots$, and finally, complement it to get $\bf r$.

\bigskip

5.  As $\rho(g^\omega (a))$, where $g$ and $\rho$ are defined as follows
\begin{align*}
g(a) &= abcab  \quad & \rho(a) = 0 \\
g(b) &= cda \quad & \rho(b) = 0 \\
g(c) &= cdacd \quad & \rho(c) = 1 \\
g(d) &= abc \quad &  \rho(d) = 1
\end{align*}
\end{theorem}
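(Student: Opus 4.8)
The plan is to prove the six descriptions pairwise equivalent by establishing a small web of implications rather than one long chain, treating the transducer description~0 and the finite-word recurrence~3 as the combinatorial core, and I would first dispose of the core equivalences $0\Leftrightarrow 3\Leftrightarrow 4$, which are elementary combinatorics on words. The key observation for $0\Leftrightarrow 3$ is that $T$ is \emph{strictly alternating}: both transitions out of $q_0$ land in $q_1$ and vice versa, so after reading $\mathbf{f}[0\dotdot n-1]$ the transducer is in state $q_{n\bmod 2}$, and the output of $T$ read from $q_1$ is the bitwise complement of its output read from $q_0$. Combining this with $X_n = X_{n-1}X_{n-2}$ and $|X_n| = F_n$ gives, for $\mathcal R_n := T_{q_0}(X_n)$, the recurrence $\mathcal R_n = \mathcal R_{n-1}\mathcal R_{n-2}$ when $F_{n-1}$ is even and $\mathcal R_n = \mathcal R_{n-1}\overline{\mathcal R_{n-2}}$ otherwise; since $F_{n-1}$ is even exactly when $n\equiv 1 \pmod 3$ and since $\mathcal R_1 = 0$, $\mathcal R_2 = 00$, one checks $\mathcal R_{n+1} = R_n$, whence $T(\mathbf{f}) = \lim_n \mathcal R_n = \lim_n R_n$. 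For $3\Leftrightarrow 4$ one induces on $n$ that the length-$F_{n+2}$ prefix of the running-sum word of description~4 equals $R_n$; the only point to track is that, in the decomposition of that prefix induced by $X_{n+1} = X_n X_{n-1}$, the $\pm 1$-terms coming from the $X_{n-1}$ block are globally sign-flipped precisely when the number of $0$'s in $X_n$, namely $F_{n-1}$, is odd, which again reproduces the mod-$3$ case split.

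Next I would handle the automatic and morphic descriptions $1$, $2$, $5$. For $0\Leftrightarrow 2$: the alternation of $T$ identifies $T(\mathbf{f})$ with the word obtained from $\mathbf{f}$ by replacing $f_n$ with a block of length $1$ or $2$ depending on $f_n$ and on $n\bmod 2$; since the map $m\mapsto$ (index of the block of $T(\mathbf{f})$ containing position $m$) is governed by $n + (\text{number of }0\text{'s in }\mathbf{f}[0\dotdot n-1])$, and prefix letter-counts of $\mathbf{f}$ are themselves Fibonacci-automatic (being graphs of Ostrowski-type Beatty sequences), $T(\mathbf{f})$ is Fibonacci-automatic, and minimizing the resulting DFAO yields the eight-state machine of Figure~\ref{fig:rf-dfao}. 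For $1\Leftrightarrow 2$: this is the Fibonacci analogue of Cobham's theorem --- morphic images of fixed points of suitable morphisms are Fibonacci-automatic and conversely (see \cite{Shallit:1988a}) --- together with a finite check that the particular pair $(h,\tau)$ and the DFAO of Figure~\ref{fig:rf-dfao} produce the same sequence; note that because $h$ is nonuniform ($h(b)=a$ has length $1$) this check is \emph{not} the naive ``$q\mapsto \delta(q,0)\delta(q,1)$'' reading, so it is cleanest to run the standard morphic-to-automatic construction on $(h,\tau)$ and verify by minimization. For $1\Leftrightarrow 5$: the incidence lengths of $g$ are $\{3,5\}=\{F_4,F_5\}$, matching those of $h^3$, and $h^3$ has the same fixed point as $h$; so I would exhibit the letter-merging $\pi$ with $\pi(a)=\pi(a_0)=a$, $\pi(a_1)=\pi(a_2)=c$, $\pi(b_1)=\pi(b_0)=b$, $\pi(b)=\pi(b_2)=d$, check that $\pi\circ h^3 = g\circ\pi$ on letters and $\rho\circ\pi = \tau$, and conclude $\rho(g^\omega(a)) = \tau(h^\omega(a))$.

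The main difficulty here is bookkeeping rather than depth. The mod-$3$ structure has to be matched three separate times --- against the parity of $F_n$ (for the transducer), against the parity of the prefix $0$-count (for the running sum), and against the three ``flavored'' copies of each letter in $h$ --- and each matching carries an index offset (for instance $\mathcal R_{n+1} = R_n$) that is easy to get wrong. A secondary subtlety is that $T$ is not length-preserving and $h$ is not uniform, so the passage between the automaton-level and word-level descriptions needs the short detour through Ostrowski/Beatty representations rather than a one-line automata composition. Finally, I would point out that, once each description is placed in Fibonacci-automatic form, all the assertions ``$\mathbf{w}^{(i)}[n]=\mathbf{w}^{(j)}[n]$ for all $n$'' can in fact be discharged mechanically via Procedure~\ref{proc:Fib-auto-decide}, so the only genuinely human part of the argument is producing one such presentation together with the explicit morphisms and the merging map~$\pi$.
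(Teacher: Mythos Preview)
Your proposal is correct and, for the core equivalences, essentially matches the paper's proof. The arguments for $0\Leftrightarrow 3$ (alternating transducer, index shift $\mathcal R_{n+1}=R_n$), for $3\Leftrightarrow 4$ (running sum with sign-flip governed by parity of the prefix $0$-count), and for $1\Leftrightarrow 5$ (the letter-merging map $\pi$ satisfying $\pi\circ h^{3}=g\circ\pi$ and $\rho\circ\pi=\tau$) are exactly what the paper does; your $\pi$ is the paper's $\gamma$.

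Where you diverge is in how descriptions $1$ and $2$ are attached to the rest. The paper links $1$ directly to $3$ by a short induction establishing $\tau(h^{n}(a))=R_{n}$ together with companion identities for $\tau(h^{n}(a_{i}))$ and $\tau(h^{n+1}(b_{i}))$, and then obtains $1\Leftrightarrow 2$ by the standard nonuniform-morphism/automaton correspondence. You instead argue $0\Leftrightarrow 2$ by showing $T(\mathbf f)$ is Fibonacci-automatic via prefix letter-counts and Ostrowski/Beatty reasoning, and then minimize. Your route is valid, but the Ostrowski detour and the ``minimize and compare'' step are heavier machinery than needed: the paper's direct induction for $1\Leftrightarrow 3$ is a couple of lines once the right simultaneous claim is written down, and it sidesteps both the synchronization argument and the DFAO minimization entirely.
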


\begin{proof}

$(0) \iff (3)$:  Let $T_0 (x)$ (resp., $T_1 (x)$) denote the output
of the transducer $T$ starting in state $q_0$ (resp., $q_1$) on input 
$x$.  Then a simple induction on $n$ shows that
$T_0 (X_{n+1}) = R_n$ and $T_1(X_{n+1}) = \overline{R_n}$.
We give only the induction step for the first claim:
\begin{align*}
T_0 (X_{n+1}) &= T_0 (X_n X_{n-1}) \\
&= \begin{cases} 
T_0 (X_n) T_0 (X_{n-1}), & \text{if $|X_n|$ is even}; \\
T_0 (X_n) T_1 (X_{n-1}), & \text{if $|X_n|$ is odd};
\end{cases} \\
&= \begin{cases}
R_{n-1} R_{n-2}, & \text{if $n \equiv 0$ (mod 3)}; \\
R_{n-1} \overline{R_{n-2}}, & \text{if $n \not\equiv 0$ (mod 3)}; 
\end{cases} \\
&= R_n .
\end{align*}
Here we have used the easily-verified
fact that $|X_n|= F_n$ is even iff
$n \equiv 0$ (mod $3$).

\bigskip
$(1) \iff (3)$:  we verify by a tedious induction on
$n$ that for $n \geq 0$ we have
\begin{align*}
\tau(h^n(a)) &= \tau(h^{n+1} (a)) = R_n \\
\tau(h^n(a_i)) &= \tau(h^{n+1} (b_i)) = \begin{cases}
	R_i, & \text{if $n \equiv i$ (mod 3)}; \\
	\overline{R_i}, & \text{if $n \not\equiv i$ (mod 3)}.
	\end{cases}
\end{align*}

\bigskip
$(2) \iff (4)$:  Follows from the well-known transformation from
automata to morphisms and vice versa (see, e.g.,
\cite{Holton&Zamboni:2001}).

\bigskip
$(3) \iff (4)$:  We define some transformations on
sequences, as follows: 
\begin{itemize}
\item $C(x)$ denotes $\overline{x}$, the complement of $x$;
\item $s(x)$ denotes the sequence arising from a binary
sequence $x$ by changing every second $1$ to $-1$;
\item $a(x)$ denotes the running sum of the sequence $x$;
that is, if $x = a_1 a_2 a_3 \cdots $ then $a(x)$ is
$a_1 (a_1 + a_2) (a_1 + a_2 +a_3) \cdots$.
\end{itemize}
Note that 
$$
a (s (xy)) = 
\begin{cases}
a(s(x)) \ a(s(y)), & \text{if $|x|_1$ even}; \\
a(s(x)) \ C(a(s(y))), & \text{if $|x|_1$ odd}.
\end{cases}
$$
Then we claim that $ C(R_n) = a(s(C(X_{n+2})))$.   This
can be verified by induction on $n$.  We give only the
induction step:
\begin{align*}
a(s(C(X_{n+2}))) &= a(s( C(X_{n+1}) C(X_{n}) )) \\
&= \begin{cases}
a(s(C(X_{n+1}))) \ a(s(C(X_{n}))), & \text{ if $C(X_{n+1})_1$ even}; \\
a(s(C(X_{n+1}))) \ C(a(s(C(X_{n})))), & \text{ if $C(X_{n+1})_1$ odd}; 
\end{cases} \\
&= \begin{cases}
C(R_{n-1}) \ C(R_{n-2}), & \text{ if $n \equiv 0$ (mod 3)}; \\
C(R_{n-1}) \ R_{n-2}, & \text{ if $n \not\equiv 0$ (mod 3)};
\end{cases} \\
&= R_{n}.
\end{align*}

\bigskip
$(3) \iff (5)$:   Define $\gamma$ by
\begin{align*}
\gamma(a) &= \gamma(a_0) = a \\
\gamma(b_0) &= \gamma(b_1) = b \\
\gamma(a_1) &= \gamma(a_2) = c \\
\gamma(b) &= \gamma(b_2) = d .
\end{align*}
We verify by a tedious induction on $n$
that for $n \geq 0$ we have
\begin{align*}
g^n (a) &= \gamma(h^{3n} (a)) = \gamma(h^{3n} (a_0)) \\
g^n (b) &= \gamma(h^{3n} (b_0)) = \gamma(h^{3n} (b_1)) \\
g^n (c) &= \gamma(h^{3n} (a_1)) = \gamma(h^{3n} (a_2)) \\
g^n (d) &= \gamma(h^{3n} (b)) = \gamma(h^{3n} (b_2)) .
\end{align*}
\end{proof}

\begin{corollary}
The first differences $\Delta {\bf r}$ of the Rote-Fibonacci word $\bf r$, taken modulo
$2$, give the complement of the Fibonacci word $\overline{f}$,
with its first symbol omitted.
\label{rotecor}
\end{corollary}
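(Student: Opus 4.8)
The plan is to read the corollary straight off characterization~(4) of the preceding theorem. Recall the operations used there: $s$ turns every second $1$ of a binary sequence into $\mone$, and $a$ replaces a sequence by its sequence of partial sums; description~(4) asserts that ${\bf r} = \overline{a(s(\overline{\bf f}))}$. Write $\overline{\bf f} = g_0 g_1 g_2 \cdots$ (so $g_i = \overline{f_i}$), put $u = s(\overline{\bf f}) = u_0 u_1 u_2 \cdots$, and $v = a(u) = v_0 v_1 v_2 \cdots$, so that $v_m = u_0 + u_1 + \cdots + u_m$ and $r_m = 1 - v_m$ for all $m \geq 0$.

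First I would observe that $v_m \in \{0,1\}$ for every $m$. The nonzero entries of $u$ are, in order of occurrence, $+1, \mone, +1, \mone, \ldots$ (the operation $s$ keeps the first $1$, cancels the second, keeps the third, and so on), and $u_0 = 1$ since $g_0 = \overline{f_0} = 1$; hence the partial sums bounce between $1$ and $0$ and never leave $\{0,1\}$. This is what makes ``complement'' in description~(4) an honest bit-flip, so that $r_m = 1 - v_m$ is a genuine binary digit.

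Then I would difference. For $n \geq 1$,
\[
 r_n - r_{n-1} = (1 - v_n) - (1 - v_{n-1}) = -(v_n - v_{n-1}) = -u_n ,
\]
and since $r_n, r_{n-1} \in \{0,1\}$ the left-hand side reduced modulo $2$ is precisely the $n$-th symbol of $\Delta{\bf r}$ taken mod $2$. On the right, $u_n \in \{-1,0,1\}$ with $u_n = 0$ exactly when $g_n = 0$, while $u_n = \pm 1$ exactly when $g_n = 1$; as $\mone \equiv 1 \pmod 2$ this gives $-u_n \equiv g_n = \overline{f_n} \pmod 2$. Thus the $n$-th symbol of $\Delta{\bf r} \bmod 2$ equals $\overline{f_n}$ for every $n \geq 1$, which is exactly the assertion that $\Delta{\bf r} \bmod 2$ is $\overline{\bf f}$ with its initial symbol $\overline{f_0}$ removed.

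I expect the only delicate point to be the bookkeeping around the shift and the mod-$2$ reduction: one must notice that differencing a running sum $a(u)$ returns $u$ with its first term dropped --- which is exactly the ``omit the first symbol'' in the statement --- and that passing from ${\bf r}$ to $\overline{\bf r}$ and then differencing is harmless mod $2$ only because all the quantities involved lie in $\{0,1\}$, which is why the observation about $v_m$ is needed. Everything else is a one-line computation, and a glance at the first terms ($\Delta{\bf r}\bmod 2 = 0110101\cdots = \overline{f_1}\overline{f_2}\cdots$) confirms the indexing.
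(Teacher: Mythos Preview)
Your proof is correct and follows essentially the same approach as the paper's: both read the corollary directly off description~(4) of the preceding theorem by differencing ${\bf r} = C(a(s(C({\bf f}))))$, using that $\Delta \circ C = -\Delta$ and $\Delta \circ a = \dr$, and then reducing modulo~$2$ to erase the signs introduced by $s$. Your version is slightly more explicit in verifying that the partial sums $v_m$ stay in $\{0,1\}$ (so that the outer complement is a genuine bit-flip), a point the paper takes for granted from the equivalence of descriptions.
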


\begin{proof}
Note that if ${\bf x} = a_0 a_1 a_2 \cdots$ is a binary sequence, then
$\Delta(C({\bf x})) = -\Delta({\bf x})$.
Furthermore $\Delta(a(x)) = a_1 a_2 \cdots$.
Now from the description in part 4, above, we know that
${\bf r} = C(a(s(C({\bf f}))))$.
Hence $\Delta({\bf r}) = \Delta ( C(a(s(C({\bf f}))))) =
-\Delta( a(s(C({\bf f})))) = \dr(-s(C({\bf f})))$,
where $\dr$ drops the first symbol of its argument.  Taking the last
result modulo $2$ gives the result.
\end{proof}

We are now ready to prove our avoidability result.

\begin{theorem}
The Rote-Fibonacci word $\bf r$ avoids the pattern $x x x^R$.
\label{rote-avoid-thm}
\end{theorem}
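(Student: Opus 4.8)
The proof will be entirely mechanical, exactly in the style of Section~\ref{proofsf}. The plan is: (i) write a first-order predicate in $\Th(\Enn,0,1,+)$, with indexing into $\bf r$, asserting that $\bf r$ has a factor of the form $x x x^R$; (ii) run this predicate through Procedure~\ref{proc:Fib-auto-decide}, using as the witnessing automaton for $\bf r$ the Fibonacci-representation DFAO of Figure~\ref{fig:rf-dfao} (established above); and (iii) observe that the resulting DFA accepts the empty language, which is exactly the assertion that no such factor exists, hence that $\bf r$ avoids $x x x^R$.

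To construct the predicate, suppose a copy of $x x x^R$ occurs in $\bf r$ beginning at position $i$, and put $n = |x|$, so $x = {\bf r}[i..i+n-1]$. This occurrence is equivalent to three conditions: $x$ is nonempty, i.e.\ $n \geq 1$; the second block equals $x$, i.e.\ ${\bf r}[i+k] = {\bf r}[i+n+k]$ for all $k < n$; and the third block equals $x^R$, which (since $x^R[k] = x[n-1-k]$) says ${\bf r}[i+2n+k] = {\bf r}[i+n-1-k]$ for all $k < n$. Existentially quantifying over $i$ and $n$ gives
$$\exists i\ \exists n\ (n \geq 1) \ \wedge \ (\forall k<n\ {\bf r}[i+k] = {\bf r}[i+n+k]) \ \wedge \ (\forall k<n\ {\bf r}[i+2n+k] = {\bf r}[i+n-1-k]).$$
Two small points must be checked: the subtraction $i+n-1-k$ always yields an element of $\Enn$, because $k<n$ forces $i+n-1-k \geq i \geq 0$; and all other index terms are sums, hence cause no range problems for the infinite word $\bf r$.

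Running the predicate through the decision procedure, I expect to obtain a DFA with no reachable accepting state, i.e.\ one recognizing the empty set. Since the procedure outputs precisely $\{(i,n)_F : \text{the predicate holds}\}$, emptiness of this set means there is no position $i$ and no length $n\geq 1$ at which a factor $x x x^R$ begins; that is, $\bf r$ avoids $x x x^R$. The only real risk here is computational rather than logical: the worst-case running time of the procedure is a tower of exponentials whose height grows with the quantifier count, and the Fibonacci adder is a comparatively large automaton, so the intermediate machines could in principle blow up. In practice, as with every other example in the paper, the two leading existential quantifiers keep the automaton sizes manageable; I anticipate that the only place demanding real care is the bookkeeping for the reversal condition (the index $i+n-1-k$ for the third block), and that once the predicate is stated correctly the machine settles the matter, so the proof will simply exhibit the computation log and the (empty) final automaton as the certificate.
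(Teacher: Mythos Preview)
Your proposal is correct and takes essentially the same approach as the paper: write a first-order predicate for the existence of a factor $x x x^R$ with $|x|=n$ starting at position $i$, run it through Procedure~\ref{proc:Fib-auto-decide} with the DFAO of Figure~\ref{fig:rf-dfao}, and observe that nothing nontrivial is accepted. The paper leaves $n$ free and reports that only $n=0$ survives, whereas you add $n\geq 1$ and existentially close over both $i$ and $n$; note, however, a small inconsistency in your write-up---after binding $i$ and $n$ you should expect a single true/false verdict, not an automaton over $\{(i,n)_F\}$ as you state at the end.
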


\begin{proof}
We use our decision procedure to prove this.  A predicate is as
follows:
$$ \exists i\ \forall t<n\ ({\bf r}[i+t]={\bf r}[i+t+n]) \ \wedge \  ({\bf r}[i+t]={\bf r}[i+3n-1-t])  .$$
When we run this on our program, we get the following log:

{\footnotesize
\begin{verbatim}
t < n with 7 states, in 36ms
 R[i + t] = R[i + t + n] with 245 states, in 1744ms
  R[i + t] = R[i + 3 * n - 1 - t] with 1751 states, in 14461ms
   R[i + t] = R[i + t + n] & R[i + t] = R[i + 3 * n - 1 - t] with 3305 states, in 565ms
    t < n => R[i + t] = R[i + t + n] & R[i + t] = R[i + 3 * n - 1 - t] with 2015 states, in 843ms
     At t < n => R[i + t] = R[i + t + n] & R[i + t] = R[i + 3 * n - 1 - t] with 3 states, in 747ms
      Ei At t < n => R[i + t] = R[i + t + n] & R[i + t] = R[i + 3 * n - 1 - t] with 2 states, in 0ms
overall time: 18396ms
\end{verbatim}
}

Then the only length $n$ accepted is $n = 0$,
so the Rote-Fibonacci word
$\bf r$ contains no occurrences of the pattern $x x x^R$.
\end{proof}

We now prove some interesting properties of $\bf r$.

\begin{theorem}
The minimum $q(n)$ over all periods of all length-$n$ factors
of the Rote-Fibonacci word is as follows:
$$ q(n) = 
\begin{cases}
1, & \text{if $1 \leq n \leq 2$;} \\
2, & \text{if $n = 3$;} \\
F_{3j+1}, & \text{if $j \geq 1$ and $L_{3j} \leq n < L_{3j+2}$;} \\
L_{3j+1}, & \text{if $j \geq 1$ and $L_{3j+2} \leq n < L_{3j+2}+F_{3j-2}$;} \\
F_{3j+2}+L_{3j}, & \text{if $j \geq 2$ and $L_{3j+2} + F_{3j-2} \leq n
	< L_{3j+2}+ F_{3j-1}$;} \\
2F_{3j+2}, & \text{if $L_{3j+2}+F_{3j-1} \leq n < L_{3j+3}$} .
\end{cases}
$$
\label{rfperiods-thm}
\end{theorem}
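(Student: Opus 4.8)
The plan is to reduce the theorem to a single application of Procedure~\ref{proc:Fib-auto-decide}, exactly as was done for $\ell(n)$ in Theorem~\ref{allpers}. First I would express ``$p$ is a period of the length-$n$ factor ${\bf r}[i..i+n-1]$'' by the predicate
$$ P(n,p,i) \;:=\; \forall t\;\bigl( (t < n-p) \IMPLY {\bf r}[i+t] = {\bf r}[i+p+t]\bigr), $$
indexing into ${\bf r}$ via the $8$-state DFAO of Figure~\ref{fig:rf-dfao}. From this I would build
$$ \Phi(n,p) \;:=\; \bigl(\exists i\;(p\ge 1)\AND P(n,p,i)\bigr) \;\AND\; \bigl(\forall i\;\forall q\;\bigl(((q\ge 1)\AND P(n,q,i)) \IMPLY (q\ge p)\bigr)\bigr), $$
whose first conjunct asserts that some length-$n$ factor has $p$ as a period and whose second asserts that no length-$n$ factor has a smaller period; hence $\Phi(n,p)$ holds precisely when $p=q(n)$. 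Running the decision procedure on $\Phi$ produces a DFA $M$ accepting $\{(n,p)_F : p=q(n)\}$, and this finite object is, in principle, the whole proof.

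The remaining and genuinely substantive step is to read the six-case closed form off $M$ ``by inspection,'' in the same spirit as the break-point analyses in the proofs of Theorem~\ref{allpers} and of the (initial) critical exponent. To organise this I would first record canonical Fibonacci representations of the thresholds occurring in the statement, using $L_m = F_{m-1}+F_{m+1}$, $2F_m = F_{m+1}+F_{m-2}$, $L_{3j+1} = F_{3j}+F_{3j+2}$, and $F_{3j+2}+L_{3j} = F_{3j-1}+F_{3j+3}$ (each of these sums already being in Zeckendorf form). Each branch of the formula then corresponds to an explicitly described family of accepting paths of $M$ parametrised by $j$, the dependence on $j\bmod 3$ showing up as a length-$3$ cycle in $M$; comparing the $p$-coordinate emitted along those paths against $1,\,2,\,F_{3j+1},\,L_{3j+1},\,F_{3j+2}+L_{3j},\,2F_{3j+2}$ then yields the stated values.

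The main obstacle is this inspection. Unlike the tiny languages ($10^*$, $(100)0^*$, etc.) that arose for ${\bf f}$, here $M$ is sizable and the target is an irregular six-piece function whose break-points are mixed Fibonacci--Lucas expressions indexed by $3j$, $3j\pm1$, $3j\pm2$. The care-intensive part is verifying that the half-open intervals $[L_{3j},L_{3j+2})$, $[L_{3j+2},L_{3j+2}+F_{3j-2})$, $[L_{3j+2}+F_{3j-2},L_{3j+2}+F_{3j-1})$, $[L_{3j+2}+F_{3j-1},L_{3j+3})$ (for $j\ge 1$), together with the exceptional values $n\in\{1,2,3\}$, partition the positive integers---which forces one to track coincidences such as $F_1=F_2$ and the behaviour of $F_{3j-2}$ for small $j$---and that $M$ emits a constant $p$ on the canonical Fibonacci representations of each block. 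Everything else (correctness of $P$ and $\Phi$, and the construction of $M$) is guaranteed by Procedure~\ref{proc:Fib-auto-decide}. A convenient consistency check is to restrict the first coordinate of $M$ to the Fibonacci numbers (equivalently, intersect with $10^*$), which extracts the least periods of the finite Rote-Fibonacci words $R_n$ as a special case.
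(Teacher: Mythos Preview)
Your proposal is correct and follows essentially the same route as the paper: the paper's proof consists of one sentence, ``we mimic the proof of Theorem~\ref{allpers},'' followed by displaying the resulting automaton, which is precisely the construction of $\Phi(n,p)$ you describe. Your additional discussion of how to organise the inspection (Zeckendorf forms of the thresholds, the period-$3$ structure in $j$, and the partition check) goes beyond what the paper writes out but is exactly the work implicit in the paper's ``by inspection.''
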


\begin{proof}
To prove this, we mimic the proof of Theorem~\ref{allpers}.  The
resulting automaton is displayed below in Figure~\ref{leastper-rote}.
\begin{figure}[H]
\begin{center}
\includegraphics[width=4in]{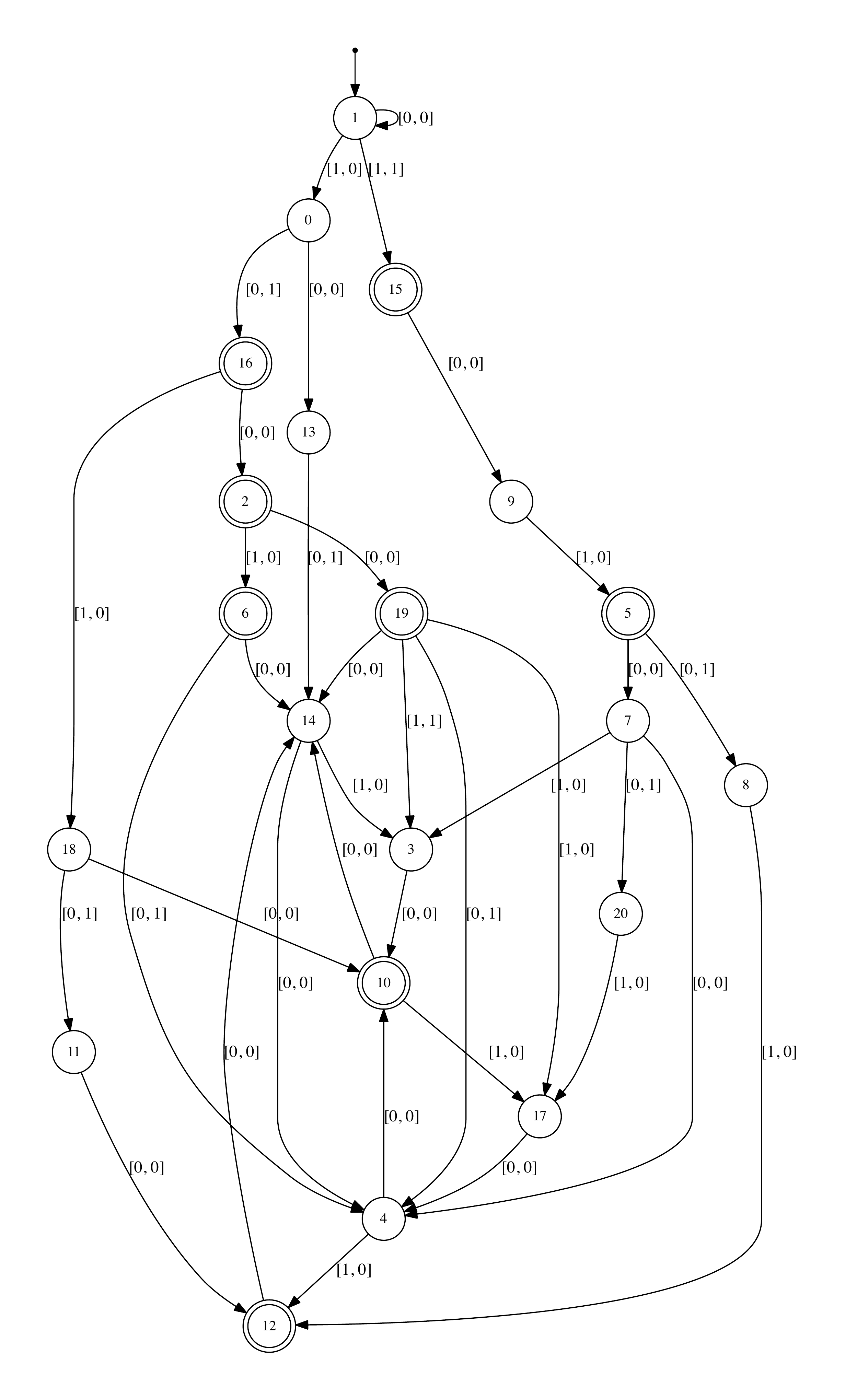}
\caption{Automaton accepting least periods of prefixes of length $n$}
\label{leastper-rote}
\end{center}
\end{figure}
\end{proof}

\begin{corollary}
The critical exponent of the Rote-Fibonacci word is $2+\alpha$.
\label{critical-rote}
\end{corollary}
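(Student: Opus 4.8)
The plan is to read the critical exponent off from the function $q(n)$ of Theorem~\ref{rfperiods-thm}, exactly as the critical exponent of ${\bf f}$ was read off from Theorem~\ref{allpers}. First I would show that the critical exponent of ${\bf r}$ equals $\sup_{n \geq 1} n/q(n)$. If $w$ is a nonempty length-$n$ factor of ${\bf r}$ with least period $P$, then $P$ is a period of some length-$n$ factor, so $P \geq q(n)$ by the definition of $q$, and hence $\exp(w) = n/P \leq n/q(n)$. Conversely, Theorem~\ref{rfperiods-thm} asserts that the minimum $q(n)$ is attained: there is a length-$n$ factor $w_0$ having a period equal to $q(n)$, and since every period of $w_0$ is $\geq q(n)$, the least period of $w_0$ is exactly $q(n)$, so $\exp(w_0) = n/q(n)$. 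Thus the critical exponent is $\sup_{n \geq 1} n/q(n)$.

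The second step is to evaluate this supremum by running through the cases of Theorem~\ref{rfperiods-thm}. In each case $q(n)$ is constant on a block of consecutive $n$, so on that block $n/q(n)$ is largest at the right endpoint. The dominant block is $L_{3j} \leq n < L_{3j+2}$, where $q(n) = F_{3j+1}$ and the maximum of $n/q(n)$ is $(L_{3j+2}-1)/F_{3j+1}$. Using $L_{3j+2} = F_{3j+1} + F_{3j+3}$ and $F_{3j+3}/F_{3j+1} \to \alpha^2 = \alpha+1$, this quantity tends to $1 + \alpha^2 = 2+\alpha$. For the remaining blocks I would use the asymptotics $F_n \sim \alpha^n/\sqrt{5}$ and $L_n \sim \alpha^n$ to check that the right-endpoint ratios converge to constants strictly smaller than $2+\alpha$ — namely $2$ for small $n$, $1+\alpha/2$ for the block with $q(n) = 2F_{3j+2}$, and values of comparable size for the two remaining blocks — so that none of them affects the supremum. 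Hence $\sup_{n \geq 1} n/q(n) = 2+\alpha$, which is the critical exponent of ${\bf r}$.

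The one point requiring care — and the only place where the $-1$'s and $-2$'s in Theorem~\ref{rfperiods-thm} matter — is to confirm that the ratios in the dominant block never exceed $2+\alpha$, so that the supremum is exactly $2+\alpha$ and not something slightly larger. Writing $\beta = (1-\sqrt{5})/2$ and using the Binet formulas together with the identities $\alpha\beta = -1$, $\beta^2 = \beta+1$, and $2+\alpha = \alpha\sqrt{5}$, a short computation gives $L_{3j+2} - (2+\alpha)F_{3j+1} = \beta^{3j+1}$; since $|\beta^{3j+1}| < 1$, this forces $(L_{3j+2}-1)/F_{3j+1} < 2+\alpha$ for every $j \geq 1$, while the left-hand side still tends to $2+\alpha$. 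Everything else is routine bookkeeping over the cases of Theorem~\ref{rfperiods-thm}, and this is the only mildly delicate step.
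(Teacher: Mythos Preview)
Your proposal is correct and follows the same route as the paper: read off the critical exponent from the function $q(n)$ of Theorem~\ref{rfperiods-thm}, identify the dominant block $L_{3j}\le n<L_{3j+2}$ with $q(n)=F_{3j+1}$, and compute the limit of $(L_{3j+2}-1)/F_{3j+1}$. The paper's proof is a two-sentence version of exactly this; your writeup simply supplies the details the paper omits, in particular the justification that $\sup_n n/q(n)$ really is the critical exponent, the check that the other blocks are asymptotically dominated, and the Binet computation $L_{3j+2}-(2+\alpha)F_{3j+1}=\beta^{3j+1}$ showing the supremum is approached strictly from below.
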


\begin{proof}
An examination of the cases in Theorem~\ref{rfperiods-thm}
show that the words of maximum exponent
are those corresponding to $n = L_{3j+2}-1$, $p = F_{3j+1}$.  As
$j \rightarrow \infty$, the quantity $n/p$ approaches
$2 + \alpha$ from below.
\end{proof}

\begin{theorem}
All squares in the Rote-Fibonacci word are 
of order $F_{3n+1}$ for $n \geq 0$, and each such order occurs.
\label{rote3n}
\end{theorem}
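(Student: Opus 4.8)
The plan is to settle this the same way Theorems~\ref{squares} and \ref{allpers} were settled, by invoking Procedure~\ref{proc:Fib-auto-decide}. By part~2 of the characterization theorem above, the Rote-Fibonacci word $\bf r$ is Fibonacci-automatic: it is generated by the $8$-state DFAO in Figure~\ref{fig:rf-dfao}. So I would feed that DFAO, together with the predicate
$$ (n > 0) \ \wedge \ \exists i\ \forall t < n\ {\bf r}[i+t] = {\bf r}[i+n+t] $$
asserting that a square of order $n$ occurs somewhere in $\bf r$, into the decision procedure. The resulting DFA accepts exactly $\{\, (n)_F \st {\bf r} \text{ contains a square of order } n \,\}$, so a single computation handles both halves of the statement at once: the accepted orders are precisely the orders of squares that genuinely occur, so if the accepted language is what we expect there is nothing more to prove about existence.

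It then remains to read off the accepted language, which the claim says is $1(000)^*$. Since the canonical Fibonacci representation of $F_k$ is $1\,0^{k-1}$, the regular language $1(000)^*$ is exactly $\{\, (F_{3n+1})_F \st n \geq 0 \,\}$, matching the statement of the theorem (note $F_1 = 1$ gives the order-$1$ squares $00$, which do occur at the start of $\bf r$). If the automaton produced is small this is immediate by inspection; otherwise, exactly as was done for the ``even numbers'' automaton of Figure~\ref{onepal}, one confirms it by constructing the trivial automaton for $1(000)^*$ and checking the two are isomorphic, or by intersecting with the complement of $1(000)^*$ and verifying emptiness.

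The only real concern is size, and it is a running-time rather than a mathematical obstacle: because the DFAO for $\bf r$ has $8$ states instead of the $2$ states of the Fibonacci DFAO, the intermediate automaton for ``${\bf r}[i+t] = {\bf r}[i+n+t]$'', and those obtained before the two quantifiers are eliminated, are substantially larger than their analogues in Theorem~\ref{squares} (compare the $1751$- and $3305$-state automata appearing in the proof of Theorem~\ref{rote-avoid-thm}); the final automaton, after projecting away $i$ and existentially quantifying, should again be tiny. Should a human-checkable argument be wanted instead, one could try to transfer the result from $\bf f$ via the transducer $T$ of Figure~\ref{fig:trans-f-rf}, tracking how a square in $\bf r$ pulls back to a repetition in $\bf f$ whose orders are controlled by Theorem~\ref{squares}; but this is fiddlier than just running the decision procedure, so I would not pursue it.
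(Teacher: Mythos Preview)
Your approach is essentially identical to the paper's: feed the predicate $(n \geq 1)\wedge\exists i\,\forall j<n\ {\bf r}[i+j]={\bf r}[i+j+n]$ into the decision procedure and read off the accepted language from the resulting automaton. One small slip: with the paper's convention $(F_k)_F = 10^{k-2}$ (not $10^{k-1}$), so the language $1(000)^*$ would encode $\{F_{3n+2}:n\geq 0\}$ rather than $\{F_{3n+1}:n\geq 0\}$; the actual output (since $F_1=F_2=1$) is $1+100(000)^*$, which the paper simply exhibits as an automaton.
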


\begin{proof}
We use the predicate
$$ (n \geq 1) \ \wedge \ \exists i \ \forall j<n\ ({\bf r}[i+j] = 
{\bf r}[i+j+n]) .$$
The resulting automaton is depicted in Figure~\ref{rotesquares}.
The accepted words correspond to $F_{3n+1}$ for $n \geq 0$.
\begin{figure}[H]
\begin{center}
\includegraphics[width=6.5in]{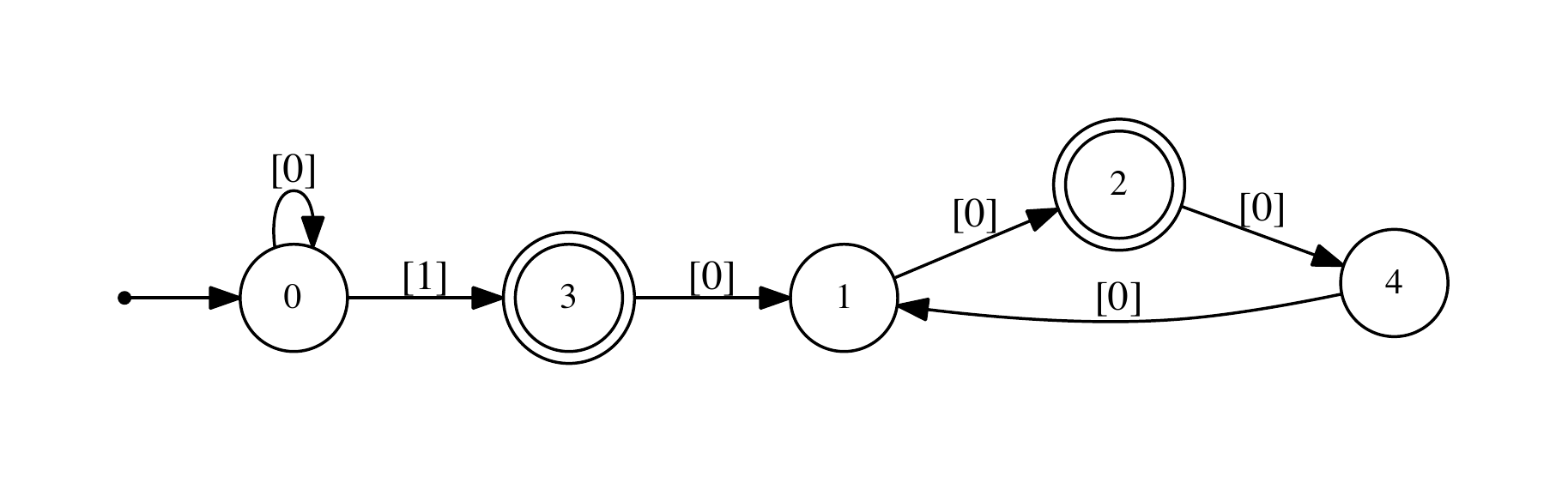}
\caption{Automaton accepting orders of squares in the Rote-Fibonacci word}
\label{rotesquares}
\end{center}
\end{figure}
\end{proof}

We now turn to problems considering prefixes of the Rote-Fibonacci
word $\bf r$.

\begin{theorem}
A length-$n$ prefix of the Rote-Fibonacci word $\bf r$ is an antipalindrome
iff $n = F_{3i+1} - 3$ for some $i \geq 1$.
\end{theorem}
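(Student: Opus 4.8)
The plan is to follow the recipe used repeatedly in this section: phrase the property as an admissible predicate and let Procedure~\ref{proc:Fib-auto-decide} do the work, taking as input the DFAO for $\bf r$ displayed in Figure~\ref{fig:rf-dfao}. The length-$n$ prefix ${\bf r}[0..n-1]$ is an antipalindrome exactly when it equals the complement of its reversal, that is, when
$$ \forall j < n \quad {\bf r}[j] \neq {\bf r}[n-1-j] . $$
This is legal input: a single bounded universal quantifier over $j$, the comparison $j < n$, a subtraction to form the index $n-1-j$, and an inequality between two letters of $\bf r$. (In particular the predicate holds vacuously when $n=0$, which matches the case $i=1$ of the statement; note also that every antipalindrome has even length, consistent with $F_{3i+1}-3$ always being even.) Running the procedure produces a DFA $A$ accepting $\{ (n)_F : {\bf r}[0..n-1]\text{ is an antipalindrome} \}$, and it remains only to identify $L(A)$.

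The second step is to recognize $L(A)$ as $\{ (F_{3i+1}-3)_F : i \ge 1 \}$. For this I would compute the Zeckendorf expansions of $F_{3i+1}-3$ directly: one has $(F_4-3)_F = \epsilon$, $(F_7-3)_F = 10010$, $(F_{10}-3)_F = 10101000$, $(F_{13}-3)_F = 10101010010$, $(F_{16}-3)_F = 10101010101000$, and in general a short induction, splitting on the parity of $i$ and using $F_{3i+1}-3 = F_{3i+1}-F_4$, shows that each such representation is a block of the form $(10)^k$ followed by one of two fixed short suffixes, the suffix being selected by the parity of $i$. In particular this family of strings forms a regular language, and one then checks by inspection that $A$ accepts exactly this language.

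Alternatively, and more in the spirit of the paper, the identification can be done mechanically with no hand computation of Zeckendorf forms. The integers $F_{3i+1}$ with $i \ge 1$ are precisely those $m$ whose canonical Fibonacci representation lies in the regular language $100(000)^*$; hence $\{ m : (m)_F \in 100(000)^* \}$ is a Fibonacci-automatic set, so one may run the procedure on the predicate $\exists m\ (\, m = n+3 \ \wedge\ (m)_F \in 100(000)^* \,)$ and verify that the automaton returned is isomorphic to $A$. Equivalently, take the trivial automaton for $100(000)^*$, apply the effective ``subtract $3$'' operation available in Fibonacci representation, and compare the result with $A$.

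The main obstacle is the identification step: producing a clean regular description of $\{ (F_{3i+1}-3)_F : i \ge 1 \}$ — the parity-dependent suffix makes the closed form mildly awkward to state — and matching it against the automaton $A$. Everything else, namely writing down the predicate and feeding it to the decision procedure, is entirely routine.
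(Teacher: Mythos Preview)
Your proposal is correct and follows essentially the same route as the paper: the predicate $\forall j<n\ {\bf r}[j]\neq{\bf r}[n-1-j]$ is exactly what the paper feeds to the decision procedure, and the remaining work is identifying the accepted language. The paper simply records the output regular expression $\epsilon + 1(010101)^*0(010+101000)$ and asserts it matches $F_{3i+1}-3$, whereas you spell out two concrete ways to carry out that identification (hand computation of Zeckendorf forms, or building the comparison automaton from $100(000)^*$ and subtracting $3$); either is fine and the second is very much in the spirit of the paper.
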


\begin{proof}
We use our decision method on the predicate
$$ \forall j<n\ {\bf r}[j] \not= {\bf r}[n-1-j] .$$
The result is depicted in Figure~\ref{rote-antipal}.
The only accepted expansions are given by the regular expression
$\epsilon + 1(010101)^* 0 (010+101000)$, which corresponds
to $F_{3j+1} - 3$.
We use the predicate
$$ (n \geq 1) \ \wedge \ \exists i \ \forall j<n \ {\bf r}[i+j] = 
{\bf r}[i+j+n]) .$$
The resulting automaton is depicted in Figure~\ref{rote-antipal}.
The accepted words correspond to $F_{3n+1}$ for $n \geq 0$.

\begin{figure}[H]
\begin{center}
\includegraphics[width=6.5in]{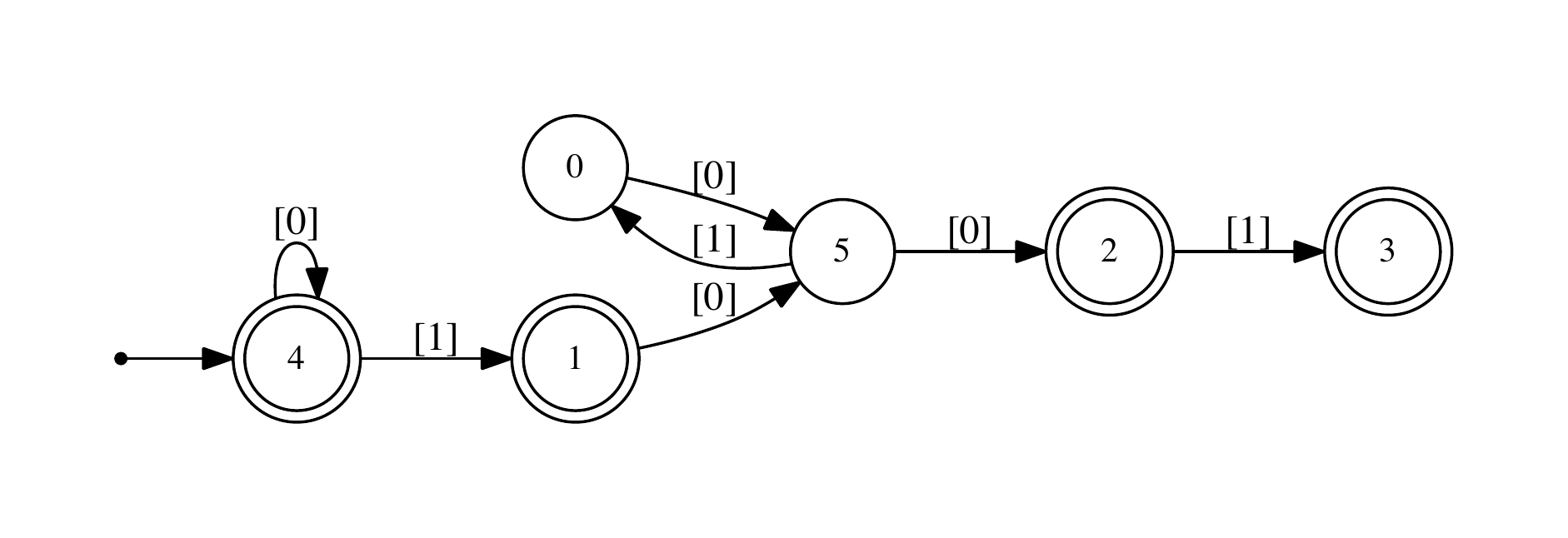}
\caption{Automaton accepting lengths of antipalindrome prefixes in the Rote-Fibonacci word}
\label{rote-antipal}
\end{center}
\end{figure}
\end{proof}

\begin{theorem}
A length-$n$ prefix of the Rote-Fibonacci word is an antisquare
if and only if $n = 2F_{3k+2}$ for some $k \geq 1$.
\end{theorem}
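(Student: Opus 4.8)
The plan is to follow the same template used for the antipalindrome-prefix and antisquare theorems earlier in this section: express the property as a first-order predicate over $\Th(\Enn,0,1,+)$ with indexing into $\bf r$, run it through Procedure~\ref{proc:Fib-auto-decide}, and inspect the resulting automaton. First I would unwind the definitions. Since an antisquare is by definition a nonempty word of the form $x\overline{x}$, the length-$n$ prefix ${\bf r}[0..n-1]$ is an antisquare exactly when $n$ is even and positive, say $n = 2m$ with $m \geq 1$, and the first half is the complement of the second half, i.e. ${\bf r}[j] \neq {\bf r}[m+j]$ for all $j$ with $0 \leq j < m$. Since $\bf r$ is Fibonacci-automatic by the DFAO of Figure~\ref{fig:rf-dfao}, this is directly expressible as
$$ \exists m\ (n = m+m) \ \wedge\ (m \geq 1) \ \wedge\ \forall j<m\ {\bf r}[j] \neq {\bf r}[j+m]. $$

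Feeding this predicate to the decision procedure yields a DFA over $\Sigma_2$ accepting exactly $\{(n)_F \st {\bf r}[0..n-1]\text{ is an antisquare}\}$. The remaining step is to read off its language and match it against the claimed set. This is pure bookkeeping: from the identity $2F_m = F_{m+1}+F_{m-2}$, which is a legal Zeckendorf sum for $m\geq 4$ (the indices differ by $3$), one gets $(2F_m)_F = 1001\,0^{m-4}$, so restricting to $m = 3k+2$ forces the trailing block of zeros to have length $\equiv 1 \pmod{3}$. Hence the theorem is equivalent to the assertion that the output automaton accepts precisely the canonical Fibonacci representations in the regular language $10010(000)^{*}$, and it suffices to verify by inspection that it does. (Note that $m\geq 1$ in the predicate already excludes $n=0$, and the smallest accepted length is $2F_5=10$, consistent with the small-case check that the length-$2,4,6$ prefixes $00$, $0010$, $001001$ of $\bf r$ are not antisquares while $0010011011$ is.)

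The only place real difficulty could surface is the computation itself: the predicate carries one quantifier alternation ($\exists m\,\forall j$) and $\bf r$ is generated by an $8$-state DFAO, so the intermediate automaton for ${\bf r}[j]\neq{\bf r}[j+m]$, its universal projection, and the final existential projection could each involve automata with a few thousand states, on the scale of those appearing in the proof of Theorem~\ref{rote-avoid-thm}. As the other experiments reported here demonstrate, this is comfortably within practical reach. I expect the genuine ``obstacle,'' such as it is, to be not the decidability but the final \emph{inspection}: confirming that the produced automaton accepts exactly $10010(000)^{*}$. This can be made fully rigorous, if one wishes, by testing the output DFA for equivalence against the tiny explicit DFA for $10010(000)^{*}$ — a routine automaton computation of the kind already invoked several times in this paper.
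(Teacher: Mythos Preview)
Your proposal is correct and follows essentially the same approach as the paper: express the antisquare-prefix property as a first-order predicate indexing into $\bf r$, run Procedure~\ref{proc:Fib-auto-decide}, and inspect the output automaton. The only cosmetic difference is that the paper parameterizes by the \emph{order} (half-length) rather than the full length $n$, so its predicate is simply $\forall k < n\ {\bf r}[k] \neq {\bf r}[k+n]$ without your auxiliary $\exists m$; your version is slightly more explicit about the final verification step (matching against $10010(000)^*$) than the paper's, which just displays the resulting automaton.
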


\begin{proof}
The predicate for having an antisquare prefix of length $n$ is
$$ \forall k < n \ {\bf r}[i+k] \not= {\bf r}[i+k+n] .$$
When we run this we get the automaton depicted in Figure~\ref{rote-antisquare-prefix}.

\begin{figure}[H]
\begin{center}
\includegraphics[width=5.5in]{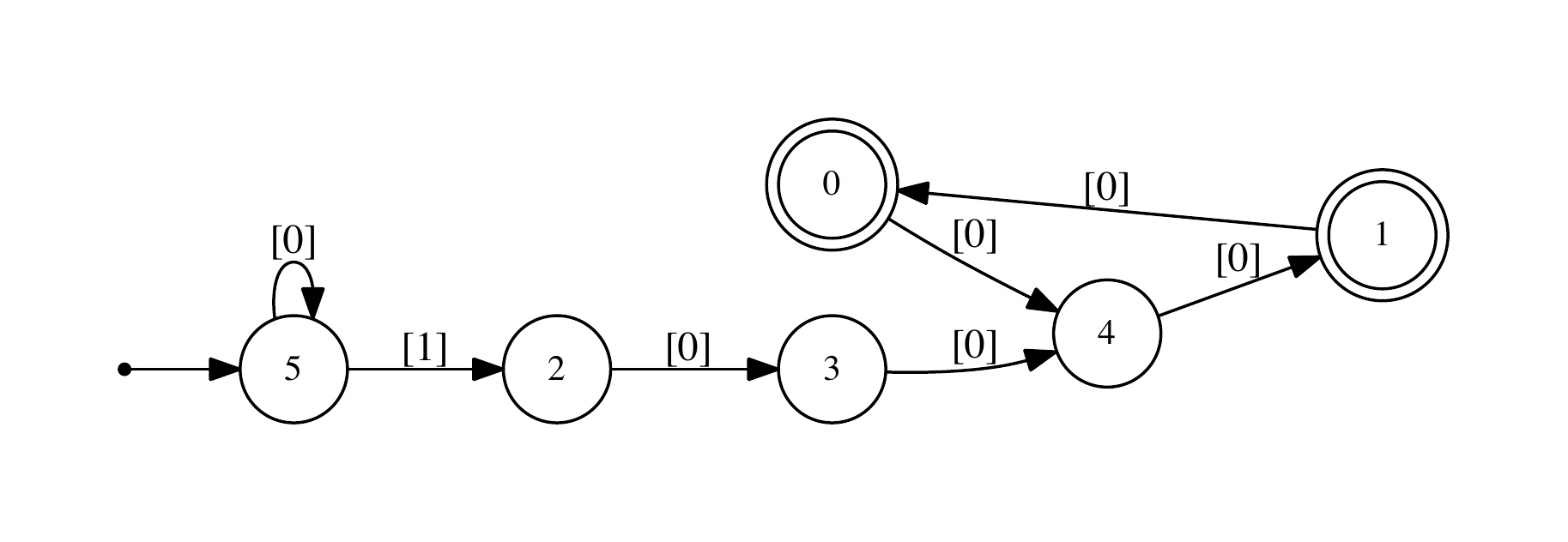}
\caption{Automaton accepting orders of antisquares that are prefixes of $\bf f$}
\label{rote-antisquare-prefix}
\end{center}
\end{figure}

\end{proof}

\begin{theorem}
The Rote-Fibonacci word has subword complexity $2n$.
\end{theorem}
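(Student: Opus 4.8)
The plan is to count, for each $n\ge 1$, the number of distinct length-$n$ factors of $\bf r$ by counting their \emph{leftmost} occurrences, and then to extract the exact value $2n$ from the resulting automaton by means of the enumeration technique of Charlier, Rampersad, and Shallit \cite{Charlier&Rampersad&Shallit:2012}, adapted to Fibonacci representation. Concretely, I would first write the predicate
$$ \mathrm{novel}(n,i) \ := \ (\forall i' < i)\ (\exists t < n)\ {\bf r}[i'+t] \neq {\bf r}[i+t] , $$
which holds exactly when ${\bf r}[i..i+n-1]$ is the first occurrence in $\bf r$ of that factor. By Procedure~\ref{proc:Fib-auto-decide}, applied to the DFAO of Figure~\ref{fig:rf-dfao} generating $\bf r$, there is a DFA $M$ accepting $\{(n,i)_F \st \mathrm{novel}(n,i)\}$. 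Since for $n\ge 1$ the distinct length-$n$ factors of $\bf r$ are in bijection with their leftmost starting positions, the subword complexity satisfies $\rho_{\bf r}(n) = \#\{\, i \st (n,i)\in L(M)\,\}$ for all $n\ge 1$ (the case $n=0$ being the trivial exception $\rho_{\bf r}(0)=1$).

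Next, the map $n \mapsto \#\{\, i \st (n,i)\in L(M)\,\}$ is a Fibonacci-regular sequence: from $M$ one extracts a linear representation $(\lambda, (\mu_0,\mu_1), \gamma)$ --- a row vector, a pair of square matrices indexed by the Fibonacci digits, and a column vector --- with the count equal to $\lambda\,\mu_{d_1}\mu_{d_2}\cdots\mu_{d_\ell}\,\gamma$ whenever $d_1\cdots d_\ell=(n)_F$. The construction is exactly that of \cite{Charlier&Rampersad&Shallit:2012}; it carries over to Fibonacci representation because that system is Pisot, so the pairing automaton for $(n,i)$ and the regular-language operations it requires behave as in base $k$. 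Independently, $n\mapsto 2n$ is Fibonacci-regular with an explicit linear representation, so $\rho_{\bf r}(n)-2n$ is Fibonacci-regular; to prove it vanishes for all $n\ge 1$ it suffices, by the standard rank/minimization argument, to check that its linear representation evaluates to $0$ on every canonical Fibonacci string up to a computable length bound (the dimension of the combined representation). That is a finite, mechanical verification, and as a sanity check one also directly enumerates the short factors of $\bf r$ and observes the values $2,4,6,8,\dots$.

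The step I expect to be the main obstacle is precisely this passage from the automaton $M$ to the closed form: the decision procedure alone only certifies the \emph{set} of novel occurrences, not their number as a function of $n$, so one must invoke (and, strictly, re-justify in the Fibonacci setting) the linear-representation machinery and then carry out the finite check that the resulting regular sequence is genuinely $2n$ and not merely $2n$ on small inputs. A non-mechanical alternative sidesteps this entirely: by Corollary~\ref{rotecor} the sequence $\Delta{\bf r}$ taken modulo $2$ is (a shift of) the Fibonacci word, which is Sturmian of complexity $n+1$, and Rote's theorem \cite{Rote:1994} on running sums of such sequences then yields complexity exactly $2n$. In keeping with the rest of the paper, however, I would present the mechanical argument above as the primary proof.
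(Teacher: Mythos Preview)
Your proposal is correct, but the paper takes exactly the route you relegate to an ``alternative'': its entire proof is the two-line observation that, by Corollary~\ref{rotecor}, $\Delta{\bf r}$ taken modulo $2$ is a shift of the complement of the Fibonacci word (hence Sturmian), and then Rote's theorem \cite[Thm.~3]{Rote:1994} gives complexity $2n$ directly. Your primary, mechanical argument---build the automaton for first occurrences, pass to a linear representation, and compare with the linear representation of $2n$ on inputs up to the rank bound via \cite[Corollary~3.6]{Berstel&Reutenauer:2011}---is exactly what the paper carries out for the Fibonacci word in Theorem~\ref{sturmcomp}, so it is a legitimate and in-spirit proof here as well; it just happens not to be the one the authors chose for $\bf r$. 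The trade-off is clear: the Rote-based proof is short and conceptual but imports an external structural result, while your approach is self-contained within the paper's decision/enumeration framework at the cost of a machine computation whose intermediate automata (as the paper's own experience with $\bf r$ in Theorem~\ref{rotemi} suggests) may be sizable.
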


\begin{proof}
Follows from Corollary~\ref{rotecor} together with
\cite[Thm.~3]{Rote:1994}.
\end{proof}

\begin{theorem}
The Rote-Fibonacci word is mirror invariant.  That is, if
$z$ is a factor of $\bf r$ then so is $z^R$.
\label{rotemi}
\end{theorem}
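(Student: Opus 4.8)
The plan is to proceed exactly as in the earlier proof that ${\bf f}$ is mirror invariant: phrase the property as a first-order predicate indexing into the Fibonacci-automatic word ${\bf r}$, submit it to the decision procedure of Procedure~\ref{proc:Fib-auto-decide}, and confirm that the resulting automaton accepts the canonical Fibonacci representations of every $n \geq 0$, namely the language $\epsilon + 1(0+01)^*$. The word ${\bf r}$ is Fibonacci-automatic --- an explicit $8$-state DFAO for it is given in part~2 of the characterization theorem above --- so this is legitimate input for the procedure.

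Concretely, I would run the predicate asserting that every occurrence of a length-$n$ factor has its reversal occurring somewhere:
$$ \forall i \ \exists j \ \forall t < n \ {\bf r}[i+t] = {\bf r}[j + n - 1 - t] . $$
Here the innermost atomic formula ${\bf r}[i+t] = {\bf r}[j+n-1-t]$ is built from the DFAO for ${\bf r}$, the adder $M_{\rm add}$, and the order automaton (the subtraction $n-1-t$ being handled in the usual way by introducing a fresh variable $s$ with $s + t + 1 = j + n$); the two inner quantifiers $\forall t$ and $\exists j$ are eliminated by the standard product, projection, and determinization steps, and the outer $\forall i$ by one more complementation--projection. If, as expected, the output automaton accepts all $n$, then ${\bf r}$ is mirror invariant.

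The only real obstacle is computational rather than conceptual: because the DFAO for ${\bf r}$ has $8$ states instead of the $2$ of ${\bf f}$, the intermediate automata are noticeably larger --- recall that the analogous check for ${\bf f}$ already produced a $264$-state automaton for the single atomic formula, and the $xxx^R$-avoidance computation for ${\bf r}$ grew to several thousand states --- so one must simply verify that the computation runs to completion and inspect the (presumably trivial, all-accepting) final automaton. As an alternative human proof, one could instead invoke Corollary~\ref{rotecor}: a length-$n$ factor $z$ of ${\bf r}$ has first-difference sequence mod $2$ equal to a length-$(n-1)$ factor of $\overline{\bf f}$, and reversing $z$ reverses that difference sequence (negation being trivial mod $2$); since ${\bf f}$, hence $\overline{\bf f}$, is mirror invariant, the reversed difference block occurs in ${\bf r}$, so $z^R$ or $\overline{z^R}$ is a factor, and one checks from part~3 of the characterization theorem that the factor set of ${\bf r}$ is closed under complementation. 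The delicate point of this second route is pinning down the additive constant when re-integrating the difference sequence and making the ``up to complement'' step rigorous, which is why I would carry out the decision-procedure proof instead.
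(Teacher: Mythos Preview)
Your proposal is correct and follows exactly the paper's approach: it runs the same predicate $\forall i\ \exists j\ \forall t<n\ {\bf r}[i+t]={\bf r}[j+n-1-t]$ through the decision procedure and checks that the resulting automaton accepts all $n$. The paper reports that the largest intermediate automaton has $2300$ states and the computation takes about six seconds, confirming your expectation that the blowup from the $8$-state DFAO for ${\bf r}$ is manageable.
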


\begin{proof}
We use the predicate
$$ \forall i \ \exists j \ \forall t < n \ 
	{\bf r}[i+t] = {\bf r}[j+n-1-t] .$$
The resulting automaton accepts all $n$, so the conclusion follows.
The largest intermediate automaton has
2300 states and the calculation took about 6 seconds on a laptop.
\end{proof}

\begin{corollary}
The Rote-Fibonacci word avoids the pattern $x x^R x^R$.
\end{corollary}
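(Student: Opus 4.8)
The plan is to derive this immediately from the two results just established: Theorem~\ref{rote-avoid-thm}, that ${\bf r}$ avoids $x x x^R$, and Theorem~\ref{rotemi}, that ${\bf r}$ is mirror invariant. The only new ingredient is the elementary reversal identity linking the two patterns.

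First I would record that, for any finite word $x$,
$$ \left( x\, x^R\, x^R \right)^R = \left( x^R \right)^R \left( x^R \right)^R x^R = x\, x\, x^R , $$
using the fact that reversal is an anti-automorphism, $(uvw)^R = w^R v^R u^R$, together with the involution $(x^R)^R = x$. Thus a factor of the form $x x^R x^R$, of order $|x| = n$, is precisely the reverse of a factor of the form $x x x^R$ of the same order.

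Now suppose for contradiction that ${\bf r}$ contains a factor $w$ matching $x x^R x^R$ with $|x| = n \geq 1$. By Theorem~\ref{rotemi}, $w^R$ is also a factor of ${\bf r}$; and by the identity above $w^R$ matches $x x x^R$ with order $n \geq 1$. This contradicts Theorem~\ref{rote-avoid-thm}, so ${\bf r}$ contains no such $w$, which is exactly the claim.

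There is essentially no obstacle here; the one point that merits a moment's care is that mirror invariance, as proved in Theorem~\ref{rotemi}, applies to \emph{every} factor of ${\bf r}$, so it may be invoked for the hypothetical occurrence of $x x^R x^R$ with no need to track its position. Should one prefer a self-contained computational proof instead, one could run the decision procedure on the predicate
$$ \exists i\ \forall t<n\ ({\bf r}[i+t] = {\bf r}[i+2n-1-t]) \ \wedge \ ({\bf r}[i+t] = {\bf r}[i+3n-1-t]) , $$
which asserts that ${\bf r}[i..i+3n-1]$ has the form $x x^R x^R$, and check that only $n = 0$ is accepted, exactly as in the proof of Theorem~\ref{rote-avoid-thm}; but the reversal argument is shorter and reuses work already done.
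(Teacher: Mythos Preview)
Your proof is correct and follows exactly the same approach as the paper: apply mirror invariance (Theorem~\ref{rotemi}) to a hypothetical occurrence of $x x^R x^R$, use the reversal identity $(x x^R x^R)^R = x x x^R$, and contradict Theorem~\ref{rote-avoid-thm}. You supply a bit more detail on the reversal computation and mention the alternative direct-predicate route, but the core argument is identical.
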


\begin{proof}
Suppose $x x^R x^R$ occurs in $\bf r$.  Then by Theorem~\ref{rotemi}
we know that $(x x^R x^R)^R = x x x^R$ occurs in $\bf f$.  But this
is impossible, by Theorem~\ref{rote-avoid-thm}.
\end{proof}

As it turns out, the Rote-Fibonacci word has (essentially) appeared
before in several places.    For example, in a 2009 preprint of
Monnerot-Dumaine \cite{Monnerot-Dumaine:2009}, the author studies a plane
fractal called the ``Fibonacci word fractal'', specified by
certain drawing instructions, which can be coded over the alphabet
$S, R, L$ by taking the fixed point $g^\omega (a)$ and applying the
coding $\gamma(a) = S$, $\gamma(b) = R$, $\gamma(c) = S$, and
$\gamma(d) = L$.  
Here $S$ means ``move
straight one unit'', ``$R$'' means ``right turn one unit''
and ``$L$'' means
``left turn one unit''.

More recently,
Blondin Mass\'e, Brlek, Labb\'e, and Mend\`es France
studied a remarkable sequence of words closely related to $\bf r$
\cite{BlondinMasse&Brlek&Garon&Labbe:2011,BlondinMasse&Brlek&Labbe&MendesFrance:2011,BlondinMasse&Brlek&Labbe&MendesFrance:2012}.
For example, in their paper ``Fibonacci snowflakes''
\cite{BlondinMasse&Brlek&Garon&Labbe:2011} they defined a certain sequence
$q_i$ which has the following relationship to $g$:
let $\xi(a) = \xi(b) = L$, $\xi(c) = \xi(d) = R$.  Then
$$ R \xi(g^n(a)) = q_{3n+2} L .$$

\subsection{Conjectures and open problems about the Rote-Fibonacci word}

In this section we collect some conjectures we have not yet
been able to prove.  We have made some progress and hope to
completely resolve them in the future.

\begin{conjecture}
Every infinite binary word avoiding the pattern $x x x^R$ has critical
exponent $\geq 2+\alpha$.
\end{conjecture}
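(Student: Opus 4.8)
The plan is a proof by contradiction combined with a structural, run-length reduction. Suppose $\mathbf{w}$ is an infinite binary word avoiding $xxx^R$ whose critical exponent $E$ satisfies $E<2+\alpha$.

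The first, essentially free, observation is that avoiding $xxx^R$ forbids the factors $000$ and $111$ (take $x=0$ and $x=1$), so every maximal block of consecutive equal letters in $\mathbf{w}$ has length $1$ or $2$. Hence, after discarding a possible partial initial block, $\mathbf{w}$ is determined by its run-length encoding $\mathbf{d}\in\{1,2\}^{\omega}$, where $d_i$ is the length of the $i$th block. Moreover $E<2+\alpha$ bounds the lengths of the runs of $1$'s and of $2$'s in $\mathbf{d}$: a run $1^k$ produces a factor of $\mathbf{w}$ of the form $(01)^m$ or $(10)^m$ with $m$ growing linearly in $k$, and a run $2^k$ produces a factor of period $4$ whose exponent also grows linearly in $k$. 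Thus $\mathbf{d}$ itself has bounded powers, hence a run-length encoding over a bounded alphabet, and iterating this step yields an $S$-adic / continued-fraction description of $\mathbf{d}$, and so of $\mathbf{w}$.

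The core of the argument is to push this descent: one shows that at each level the constraints ``avoids the appropriately transformed reversal pattern'' and ``critical exponent $<2+\alpha$'' are inherited, and that the only way to satisfy them at every level without becoming eventually periodic --- which would give critical exponent $\infty$ --- is to have every partial quotient equal to $1$. Equivalently, $\mathbf{d}$ is the golden-mean Sturmian word over $\{1,2\}$ and $\mathbf{w}$ is a shift and/or complement of the Rote-Fibonacci word $\mathbf{r}$ (consistent with $\mathbf{r}$ having complexity $2n$ and first differences the Fibonacci word, cf.\ Corollary~\ref{rotecor} and \cite{Rote:1994}). But then Corollary~\ref{critical-rote} gives $E(\mathbf{r})=2+\alpha$, contradicting $E<2+\alpha$.

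The hardest part, and the reason this remains a conjecture, is that the quantifier here ranges over \emph{all} infinite binary words, so Procedure~\ref{proc:Fib-auto-decide} does not apply directly and each level of the descent needs a finitary argument. Two issues must be handled. First, the correspondence between repetitions of $\mathbf{w}$ and repetitions of the derived sequences is only approximate, because of boundary effects of run-length encoding: a repetition in $\mathbf{w}$ need not respect block boundaries, and it is precisely the fractional blocks at its two ends that account for the ``extra $\alpha$'' beyond $2$ --- so the bookkeeping relating $E(\mathbf{w})$ to the critical exponent of $\mathbf{d}$ is delicate, and is where the value $2+\alpha$ actually enters. Second, one must eliminate the continuum of non-golden slopes, showing that any partial quotient $\geq 2$ either creates an occurrence of $xxx^R$ or forces some factor to exponent $\geq 2+\alpha$. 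The decision procedure can verify the finite combinatorial facts needed at each individual level (the extremal word at each level being Fibonacci-automatic, or automatic in a related Ostrowski numeration system), but assembling these into one argument valid for all words is the genuine obstacle. A more computational alternative would be to show directly, by exhaustive backtracking, that for each rational $\beta<2+\alpha$ the set of finite binary words that avoid $xxx^R$ and all of whose factors have exponent $\leq\beta$ is finite; this suffices by K\"onig's lemma, but there is no evident reason the search terminates as $\beta\to(2+\alpha)^{-}$, so the structural input above still seems to be needed.
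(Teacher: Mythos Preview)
The statement you are addressing is labelled in the paper as a \emph{Conjecture}, not a theorem; the paper offers no proof and explicitly places it in the subsection ``Conjectures and open problems about the Rote-Fibonacci word''. So there is no paper proof to compare against, and the relevant question is simply whether your proposal constitutes a proof.

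It does not, and you say so yourself: the passage ``The hardest part, and the reason this remains a conjecture, is that \ldots'' concedes that the central step is missing. Your initial reductions are sound --- avoiding $xxx^R$ does exclude $000$ and $111$, so the run-length sequence $\mathbf d$ lies in $\{1,2\}^\omega$, and the hypothesis $E<2+\alpha$ does bound the lengths of runs of $1$'s and of $2$'s in $\mathbf d$. But the decisive claim, that iterating the run-length/$S$-adic descent forces every partial quotient to be $1$ and hence forces $\mathbf w$ to be (a shift or complement of) the Rote--Fibonacci word, is asserted without argument. Nothing you have written excludes, for instance, words whose Ostrowski expansion has infinitely many partial quotients equal to $2$; to rule these out you would need, at each level, a uniform lemma of the form ``a partial quotient $\geq 2$ either creates $xxx^R$ or creates a factor of exponent $\geq 2+\alpha$'', and you neither state such a lemma precisely nor prove it. The alternative route you mention --- exhaustive backtracking for each rational $\beta<2+\alpha$ --- is likewise only a hope, since you give no reason the search terminates.

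In short: what you have written is a reasonable heuristic outline of why the conjecture is plausible and where the difficulty lies, matching the paper's own assessment that the problem is open, but it is not a proof.
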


\begin{conjecture}
Let $z$ be a finite nonempty primitive binary word.  If $z^\omega$ avoids
$x x x^R$, then $|z| = 2 F_{3n+2}$ for some integer $n \geq 0$.  
Furthermore, $z$ is a conjugate of the 
prefix ${\bf r}[0..2F_{3n+2} - 1]$, for some $n \geq 0$.  Furthermore,
for $n \geq 1$ we have that $z$ is a conjugate of $y \overline{y}$,
where $y = \tau(h^{3n} (a))$.  
\end{conjecture}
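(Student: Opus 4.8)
The statement quantifies over \emph{all} primitive binary words $z$, so, unlike the earlier results, it cannot be attacked by feeding one predicate about ${\bf r}$ into the decision procedure: there is no single Fibonacci-automatic sequence whose factors encode every periodic word $z^\omega$. My plan is therefore to split the claim into a ``positive'' part, which \emph{is} mechanizable, and a ``negative'' part (that no other $z$ works), which requires a renormalization argument.

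For the positive part, observe that conjugates of prefixes of ${\bf r}$ are exactly the objects the decision procedure handles well: a rotation of ${\bf r}[0..L-1]$ by $s$ is ${\bf r}[s..L-1]\,{\bf r}[0..s-1]$, and equality and comparison of such rotated blocks are expressible by index arithmetic into ${\bf r}$. I would first record the bookkeeping $|R_n| = F_{n+2}$ together with $\tau(h^{n}(a)) = R_n$ (from the descriptions of ${\bf r}$), so that $y = \tau(h^{3n}(a)) = R_{3n}$ has length $F_{3n+2}$ and $y\overline{y}$ has length $2F_{3n+2}$. I would then verify mechanically, with the prefix length $L$ ranging over the Fibonacci-automatic set $\{2F_{3n+2} : n \geq 1\}$, that (i) $({\bf r}[0..L-1])^\omega$ avoids $xxx^R$, and (ii) ${\bf r}[0..L-1]$ is conjugate to $R_{3n}\overline{R_{3n}}$. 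The key reduction for (i) is that in a periodic word of period $p$ an occurrence of $xxx^R$ with $|x|$ a multiple of $p$ forces some conjugate of the period to be a palindrome, while any $xxx^R$ with $|x|$ small relative to $p$ is a short factor of ${\bf r}[0..2L-1]$ and hence forbidden by Theorem~\ref{rote-avoid-thm}; the intermediate lengths are pinned down by Fine--Wilf. Each sub-claim is a single predicate about ${\bf r}$ with $L$ parameterized Fibonacci-automatically, so all of them fall to the decision procedure uniformly in $n$. The degenerate boundary $n=0$ (where ${\bf r}[0..1]=00$ is imprimitive and the genuine period-$2$ avoider is $(01)^\omega$) is handled by inspection.

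For the negative part I would argue by induction on the period $p = |z|$. Small periods are settled by a finite check. For large $p$ the plan is a \emph{desubstitution}: show that a primitive periodic word avoiding $xxx^R$ with large period is forced, by its square and antisquare structure, to agree locally with ${\bf r}$, so that all its sufficiently short factors are factors of ${\bf r}$; then invoke recognizability of the Rote--Fibonacci morphism $h$ (equivalently $g$ of description~5) to write $z^\omega = \sigma(z'^\omega)$ for a shorter primitive periodic avoider $z'^\omega$, the period shrinking by a Fibonacci-type factor at each step. Running the induction recovers $|z| = 2F_{3n+2}$, and tracking how conjugacy classes transform under $\sigma$, together with the almost-commutativity of the $R_n$, yields that $z$ is a conjugate of ${\bf r}[0..2F_{3n+2}-1]$, hence of $R_{3n}\overline{R_{3n}} = y\overline{y}$. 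An appealing alternative route for this direction is to pass to the difference sequence: by Corollary~\ref{rotecor}, mod-$2$ differencing turns $z^\omega$ into a periodic Fibonacci-side word, and since reversal commutes with differencing mod~$2$, it turns $xxx^R$ into a square-plus-palindrome pattern there, reducing the problem to a classification of periodic words in the Sturmian world where more is already known.

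The main obstacle is precisely the desubstitution step: I must prove that \emph{every} periodic $xxx^R$-avoider of large period is a clean morphic image of a shorter one. This needs a synchronization (recognizability) lemma for $h$ restricted to the language of such avoiders, and it is delicate because periodicity interacts with the morphic cutting points and because conjugacy classes — not merely factor sets — must be propagated through the induction. I expect this to be the part that resists mechanization and that carries essentially all the difficulty; by contrast, the base cases, the length arithmetic, and every claim about the specific prefixes of ${\bf r}$ can be discharged by the decision procedure as described above.
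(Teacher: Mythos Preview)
The statement you are attempting to prove is explicitly labelled a \emph{conjecture} in the paper, in the subsection ``Conjectures and open problems about the Rote-Fibonacci word''; the paper states that the authors ``have not yet been able to prove'' it. There is therefore no proof in the paper to compare your proposal against. What the paper does contain is partial progress: the theorem immediately following the conjecture establishes only the easy (``positive'') direction, namely that for $z = {\bf r}[0..2F_{3k+2}-1]$ with $k\geq 1$ the periodic word $z^\omega$ avoids $xxx^R$. The converse --- that no other primitive $z$ works --- is left completely open.

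On the positive direction your outline is in the right spirit but has a real slip. You write that any $xxx^R$ with $|x|$ small relative to the period is ``a short factor of ${\bf r}[0..2L-1]$ and hence forbidden by Theorem~\ref{rote-avoid-thm}''. That is false: a short factor of $z^\omega$ with $z={\bf r}[0..L-1]$ is a factor of $zz={\bf r}[0..L-1]\,{\bf r}[0..L-1]$, which is \emph{not} the same as the length-$2L$ prefix ${\bf r}[0..2L-1]$ of ${\bf r}$; the two copies of $z$ meet at a seam that need not occur anywhere in ${\bf r}$. This is precisely why the paper's proof does not simply quote Theorem~\ref{rote-avoid-thm} for the small-$|x|$ case, but instead introduces the $\cmp$ predicate to simulate indexing modulo $L$ and carries out an explicit four-case analysis (Cases II.1--II.4) according to how many copies of $z$ the putative $xxx^R$ straddles, together with a separate argument (Cases Ia, Ib) for $|x|\geq L$. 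Your Fine--Wilf remark does not repair this: Fine--Wilf controls interacting periods inside a single word, not the seam problem. If you fix this by adopting the $\cmp$ device, your positive part becomes essentially the paper's argument.

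On the negative direction, your desubstitution/recognizability plan is a reasonable line of attack, and so is the alternative via the mod-$2$ difference map of Corollary~\ref{rotecor}. But you correctly identify the crux: one must show that \emph{every} primitive periodic $xxx^R$-avoider of large period is a morphic image of a shorter one, tracking conjugacy classes through the induction. The paper neither proves nor claims this; it is exactly the missing step that keeps the statement a conjecture. Your proposal is thus a program, not a proof.
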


We can make some partial progress on this conjecture, as follows:

\begin{theorem}
Let $k \geq 1$ and define $n = 2F_{3k+2}$.  Let
$z = {\bf r}[0..n-1]$.  Then $z^\omega$ contains no occurrence of
the pattern $x x x^R$.
\end{theorem}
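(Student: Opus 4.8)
The plan is to argue by contradiction, exploiting that $z^\omega$ is purely periodic with period $N:=2F_{3k+2}$ (the $n$ of the statement) together with the closure identities for the finite Rote--Fibonacci words. First I would pin down $z$: from the recurrence, $R_m$ is a prefix of $R_{m+1}$, so $R_{3k}$ is a prefix of ${\bf r}$, and reading off the next $F_{3k+2}$ symbols gives $z={\bf r}[0..N-1]=R_{3k}\overline{R_{3k}}$; abbreviate $A:=R_{3k}$, $B:=\overline{R_{3k}}$, so $|A|=|B|=F_{3k+2}$, $z=AB$, $z^R=B^RA^R$, $\overline z=BA$. Suppose $z^\omega$ contains $xxx^R$ with $|x|=m\geq 1$, occurring at a position which, by periodicity, we may take in $[0,N)$; the occurrence then lies inside $z^q$ for a bounded $q\le 4$. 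I would split on the size of $m$.

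\emph{Short patterns ($3m\le 2N$).} If $3m\le N$ the occurrence lies inside a single copy of $z$, hence inside the prefix ${\bf r}[0..N-1]$, contradicting Theorem~\ref{rote-avoid-thm}. If $N<3m\le 2N$, the occurrence lies in a length-$2N$ window of $z^3=ABABAB$ straddling exactly one copy-boundary, hence is a factor of $ABA=R_{3k}\overline{R_{3k}}R_{3k}$ or of $BAB=\overline{ABA}$. The technical core here is the sub-lemma that $R_{3k}\overline{R_{3k}}R_{3k}$ is itself a factor of ${\bf r}$; I would establish this by a short induction from the identities $R_{3k}=R_{3k-2}\,\overline{R_{3k-3}}\,R_{3k-2}$ and $R_{3k+3}=R_{3k}\,\overline{R_{3k-1}}\,\overline{R_{3k}}\,R_{3k}\,\overline{R_{3k-1}}$, locating a copy of the block deep inside some $R_M$ (alternatively it can be verified mechanically for each residue class mod $3$). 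Granting it: ${\bf r}$ avoids $xxx^R$ (Theorem~\ref{rote-avoid-thm}), is mirror invariant (Theorem~\ref{rotemi}), avoids $xx^Rx^R$ (the corollary to Theorem~\ref{rotemi}), and --- since ``avoids $xxx^R$'' means avoiding \emph{every} instance, in particular the one with $x$ replaced by $\overline x$ --- also avoids $\overline x\,\overline x\,\overline x^R$ and $\overline x\,\overline x^R\,\overline x^R$; so $ABA$ together with its reversal, complement and reverse-complement contains no occurrence of $xxx^R$, which kills every boundary-crossing occurrence. Contradiction.

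\emph{Long patterns.} If $3m>2N$ then $2m>\tfrac43 N>N$, so the square $xx$ (a factor of $z^\omega$, which is $N$-periodic) is a word of length $2m$ carrying the periods $m$ and $N$. Whenever $m\ge N-\gcd(m,N)$, Fine and Wilf's theorem yields the period $d:=\gcd(m,N)$, so $x=w^{m/d}$ with $w:=x[0..d-1]$ and $xx=w^{2m/d}$. But every length-$N$ window of $z^\omega$ is a cyclic rotation of $z$, hence primitive ($z$ itself being primitive: otherwise ${\bf r}$ would contain a square of order $F_{3k+2}$ or a high power of a short word, contrary to Theorem~\ref{rote3n} and Corollary~\ref{critical-rote}); since $2m\ge N+d$ in this regime, the period-$d$ word $xx$ contains such a window, which would then be a proper power $(\cdot)^{N/d}$ with $N/d\ge 2$ unless $d=N$. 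Hence $d=N$, $N\mid m$, and $x=z^{m/N}$; since $2m\equiv 0\pmod N$, periodicity makes the third block $x^R$ coincide with $z^\omega[i..i+m-1]=x$, so $z^{m/N}=x=x^R$ is a palindrome. This is impossible: $z$ is not a palindrome, for a palindrome $AB$ with $B=\overline A$ would force $A$ to be an antipalindrome, which $R_{3k}$ cannot be since $|R_{3k}|=F_{3k+2}$ is odd; and then no power $z^j$ is a palindrome either, because $z^j=(z^j)^R=(z^R)^j$ with $z,z^R$ primitive of equal length forces $z=z^R$. Contradiction.

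\emph{The main obstacle.} The delicate regime is $3m>2N$ together with $m<N-\gcd(m,N)$ --- that is, $\tfrac23 N<m<N$ with $\gcd(m,N)$ small --- where Fine--Wilf does not yet bite and the occurrence is too long to fit inside ${\bf r}$. There the $xx$-part forces a cyclic rotation of $z$ (a primitive word of length $N$) to carry a period $m$ lying strictly between $\tfrac23 N$ and $N$, and I would rule this out using the automaton of Figure~\ref{leastper-rote} (equivalently Theorem~\ref{rfperiods-thm}), which describes exactly which periods occur among the length-$\ell$ factors of ${\bf r}$: after embedding the offending length-$N$ window into a bounded stretch of ${\bf r}$, its admissible periods in that range are Fibonacci/Lucas-type quantities none of which is consistent with $z$'s period $2F_{3k+2}$, except in the $N$-periodic case already handled. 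Making this reduction fully rigorous --- controlling precisely which periods a \emph{conjugate of $z$} may carry, as opposed to an arbitrary factor of ${\bf r}$ --- is the step I expect to be the principal difficulty.
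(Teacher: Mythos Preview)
Your approach is quite different from the paper's, and it is genuinely incomplete in several places.

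The paper does \emph{not} attempt a combinatorial argument.  It splits on $\ell<n$ versus $\ell\ge n$ and, in every nontrivial subcase, reduces to a first-order predicate about ${\bf r}$ that is then checked by the decision procedure.  For $\ell<n$ it introduces a two-argument comparison predicate $\cmp(k,k')$ that simulates ${\bf r}[k\bmod n]={\bf r}[k'\bmod n]$ as long as $0\le k,k'<2n$, and then mechanically rules out each of the three boundary-crossing configurations (restricting $n$ to the regular language $10010^*$ to keep the automata small).  For $\ell\ge n$ with $n\mid\ell$ it observes that some \emph{conjugate} of $z$ must be a palindrome and checks, again mechanically, that no conjugate of ${\bf r}[0..n-1]$ with $n=2F_{3k+2}$ is a palindrome.

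Your proposal has at least three real gaps.  First, in the ``long'' case with $N\mid m$ you write $x=z^{m/N}$, but in fact $x=(\sigma^i z)^{m/N}$ for the cyclic shift by the starting position $i$; so you must rule out \emph{every} conjugate of $z$ being a palindrome, not just $z$ itself.  Your parity argument ($|A|$ odd, $A$ not an antipalindrome) only handles $i=0$.  Second, your ``short'' case hinges on the sub-lemma that $A\overline{A}A=R_{3k}\overline{R_{3k}}R_{3k}$ is a factor of ${\bf r}$, which you do not prove; moreover, even granting it, the reduction to $ABA$ or $BAB$ only covers factors of length $\le\tfrac32 N$, whereas you need $3m\le 2N$, so the range $\tfrac32 N<3m\le 2N$ (occurrences spanning two $AB$-boundaries) is missing.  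Third, you yourself flag the regime $\tfrac23 N<m<N$ with small $\gcd(m,N)$ as unresolved; Theorem~\ref{rfperiods-thm} constrains periods of factors of ${\bf r}$, not of arbitrary conjugates of $z$, and bridging that gap is exactly the difficulty the paper sidesteps by computing directly with the $\cmp$ predicate.
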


\begin{proof}
We have already seen this for $k = 0$, so assume $k \geq 1$.

Suppose that $z^\omega$ does indeed contain an occurrence of $x x x^R$
for some $|x| = \ell > 0$.  We consider each possibility for $\ell$ and
eliminate them in turn.

\bigskip

Case I:  $\ell \geq n$.

There are two subcases:

\bigskip

Case Ia:  $n \nodiv \ell$:  In this case, by considering the first
$n$ symbols of each of the two occurrences of $x$ in $x x x^R$ in $z^\omega$,
we see that there are two different cyclic shifts of $z$ that are
identical.  This can only occur if ${\bf r}[0..n-1]$ is a power, and
we know from Theorem~\ref{rote3n} and Corollary~\ref{critical-rote}
that this implies that
$n = 2F_{3k+1}$ or
$n = 3F_{3k+1}$ for some $k \geq 0$.  But $2F_{3k+1} \not= 2F_{3k'+2}$ 
and $3F_{3k+1} \not= 2F_{3k'+2}$ 
provided $k, k' > 0$, so this case cannot occur.

Case Ib:  $n \divides \ell$:  Then $x$ is a conjugate of $z^e$, where
$e = \ell/n$.   By a well-known result, a conjugate of a power is a
power of a conjugate; hence there exists a conjugate $y$ of $z$ such
that $x = y^e$.   Then $x^R = y^e$, so $x$ and hence $y$ is a
palindrome.  We can now create a predicate that says that some
conjugate of ${\bf r}[0..n-1]$ is a palindrome:
$$\exists i<n \ \wedge\  (\forall j<n \ \cmp(i+j,n+i-1-j))$$
where 
\begin{multline*}
\cmp(k,k') :=
(((k<n) \ \wedge\ (k'<n)) \implies ({\bf r}[k] = {\bf r}[k'])) \ \wedge \  \\
(((k<n)\ \wedge \ (k' \geq n)) \implies ({\bf r}[k] = {\bf r}[k'-n])) \ \wedge \  \\
(((k \geq n)\ \wedge \ (k'<n)) \implies ({\bf r}[k-n] = {\bf r}[k']))  \ \wedge \  \\
(((k \geq n)\ \wedge \ (k' \geq n)) => ({\bf r}[k-n] = {\bf r}[k'-n]))  .
\end{multline*}

When we do this we discover the only $n$ with Fibonacci representation
of the form $10010^i$ accepted are those with $i \equiv 0, 2$ (mod $3$),
which means that $2F_{3k+2}$ is not among them.  So this case
cannot occur.

\bigskip

Case II:   $\ell < n$.

There are now four subcases to consider, depending on the number
of copies of $z$ needed to ``cover'' our occurrence of $x x x^R$.
In Case II.$j$, for $1 \leq j \leq 4$, we consider $j$ copies
of $z$ and the possible positions of $x x x^R$ inside that copy.

Because of the complicated nature of comparing one copy of $x$ to
itself in the case that one or both overlaps a boundary between
different copies of $z$, it would be very helpful to be able to encode
statements like ${\bf r}[k \bmod n] = {\bf r}[\ell \bmod n]$ in our
logical language.  Unfortunately, we cannot do this if $n$ is
arbitrary.  So instead, we use a trick:  assuming that the indices $k,
k'$ satisfy $0 \leq k, k' < 2n$, we can use the $\cmp(k,k')$
predicate introduced above
to simulate the assertion ${\bf r}[k \bmod n] = {\bf r}[k'
\bmod n]$.  Of course, for this to work we must ensure that $0 \leq k,
k' < 2n$ holds.

The cases are described in Figure~\ref{rotecon}.  We assume that
that $|x| = \ell$ and $x x x^R$ begins at position $i$ of
$z^\omega$.   We have the inequalities $i < n$ and $\ell < n$
which apply to each case.  Our predicates are
designed to compare the first copy of $x$ to
the second copy of $x$, and the first copy of $x$ to the $x^R$.

\begin{figure}[H]
\begin{center}
\includegraphics[width=5in]{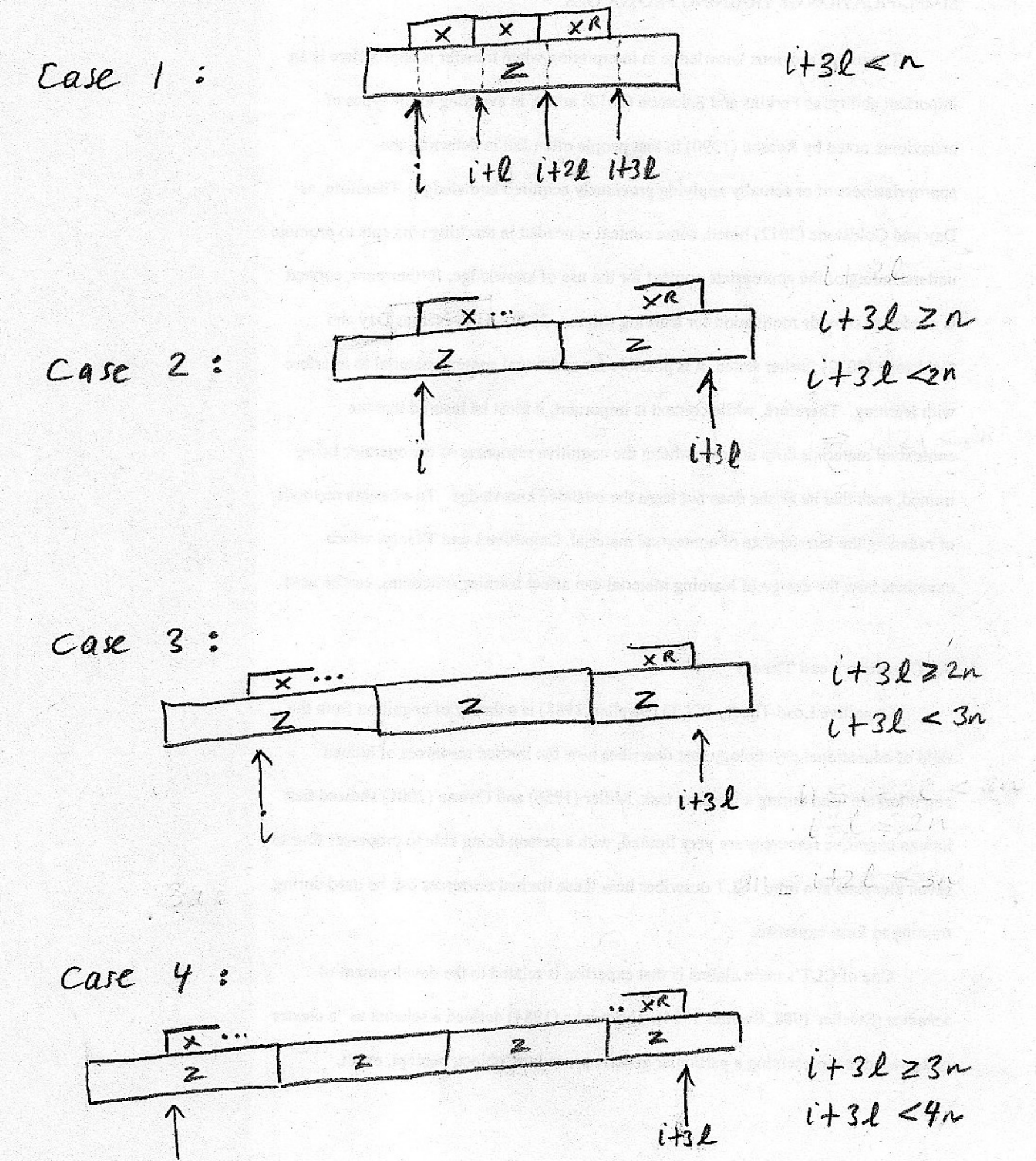}
\caption{Cases of the argument}
\label{rotecon}
\end{center}
\end{figure}

\medskip
Case 1:  If $xxx^R$ lies entirely within
one copy of $z$, it also lies in $\bf r$, which we have already seen
cannot happen, in Theorem~\ref{rote-avoid-thm}.  This case therefore cannot
occur.

\medskip
Case 2:    We use the predicate
$$
\exists i \ \exists \ell \ 
(i+3\ell \geq n) \ \wedge \ (i+3\ell < 2n) \ \wedge \ 
(\forall j < \ell\  \cmp(i+j, i+\ell+j ) ) \ \wedge \ 
(\forall k < \ell\ \cmp(i+k,i+3\ell-1-k) ) $$
to assert that there is a repetition of the form $x x x^R$.

\medskip
Case 3:  We use the predicate
$$
\exists i \ \exists \ell \ 
(i + 3\ell \geq 2n) \ \wedge \ (i+3\ell < 3n) \ \wedge \ 
(\forall j < \ell\ \cmp(i+j, i+\ell+j-n) ) \ \wedge \ 
(\forall k < \ell\ \cmp(i+k,i+3\ell-1-k-n) ) ) .$$

\medskip
Case 4:  We use the predicate 
$$
\exists i \ \exists \ell \ 
(i+3 \ell \geq 3n) \ \wedge \ (i+3\ell < 4n) \ \wedge \ 
(\forall j < \ell\ \cmp(i+j, i+\ell+j-n) ) \ \wedge \ 
(\forall k < \ell\ \cmp(i+k, i+3\ell-1-k-2n ) ) .$$

When we checked each of the cases 2 through 4 with our program,
we discovered that $n = 2F_{3k+2}$ is
never accepted.  Actually, for cases (2)--(4) we had to employ
one additional trick, because the computation for the
predicates as stated required more space than was available on our machine.
Here is the additional trick:  instead of attempting to run the
predicate for all $n$, we ran it only for $n$ whose Fibonacci
representation was of the form $10010^*$.  This significantly restricted
the size of the automata we created and allowed the computation
to terminate.  In fact, we propagated this condition throughout
the predicate.

We therefore eliminated all possibilities for the occurrence
of $x x x^R$ in $z^\omega$ and so it
follows that
no $x x x^R$ occurs in $z^\omega$.
\end{proof}

\begin{openproblem}
How many binary words of length $n$ avoid the pattern $x x x^R$?
Is it polynomial in $n$ or exponential?
How about the number of binary words of length $n$
avoiding $x x x^R$ and simultaneously avoiding
$(2+\alpha)$-powers?
\end{openproblem}

Consider finite words of the form $x x x^R$ having no
proper factor of the form $w w w^R$.    

\begin{conjecture}
For $n = F_{3k+1}$ there are $4$ such words of length $n$.
For $n = F_{3k+1} \pm F_{3k-2}$ there are $2$ such words.
Otherwise there are none.

For $k \geq 3$ the $4$ words of length $n = F_{3k+1}$ are given
by ${\bf r}[p_i..p_i+n-1]$, $i = 1,2,3,4$,
where
\begin{align*}
(p_1)_F &= 1000 (010)^{k-3} 001 \\
(p_2)_F &= 10 (010)^{k-2} 001 \\
(p_3)_F &= 1001000 (010)^{k-3} 001 \\
(p_4)_F &= 1010 (010)^{k-2} 001 
\end{align*}

For $k \geq 3$ the $2$ words of length
$n = F_{3k+1}-F_{3k-2}$ are given by
${\bf r}[q_i..q_i+n-1]$, $i = 1,2$, where
\begin{align*}
(q_1)_F &= 10 (010)^{k-3} 001 \\
(q_2)_F &= 10000 (010)^{k-3} 001 
\end{align*}

For $k \geq 3$ the $2$ words of length
$n = F_{3k+1}+F_{3k-2}$ are given by
${\bf r}[s_i..s_i+n-1]$, $i = 1,2$, where
\begin{align*}
(s_1)_F &= 10 (010)^{k-3} 001 \\
(s_2)_F &= 1000 (01)^{k-2} 001
\end{align*}

\end{conjecture}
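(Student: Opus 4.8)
The plan is to combine the decision procedure for Fibonacci-automatic words with a renormalization (desubstitution) argument, exactly in the style used for the previous theorem of this section. Call a word $W$ of the form $xxx^R$ \emph{critical} if it has no proper factor of the form $www^R$. The first observation is that a critical $W$ of base length $n=|x|$ is completely determined by its base $x$: if we know $x$ occurs in $\mathbf{r}$, say $x = \mathbf{r}[p..p+n-1]$, then $W[t]=\mathbf{r}[p+t]$, $W[n+t]=\mathbf{r}[p+t]$, and $W[2n+t]=\mathbf{r}[p+n-1-t]$ for $0\le t<n$. Using the $\cmp$-style device from the previous theorem to access $W[k]$ through $\mathbf{r}$ — a three-way split according to which of the three blocks of $W$ the index lies in, the last block being reflected — one can write a first-order predicate $\mathrm{Crit}(p,n)$ asserting both that $\mathbf{r}[p..p+n-1]$ has the form of a base and that no proper factor of $W$ equals $www^R$; the inner quantifier runs over a candidate base $u=W[j..j+m-1]$ with $1\le m<n$ and $j+3m\le 3n$, and the tests $W[j+s]=W[j+m+s]$ and $W[j+s]=W[j+3m-1-s]$ again break into finitely many block-alignment cases. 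Adjoining the first-occurrence conjunct $\forall p'<p\ \exists t<n\ \mathbf{r}[p'+t]\neq\mathbf{r}[p+t]$, so that distinct values of $p$ yield distinct words, I would run Procedure~\ref{proc:Fib-auto-decide} to obtain an automaton accepting exactly the $(p,n)_F$ for which $\mathbf{r}[p..p+n-1]$ is the base of a critical word and is a first occurrence. Because this predicate is long, I expect to need the same economy measures as for Cases~2--4 of the previous theorem: propagating throughout the predicate the restriction that $(n)_F$ has one of the shapes corresponding to $n=F_{3k+1}$, $n=2F_{3k-1}=F_{3k+1}-F_{3k-2}$, or $n=2F_{3k}=F_{3k+1}+F_{3k-2}$ (and, separately, verifying emptiness for \emph{all other} $n$), possibly handling one regular family of $n$ at a time. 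From the output automaton the counts ($4$, $2$, $2$, and $0$) and the closed forms $(p_i)_F,(q_i)_F,(s_i)_F$ are then read off from the accepting paths; equivalently, one builds the eight explicit automata described by the stated regular expressions and checks, mechanically, that their union equals the output automaton.

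The genuinely new ingredient, and the main obstacle, is justifying the hypothesis ``$x$ occurs in $\mathbf{r}$'' and, more broadly, the clause ``otherwise there are none'': these are assertions about \emph{all} binary words, so they lie outside what the decision procedure for $\mathbf{r}$ can certify. Here I would argue by renormalization. If $W=xxx^R$ is critical, then $x$, $xx$, and every proper prefix of $W$ avoid $www^R$, so $W$ sits ``maximally close'' to the forbidden pattern. Using the morphism $g$ (equivalently $h^{3}$ transported by $\gamma$), whose fixed point codes under $\rho$ to $\mathbf{r}$, together with the recognizability/synchronization of $g$, the aim is to show that a critical word that is long enough desubstitutes \emph{uniquely} to a critical word one level down, whose base length is the predecessor of $n$ under the maps $F_{3k+1}\mapsto F_{3k+4}$, $2F_{3k-1}\mapsto 2F_{3k+2}$, $2F_{3k}\mapsto 2F_{3k+3}$ (the incidence matrix of $g$ has eigenvalue $\alpha^{3}$, which is why these are the only base lengths that can survive). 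Since $g$ acts on canonical Fibonacci representations essentially by appending a fixed block, the same desubstitution produces the recursions $(p_i)_F\mapsto(p_i')_F$, and likewise for the $q_i,s_i$, matching the stated formulas. The finitely many base cases — small $k$, and the claim that no critical word has base length outside the three families below the threshold — are then settled by the decision procedure applied to $\mathbf{r}$ augmented by a bounded direct search.

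I expect the technical heart to be the desubstitution lemma with the correct boundary behaviour: controlling how the two internal boundaries of $xxx^R$ (one separating $x$ from $x$, one separating $x$ from the reflected $x^R$) interact with the block boundaries of $g$, and proving that criticality is both preserved and reflected by $g$. The two open conjectures stated just above — that every $xxx^R$-avoiding binary word has critical exponent $\ge 2+\alpha$, and the classification of periodic $xxx^R$-avoiding words — strongly suggest that this is precisely where the remaining difficulty in the whole circle of problems is concentrated, and so a complete proof would likely fall out of, or go hand in hand with, resolving those; the decision-procedure component described above would then supply all the explicit data and discharge the base cases.
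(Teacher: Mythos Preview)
The statement you are attempting to prove is not a theorem in the paper: it is explicitly stated as a \emph{conjecture}, in the subsection entitled ``Conjectures and open problems about the Rote-Fibonacci word,'' which begins ``In this section we collect some conjectures we have not yet been able to prove.'' The paper therefore offers no proof to compare against.

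Your proposal is a reasonable outline, and you have correctly located the essential difficulty. The automaton-building portion --- expressing $W=xxx^R$ through $\mathbf r$ via a three-block $\cmp$-style device under the hypothesis that $x$ is a factor of $\mathbf r$, and then reading off the explicit $(p_i)_F$, $(q_i)_F$, $(s_i)_F$ from the accepting paths --- is in the spirit of the paper's methods and should be feasible (with the same space-saving restrictions you mention). But, as you yourself note, this handles only the positive assertions. The clause ``otherwise there are none,'' and more fundamentally the assumption that every critical base $x$ is already a factor of $\mathbf r$, are statements about \emph{all} binary words and lie outside the decision procedure for $\mathbf r$. Your proposed desubstitution lemma --- that a sufficiently long critical $xxx^R$ renormalizes uniquely under $g$ (or $h^3$) to a shorter critical word --- is the natural attack, but you have not proved it, and you acknowledge that the boundary analysis at the two internal seams of $xxx^R$ is where the work lies. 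Since the authors place this conjecture alongside the critical-exponent and periodic-classification conjectures that remain open, your assessment that a complete argument here would likely go hand in hand with those is accurate; what you have written is a credible plan of attack, not a proof.
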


\section{Other sequences}
\label{other}

In this section we briefly apply our method to some other 
Fibonacci-automatic sequences, obtaining several new results.

Consider a Fibonacci analogue of the Thue-Morse sequence 
$${\bf v} = (v_n)_{n \geq 0} = 0111010010001100010111000101 \cdots$$ 
where $v_n$ is the sum of the bits,
taken modulo $2$, of the Fibonacci representation of $n$.  This
sequence was introduced in \cite[Example 2, pp.\ 12--13]{Shallit:1988a}.  

We recall that an {\it overlap} is a word of the form $axaxa$ where
$x$ may be empty; its
order is defined to be $|ax|$.  Similarly, a {\it super-overlap} is
a word of the form $abxabxab$; an example of a super-overlap in English is 
the word {\tt tingalingaling} with the first letter removed.

\begin{theorem}
The only squares in $\bf v$ are of order $4$ and $F_n$ for $n \geq 2$,
and a square of each such order occurs.
The only cubes in $\bf v$ are the strings $000$ and $111$.   
The only overlaps in $\bf v$ are of order $F_{2n}$ for $n \geq 1$, and
an overlap of each such order occurs.  There are no super-overlaps
in $\bf v$.
\end{theorem}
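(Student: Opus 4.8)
The plan is to treat all four assertions uniformly with Procedure~\ref{proc:Fib-auto-decide}, exactly as Section~\ref{proofsf} did for ${\bf f}$. First one records that ${\bf v}$ is Fibonacci-automatic: since $v_n$ is the parity of the number of $1$'s in $(n)_F$, it is generated by the two-state DFAO that keeps the running parity of the digits seen so far in its state, loops on the digit $0$, switches state on the digit $1$, and outputs $0$ in the even state and $1$ in the odd state (this is essentially the automaton of \cite[Example 2]{Shallit:1988a}). Handing this DFAO to the decision procedure lets us test any first-order statement about ${\bf v}$; in each case the assertion follows by reading off the language accepted by the output automaton.

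For the squares I would run
$$(n \geq 1) \ \wedge \ \exists i \ \forall t<n \ {\bf v}[i+t] = {\bf v}[i+n+t] ,$$
whose output automaton accepts the Fibonacci representations of precisely the orders of squares in ${\bf v}$; the claim is that this language is $101 + 10^*$, i.e.\ the representations of $4$ and of $F_m$ for $m \geq 2$, which establishes the ``only'' and ``occurs'' halves simultaneously. For the cubes I would use
$$(n \geq 1) \ \wedge \ \exists i \ \forall t<2n \ {\bf v}[i+t] = {\bf v}[i+n+t] ,$$
expecting the output to accept only $(1)_F = 1$; a cube of order $1$ is just $aaa$, so it then remains only to note that $000$ and $111$ both occur in ${\bf v}$, which is visible in the displayed prefix (or is provable by two one-variable predicates).

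For overlaps, recall that an overlap of order $n$ is precisely a word of length $2n+1$ with period $n$, i.e.\ a word $w$ with $w[t]=w[t+n]$ for $0 \leq t \leq n$; so I would run
$$(n \geq 1) \ \wedge \ \exists i \ \forall t<n+1 \ {\bf v}[i+t] = {\bf v}[i+n+t]$$
and check that the output automaton accepts exactly $1(00)^*$, the representations of $F_{2n}$ for $n \geq 1$ (again one automaton gives both halves). For super-overlaps, a word $abxabxab$ of order $n = |abx| \geq 2$ has length $2n+2$ and period $n$, so I would run
$$(n \geq 2) \ \wedge \ \exists i \ \forall t<n+2 \ {\bf v}[i+t] = {\bf v}[i+n+t]$$
and check that the output automaton accepts the empty language.

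The work here is bookkeeping, not mathematics. The delicate points are getting the index ranges right in the four predicates --- in particular the bounds $n+1$ and $n+2$, which are exactly what separate overlaps and super-overlaps from ordinary powers --- making sure the DFAO for ${\bf v}$ is correct, and correctly identifying the language recognized by each output automaton (for instance, recognizing that $101+10^*$ is $\{4\}\cup\{F_m : m\geq 2\}$ and that $1(00)^*$ is $\{F_{2n} : n\geq 1\}$). The only practical risk is that some intermediate automaton, such as the comparison ${\bf v}[i+t]={\bf v}[i+n+t]$ before projection, is large; if so, the standard remedies used elsewhere in the paper --- aggressive minimization, or propagating a restriction of the free variable to a convenient regular set through the computation, as was done in the case analysis for $z^\omega$ and the pattern $x x x^R$ --- apply here as well.
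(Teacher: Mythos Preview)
Your proposal is correct and follows essentially the same approach as the paper, which simply says ``As before. We omit the details.'' In fact you give considerably more detail than the paper does, with correct predicates (in particular the ranges $t<n+1$ and $t<n+2$ for overlaps and super-overlaps, and the bound $n\geq 2$ for the latter) and correct identifications of the expected output languages.
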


\begin{proof}
As before.  We omit the details.
\end{proof}

We might also like to show that $\bf v$ is recurrent.  The obvious
predicate for this property holding for all words of length $n$ is
$$ \forall i\ \exists j\  ((j>i) \wedge ( \forall t \ ((t<n) \implies
({\bf v}[i+t]={\bf v}[j+t])))) .$$
Unfortunately, when we attempt to run this with our prover, we
get an intermediate NFA of 1159 states that we cannot determinize
within the available space.  

Instead, we rewrite the predicate, setting $k := j-i$ and $u := i+t$.
This gives
$$ \forall i\ \exists j \ (j>i) \wedge 
\forall k \ \forall u \ 
((k \geq 1) \wedge (i=j+k) \wedge (u \geq i) \wedge (u < n+i))
\implies {\bf v}[u]={\bf v}[u+k] .$$
When we run this we discover that $\bf v$ is indeed recurrent.
Here the computation takes a nontrivial 814007 ms, and the largest
intermediate automaton has 625176 states.  This proves


\begin{theorem}
The word $\bf v$ is recurrent.
\end{theorem}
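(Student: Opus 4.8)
The plan is to obtain the result purely mechanically, via Procedure~\ref{proc:Fib-auto-decide}. First I would exhibit a DFAO computing $\bf v$: since $v_n$ is the parity of the number of $1$'s in $(n)_F$, a two-state automaton that reads the canonical Fibonacci representation of $n$ and flips its output bit on every input symbol $1$ computes $\bf v$, so $\bf v$ is Fibonacci-automatic and is a legitimate input to the procedure. Next I would express recurrence as a first-order sentence in the required fragment. Recurrence of $\bf v$ asserts that every factor that occurs at all occurs infinitely often, which for each length $n$ is equivalent to: every occurrence of a length-$n$ factor is followed by another occurrence of the same factor. A direct encoding is
$$ \forall i\ \exists j\ (j>i) \ \wedge\ \forall t\ \bigl( (t<n) \implies {\bf v}[i+t] = {\bf v}[j+t] \bigr), $$
which uses only $0$, $1$, addition, the order relation, quantifiers over $\Enn$, and indexing into $\bf v$. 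Running this predicate through the procedure yields a DFA over $\Sigma_2$ whose language is the set of $(n)_F$ for which the sentence holds; if that automaton accepts every canonical representation, $\bf v$ is recurrent.

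The obstacle here is not mathematical but one of resources. In the direct predicate, after the inner universal quantifier on $t$ and the existential quantifier on $j$ are processed, the procedure must determinize an automaton that simultaneously tracks the three unbounded quantities $i$, $j$, and $n$; in practice this produces an NFA of over a thousand states whose subset construction exhausts the available memory. The standard remedy is to reparametrize so that the intermediate automata read fewer coupled coordinates at once: introduce $k := j-i$ for the gap between the two occurrences and $u := i+t$ for the absolute index being compared, rewriting the sentence as
$$ \forall i\ \exists j\ (j>i)\ \wedge\ \forall k\ \forall u\ \bigl( (k\ge 1)\wedge(j=i+k)\wedge(u\ge i)\wedge(u<n+i) \bigr) \implies {\bf v}[u] = {\bf v}[u+k]. $$
This is logically equivalent to the first predicate but keeps the sub-automata small enough to determinize.

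With the rewritten sentence, I would run the decision procedure and read off the resulting automaton. I expect it to accept the Fibonacci representations of all $n \ge 0$, which is exactly the statement that $\bf v$ is recurrent. The computation is costly --- on the order of several hundred seconds, with the largest intermediate automaton having a few hundred thousand states --- but it terminates, and the correctness of Procedure~\ref{proc:Fib-auto-decide} then delivers the theorem. The main thing to get right is the reparametrization in the previous paragraph: without it the computation is infeasible, and with it the problem reduces to a finite, if large, automaton manipulation.
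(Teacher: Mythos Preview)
Your proposal is correct and follows essentially the same route as the paper: the direct recurrence predicate blows up (the paper reports an intermediate NFA of $1159$ states that cannot be determinized), and the same substitution $k:=j-i$, $u:=i+t$ rescues the computation, which then terminates with the largest intermediate automaton having $625176$ states in about $814$ seconds. Your version even writes the constraint as $j=i+k$, which is the intended relation (the paper's text has the apparent typo $i=j+k$).
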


Another quantity of interest for the Thue-Morse-Fibonacci word $\bf v$
is its subword complexity $\rho_{\bf v}(n)$.
It is not hard to see that it is linear.
To obtain a deeper understanding of it, let us compute the first difference
sequence $d(n) = \rho_{\bf v}(n+1) - \rho_{\bf v}(n)$.   It is easy
to see that $d(n)$ is the number of words $w$ of length $n$ with the
property that both $w0$ and $w1$ appear in $\bf v$.
The natural way to count this is to count those $i$ such that
$t:= {\bf v}[i..i+n-1]$ is the first appearance of that factor in $\bf v$,
and there exists a factor ${\bf v}[k..k+n]$ of length $n+1$ whose
length-$n$-prefix equals $t$ and whose last letter ${\bf v}[k+n]$ differs
from ${\bf v}[i+n]$.  
$$ (\forall j<i \ \exists t<n \ {\bf v}[i+t] \not= {\bf v}[j+t]) \ \wedge \ 
(\exists k\ (\forall u <n\  {\bf v}[i+u]={\bf v}[k+u]) \wedge 
	{\bf v}[i+n] \not= {\bf v}[k+n]).$$
Unfortunately the same blowup appears as in the recurrence predicate,
so once agin we need to substitute, resulting in the predicate

\begin{multline*}
(\forall j<i \ \exists k\geq 1\ \exists v\ 
(i=j+k) \wedge (v \geq j) \wedge (v<n+j) \wedge {\bf v}[u] \not= {\bf v}[u+k] )
\wedge  \\
(\exists l>i \ {\bf v}[i+n] \not= {\bf v}[l+n] ) \wedge \\
(\forall k' \ \forall u' \ 
(k'\geq 1) \wedge (l = i+k') \wedge (u' \geq i) \wedge (v' < n+i)
\implies {\bf v}[k'+u']={\bf v}[u'] ) .
\end{multline*}


From this we obtain a linear representation of rank $46$.  We can now
consider all vectors of the form $u \{ M_0, M_1 \}^*$.  There are only
finitely many and we can construct an automaton out of them computing
$d(n)$.

\begin{theorem}
The first difference sequence $(d(n))_{n \geq 0}$ of the subword complexity
of $\bf v$ is Fibonacci-automatic, and is accepted by the following
machine.

\begin{figure}[H]
\begin{center}
\includegraphics[width=6.5in]{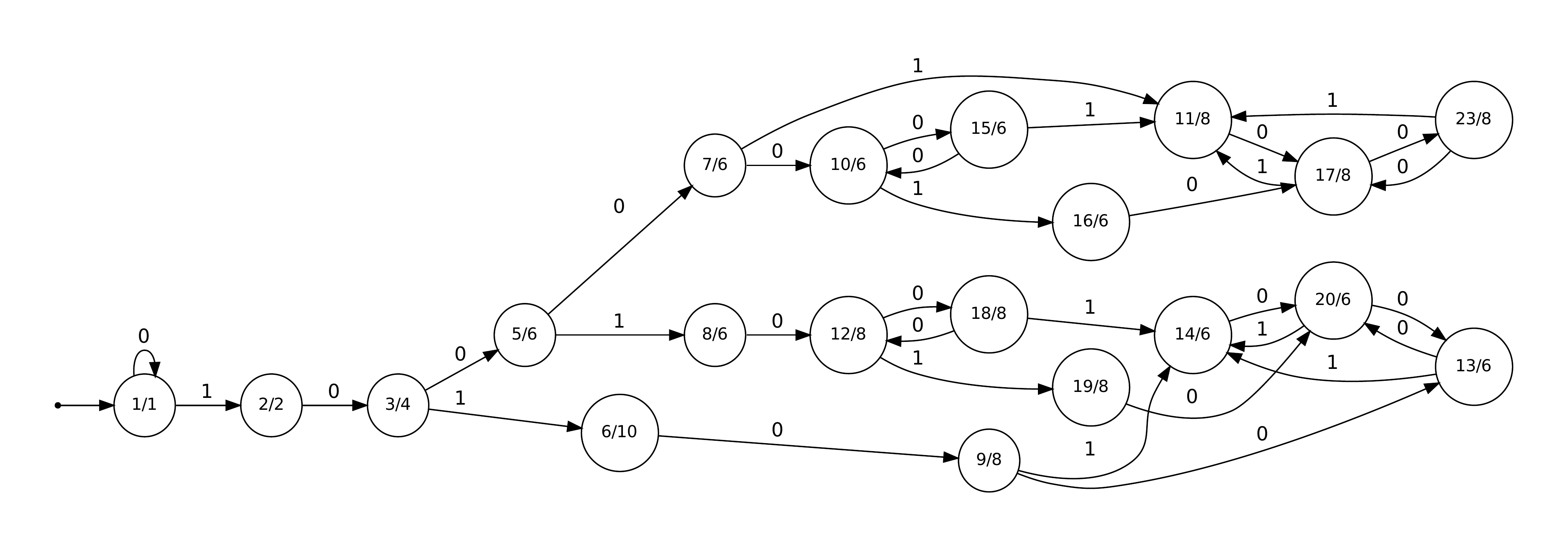}
\caption{Automaton computing $d(n)$}
\label{tmf-specialf}
\end{center}
\end{figure}

\end{theorem}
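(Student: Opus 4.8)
The plan is to convert this counting question into a \emph{linear representation} for $d(n)$ and then show that, when restricted to canonical Fibonacci representations, this representation reaches only finitely many vectors; finiteness is exactly what upgrades ``Fibonacci-regular'' to ``Fibonacci-automatic'', and it is the one point that is not guaranteed in advance.

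First I would take the predicate $\psi(n,i)$ displayed just before the theorem — which holds precisely when ${\bf v}[i..i+n-1]$ is the first occurrence in $\bf v$ of a length-$n$ factor $w$ with both $w0$ and $w1$ occurring in $\bf v$ — so that $d(n) = \#\{\, i \geq 0 \st \psi(n,i)\,\}$. Feeding $\psi$ to Procedure~\ref{proc:Fib-auto-decide} gives a DFA over $\Sigma_2 \times \Sigma_2$ recognizing the pairs $(n,i)_F$ satisfying $\psi$ (with the usual leading-zero padding on the shorter coordinate). Reading this DFA as an $\Enn$-weighted automaton and then summing out the $i$-track — that is, counting, for each fixed $n$-track, the admissible $i$-tracks that complete an accepting run, exactly as in the enumeration method of \cite{Charlier&Rampersad&Shallit:2012} — produces a triple $(u, \{M_0,M_1\}, c)$ consisting of a row vector $u$, square integer matrices $M_0,M_1$, and a column vector $c$, with
\[ d(n) = u\, M_{a_1} M_{a_2} \cdots M_{a_k}\, c \qquad \text{whenever } (n)_F = a_1 a_2 \cdots a_k . \]
Carrying this out carefully — in particular intersecting with the language $\epsilon + 1(0+01)^*$ of canonical representations so that the displayed identity is valid exactly on canonical inputs — yields, as stated in the text, a representation of rank $46$.

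The key step, and the one I expect to be the real obstacle, is to show that the set of reachable row vectors $\{\, u\, M_{a_1}\cdots M_{a_j} \st a_1\cdots a_j \in \epsilon + 1(0+01)^*\,\}$ is \emph{finite}. A Fibonacci-regular sequence need not be Fibonacci-automatic, so this is the nontrivial content of the statement; I would establish it by direct computation, starting from $u$ and repeatedly applying $M_0$ and $M_1$ subject to the no-``$11$'' / no-spurious-leading-digit constraint, until the set of reachable vectors stabilizes. As an independent sanity check one can verify separately that $d(n)$ takes only finitely many values, since a Fibonacci-regular sequence with finite range is automatically Fibonacci-automatic.

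Once finiteness is in hand, the DFAO is read off directly: its states are the finitely many reachable row vectors, the initial state is $u$, the transition on digit $a$ sends a state $r$ to $r M_a$, and the output attached to a state $r$ is the scalar $r\,c$. On input $(n)_F$ the machine reaches the state $u\, M_{(n)_F}$, whose output is $d(n)$; minimizing this automaton produces the one in Figure~\ref{tmf-specialf}, which completes the argument. Apart from the finiteness verification, the only delicate matters are bookkeeping: getting the weighted-automaton projection right and keeping the restriction to canonical representations consistent throughout.
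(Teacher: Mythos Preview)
Your proposal is correct and follows essentially the same route as the paper: build the DFA for the right-special-factor predicate, extract the rank-$46$ linear representation, verify by exhaustive search that the set of reachable vectors $u\{M_0,M_1\}^*$ is finite, and read off the DFAO from those vectors. The paper's own proof is just the terse version of exactly this; your extra care about restricting to canonical representations and your remark that finite range suffices are sound elaborations rather than departures.
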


\section{Combining two representations and avoidability}
\label{additive}

In this section we show how our decidability method can be used to handle
an avoidability question where two different representations arise.

Let $x$ be a finite word over the alphabet
$\Enn^* = \lbrace 1, 2, 3 \ldots \rbrace$.    
We say that $x$ is an {\it additive square\/} if 
$x = x_1 x_2$ with $|x_1| = |x_n|$ and $\sum x_1 = \sum x_2$.
For example, with the usual association of
${\tt a} = 1$, ${\tt b} = 2$, and so forth, up to
${\tt z} = 26$, we have that the English word
{\tt baseball} is an additive square, as {\tt base} and
{\tt ball} both sum to $27$.

An infinite word ${\bf x}$ over $\Enn^*$ is said to {\it avoid
additive squares} if no factor is an additive square.  
It is currently unknown, and a relatively famous open problem,
whether there exists an infinite word over a {\it finite\/} subset
of $\Enn^*$ that avoids additive squares
\cite{Brown&Freedman:1987,Pirillo&Varricchio:1994,Halbeisen&Hungerbuhler:2000}.., although
it is known that additive cubes can be avoided over an alphabet
of size $4$ \cite{Cassaigne&Currie&Schaeffer&Shallit:2013}. 
(Recently this was improved to alphabet size $3$; see \cite{Rao:2013}.)

However, it is easy to avoid additive squares over an {\it infinite}
subset of $\Enn^*$; for example, any sequence that grows sufficiently
quickly will have the desired property.  Hence it is reasonable to
ask about the {\it lexicographically least} sequence over $\Enn^*$
that avoids additive squares.  Such a sequence begins
$$ 1 2 1 3 1 2 1 4 2 1 2 5 2 1 3 1 2 1 3 4 1 2 1 7 2 \cdots  ,$$
but we do not even know if this sequence is unbounded.

Here we consider the following variation on this problem.  Instead
of considering arbitrary sequences, we start with a sequence
${\bf b} = b_0 b_1 b_2 \cdots$ over $\Enn^+$ and from it construct the
sequence $S({\bf b}) = a_1 a_2 a_3 \cdots$ defined by
$$ {\bf a}[i] = {\bf b}[\nu_2 (i)]$$
for $i \geq 1$, where $\nu_2(i)$ is the exponent of the largest power 
of $2$ dividing $i$.  (Note that ${\bf a}$ and ${\bf b}$ are indexed
differently.)  For example, if ${\bf b} = 123\cdots$, then
${\bf a} = 1213121412131215 \cdots$, the so-called ``ruler
sequence''.  It is known that this sequence is squarefree and is,
in fact, the lexicographically least sequence over $\Enn^*$
avoiding squares \cite{Guay-Paquet&Shallit:2009}.

We then ask:  what is the lexicographically least sequence
avoiding additive squares that is of the form
$S({\bf b})$?  The following theorem gives the answer.

\begin{theorem} \label{thm:lex-least-add-sq}
The lexicographically least sequence over $\Enn \delete \{0\}$ of the
form $S({\bf b})$ that avoids additive squares is defined by ${\bf
b}[i] \coloneq F_{i+2}$.
\label{additive-thm}
\end{theorem}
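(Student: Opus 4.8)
The plan is to verify the claim by a two-part argument combining a direct combinatorial calculation with an application of Procedure~\ref{proc:Fib-auto-decide}. The statement asserts two things: first, that the sequence $S({\bf b})$ with ${\bf b}[i] = F_{i+2}$ does avoid additive squares; and second, that no lexicographically smaller admissible ${\bf b}$ works. I would organize the proof around a greedy/inductive construction of ${\bf b}$, showing that at each index $i$, the value $F_{i+2}$ is forced.

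First I would set up notation. For a fixed candidate sequence ${\bf b}$, the word ${\bf a} = S({\bf b})$ has ${\bf a}[i] = {\bf b}[\nu_2(i)]$, so the value at position $i$ depends only on $\nu_2(i)$, the binary valuation. A factor ${\bf a}[i+1..i+2n]$ is an additive square iff $\sum_{t=1}^{n} {\bf b}[\nu_2(i+t)] = \sum_{t=n+1}^{2n} {\bf b}[\nu_2(i+t)]$. The key structural observation is that the multiset of valuations $\{\nu_2(i+1),\dots,\nu_2(i+n)\}$ over a window of length $n$ has a controlled, self-similar description (each valuation $j$ occurs roughly $n/2^{j+1}$ times, with fluctuations governed by the binary expansion of $i$ and $n$). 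I would first prove the positive direction: with ${\bf b}[i] = F_{i+2}$, an additive square would give a nontrivial $\Zee$-linear relation $\sum_j c_j F_{j+2} = 0$ where $c_j$ is the difference between the number of occurrences of valuation $j$ in the two halves of the window; the bound $|c_j| \le $ (something like the number of binary digits in play) together with the fast growth of the Fibonacci numbers and a Zeckendorf-style uniqueness argument forces all $c_j = 0$, which in turn forces the two windows to have identical valuation multisets — and one checks this already fails for a true square in the ruler-like structure, or handle it directly. This is where I expect the analysis to be delicate: controlling the coefficients $c_j$ precisely enough that the Fibonacci growth rate rules out cancellation is the heart of the matter.

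Second, for lexicographic minimality, I would argue inductively: suppose ${\bf b}[0],\dots,{\bf b}[i-1]$ have been shown to equal $F_2,\dots,F_{i+1}$, and consider the smallest legal choice for ${\bf b}[i]$. Positions with valuation exactly $i$ first appear at index $2^i$, and one exhibits an explicit short window straddling such a position whose additive-square condition, given the already-fixed earlier values, reads ${\bf b}[i] \ge F_{i+2}$ (any smaller value creates an additive square using only valuations $< i$ on one side and valuation $i$ on the other, whose sums would then coincide). Combined with part one, ${\bf b}[i] = F_{i+2}$ is exactly the minimum. Here is where the mechanical decision procedure enters: rather than hand-verifying the finitely many ``small window'' base cases and the regularity of the valuation-counting function, I would encode the relevant predicates — ``position $j$ has binary valuation $\ge k$'', which is Fibonacci-automatic in $(j,k)$, together with the additive constraints expressed via the adder $M_{\rm add}$ — and let the implementation confirm that the candidate sequence avoids additive squares and that every earlier deviation is caught. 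The main obstacle, as noted, is the quantitative Fibonacci/Zeckendorf estimate showing no additive square survives; the minimality half is comparatively routine once that is in hand.
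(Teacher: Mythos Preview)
Your overall two-part scaffold (avoidance, then minimality) matches the paper, but the heart of your avoidance argument has a genuine gap.

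You propose to bound the difference coefficients $c_j := A(i+n,n,j)-A(i,n,j)$ (where $A(m,n,j)$ counts occurrences of $j$ among $\nu_2(m+1),\ldots,\nu_2(m+n)$), and then use ``fast growth of the Fibonacci numbers and a Zeckendorf-style uniqueness argument'' to conclude that $\sum_j c_j F_{j+2}=0$ forces all $c_j=0$. This fails on two counts. First, your stated bound (``something like the number of binary digits in play'') is far too weak; the paper proves precisely $|c_j|\le 1$ via a short floor-function lemma, and without that sharp bound no uniqueness argument has a chance. Second, even with $|c_j|\le 1$ the conclusion $c_j\equiv 0$ is simply false: the Fibonacci recurrence itself gives $F_{j+2}-F_{j+1}-F_j=0$, and more generally there are plenty of nontrivial $\{-1,0,1\}$-strings $w$ with $\langle w\rangle_{uF}=0$. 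So the step ``forces all $c_j=0$'' is not available, and with it your downstream plan (rule out identical valuation multisets) collapses.

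What the paper actually does is quite different, and the difference is the whole point: having proved $c_j\in\{-1,0,1\}$, it observes that the map $(i,n)\mapsto (c_j)_j$ is computed by a small DFA reading $i$ and $n$ in \emph{base $2$} (this is where $\nu_2$ lives naturally), and separately that ``$\langle w\rangle_{uF}=0$'' for $w\in\{-1,0,1\}^*$ is recognized by a small DFA in unnormalized \emph{Fibonacci} representation. The product automaton is then checked to accept nothing. In other words, the argument is automata-theoretic and crucially mixes two numeration systems; it never claims (and cannot claim) that only the zero string sums to zero. Your sketch of feeding predicates to the decision procedure misses this base-$2$/Fibonacci hybrid: the position variable $i$ and length $n$ are not Fibonacci-automatic data here, and $M_{\rm add}$ alone will not express $\nu_2$.

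For minimality the paper follows the same hybrid template: it writes a single predicate over base-$2$ inputs $(r,t)$ asserting that decreasing ${\bf a}[r-1]$ (for $r$ a power of $2$) to $t$ creates an additive square within the first $2r$ positions, using further auxiliary DFAs (bitwise add/subtract of a $\{-1,0,1\}$ string with a canonical Fibonacci string, and a ``$\{-1,0,1,2\}$-string sums to $0$'' recognizer). Your inductive hand argument (``exhibit an explicit short window forcing ${\bf b}[i]\ge F_{i+2}$'') is plausible in spirit but you have not supplied the window, and given the failure of the Zeckendorf step above you have no independent proof that $F_{i+2}$ actually suffices at stage $i$. Both halves ultimately rest on the mixed-base automaton check, not on a growth/uniqueness estimate.
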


\begin{proof}
First, we show that ${\bf a} \coloneq S({\bf b}) = \prod_{k=1}^\infty {\bf b}[\nu_2(k)] = \prod_{k=1}^\infty F_{\nu_2(k)+2}$ avoids additive squares.

For $m,n,j \in \Enn$, let $A(m,n,j)$ denote the number of occurrences of $j$ in $\nu_2(m+1), \dots, \nu_2(m+n)$.

(a):   Consider two consecutive blocks of the same size 
say $a_{i+1} \cdots a_{i+n}$ and $a_{i+n+1} \cdots a_{i+2n}$.  
Our goal is to compare the sums $\sum_{i < j \leq i+n} a_j$ and
$\sum_{i+n < j \leq i+2n} a_j$.  

First we prove

\begin{lemma}
Let $m,j \geq 0$ and $n \geq 1$ be  integers.  Let
$A(m, n,j)$ denote the number of occurrences of $j$
in $\nu_2 (m+1), \ldots, \nu_2 (m+n)$.  Then
for all $m, m' \geq 0$ we have
$|A(m', n, j) - A(m, n, j)| \leq 1$.
\label{flemm}
\end{lemma}

\begin{proof}
We start by observing that the number of positive integers $\leq n$
that are divisible by $t$ is exactly $\lfloor n/t \rfloor$.  It 
follows that the number $B(n,j)$ of positive integers $\leq n$ that are
divisible by $2^j$ but not by $2^{j+1}$ is 
\begin{equation}
B(n,j) = \lfloor {n \over {2^j}} \rfloor - \lfloor {n \over {2^{j+1}}} \rfloor .
 \label{fl}
\end{equation}
Now from the well-known identity
$$ \lfloor x \rfloor + \lfloor x + {1 \over 2} \rfloor = \lfloor 2x \rfloor,$$
valid for all real numbers $x$, substitute $x = n/2^{j+1}$ to get
$$ \lfloor {n \over {2^{j+1}}} \rfloor +
\lfloor {n \over {2^{j+1}}} + {1 \over 2} \rfloor = \lfloor {n \over {2^j}} \rfloor ,$$
which, combined with \eqref{fl}, shows that
$$B(n,j) = \lfloor {n \over {2^{j+1}}} + {1 \over 2} \rfloor .$$
Hence 
\begin{equation}
{n \over {2^{j+1}}} - {1 \over 2} \leq B(n,j) <
{n \over {2^{j+1}}} + {1 \over 2} .
\label{flooreq}
\end{equation}

Now the 
number of occurrences of $j$ in $\nu_2(m+1), \ldots, \nu_2(m+n)$
is $A(m,n,j) = B(m+n,j)-B(m,j)$.
From \eqref{flooreq} we get
\begin{equation}
{n \over {2^{j+1}}} - 1 < A(m,n,j) < {n \over {2^{j+1}}} + 1 
\label{flreq2}
\end{equation}
for all $m \geq 0$.
Since $A(m,n,j)$ is an integer, the inequality \eqref{flreq2} 
implies that
$|A(m',n,j)-A(m,n,j)| \leq 1$ for all $m, m'$.  
\end{proof}

Note that for all $i,n \in \Enn$, we have $\sum_{k=i}^{i+n-1} {\bf a}[k] = \sum_{j=0}^{\floor{\log_2(i+n)}} A(i,n,j) F_{j+2}$, so for adjacent blocks of length $n$, $\sum_{k=i+n}^{i+2n-1} {\bf a}[k] - \sum_{k=i}^{i+n-1} {\bf a}[k] = \sum_{j=0}^{\floor{\log_2(i+2n)}} (A(i+n,n,j)-A(i,n,j)) F_{j+2}$.
Hence, ${\bf a}[i \dotdot i+2n-1]$ is an additive square iff $\sum_{j=0}^{\floor{\log_2(i+2n)}} (A(i+n,n,j)-A(i,n,j)) F_{j+2} = 0$, and by above, each $A(i+n,n,j)-A(i,n,j) \in \{-1,0,1\}$.

The above suggests that we can take advantage of ``unnormalized'' Fibonacci representation in our computations.
For $\Sigma \subseteq \Zee$ and $w \in \Sigma^*$, we let the unnormalized Fibonacci representation $\ip{w}_{uF}$ be defined in the same way as $\ip{w}_F$, except over the alphabet $\Sigma$.

In order to use Procedure~\ref{proc:Fib-auto-decide}, we need two auxiliary DFAs: one that, given $i,n \in \Enn$ (in any representation; we found that base 2 works), computes $\ip{A(i+n,n,\_)-A(i,n,\_)}_{uF}$, and another that, given $w \in \{{\tt -1},{\tt 0},{\tt 1}\}^*$, decides whether $\ip{w}_{uF} = 0$.
The first task can be done by a 6-state (incomplete) DFA $M_\text{add22F}$ that accepts the language $\{ z \in (\Sigma_2^2 \times \{{\tt -1},{\tt 0},{\tt 1}\})^* \st \forall j (\pi_3(z)[j] = A(\ip{\pi_1(z)}_2+\ip{\pi_2(z)}_2,\ip{\pi_2(z)}_2,j) - A(\ip{\pi_1(z)}_2,\ip{\pi_2(z)}_2,j))\}$.
The second task can be done by a 5-state (incomplete) DFA $M_\text{1uFisZero}$ that accepts the language $\{ w \in \{{\tt -1},{\tt 0},{\tt 1}\}^* \st \ip{w}_{uF} = 0 \}$.

We applied a modified Procedure~\ref{proc:Fib-auto-decide} to the predicate $n \geq 1 \AND \exists w ({\tt add22F}(i,n,w) \AND {\tt 1uFisZero}(w))$ and obtained as output a DFA that accepts nothing, so ${\bf a}$ avoids additive squares.

Next, we show that ${\bf a}$ is the lexicographically least sequence over $\Enn \delete \{0\}$ of the form $S({\bf b})$ that avoids additive squares.

Note that for all ${\bf x},{\bf y} \in \Enn \delete \{0\}$, $S({\bf x}) < S({\bf y})$ iff ${\bf x} < {\bf y}$ in the lexicographic ordering.
Thus, we show that if any entry ${\bf b}[s]$ with ${\bf b}[s] > 1$ is changed to some $t \in [1,{\bf b}[s]-1]$, then ${\bf a} = S({\bf b})$ contains an additive square using only the first occurrence of the change at ${\bf a}[2^s-1]$.
More precisely, we show that for all $s,t \in \Enn$ with $t \in [1,F_{s+2}-1]$, there exist $i,n \in \Enn$ with $n \geq 1$ and $i+2n < 2^{s+1}$ such that either 
($2^s-1 \in [i,i+n-1]$ and $\sum_{k=i+n}^{i+2n-1} {\bf a}[k] - \sum_{k=i}^{i+n-1} {\bf a}[k] + t = 0$)
or
($2^s-1 \in [i+n,i+2n-1]$ and $\sum_{k=i+n}^{i+2n-1} {\bf a}[k] - \sum_{k=i}^{i+n-1} {\bf a}[k] - t = 0$).

Setting up for a modified Procedure~\ref{proc:Fib-auto-decide}, we use the following predicate, which says ``$r$ is a power of $2$ and changing ${\bf a}[r-1]$ to any smaller number results in an additive square in the first $2r$ positions", and six auxiliary DFAs. Note that all arithmetic and comparisons are in base 2.
\begin{align*}
&{\tt powOf2}(r) \AND \forall t ((t \geq 1 \AND t<r \AND {\tt canonFib}(t)) \IMPLY \exists i \exists n (n \geq 1 \AND i+2n < 2r \AND {} \\
&\quad ((i<r \AND r\leq i+n \AND \forall w ({\tt add22F}(i,n,w) \IMPLY \forall x ({\tt bitAdd}(t,w,x) \IMPLY {\tt 2uFisZero}(x)))) \OR {} \\
&\quad \hphantom{(}(i+n<r \AND r \leq i+2n \AND \forall w ({\tt add22F}(i,n,w) \IMPLY \forall x ({\tt bitSub}(t,w,x) \IMPLY {\tt 2uFisZero}(x))))))).
\end{align*}
\vspace{-2em}
\begin{align*}
L(M_\text{powOf2}) &= \{w \in \Sigma_2^* \st \exists n (w=(2^n)_2)\}. \\
L(M_\text{canonFib}) &= \{w \in \Sigma_2^* \st \exists n (w=(n)_F)\}. \\
L(M_\text{bit(Add/Sub)}) &= \{z \in (\Sigma_2 \times \{{\tt -1},{\tt 0},{\tt 1}\} \times \{{\tt -1},{\tt 0},{\tt 1},{\tt 2}\})^* \st \forall i (\pi_1(z)[i] \pm \pi_2(z)[i] = \pi_3(z)[i]) \}. \\
L(M_\text{2uFisZero}) &= \{w \in \{{\tt -1},{\tt 0},{\tt 1},{\tt 2}\}^* \st \ip{w}_{uF} = 0\}.
\end{align*}
We applied a modified Procedure~\ref{proc:Fib-auto-decide} to the above predicate and auxiliary DFAs and obtained as output $M_\text{powOf2}$, so ${\bf a}$ is the lexicographically least sequence over $\Enn \delete \{0\}$ of the form $S({\bf b})$ that avoids additive squares.
\end{proof}

\section{Enumeration}
\label{enumer}

Mimicking the base-$k$ ideas in
\cite{Charlier&Rampersad&Shallit:2012}, we can also mechanically enumerate
many aspects of Fibonacci-automatic sequences.  We do this
by encoding the factors having the
property in terms of paths of an automaton.  This gives the
concept of {\it Fibonacci-regular sequence} as previously studied in
\cite{Allouche&Scheicher&Tichy:2000}.  Roughly speaking, a sequence
$(a(n))_{n \geq 0}$ taking values in $\Enn$ is Fibonacci-regular
if the set of sequences
$$ \{ (a([xw]_F)_{w \in \Sigma_2^*} \ : \ x \in \Sigma_2^* \} $$
is finitely generated.  Here we assume that $a([xw]_F)$ evaluates
to $0$ if $xw$ contains the string $11$.  Every Fibonacci-regular
sequence $(a(n))_{n \geq 0}$ has a {\it linear representation}
of the form $(u, \mu, v)$ where $u$ and $v$ are row and column
vectors, respectively, and $\mu:\Sigma_2 \rightarrow \Enn^{d \times d}$
is a matrix-valued morphism, where $\mu(0) = M_0$ and $\mu(1) = M_1$
are $d \times d$ matrices for some $d \geq 1$, such that
$$a(n) = u \cdot \mu(x) \cdot v$$
whenever $[x]_F = n$.  The {\it rank} of the representation is
the integer $d$.
As an example,
we exhibit a rank-$6$ linear representation for the sequence $a(n) = n+1$:
\begin{align*}
u &= [1 \ 2 \ 2 \ 3 \ 3 \ 2] \\
M_0 &= \left[ \begin{array}{cccccc}
	1 & 1 & 0 & 0 & 0 & 0 \\
	0 & 0 & 0 & 0 & 0 & 0 \\
	0 & 1 & 0 & 1 & 1 & 0 \\
	0 & 0 & 1 & 1 & 1 & 1 \\
	0 & 0 & 0 & 0 & 0 & 0 \\
	0 & 0 & 0 & 0 & 0 & 0 
	\end{array} \right ] \\
M_1 &= \left[ \begin{array}{cccccc}
        0 & 0 & 0 & 0 & 0 & 0 \\
	1 & 0 & 0 & 0 & 0 & 0 \\
	0 & 0 & 0 & 0 & 0 & 0 \\
	0 & 0 & 0 & 0 & 0 & 0 \\
	0 & 0 & 1 & 1 & 0 & 0 \\
	0 & 0 & 0 & 1 & 0 & 0 
	\end{array} \right ] \\
v &= [1 \ 0 \ 0 \ 0 \ 0 \ 0 ]^T .
\end{align*}
This can be proved by a simple induction on the claim that 
$$u \cdot \mu(x) = [ x_F + 1 \ (1x)_F + 1 \ (10x)_F - x_F 
\ (100x)_F - x_F \ (101x)_F - (1x)_F \ (1001x)_F - (101x)_F ] $$
for strings $x$.

Recall that if $\bf x$ is an infinite word, then the subword
complexity function $\rho_{\bf x} (n)$ counts the number of
distinct factors of length $n$.  Then, in analogy with
\cite[Thm.~27]{Charlier&Rampersad&Shallit:2012}, we have

\begin{theorem}
If $\bf x$ is Fibonacci-automatic, then the subword complexity
function of $\bf x$ is Fibonacci-regular.
\end{theorem}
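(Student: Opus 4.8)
The plan is to adapt the argument of \cite[Thm.~27]{Charlier&Rampersad&Shallit:2012} from base $k$ to Fibonacci representation. The first step is to express the subword complexity of ${\bf x}$ as a count of \emph{first occurrences}. For each $n \geq 0$, the distinct factors of length $n$ are in bijection with their least starting positions, so
$$ \rho_{\bf x}(n) = \#\{\, i \geq 0 \st \psi(n,i)\,\}, \qquad \text{where } \psi(n,i) \st \forall j<i\ \exists t<n\ {\bf x}[j+t] \neq {\bf x}[i+t] . $$
The formula $\psi$ uses only indexing into the Fibonacci-automatic word ${\bf x}$, addition, order, and quantifiers, hence is admissible for Procedure~\ref{proc:Fib-auto-decide}, which returns a DFA $M$ over the alphabet $\Sigma_2^2$ accepting exactly $\{\, (n,i)_F \st \psi(n,i)\,\}$. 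Since ${\bf x}$ is automatic, $\rho_{\bf x}(n)$ is finite for every $n$, so for each fixed $n$ only finitely many $i$ are accepted by $M$.

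The second step is to read a linear representation for $\rho_{\bf x}$ off of $M$. First intersect $M$ with the regular languages forcing both projections of the input to be canonical (no factor $11$, and no spurious leading $[0,0]$). Then, scanning a Fibonacci representation $a_1 a_2 \cdots a_m$ of $n$, track the vector indexed by the states of $M$ whose $q$-th entry counts the strings $b_1 b_2 \cdots b_m$ driving $M$ from its start state to $q$ on input $[a_1,b_1]\cdots[a_m,b_m]$; appending a digit $a$ of $n$ multiplies this vector by the fixed matrix $M_a := M_{[a,0]}+M_{[a,1]}$, where $M_{[a,b]}$ is the transition matrix of the intersected automaton on the symbol $[a,b]$. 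Pairing the final vector with the $0/1$ indicator vector $v$ of the accepting states recovers $\rho_{\bf x}(n)$, which exhibits a triple $(u, \mu, v)$ with $\mu(0)=M_0$, $\mu(1)=M_1$ of the kind described in Section~\ref{enumer}; hence $\rho_{\bf x}$ is Fibonacci-regular. Correctness of the triple, and in particular the fact that $u\,\mu(x)\,v$ does not depend on how many leading zeros $x$ carries, is a routine induction on $|x|$.

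The step I expect to be the real obstacle is the mismatch in representation lengths between the two coordinates of $(n,i)_F$: a first occurrence of a length-$n$ factor may sit at a position $i$ whose canonical Fibonacci representation is longer than $(n)_F$, so $M$ must already be processing nontrivial digits of $i$ while still reading leading zeros of $n$. This is precisely where the automaticity of ${\bf x}$ is used beyond producing $M$: the appearance function of a Fibonacci-automatic sequence is $O(n)$ (by the usual desubstitution argument — every length-$n$ factor lies inside the image, under a bounded number of iterations of the underlying Fibonacci-type morphism, of a fixed constant-length prefix, and this image has length $O(n)$), so $(i)_F$ can exceed $(n)_F$ by at most a bounded number $c$ of digits. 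That bounded overhang is absorbed once and for all into the initial row vector, replacing $u$ by $u\,M_0^{c}$ (equivalently, starting the computation from the all-zeros sink state of $M$ and then reading $c$ padding symbols), which is exactly what makes $\rho_{\bf x}(n)$ computable from the canonical representation of $n$ by the finitely many matrices $M_0, M_1$ while remaining insensitive to further leading zeros. Once this is in place, the rest is bookkeeping on the transition table of $M$, entirely parallel to the base-$k$ development of \cite{Charlier&Rampersad&Shallit:2012}.
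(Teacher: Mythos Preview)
Your proof is correct and follows exactly the approach the paper intends: the paper states the theorem ``in analogy with \cite[Thm.~27]{Charlier&Rampersad&Shallit:2012}'' and then immediately passes to the specific computation for~$\bf f$, so what you have written is precisely the fleshed-out version of that analogy---build the DFA for first occurrences, then sum out the second coordinate to obtain a linear representation. You have also correctly isolated the one point where the Fibonacci case needs an extra word of justification over the base-$k$ case, namely that the first-occurrence position~$i$ of a length-$n$ factor satisfies $i = O(n)$, so the overhang in representation length is bounded; your desubstitution sketch via the morphic description of Fibonacci-automatic sequences (available through \cite{Shallit:1988a}) is the right way to see this. The parenthetical ``starting the computation from the all-zeros sink state of $M$'' is garbled---you mean starting from the initial state and pre-reading $c$ symbols with first coordinate~$0$---but the surrounding sentence (replace $u$ by $u\,M_0^{c}$) is what you actually need and is correct.
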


Using our implementation, we can obtain a linear representation
of the subword complexity function for $\bf f$.  To do so, we
use the predicate
$$ \{ (n,i)_F \ : \ \forall i' < i \ {\bf f}[i..i+n-1] \not=
{\bf f}[i'..i'+n-1] \} ,$$  
which expresses the assertion that the factor of length
$n$ beginning at position $i$ has never appeared before.
Then, for each $n$, the number of corresponding $i$ gives
$\rho_{\bf f}(n)$.  When we do this for $\bf f$, we get the
following linear representation $(u', \mu', v')$ of 
rank $10$:
\begin{align*}
u' &= [0 \ 0 \  0 \ 1 \ 0 \ 0 \ 0 \ 0 \ 0\  0] \\
M'_0 &= \left[ \begin{array}{ccccccccccc}
0 & 1 & 1 & 0 & 0 & 0 & 0 & 0 & 0 & 0 \\
0 & 0 & 0 & 0 & 0 & 0 & 0 & 0 & 0 & 0\\
1 & 0 & 0 & 0 & 0 & 0 & 0 & 0 & 0 & 0\\
0 & 0 & 0 & 1 & 0 & 0 & 0 & 0 & 0 & 0\\
1 & 0 & 0 & 0 & 0 & 1 & 0 & 0 & 0 & 0\\
0 & 0 & 0 & 0 & 0 & 0 & 0 & 1 & 0 & 0\\
0 & 0 & 0 & 0 & 0 & 1 & 0 & 1 & 0 & 0\\
0 & 0 & 0 & 0 & 0 & 1 & 0 & 1 & 0 & 0\\
1 & 0 & 0 & 0 & 0 & 0 & 0 & 0 & 0 & 0\\
0 & 0 & 0 & 0 & 0 & 0 & 0 & 1 & 0 & 0
\end{array}
\right] \\
M'_1 &= \left[ \begin{array}{ccccccccccc}
0 & 0 & 0 & 0 & 1 & 0 & 0 & 0 & 0 & 1 \\
0 & 0 & 0 & 0 & 0 & 0 & 0 & 0 & 1 & 0 \\
0 & 0 & 0 & 0 & 0 & 0 & 1 & 0 & 0 & 0 \\
0 & 0 & 0 & 0 & 0 & 0 & 1 & 0 & 1 & 0 \\
0 & 0 & 0 & 0 & 0 & 0 & 0 & 0 & 0 & 0 \\
0 & 0 & 0 & 0 & 0 & 0 & 1 & 0 & 0 & 0 \\
0 & 0 & 0 & 0 & 0 & 0 & 0 & 0 & 0 & 0 \\
0 & 0 & 0 & 0 & 0 & 0 & 1 & 0 & 0 & 1 \\
0 & 0 & 0 & 0 & 0 & 0 & 0 & 0 & 0 & 0 \\
0 & 0 & 0 & 0 & 0 & 0 & 0 & 0 & 0 & 0
\end{array}
\right] \\
v' &= [1\ 0\ 1\ 1\ 1\ 1\ 1\ 1\ 1\ 1]^T 
\end{align*}

To show that this computes the function $n+1$, it suffices
to compare the values of the linear representations 
$(u, \mu, v)$ and $(u', \mu', v')$ for all strings of length
$\leq 10 + 6 = 16$ (using \cite[Corollary 3.6]{Berstel&Reutenauer:2011}).
After checking this, 
we have reproved the following classic
theorem of Morse and Hedlund \cite{Morse&Hedlund:1940}:

\begin{theorem}
The subword complexity function of $\bf f$ is $n+1$.
\label{sturmcomp}
\end{theorem}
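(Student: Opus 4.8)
The plan is to prove this purely mechanically, as an instance of the enumeration framework described above, rather than via the classical route through Sturmian words. Since $\bf f$ is Fibonacci-automatic, the preceding theorem guarantees that $\rho_{\bf f}$ is Fibonacci-regular and hence possesses a linear representation. I would obtain such a representation concretely by running the decision procedure on the ``first occurrence'' predicate
$$\{ (n,i)_F \ : \ \forall i' < i \ {\bf f}[i..i+n-1] \neq {\bf f}[i'..i'+n-1] \},$$
which for each fixed $n$ is satisfied by exactly $\rho_{\bf f}(n)$ values of $i$, and then converting the resulting automaton into a matrix-valued morphism by the standard ``count the accepting paths'' construction. This yields the explicit rank-$10$ triple $(u',\mu',v')$ displayed above.

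In parallel, I would verify by induction that the rank-$6$ triple $(u,\mu,v)$ displayed above computes $a(n) = n+1$. The induction is on the claim that $u\cdot\mu(x)$ equals the explicit row vector
$$[\ x_F + 1 \quad (1x)_F + 1 \quad (10x)_F - x_F \quad (100x)_F - x_F \quad (101x)_F - (1x)_F \quad (1001x)_F - (101x)_F\ ],$$
and each of the two cases $x \mapsto x0$ and $x \mapsto x1$ reduces to the Fibonacci recurrence together with elementary identities among these linear combinations.

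With both linear representations in hand, the key step is to invoke \cite[Corollary 3.6]{Berstel&Reutenauer:2011}: two linear representations of ranks $d_1$ and $d_2$ define the same function on $\Sigma_2^*$ provided they agree on all sufficiently short words, and for this it suffices to take words of length at most $d_1+d_2$. Here $d_1 = 10$ and $d_2 = 6$, so one checks that $u'\mu'(x)v' = u\mu(x)v$ for every $x \in \Sigma_2^*$ with $|x| \le 16$ — a finite computation easily carried out by the implementation. Once this finite check succeeds, it follows that $\rho_{\bf f}(n) = x_F + 1 = n+1$ for all $n$, which is the theorem of Morse and Hedlund.

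The only real obstacle is bookkeeping rather than mathematics. One must be confident that the automaton produced from the first-occurrence predicate is correct — which is itself guaranteed by the soundness of Procedure~\ref{proc:Fib-auto-decide} — and that the automaton-to-linear-representation conversion correctly counts, for each $n$, the number of Fibonacci representations $(i)_F$ leading to an accepting state; in particular one must ensure that leading-zero padding and the exclusion of inputs containing $11$ are handled consistently between the two representations being compared. No case analysis on the internal structure of $\bf f$ is needed.
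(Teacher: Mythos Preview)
Your proposal is correct and follows essentially the same approach as the paper: obtain the rank-$10$ linear representation of $\rho_{\bf f}$ from the first-occurrence predicate, take the rank-$6$ linear representation of $n+1$ (verified by the inductive claim you cite), and invoke \cite[Corollary~3.6]{Berstel&Reutenauer:2011} to conclude equality after a finite check on words of length at most $16$. The paper carries out exactly this computation, so there is nothing to add.
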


We now turn to a result of Fraenkel and Simpson \cite{Fraenkel&Simpson:1999}.
They computed the exact number of squares appearing in the
finite Fibonacci words $X_n$; this was previously estimated by
\cite{Crochemore:1981}.

There are two variations:  we could count the number of distinct
squares in $X_n$, or what Fraenkel and Simpson called the number of
``repeated squares'' in $X_n$ (i.e., the total number of {\it
occurrences} of squares in $X_n$).

To solve this using our approach, we generalize the problem to
consider any length-$n$ prefix of $X_n$, and not simply the prefixes
of length $F_n$.

We can easily write down predicates for these.  The first represents
the number of distinct squares in ${\bf f}[0..n-1]$:
\begin{multline*}
L_{\rm ds} :=
\{ (n,i,j)_F \ : \ (j \geq 1) \text{ and } (i+2j \leq n) \text{ and }
	{\bf f}[i..i+j-1] = {\bf f}[i+j..i+2j-1]  \\
\text{ and } \forall i' < i \ 
{\bf f}[i'..i'+2j-1] \not= {\bf f}[i..i+2j-1] \} .
\end{multline*}
This predicate asserts that ${\bf f}[i..i+2j-1]$ is a square 
occurring in ${\bf f}[0..n-1]$ and 
that furthermore it is the first occurrence of this particular
string in ${\bf f}[0..n-1]$.

The second represents the total number of occurrences of squares
in ${\bf f}[0..n-1]$:
$$ L_{\rm dos} := \{ (n,i,j)_F \ : \ (j \geq 1) \text{ and }
(i+2j \leq n) \text{ and }
        {\bf f}[i..i+j-1] = {\bf f}[i+j..i+2j-1] \} .$$
This predicate asserts that ${\bf f}[i..i+2j-1]$ is a square 
occurring in ${\bf f}[0..n-1]$.

We apply our method to the second example, leaving the first to the reader.
Let $b(n)$ denote the number of
occurrences of squares in ${\bf f}[0..n-1]$.  
First, we use our method to find
a DFA $M$ accepting $L_{\rm dos}$.
This (incomplete) DFA has 27 states.

Next, we compute matrices $M_0$ and $M_1$, indexed by states of $M$,
such that $(M_a)_{k,l}$ counts the number of edges (corresponding to the
variables $i$ and $j$) from state $k$ to state
$l$ on the digit $a$ of $n$.   We also compute a vector $u$ corresponding
to the initial state of $M$ and a vector $v$ corresponding to the final
states of $M$.   This gives us the
following linear representation of the sequence
$b(n)$:
if $x = a_1 a_2 \cdots a_t$ is the Fibonacci representation of $n$,
then
\begin{equation}
b(n) = u M_{a_1} \cdots M_{a_t} v ,
\label{linrep}
\end{equation}
which, incidentally, gives a fast algorithm for computing $b(n)$ for
any $n$.

Now let $B(n)$ denote the number of square occurrences in 
the finite Fibonacci word $X_n$.  This corresponds to considering
the Fibonacci representation of the form $10^{n-2}$; that is,
$B(n+1) = b([10^{n-1}]_F)$.
The matrix $M_0$ is the following $27 \times 27$ array
\begin{equation}
\left[
\begin{array}{ccccccccccccccccccccccccccc}
1&1&0&0&0&0&0&0&0&0&0&0&0&0&0&0&0&0&0&1&0&0&0&0&0&0&0 \\
1&0&0&0&0&0&0&0&0&0&0&0&0&0&0&0&0&0&0&0&1&0&0&0&0&0&0 \\
1&0&0&0&0&0&0&0&0&0&0&0&0&0&0&1&0&0&0&0&0&0&0&0&0&0&0 \\
0&0&0&0&0&0&0&0&0&0&0&0&0&0&0&0&0&0&0&0&1&0&0&0&0&0&0 \\
0&0&0&0&0&0&0&0&0&0&0&0&0&0&0&0&0&0&0&0&0&0&0&0&0&0&1 \\
1&0&0&0&0&0&0&0&0&0&0&0&0&0&0&1&0&0&0&0&0&0&0&0&0&0&0 \\
0&0&0&0&0&0&0&0&0&0&0&0&0&0&0&1&0&0&0&0&0&0&1&0&0&0&0 \\
0&0&0&0&0&0&0&0&0&0&0&0&0&0&0&0&0&1&0&0&0&0&0&0&0&0&0 \\
0&0&0&1&0&0&0&0&0&0&0&0&0&0&0&0&0&0&0&0&1&0&0&0&0&0&0 \\
0&0&0&0&0&0&0&0&0&0&0&0&0&0&0&0&0&1&0&0&0&0&0&0&0&0&0 \\
1&0&0&0&0&1&0&0&0&0&0&0&0&0&0&0&0&0&0&1&0&0&0&0&0&0&0 \\
0&0&0&0&0&0&0&0&0&0&0&0&0&0&0&0&0&0&0&0&0&0&0&0&0&0&0 \\
1&1&0&0&0&0&0&0&0&0&0&0&0&0&0&0&0&0&0&1&0&0&0&0&0&0&0 \\
0&0&0&0&0&0&0&0&0&0&1&0&0&0&0&0&0&0&0&0&0&0&1&0&0&1&0 \\
0&0&0&0&0&0&0&0&0&0&0&0&0&0&0&0&0&0&0&1&1&0&0&0&0&0&0 \\
0&0&0&0&0&0&0&0&0&1&0&0&0&0&0&0&0&0&0&0&1&0&0&0&0&0&0 \\
1&0&0&0&0&0&0&0&0&0&0&0&0&0&0&0&0&0&0&0&1&0&0&0&0&0&0 \\
0&0&0&0&0&0&0&0&0&0&0&0&0&0&0&0&0&1&0&0&0&0&0&0&0&0&0 \\
0&0&0&0&0&0&0&0&0&0&1&0&0&0&0&0&0&0&0&0&0&0&1&0&0&1&0 \\
0&0&0&0&0&0&0&0&0&0&0&0&0&0&0&0&0&0&0&0&0&0&0&0&0&0&1 \\
0&0&0&1&0&0&0&0&0&0&0&0&0&0&0&0&0&0&0&0&1&0&0&0&0&0&0 \\
0&0&0&0&0&0&0&0&0&0&0&0&0&0&0&0&0&0&0&0&0&1&0&0&0&0&0 \\
0&0&0&0&0&0&0&0&0&0&0&1&0&0&0&0&0&0&0&0&0&0&0&0&0&0&0 \\
0&0&0&0&0&0&0&0&0&0&0&0&0&0&0&0&0&0&0&0&0&1&0&0&0&0&0 \\
0&0&0&0&0&0&0&0&0&0&0&0&0&0&0&0&0&0&0&0&1&0&0&0&0&0&0 \\
0&0&0&0&0&0&0&0&0&0&0&0&0&0&0&0&0&0&1&0&0&0&0&0&0&0&0 \\
0&0&0&0&0&0&0&0&0&0&0&0&0&0&0&0&0&0&0&1&1&0&0&0&0&0&0 
\end{array}
\right]
\end{equation}
and has minimal polynomial
$$ X^4 (X-1)^2(X+1)^2(X^2-X-1)^2.$$
It now follows
from the theory of linear recurrences
that there are constants $c_1, c_2, \ldots, c_8$
such that
$$ B(n+1) = (c_1n + c_2) \alpha^n + (c_3n+c_4) \beta^n + c_5n+c_6 +
	(c_7n+c_8)(-1)^n $$
for $n \geq 3$, where $\alpha = (1+\sqrt{5})/2$, $\beta = (1-\sqrt{5})/2$
are the roots of $X^2 - X - 1$.
We can find these constants by computing $B(4), B(5), \ldots, B(11)$
(using Eq.~\eqref{linrep}) and then solving for the values
of the constants $c_1, \ldots, c_8$.

When we do so, we find 
\begin{align*}
c_1 &= {2 \over 5}  \quad\quad & c_2 &= {-{2\over{25}}}\sqrt{5} - 2 \\
c_3 &= {2 \over 5} \quad\quad & c_4 &= {{2\over{25}}}\sqrt{5} - 2 \\
c_5 &= 1 \quad\quad & c_6 &= 1  \\
c_7 &= 0 \quad\quad & c_8 &= 0
\end{align*}

A little simplification,
using the fact that $F_n = (\alpha^n - \beta^n)/(\alpha - \beta)$, leads to 
\begin{theorem}
Let $B(n)$ denote the number of square occurrences in $X_n$.  Then
$$B(n+1) = {4 \over 5} n F_{n+1} - {2 \over 5} (n+6) F_{n} - 4F_{n-1} + n + 1 $$
for $n \geq 3$.
\end{theorem}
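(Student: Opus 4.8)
The plan is to exploit the linear representation of the sequence $b(n)$ --- the number of occurrences of squares in the prefix ${\bf f}[0..n-1]$ --- that has already been constructed above, together with the identity $B(n+1) = b([10^{n-1}]_F) = u M_1 M_0^{n-1} v$. Since the only matrix that gets iterated here is $M_0$, the asymptotic behaviour of $B(n+1)$ is controlled entirely by the eigenstructure of $M_0$, and in particular the sequence $\bigl(B(n+1)\bigr)_n$ satisfies the scalar linear recurrence whose characteristic polynomial is the minimal polynomial $X^4(X-1)^2(X+1)^2(X^2-X-1)^2$ of $M_0$, once $n$ is large enough that the transient contribution of the nilpotent factor $X^4$ has died away (this is the source of the hypothesis $n \geq 3$).

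Discarding the $X^4$ factor, the remaining characteristic polynomial $(X-1)^2(X+1)^2(X^2-X-1)^2$ has degree $8$, so for $n \geq 3$ there exist constants $c_1,\dots,c_8$ with
$$ B(n+1) = (c_1 n + c_2)\alpha^n + (c_3 n + c_4)\beta^n + (c_5 n + c_6) + (c_7 n + c_8)(-1)^n, $$
where $\alpha = (1+\sqrt5)/2$ and $\beta = (1-\sqrt5)/2$ are the roots of $X^2-X-1$; the double root $1$ contributes the polynomial term $c_5 n + c_6$ and the double root $-1$ the oscillating term $(c_7 n + c_8)(-1)^n$. To pin down the eight constants I would compute the eight values $B(4), B(5), \dots, B(11)$ directly and exactly from the linear representation --- each is a single product of small integer matrices, hence trivial and mechanically checkable --- and solve the resulting $8 \times 8$ linear system. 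This gives $c_1 = c_3 = \tfrac{2}{5}$, $c_2 = -\tfrac{2}{25}\sqrt5 - 2$, $c_4 = \tfrac{2}{25}\sqrt5 - 2$, $c_5 = c_6 = 1$, and $c_7 = c_8 = 0$, so in fact the $(-1)^n$ term disappears entirely.

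The last step is a purely algebraic rewriting: substitute the constants and convert the $\alpha^n, \beta^n$ closed form into Fibonacci numbers via $F_m = (\alpha^m - \beta^m)/\sqrt5$ and $\alpha^m + \beta^m = F_{m-1} + F_{m+1}$. Separating the coefficient of $n$ from the constant term, the $\sqrt5$ pieces pair up and combine into Fibonacci numbers, and collecting everything yields
$$ B(n+1) = \tfrac{4}{5} n F_{n+1} - \tfrac{2}{5}(n+6) F_n - 4 F_{n-1} + n + 1 $$
for $n \geq 3$, as claimed. I expect the only real nuisance to be the bookkeeping in this final simplification --- correctly matching the $n\alpha^n$ and $n\beta^n$ terms against $nF_{n+1}$ versus $nF_n$ so that the irrational parts cancel --- since the construction of the linear representation and the computation of the minimal polynomial of $M_0$ are already in hand, and the evaluation of $B(4),\dots,B(11)$ is a finite computation.
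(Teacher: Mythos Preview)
Your proposal is correct and follows essentially the same approach as the paper: use the linear representation to write $B(n+1) = u M_1 M_0^{\,n-1} v$, read off the minimal polynomial of $M_0$ to get the general closed form with eight undetermined constants, solve for them using the values $B(4),\dots,B(11)$, and then rewrite the $\alpha^n,\beta^n$ expression in terms of Fibonacci numbers via Binet's formula. The paper does exactly this, with the same constants and the same final simplification.
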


This statement corrects a small error in
Theorem 2 in \cite{Fraenkel&Simpson:1999} (the coefficient of $F_{n-1}$
was wrong; note that their $F$ and their Fibonacci words
are indexed differently from ours),
which was first
pointed out to us by Kalle Saari.

In a similar way, we can count the cube occurrences in $X_n$.
Using analysis exactly like the square case, we easily find

\begin{theorem}
Let $C(n)$ denote the number of cube occurrences in the Fibonacci
word $X_n$.  Then for $n \geq 3$ we have
$$ C(n) = (d_1 n+ d_2) \alpha^n + (d_3 n+d_4) \beta^n + d_5n + d_6$$
where
\begin{align*}
d_1 &= {{3-\sqrt{5}}\over {10}} \quad\quad & d_2 &= {{17}\over {50}}\sqrt{5} - {3 \over 2} \\
d_3 &= {{3+\sqrt{5}}\over {10}} \quad\quad & d_4 &= -{{17}\over {50}}\sqrt{5} - {3 \over 2} \\
d_5 &= 1 \quad\quad & d_6 &= -1 .
\end{align*}
\end{theorem}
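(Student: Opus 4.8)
The plan is to mimic the treatment of the square-counting function $B(n)$ in the previous theorem. First I would write a predicate describing occurrences of cubes inside a length-$n$ prefix of $\bf f$:
$$ L_{\rm doc} := \{ (n,i,j)_F \ : \ (j \geq 1) \ \text{and} \ (i+3j \leq n) \ \text{and} \ \forall t<2j\ {\bf f}[i+t] = {\bf f}[i+t+j] \}, $$
which asserts that ${\bf f}[i..i+3j-1]$ is a cube of order $j$ occurring in the prefix ${\bf f}[0..n-1]$. Running Procedure~\ref{proc:Fib-auto-decide} on this predicate produces an (incomplete) DFA $M$ accepting $L_{\rm doc}$, with input alphabet $\Sigma_2^3$, reading the Fibonacci digits of $n$, $i$, $j$ in parallel.

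Next I would convert $M$ into a linear representation for the counting function $c(n)$, the number of cube occurrences in ${\bf f}[0..n-1]$, exactly as in the square case: form matrices $N_0, N_1$ indexed by the states of $M$, where $(N_a)_{k,\ell}$ is the number of transitions of $M$ from state $k$ to state $\ell$ whose first component (the digit of $n$) equals $a$, together with a row vector $u$ for the initial state and a column vector $v$ for the final states; then $c(n) = u N_{a_1} \cdots N_{a_t} v$ whenever $a_1 a_2 \cdots a_t = (n)_F$. Since $X_n = {\bf f}[0..F_n-1]$ and $(F_n)_F = 10^{n-2}$, this gives $C(n) = c([10^{n-2}]_F) = u N_1 N_0^{n-2} v$ for $n \geq 2$.

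Then I would compute the minimal polynomial of $N_0$, or more economically of $N_0$ restricted to the cyclic subspace generated by the row vector $u N_1$. From the shape of the claimed answer I expect it to divide $X^k(X-1)^2(X^2-X-1)^2$ for some small $k$: the roots $\alpha,\beta$ of $X^2-X-1$ each appear with multiplicity $2$, the eigenvalue $1$ with multiplicity $2$, and (unlike the square case) there is no $(-1)^n$ contribution. The general theory of linear recurrences then guarantees that, for $n$ past the index where the transient $X^k$ part vanishes (one checks $n \geq 3$ suffices),
$$ C(n) = (d_1 n + d_2)\alpha^n + (d_3 n + d_4)\beta^n + d_5 n + d_6 $$
for suitable constants. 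Finally I would pin down $d_1, \ldots, d_6$ by evaluating $C(3), C(4), \ldots, C(8)$ via the linear representation and solving the resulting $6\times 6$ linear system, then simplifying with $F_n = (\alpha^n - \beta^n)/\sqrt{5}$ if a Fibonacci-number form is desired.

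The main obstacle here is not conceptual but careful bookkeeping: one must ensure the DFA counts each cube occurrence exactly once (so that the pair $(i,j)$ is not double-counted), that the passage $M \mapsto (u, N_0, N_1, v)$ isolates the digit carrying information about $n$ as the one driving the matrix product while $i$ and $j$ are summed out, and that the closed form is asserted only from the index at which the nilpotent $X^k$ part has died out. All of these points were already handled in the square case, so ``analysis exactly like the square case'' genuinely suffices.
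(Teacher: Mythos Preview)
Your proposal is correct and follows exactly the approach the paper intends: it literally says ``using analysis exactly like the square case,'' and you have faithfully carried out that analysis with the appropriate cube predicate, linear representation, minimal-polynomial computation, and interpolation for the constants. Your anticipated minimal polynomial $X^k(X-1)^2(X^2-X-1)^2$ and the absence of a $(-1)^n$ term are exactly right for the stated closed form.
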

We now turn to a question of Chuan and Droubay.  Let us consider the
prefixes of $\bf f$.  For each prefix of length $n$, form all of its
$n$ shifts, and let us count the number of these shifts that are
palindromes; call this number $d(n)$.
(Note that in the case where a prefix is a power,
two different shifts could be identical; we count these with 
multiplicity.)  

Chuan \cite[Thm.~7, p.~254]{Chuan:1993b} proved

\begin{theorem}
For $i > 2$ we have
$d(F_i) = 0$ iff $i \equiv \modd{0} {3}$.
\label{chuan93-thm}
\end{theorem}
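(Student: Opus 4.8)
The plan is to follow the mechanical methodology of this paper. The key observation is that $d(n) = 0$ holds precisely when \emph{no} cyclic shift of ${\bf f}[0..n-1]$ is a palindrome; in particular the ``with multiplicity'' convention in the definition of $d$ is irrelevant to the theorem, so it suffices to express the \emph{existence} of a palindromic cyclic shift as a first-order predicate (no counting or linear representation is needed) and run Procedure~\ref{proc:Fib-auto-decide}, then read off the answer for inputs of the form $(F_i)_F$.

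The one complication is that forming a cyclic shift of a length-$n$ word involves indexing modulo $n$, with $n$ a free variable, which cannot be done directly in the logical language. I would handle this exactly as in the Rote-Fibonacci argument of Section~\ref{rotefib}: a shift amount $s$ and a position $j$ both satisfy $0 \le s < n$ and $0 \le j < n$, so every index that arises lies in $[0,2n)$, and on that range one can simulate ${\bf f}[k \bmod n] = {\bf f}[k' \bmod n]$ by the gadget
\begin{multline*}
\cmp(k,k') :=
(((k<n) \ \wedge\ (k'<n)) \implies ({\bf f}[k] = {\bf f}[k'])) \ \wedge \\
(((k<n)\ \wedge \ (k' \geq n)) \implies ({\bf f}[k] = {\bf f}[k'-n])) \ \wedge \\
(((k \geq n)\ \wedge \ (k'<n)) \implies ({\bf f}[k-n] = {\bf f}[k'])) \ \wedge \\
(((k \geq n)\ \wedge \ (k' \geq n)) \implies ({\bf f}[k-n] = {\bf f}[k'-n])) .
\end{multline*}
The cyclic shift of ${\bf f}[0..n-1]$ by $s$ has $j$th letter ${\bf f}[(s+j) \bmod n]$, so it is a palindrome if and only if ${\bf f}[(s+j)\bmod n] = {\bf f}[(s+n-1-j)\bmod n]$ for every $j<n$, and both arguments stay in $[0,2n)$. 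Hence $d(n) \geq 1$ is captured by
$$ \exists s \ (s<n) \ \wedge \ \forall j \ ((j<n) \implies \cmp(s+j,\ s+n-1-j)) , $$
and feeding the negation of this predicate to Procedure~\ref{proc:Fib-auto-decide} yields a DFA $A$ accepting $\{ (n)_F \st d(n) = 0 \}$.

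To finish, I would intersect $A$ with the language $\{ 10^{m} \st m \geq 1 \}$ of canonical Fibonacci representations $(F_i)_F = 10^{i-2}$ of the Fibonacci numbers $F_i$ with $i \geq 3$. The claim to verify (by inspecting the minimized machine) is that this intersection is exactly $10(000)^*$; equivalently, among inputs $10^m$ with $m \ge 1$, $A$ accepts precisely those with $m \equiv 1 \pmod 3$. Since $10^{i-2} \in 10(000)^*$ iff $i-2 \equiv 1 \pmod 3$ iff $i \equiv 0 \pmod 3$, this reproves Chuan's theorem: $d(F_i) = 0 \iff i \equiv 0 \pmod 3$ for all $i > 2$.

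Everything here is routine except for one bookkeeping point, which is the only place care is needed: one must check that every argument fed to $\cmp$ lies in $[0,2n)$, so that the ``$\bmod\ n$'' simulation is faithful — this holds because $0 \le s,j < n$ forces $0 \le s+j < 2n$ and $0 \le s+n-1-j < 2n$. (If one wanted the exact value of $d(n)$ for all $n$ rather than only the zero/nonzero dichotomy, dropping the quantifier $\exists s$ and applying the path-counting construction of Section~\ref{enumer} to the resulting binary relation on $(n,s)$ would produce a linear representation of $d$; but Theorem~\ref{chuan93-thm} needs only the argument above.)
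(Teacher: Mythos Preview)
Your proposal is correct and, for the specific statement of Theorem~\ref{chuan93-thm}, is essentially the same as the paper's approach: build an automaton for the predicate ``some cyclic shift of ${\bf f}[0..n-1]$ is a palindrome'' (using the $\cmp$ gadget to handle the mod-$n$ indexing), negate it to get the set $\{(n)_F : d(n)=0\}$, and then inspect what happens along inputs of the form $10^*$. The paper additionally proves, via a rank-$27$ linear representation and a finite-monoid computation, that $d(n)\in\{0,1,2\}$ for all $n$, and builds separate automata for $P_0,P_1,P_2$; you correctly observe that this extra characterization is not needed for the zero/nonzero dichotomy at Fibonacci indices and can be skipped, which streamlines the argument without changing its substance.
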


\begin{proof}
Along the way we 
actually prove a lot more, characterizing $d(n)$ for all $n$, not
just those $n$ equal to a Fibonacci number.

We start by showing that $d(n)$ takes only three values:  $0$, $1$, and $2$.
To do this, we construct an automaton accepting the language
$$ \{ (n,i)_F \ : \ (0 \leq i < n) \ \wedge\ {\bf f}[i..n-1]{\bf f}[0..i-1]
	\text{ is a palindrome } \} .$$
From this we construct the linear representation
$(u, M_0, M_1, v)$ of $d(n)$ as discussed
above; it has rank $27$.

The range of $c$ is finite if
the monoid ${\cal M} = \langle M_0, M_1 \rangle$ is finite.
This can be checked
with a simple queue-based algorithm, and $\cal M$ turns out to have
cardinality $151$.    From these a simple computation proves
$$\lbrace uMv \ : \ M \in {\cal M} \rbrace 
= \lbrace 0, 1, 2 \rbrace,$$
and so our claim about the range of $c$ follows.

Now that we know the range of $c$ we can create predicates 
$P_0(n), P_1(n), P_2(n)$
asserting
that (a) there are no length-$n$ shifts that are palindromes 
(b) there is exactly one
shift that is a palindrome and (c) more than one shift is a palindrome,
as follows:
$$P_0 :  \neg \exists i,  (0 \leq i < n), {\bf f}[i..n-1]{\bf f}[0..i-1]
        \text{ is a palindrome } $$
$$P_1 : \exists i,  (0 \leq i < n), {\bf f}[i..n-1]{\bf f}[0..i-1]
        \text{ is a palindrome  and } \neg\exists j \not= i 
	(0 \leq j < n), {\bf f}[j..n-1]{\bf f}[0..j-1] $$
$$ P_2 : \exists i, j, 0 \leq i < j < n 
{\bf f}[i..n-1]{\bf f}[0..i-1] \text{ and }
{\bf f}[j..n-1]{\bf f}[0..j-1] \text{ are both palindromes }$$
For each one, we can compute a finite automaton characterizing the
Fibonacci representations of those $n$ for which $d(n)$ equals,
respectively, $0$, $1$, and $2$.

For example, we computed the automaton corresponding to $P_0$, and it is
displayed in Figure~\ref{noshifts} below.

\begin{figure}[H]
\begin{center}
\includegraphics[width=4in]{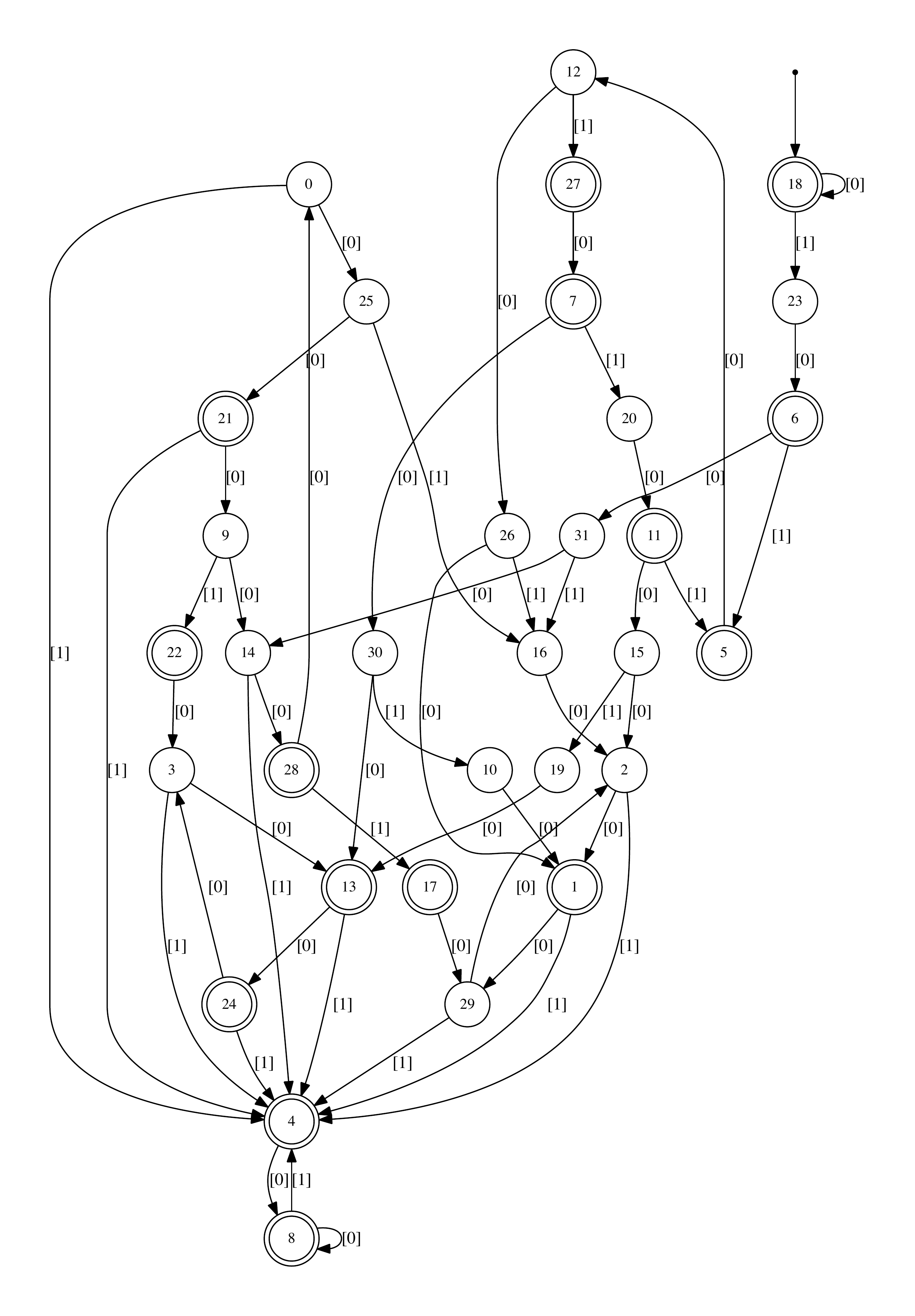}
\caption{Automaton accepting lengths of prefixes for which
no shifts are palindromes}
\label{noshifts}
\end{center}
\end{figure}

By tracing the path labeled $10^*$ starting at the initial state
labeled $18$, we see that the ``finality'' of the states encountered
is ultimately periodic with period $3$, proving Theorem~\ref{chuan93-thm}.
\end{proof}

To finish this section, we reprove a result of Kolpakov and
Kucherov \cite{Kolpakov&Kucherov:1999a}.  Recalling the
definition of maximal repetition from Section~\ref{repe-subsec}, they
counted the number $\mr(F_n)$ of occurrences of maximal repetitions in the
prefix of $\bf f$ of length $F_n$:

\begin{theorem}
For $n \geq 5$ we have $\mr(F_n) = 2F_{n-2} - 3$.
\end{theorem}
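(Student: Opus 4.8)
The plan is to proceed exactly as in the enumeration of square and cube occurrences above, treating $\mr(N)$ --- the number of occurrences of maximal repetitions in the length-$N$ prefix ${\bf f}[0..N-1]$ --- as a Fibonacci-regular sequence and then specializing to $N = F_n$. First I would write a predicate $\mathrm{MR}(N,i,m)$ asserting that ${\bf f}[i..i+m-1]$ is a maximal repetition of the finite word ${\bf f}[0..N-1]$: this is the conjunction of $i+m \leq N$, the condition $p({\bf f}[i..i+m-1]) \leq m/2$, and boundary-aware versions of (b) and (c) from Section~\ref{repe-subsec}, namely ``$i+m = N$ \ $\vee$ \ $p({\bf f}[i..i+m-1]) < p({\bf f}[i..i+m])$'' and ``$i = 0$ \ $\vee$ \ $p({\bf f}[i..i+m-1]) < p({\bf f}[i-1..i+m-1])$'', all expressible using the least-period machinery behind Figure~\ref{leastp}. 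Running this through the implementation and forming, as in \cite{Charlier&Rampersad&Shallit:2012}, the matrices $M_0, M_1$ indexed by the states of the resulting DFA and counting edges for the variables $i$ and $m$, together with vectors $u$ (initial state) and $v$ (final states), yields a linear representation with $\mr(N) = u\,M_{a_1}\cdots M_{a_t}\,v$ whenever $a_1\cdots a_t = (N)_F$.

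Since $(F_n)_F = 10^{n-2}$ for $n \geq 2$, we have $\mr(F_n) = u\,M_1 M_0^{\,n-2}\,v$. I would then compute the minimal polynomial of $M_0$; from the conjectured answer, and the identity $2F_{n-2}-3 = \frac{2}{\alpha-\beta}(\alpha^{n-2}-\beta^{n-2}) - 3$, its ``relevant'' factor is expected to be $(X^2-X-1)(X-1)$, possibly multiplied by a transient factor $X^k$ (and conceivably a $(X+1)$). The theory of linear recurrences then forces $\mr(F_n) = c_1\alpha^n + c_2\beta^n + c_3$ (up to a possible $c_4(-1)^n$ term) for all $n$ past the transient; computing $\mr(F_5), \mr(F_6), \dots$ either from the linear representation or by brute force on $X_5, X_6, \dots$ and solving the resulting small linear system pins down the $c_i$, and simplification via $F_n = (\alpha^n-\beta^n)/(\alpha-\beta)$ should give $\mr(F_n) = 2F_{n-2}-3$ for $n \geq 5$. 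More robustly: once the order $r$ of the linear recurrence satisfied by $(\mr(F_n))_n$ is known, it suffices to verify $\mr(F_n) = 2F_{n-2}-3$ for the $r$ consecutive values $n = 5, \dots, 4+r$, since both sides obey that recurrence.

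The main obstacle will be the automaton blow-up: the predicate $\mathrm{MR}$ contains the least-period predicate, which hides a universal quantifier over all smaller candidate periods, so --- as already witnessed for unbordered factors and for the Rote-Fibonacci least-period computations --- the intermediate automata over the three-track alphabet $\Sigma_2^3$ can be large, and getting the construction to terminate within available memory may require the restriction tricks used elsewhere in the paper (for instance propagating $i+m \leq N$ through the predicate, or treating the right-boundary case separately). A secondary but genuine subtlety is making the predicate count runs of $X_n$ \emph{exactly} as Kolpakov and Kucherov do: a run of $\bf f$ that straddles position $F_n-1$ induces a shorter run of $X_n$ whenever its truncation still has exponent at least $2$, and it is precisely the boundary clauses above that must capture this, so that the closed form obtained is $\mr(F_n)$ and not a count of a slightly different family.
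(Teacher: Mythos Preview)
Your proposal is correct and follows essentially the same route as the paper: build the three-variable predicate for ``${\bf f}[i..j]$ is a maximal repetition of ${\bf f}[0..n-1]$'' (with the boundary-aware clauses you describe), extract a linear representation, specialize to $(F_n)_F = 10^{n-2}$, compute the minimal polynomial of $M_0$, and solve for the constants.

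One point where your expectation is off: the minimal polynomial the paper actually obtains is $X^4(X^2-X-1)(X-1)^2(X+1)^2$, not merely $(X^2-X-1)(X-1)$ times a power of $X$ and perhaps $(X+1)$. The double roots at $1$ and $-1$ mean the general closed form is
\[
\mr(F_n) = e_1\alpha^n + e_2\beta^n + e_3 n + e_4 + (e_5 n + e_6)(-1)^n,
\]
so your ansatz $c_1\alpha^n + c_2\beta^n + c_3 + c_4(-1)^n$ is too small to solve against. In the event the extra coefficients $e_3, e_5, e_6$ all vanish, but you only learn that after setting up and solving the six-by-six system (or after invoking your ``more robustly'' fallback, which is the cleaner way to finish and which the paper could equally well have used).
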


\begin{proof}
We create an automaton for the language
$$ \lbrace (n,i,j)_F \ : \ 0 \leq i \leq j < n \text{ and } {\bf f}[i..j]
	\text{ is a maximal repetition of } {\bf f}[0..n-1] \rbrace ,$$
using the predicate
\begin{multline*}
(i \leq j) \ \wedge\ (j<n)\ \wedge \  
\exists p \text{ with } 1 \leq p \leq (j+1-i)/2 \text{ such that } \\
	( (\forall k\leq j-i-p \ {\bf f}[i+k]= {\bf f}[i+k+p]) \ \wedge \  \\
	(i \geq 1) \implies (\forall q \text{ with } 1 \leq q \leq p \ 
		\exists \ell \leq j-i-q+1 \ {\bf f}[i-i+\ell] \not= {\bf f}[i-1+\ell+q]) 
		\ \wedge\ \\
	(j+1\leq n-1) \implies (\forall r \text{ with } 1 \leq r \leq p\ 
		\exists m \leq j+1-r-i \ {\bf f}[i+m] \not= {\bf f}[i+m+r] )  ) .
\end{multline*}
Here the second line of the predicate specifies that there is a period $p$
of ${\bf f}[i..j]$ corresponding to a repetition of exponent at least $2$.
The third line specifies that no period $q$ of ${\bf f}[i-1..j]$ (when
this makes sense) can be $\leq p$, and the fourth line specifies that
no period $r$ of ${\bf f}[i..j+1]$ (when $j+1 \leq n-1$) can be
$\leq p$.  

From the automaton we deduce a linear representation
$(u, \mu, v)$ of rank 59.  Since $(F_n)_F = 10^{n-2}$, it suffices
to compute the minimal polynomial of $M_0 = \mu(0)$.  When we
do this, we discover it is $X^4(X^2 - X - 1)(X-1)^2(X+1)^2$.
It follows from the theory of linear recurrences that
$$\mr(F_n) = e_1 \alpha^n  + e_2 \beta^n + e_3 n + e_4 + (e_5n + e_6)(-1)^n $$
for constants $e_1, e_2, e_3, e_4, e_5, e_6$ and $n \geq 6$.
When we solve for $e_1, \ldots, e_6$ by using the first few values
of $\mr(F_n)$ (computed from the linear representation or directly)
we discover that $e_1 = (3\sqrt{5} - 5)/5$,
$e_2 = (-3\sqrt{5} -5)/5$, $e_3 = e_5 = e_6 = 0$, and $e_4 = -3$.
From this the result immediately follows.
\end{proof}

In fact, we can prove even more.  

\begin{theorem}
For $n \geq 0$ the difference $\mr(n+1) - \mr(n)$ is either $0$ or $1$.
Furthermore there is a finite automaton with 10 states that
accepts $(n)_F$ precisely when $\mr(n+1) - \mr(n) = 1$.
\end{theorem}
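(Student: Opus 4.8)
The plan is to follow exactly the linear-representation strategy used in the proof of the previous theorem, but applied to the sequence of consecutive differences. The proof of that theorem already produces a rank-$59$ linear representation $(u,\mu,v)$ with $\mr(n) = u\,\mu((n)_F)\,v$, where $\mr(n)$ is the number of occurrences of maximal repetitions in the prefix ${\bf f}[0..n-1]$. First I would build, from $(u,\mu,v)$, a linear representation of the shifted sequence $n\mapsto \mr(n+1)$. This is legitimate because the successor map $(n)_F \mapsto (n+1)_F$ on Fibonacci representations is realized by a finite-state transducer, and Fibonacci-regular sequences are closed under composition with such transducers (the analogue of the base-$k$ closure properties used in \cite{Allouche&Scheicher&Tichy:2000,Charlier&Rampersad&Shallit:2012}). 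Combining this with the representation $(u,\mu,-v)$ of $-\mr$ via the standard direct-sum construction for the difference of two Fibonacci-regular sequences yields a linear representation $(u',\mu',v')$ of $g(n) \coloneq \mr(n+1)-\mr(n)$, of rank at most $118$.

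To establish that $g(n)\in\{0,1\}$ for every $n\geq 0$, I would compute the multiplicative monoid $\mathcal{M}=\langle \mu'(0),\mu'(1)\rangle$ by the same queue-based saturation used in the proof of Theorem~\ref{chuan93-thm} (where it produced a $151$-element monoid from a rank-$27$ representation). Assuming $\mathcal{M}$ is finite, the set of values $\{\,u'\,\mu'(x)\,v' : x\in \epsilon+1(0+01)^*\,\}$ of $g$ is finite, and evaluating it --- taking care to restrict to canonical representations, e.g.\ by tracking the state of the automaton for $\epsilon+1(0+01)^*$ alongside the monoid element --- should give exactly $\{0,1\}$. Note this simultaneously proves that $\mr$ is nondecreasing, which is not obvious a priori since a maximal repetition of ${\bf f}[0..n-1]$ whose right end is $n-1$ may cease to be maximal in ${\bf f}[0..n]$.

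Once $g$ is known to be a finitely-valued Fibonacci-regular sequence, it is Fibonacci-automatic, so $\{(n)_F : g(n)=1\}$ is a regular language; constructing and minimizing its DFA gives the second assertion, with the state count $10$ simply read off the minimal automaton. (A more direct but messier alternative would phrase ``$g(n)=1$'' as a first-order statement comparing the number of pairs $(i,n)$ for which ${\bf f}[i..n]$ is a maximal repetition of ${\bf f}[0..n]$ against the number of pairs $(i,n-1)$ maximal in ${\bf f}[0..n-1]$ but not in ${\bf f}[0..n]$; since neither count is obviously bounded by $1$, this still seems to require the linear-representation machinery to control them, so I would use the difference-of-representations route above.)

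The main obstacle is twofold. First, one must get the composition with the successor transducer correct so that $(u',\mu',v')$ genuinely computes $g$ on every canonical input; leading-zero and non-canonicity bookkeeping is precisely where such constructions tend to fail. Second, one must verify that $\mathcal{M}$ is finite and small enough for the value computation to terminate, and that the subsequent minimization really collapses to exactly $10$ states --- a finite but potentially sizable computation, and the step on which the whole argument rests.
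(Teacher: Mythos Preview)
Your approach is sound in principle and would work, but the paper takes a shorter combinatorial route that you almost stumble on in your parenthetical and then dismiss.  The key observation you are missing is a bijection: every maximal repetition ${\bf f}[i..j]$ of ${\bf f}[0..n-1]$ either survives verbatim as a maximal repetition of ${\bf f}[0..n]$ (when $j<n-1$, or when $j=n-1$ but the extension breaks the period), or extends by one symbol to a maximal repetition ${\bf f}[i..n]$ of ${\bf f}[0..n]$.  Hence $\mr(n+1)-\mr(n)$ is not a difference of two counts at all --- it equals the single count of indices $i$ for which ${\bf f}[i..n]$ is a maximal repetition of ${\bf f}[0..n]$ while ${\bf f}[i..n-1]$ is \emph{not} a maximal repetition of ${\bf f}[0..n-1]$.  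This already proves nonnegativity combinatorially, with no linear algebra.

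This buys a great deal computationally.  The predicate above in $(n,i)$ yields a linear representation of rank~$12$ rather than the rank~$\leq 118$ you would get from the successor-transducer construction on a rank-$59$ representation; its matrix monoid has only $61$ elements, and checking $uMv\in\{0,1\}$ over it is trivial.  Once the range is known to be $\{0,1\}$, the condition $g(n)=1$ is simply $\exists i\,[\text{the predicate holds}]$, which is first-order, so the $10$-state automaton drops out of the decision procedure directly --- you do not need to invoke the general ``finitely-valued regular $\Rightarrow$ automatic'' machinery.  Your route would also arrive at the theorem, but it leans on the finiteness of a much larger monoid (which, as you correctly flag, is not guaranteed a priori) and on getting the successor composition right; the paper's bijection sidesteps both difficulties.
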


\begin{proof}
Every maximal repetition ${\bf f}[i..j]$
of ${\bf f}[0..n-1]$ is either a maximal
repetition of ${\bf f}[0..n]$ with $j \leq n-1$, or is a maximal
repetition with $j = n-1$ that, when considered in ${\bf f}[0..n]$,
can be extended one character to the right to become
one with $j = n$.  So the only maximal repetitions of ${\bf f}[0..n]$
not (essentially) counted by $\mr(n)$ are those such that 
\begin{multline}
{\bf f}[i..n] \text{ is a maximal repetition of } {\bf f}[0..n] 
\text{ and } \\
{\bf f}[i..n-1] \text{ is {\it not\/} a maximal repetition of } 
{\bf f}[0..n-1].
\label{condit}
\end{multline}

We can easily create a predicate asserting this latter condition, and
from this obtain the linear representation of $\mr(n+1) - \mr(n)$:
\begin{align*}
u &= [0\ 0\ 0\ 0\ 0\ 1\ 0\ 0\ 0\ 0\ 0\ 0\ ] \\
\mu(0) &= \left[ \begin{array}{cccccccccccc}
0&0&0&0&0&0&0&0&0&1&0&0\\
0&0&0&0&0&0&0&0&0&0&1&0\\
0&0&0&1&0&0&0&0&0&1&0&0\\
0&0&0&0&0&0&0&1&0&0&0&0\\
0&0&0&0&0&0&0&0&0&0&0&0\\
0&0&0&0&0&1&0&0&0&0&0&0\\
0&0&0&0&0&0&0&0&0&0&1&0\\
0&0&0&0&1&0&0&0&0&0&0&0\\
0&0&0&0&1&0&0&0&0&0&0&0\\
0&0&0&0&0&0&0&0&0&0&0&1\\
0&0&0&0&0&0&1&0&0&0&0&0\\
0&0&0&0&0&0&0&0&0&1&0&0
\end{array} \right] \\
\mu(1) &= \left[ \begin{array}{cccccccccccc}
0&0&0&0&0&0&0&0&0&0&0&0\\
0&0&0&0&0&0&0&0&0&0&0&0\\
0&0&0&0&0&0&0&0&0&0&0&0\\
0&0&0&0&0&0&0&0&0&0&0&0\\
0&0&0&0&0&0&0&0&1&0&0&0\\
0&1&1&0&0&0&0&0&0&0&0&0\\
1&1&0&0&0&0&0&0&0&0&0&0\\
0&0&0&0&0&0&0&0&0&0&0&0\\
0&0&0&0&0&0&0&0&0&0&0&0\\
0&0&0&0&0&0&0&0&0&0&0&0\\
0&1&1&0&0&0&0&0&0&0&0&0\\
0&0&0&0&0&0&0&0&1&0&0&0
\end{array} \right] \\
v &=  [0\ 0\ 0\ 0\ 1\ 0\ 0\ 0\ 1\ 0\ 0\ 1 ] \\
\end{align*}

We now use the trick we previously used for the proof of Theorem~\ref{chuan93-thm}; the monoid generated by $\mu(0)$ and $\mu(1)$ has size $61$ and for each
matrix $M$ in this monoid we have $u M v \in \lbrace 0, 1 \rbrace$.
It follows that
$\mr(n+1) - \mr(n) \in \lbrace 0, 1 \rbrace$ for all $n \geq 0$.

Knowing this, we can now build an automaton accepting those $n$ for
which there exists an $i$ for which \eqref{condit} holds.  When we do so
we get the automaton depicted below in Figure~\ref{maxrepp}.

\begin{figure}[H]
\begin{center}
\includegraphics[width=6in]{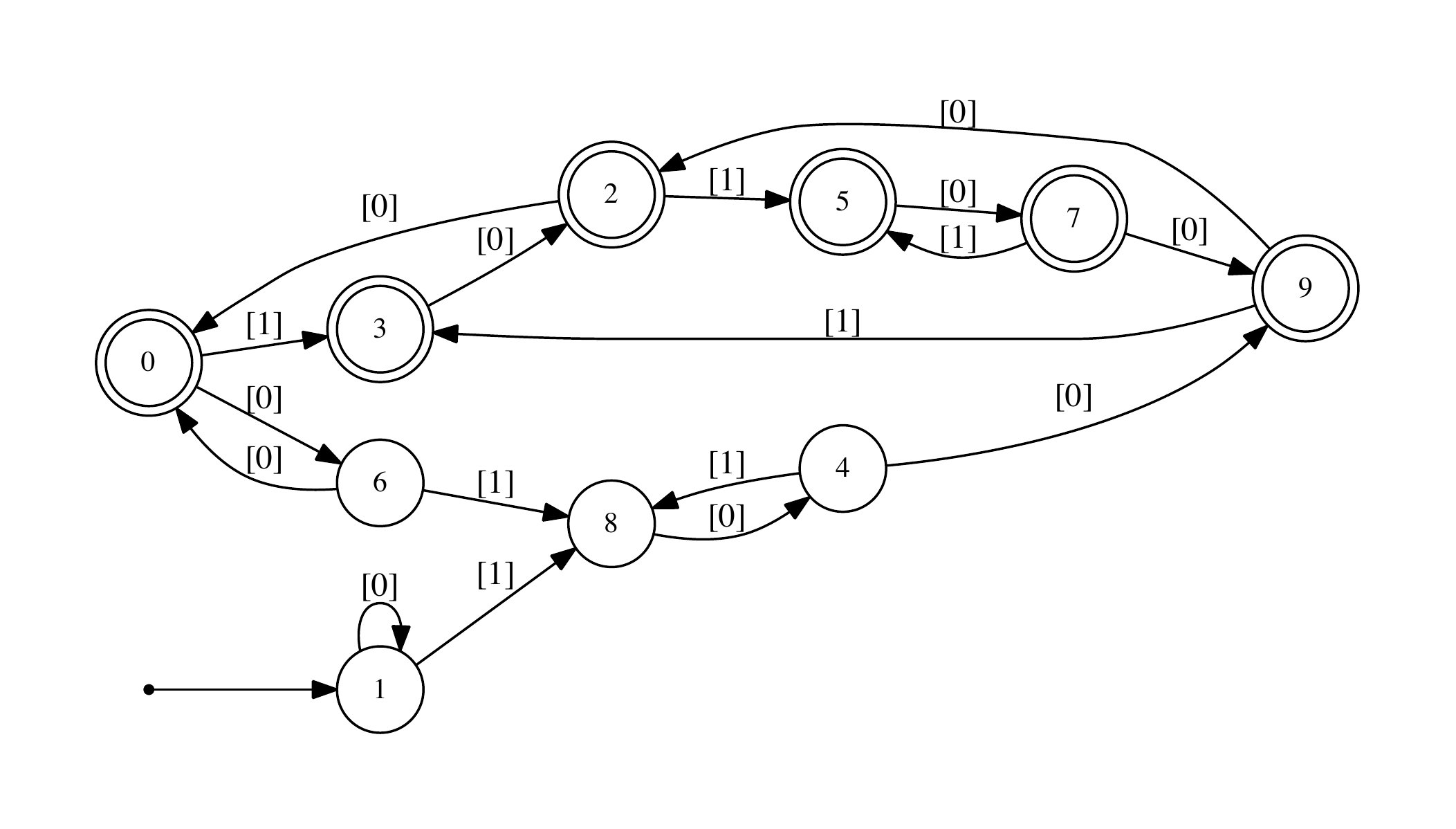}
\caption{Automaton accepting $(n)_F$ such that $\mr(n+1) - \mr(n) = 1$}
\label{maxrepp}
\end{center}
\end{figure}

\end{proof}

\section{Abelian properties}

Our decision procedure
does not apply, in complete generality, to abelian
properties of infinite words.  This is because there is no obvious
way to express assertions like $\psi(x) = \psi(x')$ for two factors
$x, x'$ of an infinite word.  (Here $\psi:\Sigma^* \rightarrow \Enn^{|\Sigma|}$
is the Parikh map that sends a word to the number of occurrences
of each letter.)  Indeed, in the $2$-automatic case it is provable that
there is at least one abelian property that is inexpressible
\cite[\S 5.2]{Schaeffer:2013}.

However, the special nature of
the Fibonacci word $\bf f$ allows us to mechanically prove some
assertions involving abelian properties.  In this section we describe
how we did this.

By an {\it abelian square of order $n$} we mean a factor of the
form $x x'$ where $\psi(x) = \psi(x')$, where $n = |x|$.  
In a similar way we can define abelian cubes and higher powers.

We start with the elementary observation that $\bf f$ is defined
over the alphabet $\lbrace 0, 1 \rbrace$.  Hence, to understand the
abelian properties of a factor $x$ it suffices to know $|x|$ and
$|x|_0$.  Next, we observe that the map that sends
$n$ to $a_n := |{\bf f}[0..n-1]|_0$ (that is, the number of $0$'s in the
length-$n$ prefix of $\bf f$), is actually {\it synchronized}
(see \cite{Carpi&Maggi:2001,Carpi&DAlonzo:2009,Carpi&DAlonzo:2010,Goc&Schaeffer&Shallit:2013}).  That is, 
there is a DFA accepting the Fibonacci representation of the
pairs $(n,a_n)$.  In fact we have the following

\begin{theorem}
Suppose the Fibonacci representation of $n$ is 
$e_1 e_2 \cdots e_i$.  Then $a_n = [e_1 e_2 \cdots e_{i-1}]_F + e_i$.
\label{fibr}
\end{theorem}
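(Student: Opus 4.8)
The plan is to reduce the statement to two elementary facts about the finite Fibonacci words $X_k$ together with a ``Zeckendorf decomposition'' of the prefixes of $\bf f$. First I would record that (i) for $k \ge 2$ the word $X_k$ is a prefix of $X_{k+1} = X_k X_{k-1}$, and hence (by induction and passing to the limit) a prefix of $\bf f$, so that ${\bf f}[0\dotdot F_k-1] = X_k$; and (ii) since $X_k = \varphi(X_{k-1})$ for $k \ge 2$ and each of $\varphi(0)=01$, $\varphi(1)=0$ contains exactly one $0$, we have $|X_k|_0 = |X_{k-1}| = F_{k-1}$. Combining $X_{k+1} = X_k X_{k-1}$ with (i) also gives the ``continuation'' fact ${\bf f}[F_k\dotdot F_k+F_{k-1}-1] = X_{k-1}$ for $k \ge 2$.

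Next I would prove, by induction on the number $m$ of Zeckendorf summands, that if $n = F_{k_1}+F_{k_2}+\cdots+F_{k_m}$ with $k_1 > k_2 > \cdots > k_m \ge 2$ and $k_j \ge k_{j+1}+2$, then ${\bf f}[0\dotdot n-1] = X_{k_1}X_{k_2}\cdots X_{k_m}$. The base case $m=1$ is exactly (i). For the inductive step, ${\bf f}[0\dotdot n-1]$ begins with ${\bf f}[0\dotdot F_{k_1}-1]=X_{k_1}$, and the remaining $n-F_{k_1} = F_{k_2}+\cdots+F_{k_m}$ symbols start at position $F_{k_1}$; since $n-F_{k_1} < F_{k_2+1} \le F_{k_1-1}$, the continuation fact shows this block is a prefix of $X_{k_1-1}$, which is itself a prefix of $\bf f$ (here $k_1 \ge 3$), so the block equals ${\bf f}[0\dotdot(n-F_{k_1})-1]$, which by the induction hypothesis is $X_{k_2}\cdots X_{k_m}$. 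Summing zero-counts using (ii) then yields $a_n = \sum_{j=1}^m F_{k_j-1}$.

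Finally I would translate this into the stated closed form. If $(n)_F = e_1 e_2 \cdots e_i$, the digit $e_j$ carries weight $F_{i+2-j}$, so $\{k_1,\dots,k_m\} = \{\,i+2-j : e_j = 1\,\}$ and hence $a_n = \sum_{j:\,e_j=1} F_{i+1-j}$. On the other hand $[e_1\cdots e_{i-1}]_F = \sum_{j=1}^{i-1} e_j F_{i+1-j}$, and since $F_1 = 1$ we may write $e_i = e_i F_{i+1-i}$, so that $[e_1\cdots e_{i-1}]_F + e_i = \sum_{j=1}^{i} e_j F_{i+1-j} = \sum_{j:\,e_j=1} F_{i+1-j} = a_n$. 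The slightly fussy points are keeping the Fibonacci indices straight under the ``shift by one'' that sends $F_{i+2-j}$ to $F_{i+1-j}$, and dealing with the small cases ($n \le 2$, equivalently $i \le 2$, and the coincidence $F_2 = F_1 = 1$ when $k_m = 2$); the real content is the prefix-decomposition lemma, whose only delicate point is verifying the length bound $n - F_{k_1} \le F_{k_1-1}$ that lets the tail be identified with a prefix of $\bf f$. (Alternatively, one can prove the theorem directly by strong induction on $i$: for $i \ge 2$ the canonical form forces $e_2 = 0$, so $n = F_{i+1}+n'$ with $n' = [e_3\cdots e_i]_F < F_i$, whence ${\bf f}[0\dotdot n-1] = X_{i+1}\,{\bf f}[0\dotdot n'-1]$ by (i) and the continuation fact, $a_n = F_i + a_{n'}$ by (ii), and one finishes with the induction hypothesis applied to $n'$.)
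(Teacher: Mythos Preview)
Your proof is correct. The paper's own argument is essentially your parenthetical alternative: it inducts on the length $m$ of the representation, uses $|X_m|_0 = F_{m-1}$, and peels off the leading digit via ${\bf f}[0\dotdot F_{m+2}-1] = X_{m+1}X_m$ to reduce to the shorter representation $e_2\cdots e_m$. Your main exposition differs only in organization: you first establish the stronger structural lemma ${\bf f}[0\dotdot n-1] = X_{k_1}\cdots X_{k_m}$ (for the Zeckendorf summands $F_{k_1}+\cdots+F_{k_m}=n$) and then read off the zero count, whereas the paper folds this decomposition implicitly into the induction. Your route buys a cleaner intermediate statement at the cost of a slightly longer setup; mathematically the two are the same induction.
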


\begin{proof}
First, we observe that an easy induction on $m$ proves that
$|X_m|_0 = F_{m-1}$ for $m \geq 2$.  We will use this in a moment.

The theorem's claim is easily checked for $n = 0,1$.  We prove it for
$F_{m+1} \leq n < F_{m+2}$ by induction on $m$.  The base
case is $m = 1$, which corresponds to $n = 1$.    

Now assume the theorem's claim is true for $m-1$; we prove it for $m$.
Write $(n)_F = e_1 e_2 \cdots e_m$.  
Then, using the fact that ${\bf f}[0..F_{m+2}-1] = X_{m+2} = X_{m+1} X_m$,
we get
\begin{align*}
|{\bf f}[0..n-1]|_0 &= |{\bf f}[0..F_{m+1}-1]|_0 + |{\bf f}[F_{m+1}..n-1]|_0 \\
&= |X_{m+1}|_0 + |{\bf f}[0..n-1-F_{m+1}]|_0 \\
&= F_m + |{\bf f}[0..n-1-F_{m+1}|_0 \\
&= F_m + [e_2\cdots e_{m-1}]_F + e_m \\
&= [e_1 \cdots e_{m-1}]_F + e_m ,
\end{align*}
as desired.
\end{proof}

In fact, the synchronized automaton for $(n,a_n)_F$ is given in
the following diagram:

\begin{figure}[H]
\begin{center}
\includegraphics[width=6in]{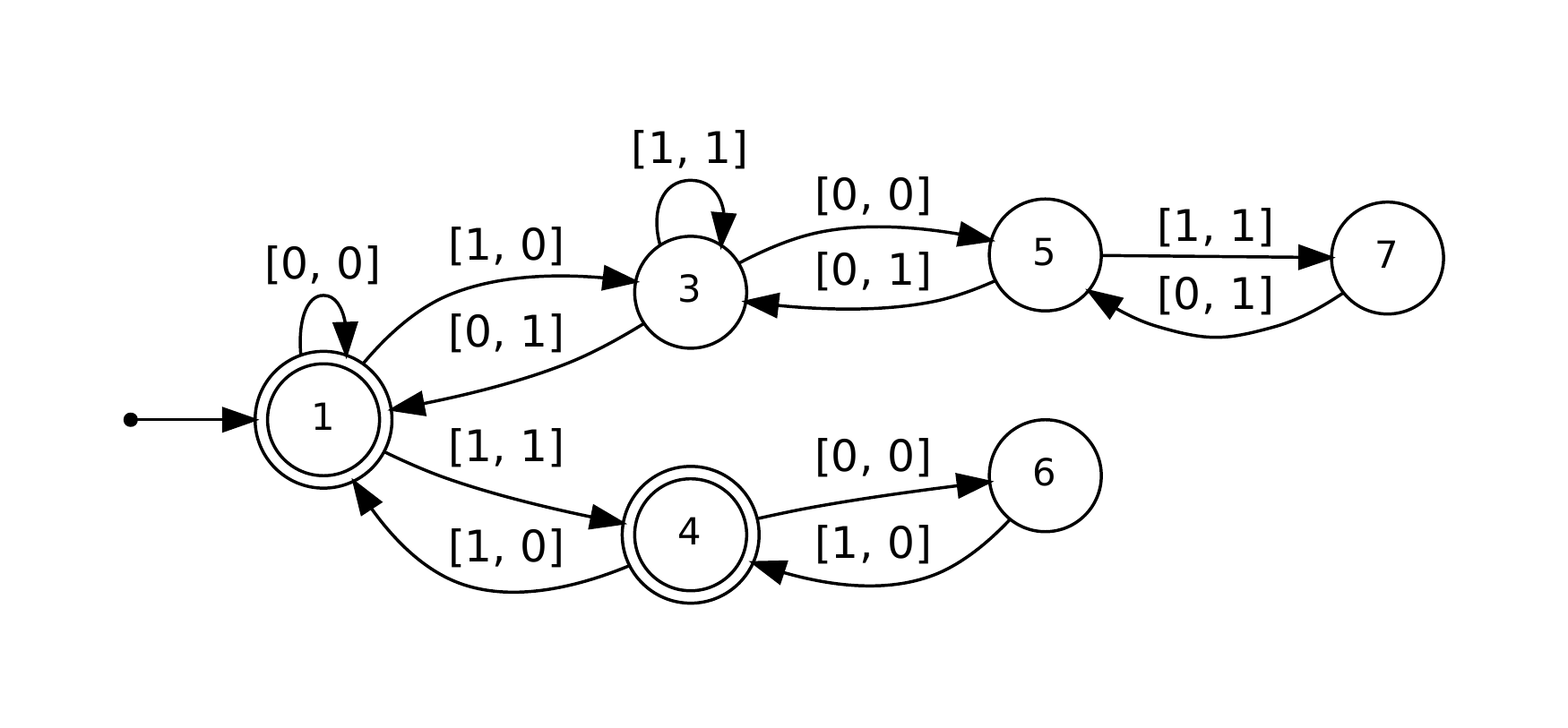}
\caption{Automaton accepting $(n,a_n)_F$}
\label{synchro}
\end{center}
\end{figure}

Here the missing state numbered $2$ is a ``dead'' state that is the
target of all undrawn transitions.

The correctness of this automaton can be checked using our prover.
Letting $\zc(x,y)$ denote $1$ if $(x,y)_F$ is accepted, it suffices
to check that
\begin{enumerate}
\item $ \forall x\ \exists y\ \zc(x,y)= 1$ (that is, for each $x$
there is at least one corresponding $y$ accepted);

\item $\forall x\ \forall y\ \forall z\ (\zc(x,y) = \zc(x,z)) \implies
y = z$ (that is, for each $x$ at most one corresponding $y$ is
accepted);

\item $\forall x \ \forall y \ ((\zc(x,y)=1) \ \wedge \ ({\bf f}[x] = 1))
\implies (\zc(x+1,y+1) = 1)$;

\item $\forall x \ \forall y \ ((\zc(x,y)=1) \ \wedge \ ({\bf f}[x] = 0))
\implies (\zc(x+1,y) = 1)$;

\end{enumerate}

Another useful automaton computes, on input $n, i, j$ the function
$$\fab(n,i,j) :=
|{\bf f}[i..i+n-1]|_0 - |{\bf f}[j..j+n-1]|_0 =
a_{i+n}-a_i - a_{j+n}+a_j.$$
From the known fact that the factors of $\bf f$ are ``balanced'' 
we know that $\fab$ takes only the values $-1, 0, 1$.  This automaton can
be deduced from the one above.  However, we calculated it by
``guessing'' the right automaton and then verifying the correctness
with our prover.    

The automaton for $\fab(n,i,j)$ has 30 states, numbered 
from $1$ to $30$.  Inputs are
in $\Sigma_2^3$.  The transitions, as well as the outputs,
are given in the table below.

\begin{table}[H]
\begin{center}
\begin{tabular}{|c|cccccccc|c}
$q$ & $[0,0,0]$ & $[0,0,1]$ & $[0,1,0]$ & $[0,1,1]$ & $[1,0,0]$ &
	$[1,0,1]$ & $[1,1,0]$ & $[1,1,1]$ & $\tau(q)$ \\
\hline
1& 1& 2& 3& 4& 4& 5& 6& 7& 0\\
2& 8& 1& 9& 3& 3& 4&10& 6& 0\\
3&11&12& 1& 2& 2&13& 4& 5& 0\\
4&14&11& 8& 1& 1& 2& 3& 4& 0\\
5&15&11&16& 1& 1& 2& 3& 4& 1\\
6&17&18& 8& 1& 1& 2& 3& 4&$-1$\\
7&19&18&16& 1& 1& 2& 3& 4& 0\\
8& 1& 2& 3& 4& 4&20& 6&21& 0\\
9&11&12& 1& 2& 2&22& 4&20& 0\\
10&18&23& 1& 2& 2&13& 4& 5&$-1$\\
11& 1& 2& 3& 4& 4& 5&24&25& 0\\
12& 8& 1& 9& 3& 3& 4&26&24& 0\\
13&16& 1&27& 3& 3& 4&10& 6& 1\\
14& 1& 2& 3& 4& 4&20&24&28& 0\\
15& 2&13& 4& 5& 5&20&25&28&$-1$\\
16& 2&13& 4& 5& 5&20& 7&21&$-1$\\
17& 3& 4&10& 6& 6&21&24&28& 1\\
18& 3& 4&10& 6& 6& 7&24&25& 1\\
19& 4& 5& 6& 7& 7&21&25&28& 0\\
20&15&14&16& 8& 8& 1& 9& 3& 1\\
21&19&17&16& 8& 8& 1& 9& 3& 0\\
22&16& 8&27& 9& 9& 3&29&10& 1\\
23& 9& 3&29&10&10& 6&26&24& 1\\
24&17&18&14&11&11&12& 1& 2&$-1$\\
25&19&18&15&11&11&12& 1& 2& 0\\
26&18&23&11&12&12&30& 2&13&$-1$\\
27&12&30& 2&13&13&22& 5&20&$-1$\\
28&19&17&15&14&14&11& 8& 1& 0\\
29&18&23& 1& 2& 2&22& 4&20&$-1$\\
30&16& 1&27& 3& 3& 4&26&24& 1
\end{tabular}
\end{center}
\caption{Automaton to compute $\fab$}
\end{table}

Once we have guessed the automaton, we can verify it as follows:

\begin{enumerate}
\item $\forall i \  \forall j\ \fab[0][i][j]=0$.  This is the basis
for an induction.
\item Induction steps:

	\begin{itemize}
	
	\item $\forall i\  \forall j\  \forall n \  ({\bf f}[i+n]={\bf f}[j+n]) 
\implies (\fab[n][i][j]=\fab[n+1][i][j])$.  

\item $\forall i\ \forall j\  \forall n\ (({\bf f}[i+n]=0) \wedge
({\bf f}[j+n]=1)) \implies
 (((\fab[n][i][j]=-1) \wedge (\fab[n+1][i][j]=0)) \vee
   ((\fab[n][i][j]=0) \wedge (\fab[n+1][i][j]=1))) $

\item $\forall i\ \forall j\  \forall n\ (({\bf f}[i+n]=0) \wedge
({\bf f}[j+n]=1)) \implies
 (((\fab[n][i][j]=1) \wedge (\fab[n+1][i][j]=0)) \vee
   ((\fab[n][i][j]=0) \wedge (\fab[n+1][i][j]=-1))) $.  
\end{itemize}
\end{enumerate}




As the first application, we prove

\begin{theorem}
The Fibonacci word $\bf f$ has abelian squares of all orders.
\end{theorem}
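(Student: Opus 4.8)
The plan is to reduce the existence of abelian squares of order $n$ to a single statement about the synchronized function $\fab$ introduced above, and then feed it to Procedure~\ref{proc:Fib-auto-decide}. A factor ${\bf f}[i\dotdot i+2n-1]$ is an abelian square of order $n$ exactly when its two halves ${\bf f}[i\dotdot i+n-1]$ and ${\bf f}[i+n\dotdot i+2n-1]$ have the same Parikh vector; since both halves have length $n$ and the alphabet is binary, this is equivalent to their having the same number of $0$'s, i.e. to $\fab(n,i,i+n)=0$. Hence
$$ (n \geq 1) \ \wedge \ \exists i \ \fab(n,i,i+n) = 0 $$
is true precisely for those $n$ that are orders of (genuine) abelian squares occurring in $\bf f$, with $n=0$ handled trivially by the empty word.

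The mechanical steps are routine: starting from the $30$-state automaton for $\fab(n,i,j)$ tabulated above, I would form the relation $\fab(n,i,j)=0$, substitute $j=i+n$ using one copy of the adder $M_{\rm add}$, conjoin with $n\ge 1$, and existentially quantify over $i$. One then reads off the language accepted by the resulting DFA; the claim is that it is $1 0^*$ together with $\epsilon$, i.e. it accepts $(n)_F$ for every $n \geq 0$, which is exactly the assertion of the theorem.

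The real work is not in this argument at all — the predicate uses only a single quantifier and tiny automata — but rather in having the auxiliary automaton for $\fab$ available and certified; the verification conditions (1)--(2) listed immediately before the theorem are precisely what establishes its correctness (using in turn the balance property of $\bf f$ and Theorem~\ref{fibr}). Once $\fab$ is trusted, the only practical concern is the size of the intermediate automata formed while building $\fab(n,i,i+n)=0$, which is small here, so I expect no obstacle; the potential pitfall to watch for is simply getting the index bookkeeping in $\fab(n,i,i+n)$ right, since the two blocks being compared are adjacent rather than at independent positions.
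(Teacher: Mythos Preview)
Your approach is essentially identical to the paper's, which uses the predicate $\exists i\ (\fab[n][i][i+n] = 0)$ and reports that the resulting automaton accepts all $n \geq 0$. One small slip to fix: the language of canonical Fibonacci representations of all $n \geq 1$ is $1(0+01)^*$, not $10^*$; the latter encodes only the Fibonacci numbers $F_2, F_3, \ldots$, so your stated expected output would not actually match ``all $n$''.
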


\begin{proof}
We use the predicate
$$  \exists i \ (\fab[n][i][i+n] = 0) .$$
The resulting automaton accepts all $n \geq 0$.  The total computing
time was 141 ms.
\end{proof}


Cummings and Smyth \cite{Cummings&Smyth:1997}
counted the total number of all
occurrences of (nonempty) abelian squares in the 
Fibonacci words $X_i$.  We can do this by using the predicate
$$ (k>0) \wedge (i+2k \leq n) \wedge (\fab[k][i][i+k]=0),$$
using the techniques in Section~\ref{enumer}
and considering the case where $n = F_i$.


When we do, we get 
a linear representation of rank 127 that counts the
total number $w(n)$ of occurrences of abelian squares in the 
prefix of length $n$ of the Fibonacci word.

To recover the Cummings-Smyth result we compute the minimal polynomial
of the matrix $M_0$ corresponding to the predicate above.  It is
$$x^4 (x-1)(x+1)(x^2+x+1)(x^2-3x+1)(x^2-x+1)(x^2+x-1)(x^2-x-1).$$

This means that $w(F_n)$, that is, $w$ evaluated at $10^{n-2}$ in Fibonacci
representation, is a linear combination of the roots of this polynomial to the
$n$'th power (more precisely, the $(n-2)$th, but this detail is
unimportant).
The roots of the polynomial are
$$ -1, 1, (-1 \pm i \sqrt{3})/2, (3 \pm \sqrt{5})/2, (1 \pm i \sqrt{3})/2,
(-1 \pm \sqrt{5})/2, (1 \pm \sqrt{5})/2.$$
Solving for the coefficients as we did in Section~\ref{enumer} we get

\begin{theorem}
For all $n \geq 0$ we have
\begin{multline*}
w(F_n) =
c_1 \left({{3+\sqrt{5}}\over 2}\right)^n + c_1 \left({{3-\sqrt{5}}\over 2}\right)^n 
+
c_2 \left( {{1+\sqrt{5}}\over 2} \right)^n + c_2 \left( {{1-\sqrt{5}}\over 2} \right)^n 
+ \\
c_3 \left({{1+i\sqrt{3}}\over 2} \right)^n + \overline{c_3} \left({{1-i\sqrt{3}}\over 2} \right)^n
+
c_4 \left({{-1+i\sqrt{3}}\over 2} \right)^n + \overline{c_4} \left({{-1-i\sqrt{3}}\over 2} \right)^n
+
c_5 (-1)^n,
\end{multline*}
where
\begin{align*}
c_1 &= 1/40 \\
c_2 &= -\sqrt{5}/20 \\
c_3 &= (1 - i\sqrt{3})/24 \\
c_4 &= i\sqrt{3}/24 \\
c_5 &= -2/15,
\end{align*}
and here $\overline{x}$ denotes complex conjugate.
Here the parts corresponding to the
constants $c_3, c_4, c_5$ form a periodic sequence of period 6.
\end{theorem}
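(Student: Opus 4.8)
The plan is to recognize this as an instance of the enumeration-by-linear-representation method of Section~\ref{enumer}, the only genuinely new ingredient being that ``being an abelian square'' is made expressible through the synchronized function $\fab$ constructed above. First I would use the $\fab$ automaton together with the predicate
\[
(k>0)\ \wedge\ (i+2k \le n)\ \wedge\ (\fab(k,i,i+k)=0)
\]
to build a DFA $M$ recognizing the language $\{(n,i,k)_F : (k>0)\wedge(i+2k\le n)\wedge\fab(k,i,i+k)=0\}$, whose accepting computations on inputs with first coordinate $(n)_F$ are in bijection with the occurrences of nonempty abelian squares ${\bf f}[i..i+2k-1]$ inside the prefix ${\bf f}[0..n-1]$.

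Next, as in Section~\ref{enumer}, I would collapse the two ``inner'' coordinates $i,k$ into edge multiplicities to extract a linear representation $(u,\mu,v)$, with $M_0=\mu(0)$, $M_1=\mu(1)$ indexed by the states of $M$, so that $w(n) = u\,\mu(x)\,v$ whenever $[x]_F=n$, where $w(n)$ counts abelian-square occurrences in ${\bf f}[0..n-1]$; this representation has rank $127$. Since the canonical Fibonacci representation of $F_n$ is $1\,0^{n-2}$ for $n\ge 2$, we have $w(F_n) = u\,M_1\,M_0^{\,n-2}\,v$, so the sequence $n\mapsto w(F_n)$ satisfies the linear recurrence whose characteristic polynomial is the minimal polynomial of $M_0$. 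Computing that minimal polynomial (a $127\times 127$ integer computation carried out in exact arithmetic) gives $x^4(x-1)(x+1)(x^2+x+1)(x^2-3x+1)(x^2-x+1)(x^2+x-1)(x^2-x-1)$, whose nonzero roots are the six sixth-roots of unity $1$, $-1$, $(\pm1\pm i\sqrt{3})/2$ and the six quadratic surds $(\pm1\pm\sqrt{5})/2$, $(3\pm\sqrt{5})/2$; the factor $x^4$ contributes only the root $0$, with multiplicity $4$.

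It then follows from the theory of linear recurrences that, for $n\ge 4$, $w(F_n) = \sum_\lambda c_\lambda \lambda^n$ summed over those twelve nonzero roots. I would compute $w(F_n)$ for enough initial indices (reading them off the linear representation, or by directly enumerating abelian squares), solve the resulting linear system for the $c_\lambda$, and invoke integrality of $w(F_n)$ to see that Galois-conjugate roots (such as $\alpha=(1+\sqrt{5})/2$ and $\beta=(1-\sqrt{5})/2$) receive Galois-conjugate coefficients and complex-conjugate roots receive complex-conjugate coefficients. This yields the stated $c_1,\dots,c_5$, with the coefficients of the eigenvalues $1$ and $(-1\pm\sqrt{5})/2$ turning out to be $0$; collecting the contributions of the three modulus-$1$ eigenvalues $-1$ and $(\pm1\pm i\sqrt{3})/2$ exhibits the announced period-$6$ summand. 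Finally, because the $0$ eigenvalue has multiplicity $4$, the recurrence only guarantees the identity for $n\ge 4$, so I would separately verify the four small cases $w(F_0)=\cdots=w(F_3)=0$ against the closed form to obtain validity for all $n\ge 0$.

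The main obstacle is not conceptual --- everything is a direct application of Section~\ref{enumer} once $\fab$ is available --- but computational: correctly assembling the $127$-state automaton from the $\fab$ transition table and then computing, in exact arithmetic, the minimal polynomial of a $127\times 127$ matrix together with the solution of the coefficient system. The one subtlety worth flagging is the transient from the multiplicity-$4$ zero eigenvalue, which is why the values $w(F_n)$ for $n\le 3$ must be checked by hand rather than deduced from the recurrence.
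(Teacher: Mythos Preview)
Your proposal is correct and follows the paper's own argument essentially verbatim: the same predicate, the same rank-$127$ linear representation, the same minimal polynomial of $M_0$, and the same ``solve for the coefficients'' step; your extra remarks (that the roots $1$ and $(-1\pm\sqrt5)/2$ get coefficient~$0$, and that small $n$ should be checked directly) make explicit things the paper leaves implicit.

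One small bookkeeping slip: since $(F_n)_F = 1\,0^{n-2}$, you have $w(F_n)=u\,M_1\,M_0^{\,n-2}\,v$, so the multiplicity-$4$ zero eigenvalue produces a transient in the variable $m=n-2$, not in $n$. The closed form is therefore guaranteed only for $n\ge 6$, and the hand check should cover $n=0,\dots,5$, not just $n\le 3$.
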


Next, we turn to what is apparently a new result.
Let 
$h(n)$ denote the total number of distinct factors (not
occurrences of factors) that are abelian squares in the Fibonacci
word $X_n$.

In this case we need the predicate
$$ (k \geq 1) \wedge (i+2k \leq n) \wedge
(\fab[k][i][i+k]=0) \wedge
(\forall j<i \  
(\exists t<2k\ ({\bf f}[j+t]\not= {\bf f}[i+t]))).$$


We get the minimal polynomial
$$ x^4(x+1)(x^2+x+1)(x^2-3x+1)(x^2-x+1)(x^2+x-1)(x^2-x-1)(x-1)^2.$$
Using the same technique as above we get

\begin{theorem}
For $n \geq 2$ we have 
$h(n) = a_1c_1^n + \cdots + a_{10}c_{10}^n   $
where
\begin{align*}
a_1 &= (-2+\sqrt{5})/20 \\
a_2 &= (-2-\sqrt{5})/20 \\
a_3 &= (5-\sqrt{5})/20 \\
a_4 &= (5+\sqrt{5})/20 \\
a_5 &= 1/30 \\
a_6 &= -5/6 \\
a_7 &= (1/12)-i \sqrt{3}/12 \\
a_8 &= (1/12)+i \sqrt{3}/12 \\
a_9 &= (1/6) + i \sqrt{3}/12 \\
a_{10}&= (1/6) - i \sqrt{3}/12 
\end{align*}
and
\begin{align*}
c_1 &= (3+\sqrt{5})/2 \\
c_2 &= (3-\sqrt{5})/2 \\
c_3 &= (1+\sqrt{5})/2 \\
c_4 &= (1-\sqrt{5})/2 \\
c_5 &= -1 \\
c_6 &= 1 \\
c_7 &= (1/2)+i \sqrt{3}/2 \\
c_8 &= (1/2)-i \sqrt{3}/2 \\
c_9 &= (-1/2)+i \sqrt{3}/2 \\
c_{10} &= (-1/2)-i \sqrt{3}/2 .
\end{align*}
\end{theorem}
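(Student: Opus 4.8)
The plan is to follow the enumeration methodology of Section~\ref{enumer}, reusing the auxiliary automaton for $\fab$ verified above. First I would encode ``$w$ is a distinct abelian‑square factor of the length‑$n$ prefix of $\bf f$, pinned at its first occurrence'' by the predicate
$$ (k \geq 1) \ \wedge\ (i+2k \leq n) \ \wedge\ (\fab[k][i][i+k]=0) \ \wedge\ (\forall j<i\ \exists t<2k\ {\bf f}[j+t] \not= {\bf f}[i+t]), $$
exactly as stated before the theorem. The first three conjuncts say that ${\bf f}[i..i+2k-1]$ is an abelian square of order $k$ lying inside ${\bf f}[0..n-1]$ — here using that over the binary alphabet an abelian square is precisely a block whose two halves contain equally many $0$'s, which is what the condition $\fab=0$ records — and the last conjunct restricts to first occurrences so that each \emph{distinct} such factor contributes once. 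Since $X_n = {\bf f}[0..F_n-1]$, the quantity $h(n)$ is recovered by specializing $n$ to a Fibonacci number.

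Next I would run Procedure~\ref{proc:Fib-auto-decide} on this predicate to obtain a DFA $M$ over $\Sigma_2^3$ accepting $\{(n,i,k)_F : \varphi\}$, and then read off a linear representation $(u,\mu,v)$ of the counting function in the usual way: $\mu(0)=M_0$ and $\mu(1)=M_1$ count, digit by digit of $n$, the transitions between states of $M$ over the $(i,k)$‑coordinates, $u$ marks the initial state and $v$ the accepting states, so that the number of abelian‑square factors of ${\bf f}[0..n-1]$ equals $u\,\mu((n)_F)\,v$. Because $(F_n)_F = 10^{n-2}$, this gives $h(n) = u\,M_1\,M_0^{\,n-2}\,v$. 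I would then compute the minimal polynomial of $M_0$, obtaining $x^4(x+1)(x^2+x+1)(x^2-3x+1)(x^2-x+1)(x^2+x-1)(x^2-x-1)(x-1)^2$; its roots, apart from the nilpotent part ($x^4$, affecting only finitely many initial terms) and the possible extra $n\cdot 1^n$ term from $(x-1)^2$, are the ten quantities $c_1,\dots,c_{10}$. By the theory of linear recurrences the sequence $(h(n))$ eventually satisfies the recurrence with this characteristic polynomial, hence equals a fixed linear combination of $\{n^{j}\lambda^{n}\}$ over those roots; fitting the coefficients to enough initial values of $h(n)$ (computed directly from the linear representation) and solving the resulting small linear system produces the stated $a_1,\dots,a_{10}$, and reveals that the coefficients multiplying $n\cdot 1^n$ and the roots of $x^2+x-1$ all vanish, leaving the ten‑term formula valid for $n \geq 2$. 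A final check of the closed form against the linear representation over a further range of $n$ confirms it.

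The genuinely substantive steps are the first two: getting the predicate to capture \emph{distinct} abelian squares (the first‑occurrence trick) rather than occurrences, and relying on the earlier, independently verified fact that the $30$‑state automaton computes $\fab(n,i,j) = |{\bf f}[i..i+n-1]|_0 - |{\bf f}[j..j+n-1]|_0$ and that balancedness forces this to lie in $\{-1,0,1\}$. Everything after that — synthesizing $M$, extracting $(u,\mu,v)$, factoring the characteristic polynomial of a large integer matrix, and solving for the $a_i$ — is mechanical and handled by the implementation; the only care needed there is the bookkeeping that matches $X_n$ with the input word $10^{n-2}$ and the observation that, once the order of the recurrence is known, finitely many values of $h$ determine the constants. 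I expect the main obstacle to be simply the size of the intermediate automata and linear representation, as with the other abelian‑square computations in this section, rather than anything conceptual.
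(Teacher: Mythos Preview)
Your proposal is correct and follows essentially the same approach as the paper: the same first-occurrence predicate built on $\fab$, extraction of a linear representation, the minimal polynomial of $M_0$, and solving for the coefficients from initial values. Your explicit remark that the coefficients attached to $n\cdot 1^n$ and to the roots of $x^2+x-1$ turn out to vanish is a useful clarification the paper leaves implicit.
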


For another new result,
consider counting the total number $a(n)$ of distinct factors of length $2n$ 
of the infinite word $\bf f$ that are abelian squares.

This function is rather erratic.  The following table gives the
first few values:
\begin{table}[H]
\begin{center}
\begin{tabular}{c|cccccccccccccccccccccccccccccc}
$n$ & 1 & 2 & 3 & 4 & 5 & 6 & 7 & 8 & 9 & 10 & 11 & 12 & 13 & 14 & 15 &
16 & 17 & 18 & 19 & 20 \\
\hline
$a(n)$ & 1&3&5&1&9&5&5&15&3&13&13&5&25&9&15&25&1&27&19&11
\end{tabular}
\end{center}
\end{table}

We use the predicate
$$ (n \geq 1) \wedge (\fab[n][i][i+n]=0) \wedge
(\forall j < i\ ( \exists t<2n \ ({\bf f}[j+t]\not= {\bf f}[i+t]))).$$
to create the matrices and vectors.  

\begin{theorem}
$a(n) = 1$ infinitely often and $a(n) = 2n-1$ infinitely often.
More precisely
$a(n) = 1$ iff $(n)_F = 1$ or $(n)_F = (100)^i 101$ for $i \geq 0$,
and
$a(n) = 2n-1$ iff $(n)_F = 10^i$ for $i \geq 0$.
\end{theorem}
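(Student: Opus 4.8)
The plan is to dispatch both claims with the decision procedure, leaning on the auxiliary automaton for $\fab$ already constructed and verified earlier in this section. The starting observation is that over the binary alphabet a factor ${\bf f}[i..i+2n-1]$ is an abelian square of order $n$ precisely when its two halves contain the same number of $0$'s, i.e. precisely when $\fab[n][i][i+n]=0$. Hence ``there is an abelian square of order $n$ beginning at position $i$'' is a first-order formula, and distinctness of two length-$2n$ factors is captured by $\exists t<2n\ {\bf f}[i+t]\not={\bf f}[j+t]$. Note that $a(n)$ itself is a cardinality and so not a first-order object; the point of what follows is to recast each target identity as a counting-free statement.

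For the $n$ with $a(n)=1$, the assertion is simply that there is exactly one abelian-square factor of length $2n$, which is first-order. I would run our program on
\begin{multline*}
(n\ge 1)\ \wedge\ \exists i\ \big( (\fab[n][i][i+n]=0)\ \wedge\\
\forall j\ ((\fab[n][j][j+n]=0)\implies(\forall t<2n\ {\bf f}[i+t]={\bf f}[j+t]))\big),
\end{multline*}
obtaining a DFA for $\{(n)_F\st a(n)=1\}$, and then read off that it accepts exactly $1+(100)^*101$. (Equivalently, since ${\bf f}$ has abelian squares of every order, $a(n)=1$ holds iff $a(n)\ge 2$ fails, and $a(n)\ge 2$ is $\exists i\,\exists j\ (\fab[n][i][i+n]=0)\wedge(\fab[n][j][j+n]=0)\wedge(\exists t<2n\ {\bf f}[i+t]\not={\bf f}[j+t])$.) If the regular expression is not transparent from the diagram, one confirms it, as elsewhere in the paper, by intersecting with the complement of the candidate and checking for emptiness.

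For the $n$ with $a(n)=2n-1$, the extra ingredient is Theorem~\ref{sturmcomp}: since ${\bf f}$ has exactly $2n+1$ factors of length $2n$, we have $a(n)=2n-1$ iff exactly two factors of length $2n$ are \emph{not} abelian squares, i.e. satisfy $\fab[n][k][k+n]\not=0$. The ``exactly two'' condition is first-order,
\begin{multline*}
(n\ge 1)\ \wedge\ \exists i\ \exists j\ \big((\fab[n][i][i+n]\not=0)\ \wedge\ (\fab[n][j][j+n]\not=0)\ \wedge\ (\exists t<2n\ {\bf f}[i+t]\not={\bf f}[j+t])\ \wedge\\
\forall k\ ((\fab[n][k][k+n]\not=0)\implies((\forall s<2n\ {\bf f}[k+s]={\bf f}[i+s])\vee(\forall s<2n\ {\bf f}[k+s]={\bf f}[j+s])))\big),
\end{multline*}
and running it should return a DFA accepting $10^*$, i.e. $(n)_F=10^i$ for $i\ge 0$. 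Since $1+(100)^*101$ and $10^*$ are both infinite languages, the two ``infinitely often'' assertions follow at once, and the tabulated small values ($a(1)=1=2\cdot1-1$, $a(2)=3$, $a(4)=1$, $a(17)=1$, $\dots$) provide a consistency check.

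The main obstacle is conceptual, not computational. Unlike the other enumeration results in Section~\ref{enumer} and in this section, $a(n)$ cannot be pinned down here by a finite-monoid argument on a linear representation: $a(n)$ attains the value $1$ and also values as large as about $2n$, so none of $a(n)$, $a(n)-(2n-1)$, $(2n+1)-a(n)$ is finite-valued, and the matrices of any linear representation generate an infinite monoid. The fix is exactly the reformulation above, which converts each identity into a counting-free predicate the procedure handles directly. A secondary worry is the size of the intermediate automata, since these predicates stack three quantifier alternations on top of the $30$-state $\fab$ machine; should the naive construction exceed available memory, one can, as in the Rote-Fibonacci arguments of Section~\ref{rotefib}, propagate the candidate restriction ($(n)_F\in 10^*$, resp.\ $(n)_F\in 1+(100)^*101$) through the predicate to keep the computation within bounds.
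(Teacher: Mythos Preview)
Your proposal is correct. For the characterization of $a(n)=1$ you use exactly the paper's approach: a single first-order predicate asserting that all abelian-square factors of length $2n$ coincide, fed to the decision procedure.

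For $a(n)=2n-1$ you take a genuinely different route. The paper proceeds in two steps: first it uses the linear representation of $a(n)$, computes the minimal polynomial of $M_0$, and solves for the coefficients to establish $a(F_n)=2F_n-1$ along the ray $10^*$; only then does it run a predicate (``not at least three non-abelian-square factors'') to show that no other $n$ works. Your argument bypasses the linear-representation calculation entirely by observing, via the Morse--Hedlund count of $2n+1$ factors of length $2n$, that $a(n)=2n-1$ is equivalent to ``exactly two length-$2n$ factors fail to be abelian squares'', which is a single first-order predicate. This is more uniform and more elementary: it needs no matrix algebra, no minimal polynomials, and no solving for constants. The paper's route, on the other hand, yields the explicit closed form for $a(F_n)$ as a byproduct and exercises the enumeration machinery of Section~\ref{enumer}; yours trades that for conceptual economy. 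Your remark that no finite-monoid argument can work here (since $a$ is unbounded) correctly identifies why the complement reformulation is essential.
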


\begin{proof}
For the first statement, we create a DFA accepting those $(n)_F$ for
which $a(n) = 1$, via the predicate
$$ \forall i\ \forall j \ ((\fab[n][i][i+n]=0) \wedge  (\fab[n][j][j+n]=0)) \implies
(\forall t<2n \ ({\bf f}[j+t] = {\bf f}[i+t])).$$
The resulting $6$-state automaton accepts the set specified.


For the second result, we first compute the minimal polynomial of the matrix
$M_0$ of the linear representation.  It is $x^5 (x-1)(x+1)(x^2-x-1)$.
This means that, for $n \geq 5$, we have
$a(F_n) = c_1 + c_2 (-1)^n + c_3 \alpha^n + c_4 \beta^n$ where,
as usual, $\alpha = (1+\sqrt{5})/2$ and $\beta=(1-\sqrt{5})/2$.
Solving for the constants, we determine that 
$a(F_n) = 2F_n - 1$ for $n \geq 2$, as desired.

To show that these are the only cases for which $a(n) = 2n-1$, we use a predicate
that says that there are not at least three different factors of length $2n$ that
are not abelian squares.  Running this through our program results in only the
cases previously discussed.
\end{proof}

Finally, we turn to abelian cubes.  Unlike the case of squares, some
orders do not appear in $\bf f$.

\begin{theorem}
The Fibonacci word $\bf f$ contains, as a factor, an abelian cube
of order $n$ iff $(n)_F$ is accepted by the automaton below.
\end{theorem}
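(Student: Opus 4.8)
The plan is to mimic the treatment of abelian squares given just above, but now invoking the synchronized function $\fab$ twice. Since $\bf f$ is a binary word, a factor ${\bf f}[i\dotdot i+3n-1]$ is an abelian cube of order $n$ precisely when its three consecutive length-$n$ blocks contain equally many zeros, i.e.\ when the number of $0$'s in ${\bf f}[i\dotdot i+n-1]$, in ${\bf f}[i+n\dotdot i+2n-1]$, and in ${\bf f}[i+2n\dotdot i+3n-1]$ all coincide. In terms of $\fab$ this is the conjunction $\fab[n][i][i+n] = 0 \AND \fab[n][i][i+2n] = 0$, since the first equality forces the first two blocks to agree and the second forces the first and third to agree (and hence all three agree). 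So the set of orders of abelian cube factors of $\bf f$ is exactly the set of $n$ satisfying the predicate
$$ (n \geq 1) \ \AND \ \exists i \ (\fab[n][i][i+n] = 0) \ \AND \ (\fab[n][i][i+2n] = 0) . $$

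Next I would run this predicate through Procedure~\ref{proc:Fib-auto-decide}, using as a building block the $30$-state automaton for $\fab$ whose correctness was verified earlier in this section. The subterms $i+n$ and $i+2n$ are realized by composition with the Fibonacci adder $M_{\rm add}$; the two conditions $\fab[\cdots]=0$ select the corresponding output class of the $\fab$ automaton; the $\AND$ is an intersection; and $\exists i$ is a projection onto the $n$-coordinate followed by determinization. The output is a DFA over $\Sigma_2$ whose accepted language is $\{(n)_F \st \mathbf{f} \text{ has an abelian cube factor of order } n\}$, which is the automaton asserted in the statement; the precise description of which $n$ occur then follows by inspecting its structure, exactly as for the analogous square results.

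The step I expect to be the main obstacle is computational rather than conceptual: forming the product of two appropriately shifted copies of the $30$-state $\fab$ automaton, wired together through the adders needed to produce $i+n$ and $i+2n$, and then projecting away $i$, may create a sizeable intermediate NFA before it can be determinized and minimized. If that happens, the standard remedy used elsewhere in this paper applies — rewrite the predicate by introducing auxiliary variables, say $m$ with $m = i+n$ and $\ell = i+2n$, so that each atomic subformula refers to only a bounded number of ``fresh'' indices; this keeps each product small and lets the computation terminate. Once the final DFA is obtained, no further work is needed beyond reading off the accepting paths.
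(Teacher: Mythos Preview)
Your proposal is correct and matches the paper's approach. The paper does not spell out a proof for this theorem; it simply presents the resulting automaton, the implicit argument being exactly the one you give: form the predicate $(n\geq 1)\ \wedge\ \exists i\ (\fab[n][i][i+n]=0)\ \wedge\ (\fab[n][i][i+2n]=0)$, feed it through Procedure~\ref{proc:Fib-auto-decide} using the verified $\fab$ automaton, and read off the accepting DFA. Your discussion of possible intermediate blowup and the variable-renaming remedy is sensible, though in this instance the paper reports no such difficulty.
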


\begin{figure}[H]
\begin{center}
\includegraphics[width=6.5in]{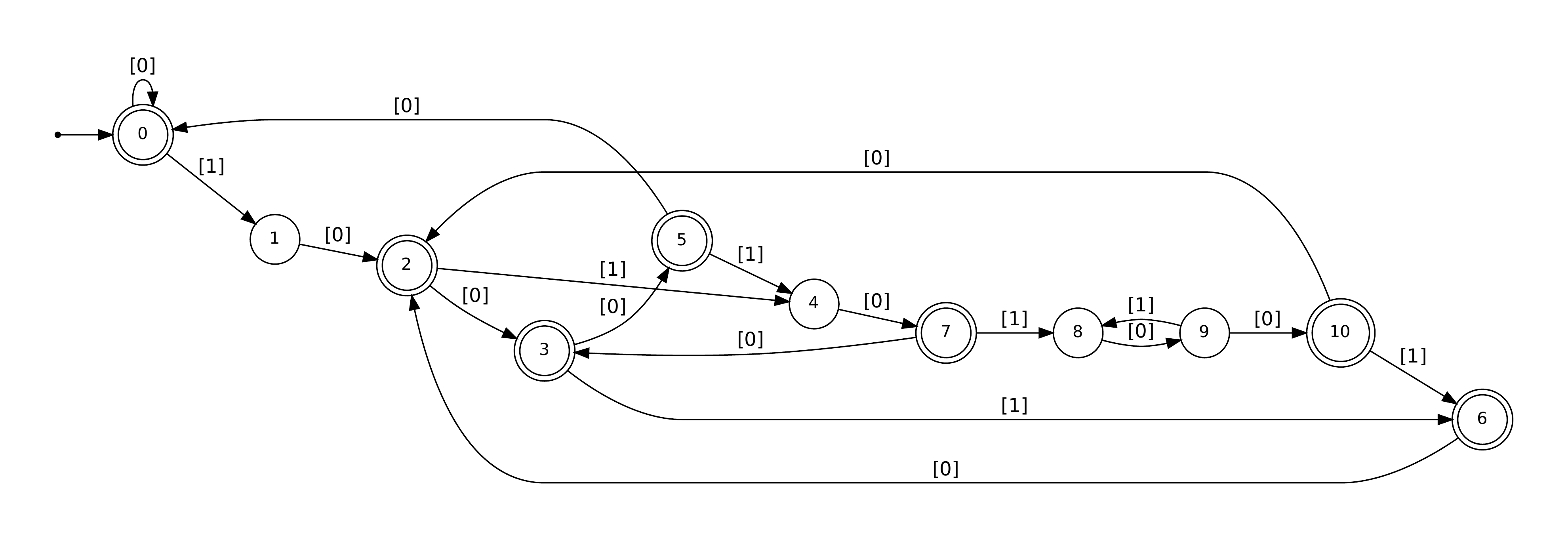}
\caption{Automaton accepting orders of abelian cubes in $\bf f$}
\label{fibabelcube}
\end{center}
\end{figure}

Theorem~\ref{fibr} has the following interesting corollary.

\begin{corollary}
Let $h:\lbrace 0, 1 \rbrace^* \rightarrow \Delta^*$ be an arbitrary
morphism such that $h(01) \not= \epsilon$.  Then $h({\bf f})$ is an
infinite Fibonacci-automatic word.
\end{corollary}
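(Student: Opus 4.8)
The plan is to realize $h(\mathbf{f})$ as a Fibonacci-automatic sequence by exhibiting, for each letter $d \in \Delta$, a first-order predicate $\varphi_d(m)$ in the language of Procedure~\ref{proc:Fib-auto-decide} that holds exactly when $h(\mathbf{f})[m] = d$. Running each $\varphi_d$ through the decision procedure produces a DFA accepting $\{(m)_F : h(\mathbf{f})[m] = d\}$; since these languages are pairwise disjoint and together cover $\Enn$ (here one uses that $h(\mathbf{f})$ is infinite, which follows from $h(01) \neq \epsilon$ together with the fact that $\mathbf{f}$ contains infinitely many $0$'s and infinitely many $1$'s, so whichever of $h(0),h(1)$ is nonempty is applied infinitely often), a product construction assembles these finitely many DFAs into a single DFAO generating $h(\mathbf{f})$. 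This is exactly the statement that $h(\mathbf{f})$ is Fibonacci-automatic.

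To build $\varphi_d$, put $c_0 = |h(0)|$ and $c_1 = |h(1)|$, fixed nonnegative integers. Writing $a_n = |\mathbf{f}[0..n-1]|_0$, the length of $h(\mathbf{f}[0..n-1])$ is $L(n) := c_0\,a_n + c_1\,(n - a_n)$, and the symbol at position $m$ of $h(\mathbf{f})$ is produced by the letter $\mathbf{f}[n]$ where $n$ is the unique index with $L(n) \le m < L(n) + |h(\mathbf{f}[n])|$; the offset $r := m - L(n)$ then lies in the \emph{bounded} range $\{0,1,\dots,\max(c_0,c_1)-1\}$. The key enabling fact is Theorem~\ref{fibr} and the synchronized automaton of Figure~\ref{synchro}: the relation $\zc(n,y) \iff y = a_n$ is DFA-recognizable on $(n,y)_F$ and may therefore be used as an atomic predicate, while multiplication by the fixed constants $c_0, c_1$ is just iterated addition. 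Thus we may take
\[
\varphi_d(m) \ := \ \exists n\ \exists y\ \exists p\ \ \zc(n,y)\ \wedge\ \bigl(p = c_0 y + c_1(n-y)\bigr)\ \wedge\ (p \le m)\ \wedge\ \psi_d(m - p,\ n),
\]
where $\psi_d(r,n)$ is the finite disjunction
\[
\Bigl(\mathbf{f}[n] = 0 \ \wedge \bigvee_{j\,:\,h(0)[j] = d} r = j\Bigr)\ \vee\ \Bigl(\mathbf{f}[n] = 1 \ \wedge \bigvee_{j\,:\,h(1)[j] = d} r = j\Bigr),
\]
the inner disjunctions ranging over the finitely many positions $j < c_0$, resp. $j < c_1$. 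If $h(0) = \epsilon$ the first clause is simply false, correctly reflecting that a $0$ of $\mathbf{f}$ contributes nothing to $h(\mathbf{f})$, and symmetrically for $h(1)$; since $y = a_n$ is pinned down by $\zc$, the value $p = L(n)$ is forced, so there are no spurious solutions.

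Everything appearing here --- indexing into the Fibonacci-automatic word $\mathbf{f}$, the synchronized relation $\zc$, addition, multiplication by constants, comparisons, and bounded Boolean combinations --- is admissible input to Procedure~\ref{proc:Fib-auto-decide}, so the automatic-ness follows mechanically once the predicates are in place. The one part that needs a short genuine argument rather than a machine check is the correctness of $\varphi_d$: one verifies that the half-open intervals $[L(n), L(n+1))$ partition $[0, |h(\mathbf{f})|)$, with empty blocks occurring exactly at the positions $n$ where $\mathbf{f}[n]$ is mapped to $\epsilon$, so that each $m < |h(\mathbf{f})|$ lies in a unique nonempty block, determining $n$ and $r$ uniquely and making $h(\mathbf{f})[m] = h(\mathbf{f}[n])[r]$. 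I expect this bookkeeping --- especially the degenerate cases $h(0) = \epsilon$ or $h(1) = \epsilon$, where the correspondence between positions of $h(\mathbf{f})$ and positions of $\mathbf{f}$ is no longer a bijection but only a surjection onto the nonempty-image indices --- to be the only real obstacle; the rest is routine.
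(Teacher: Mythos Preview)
Your proposal is correct and follows essentially the same route as the paper: both use the synchronized relation $\zc(n,y)$ from Theorem~\ref{fibr} to compute the starting position of the block $h(\mathbf{f}[n])$ as a constant-coefficient linear combination of $n$ and $a_n$, then locate the output index inside the appropriate block via a bounded case split on positions of $h(0)$ and $h(1)$. Your write-up is in fact a bit more careful than the paper's---you make explicit why $h(\mathbf{f})$ is infinite, how the per-letter DFAs combine into a single DFAO, and you flag the degenerate cases $h(0)=\epsilon$ or $h(1)=\epsilon$---but the underlying argument is the same.
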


\begin{proof}
From Theorem~\ref{fibr} we see that there is a predicate $\zc(n,n')$
which is true if 
$n' = |{\bf f}[0..n-1]|_0$ and false otherwise, and this predicate
can be implemented as a finite automaton taking the inputs $n$ and $n'$
in Fibonacci representation.

Suppose $h(0) = w$ and $h(1) = x$.   Now, to show that 
h({\bf f}) is Fibonacci-automatic,
it suffices to show that, for each letter $a \in \Delta$,
the language of ``fibers''
$$  L_a = \{ (n)_F : (h({\bf f}))[n] = a  \} $$
is regular.

To see this, we write a predicate for the $n$ in the definition of $L_a$, 
namely
\begin{multline*}
\exists q\ \exists r_0 \ \exists r_1 \ \exists m \ 
(q \leq n < q+ |h({\bf f}[m])|) \ \wedge \  \zc(m,r_0) \ \wedge \  
(r_0+r_1=m) \wedge \\
(r_0 |w| + r_1 |x| = q)  \ \wedge \
(( {\bf f}[m]=0 \ \wedge \  w[n-q] = a) \ \vee \  ({\bf f}[m] = 1 \ \wedge\  x[n-q] = a) ) .
\end{multline*}

Notice that the predicate looks like it uses multiplication, but this
multiplication can be replaced by repeated addition since $|w|$ and $|x|$
are constants here.

Unpacking this predicate we see that it asserts the existence of
$m$, $q$, $r_0$, and $r_1$ having the meaning that
\begin{itemize}
\item the $n$'th symbol of h({\bf f}) lies inside the block
$h({\bf f}[m])$ and is in fact the $(n-q)$'th symbol in the
block (with the first symbol being symbol 0)
\item ${\bf f}[0..m-1]$ has $r_0$ 0's in it
\item $ {\bf f}[0..m-1]$ has $r_1$ 1's in it
\item the length of $h({\bf f}[0..m-1])$ is $q$
\end{itemize}
			     
Since everything in this predicate is in the logical theory
$(\Enn, +, <, F)$ where $F$ is the predicate for the Fibonacci word,
the language $L_a$ is regular.
\end{proof}

\begin{remark}
Notice that everything in this proof goes through for other
numeration systems, provided the original
word has the property that the Parikh vector of the prefix of length $n$
is synchronized.
\end{remark}

\section{Details about our implementation}

Our program is written in JAVA, and was developed using the
{\tt Eclipse} development environment.\footnote{Available from {\tt http://www.eclipse.org/ide/} .} 
We used the {\tt dk.brics.automaton}
package, developed by Anders M{\o}ller at Aarhus University, for
automaton minimization.\footnote{Available from {\tt http://www.brics.dk/automaton/} .}
{\tt Maple 15} was used
to compute characteristic polynomials.\footnote{Available from {\tt http://www.maplesoft.com} .}
The {\tt GraphViz} package was used to display automata.\footnote{Available
from {\tt http://www.graphviz.org} .}

Our program consists of about 2000 lines of code.  We used
Hopcroft's algorithm for DFA minimization.

A user interface is provided to enter queries in a language very
similar to the language of first-order logic. The intermediate and
final result of a query are all automata. At every intermediate step,
we chose to do minimization and determinization, if necessary. Each
automaton accepts tuples of integers in the numeration system of choice.
The built-in numeration systems are ordinary base-$k$ representations
and Fibonacci base. However, the program can be used with any numeration
system for which an automaton for addition and ordering can be
provided. These numeration system-specific automata can be declared in
text files following a simple syntax. For the automaton resulting from a 
query it is always guaranteed that if a tuple $t$ of integers is
accepted, all tuples obtained from $t$ by addition or truncation of
leading zeros are also accepted. In Fibonacci representation, we make sure that
the accepting integers do not contain consecutive $1$'s.

The program was tested against hundreds of different test cases varying
in simplicity from the most basic test cases testing only one feature
at a time, to more comprehensive ones with many alternating quantifiers.
We also used known facts about automatic sequences and Fibonacci
word in the literature to test our program, and in all those cases we
were able to get the same result as in the literature. In a few cases,
we were even able to find small errors in those earlier
results.

The source code and manual will soon be available for free download.


\section{Acknowledgments}

We thank Kalle Saari for bringing our attention to the small
error in \cite{Fraenkel&Simpson:1999}.  We thank Narad Rampersad and
Michel Rigo for useful suggestions.

Eric Rowland thought about the proof of Theorem~\ref{additive-thm} with us
in 2010, and was able to prove at that time
that the word $1213121512131218\cdots$
avoids additive squares.  We acknowledge his prior work on this problem
and thank him for allowing us to quote it here.

\newcommand{\noopsort}[1]{} \newcommand{\singleletter}[1]{#1}

\end{document}